\edef\restoreparindent{\parindent=\the\parindent\relax}
\newtheorem{claim}{Claim}
\newcommand{\be}{\begin{equation}}
\newcommand{\ee}{\end{equation}}
\newcommand{\bea}{\begin{eqnarray}}
\newcommand{\eea}{\end{eqnarray}}
\newcommand{\e}{\epsilon}
\newcommand{\cof}{\Lambda}
\newcommand{\sgn}{\mathrm{sgn}}
\newcommand{\vac}{| 0 \rangle}
\newcommand{\cF}{\mathcal F}
\newcommand{\cH}{\mathcal H} 
\newcommand{\mL}{\mathcal L}
\newcommand{\li}{{\mathrm{Li}}}
\newcommand{\re}{{\mathrm{Re}}}
\newcommand{\kker}{\mathrm{Ker}}
\newcommand{\ha}{\hat a}
\newcommand{\hb}{\hat b}
\newcommand{\cS}{S}
\newcommand{\mink}{\mathbb M}
\newcommand{\Bkg}{\Box_{\mathrm{KG}}}
\newcommand{\cK}{\mathcal K}
\newcommand{\cO}{\mathcal O} 
\newcommand{\intm}{\int_{-}}
\newcommand{\intp}{\int_{+}}
\newcommand{\gOjsl}{\Omega^{l}_{js}} 
\newcommand{\djsa}{\Delta^{a}_{js}}
\newcommand{\dja}{\Delta^{a}_{j(n-j)}} 
\newcommand{\UAS}{U^{A/S}}
\newcommand{\FAS}{F^{A/S}}
\newcommand{\GAS}{G^{A/S}}
\newcommand{\HAS}{H^{A/S}}
\newcommand{\QAS}{Q^{A/S}}
\newcommand{\PAS}{P^{A/S}}
\newcommand{\uAS}{u^{A/S}}
\newcommand{\kAS}{k^{A/S}}
\newcommand{\Du}{\Delta u}
\newcommand{\Dv}{\Delta v}
\newcommand{\ka}{k_A}
\newcommand{\ks}{k_S}
\newcommand{\cc}{{\cal C}}
\newcommand{\id}{{\mathbb I}}
\newcommand{\diam}{\mathcal{D}}
\newcommand{\wsjc}{W^c_\text{SJ}}
\newcommand{\wsj}{W_\text{SJ}}
\newcommand{\wmink}{W^{\mathrm{mink}}_0}
\newcommand{\wminkm}{W^{\mathrm{mink}}_m}
\newcommand{\wrind}{W^{\mathrm{rind}}_0}
\newcommand{\wrindm}{W^{\mathrm{rind}}_m}
\newcommand{\wmirr}{W^{\mathrm{mirror}}_0}
\newcommand{\wmirrm}{W^{\mathrm{mirror}}_m}
\newcommand{\wa}{A_{\mathrm{I}}}
\newcommand{\waa}{A_{\mathrm{II}}}
\newcommand{\waaa}{A_{\mathrm{III}}}
\newcommand{\waaaa}{A_{\mathrm{IV}}}
\newcommand{\ws}{S_{\mathrm{I}}}
\newcommand{\wss}{S_{\mathrm{II}}}
\newcommand{\wsss}{S_{\mathrm{III}}}
\newcommand{\wssss}{S_{\mathrm{IV}}}
\newcommand{\emc}{\e_m^{center}}
\newcommand{\emcr}{\e_m^{corner}}
\newcommand{\M}{\mathcal{M}}
\newcommand{\R}{\mathcal{R}}
\newcommand{\mr}{\mathbb{R}}
\newcommand{\ms}{\mathbb{S}}
\newcommand{\mz}{\mathbb{Z}}
\newcommand{\mt}{\mathbb{T}}
\newcommand{\gt}{\tilde{g}}
\newcommand{\Rt}{\tilde{R}}
\newcommand{\hPhi}{\hat{\Phi}}
\newcommand{\Phit}{\Phi_{st}}
\newcommand{\itd}{i\tilde{\Delta}}
\newcommand{\hd}{{\Delta}}
\newcommand{\rew}{\text{Re}(\wsj)}
\newcommand{\vx}{\vec{x}}
\newcommand{\he}{{E}}
\newcommand{\ksj}{\left|0\right>_\text{SJ}}
\newcommand{\kconf}{\left|0\right>_c}
\newcommand{\spkg}{K_{sp}}
\newcommand{\spl}{\Delta_{\ms^{d-1}}}
\newcommand{\spac}{\Sigma}
\newcommand{\kcurv}{\kappa}
\newcommand{\degn}{{\cal D}}
\newcommand{\gp}{\mathcal{C}}
\newcommand{\usj}{{\bf{u}}^\text{SJ}}
\newcommand{\up}{{\bf{u}}^+}
\newcommand{\un}{{\bf{u}}^-}
\newcommand{\uconfp}{\tilde{\bf{u}}^+}
\newcommand{\uconfn}{\tilde{\bf{u}}^-}
\newcommand{\conf}{\Omega}
\newcommand{\hD}{\Delta}
\newcommand{\hW}{W}
\newcommand{\hR}{R}
\newcommand{\im}{\mathrm{Im}}
\newcommand{\pg}{\Psi}
\newcommand{\eul}{\gamma_e}
\newcommand{\taubyl}{\gamma}
\newcommand{\mathsym}[1]{{}}
\newcommand{\unicode}[1]{{}}
\newcommand{\cI}{\mathcal I}
\newcommand{\cC}{\mathcal C}
\newcommand{\hwt}{\hW_\tau}
\newcommand{\sinc}{{\rm sinc}}
\newcommand{\hr}{\widehat \rho}
\newcommand{\wz}{{W}^{(0)}}
\newcommand{\cm}{\mathrm{c}_m} 
\newcommand{\sm}{\mathrm{s}_m} 
\newcommand{\mx}{\mathrm{max}}
\newcommand{\ssee}{\mathcal S}
\newcommand{\taup}{{\bar{\tau}}}
\newcommand{\Ap}{\bar{A}}
\newcommand{\Bp}{\bar{B}}
\newcommand{\psit}{\psi^{(\tau)}}
\newcommand{\psitp}{\psi^{(\taup)}}
\newcommand{\psistp}{\psi^{*(\taup)}}
\newcommand{\mtau}{\cM_\tau}
\newcommand{\mtaup}{\cM_\taup}
\newcommand{\wtau}{W_\tau}
\newcommand{\cylev}{\varrho}
\newcommand{\ccL}{\mathcal L}
\newcommand{\hwdt}{\hW_{\tau}|_{\diam_s}}
\newcommand{\eq}[1]{\begin{align}#1\end{align}}
\newcommand{\bfig}{\begin{figure}}
\newcommand{\efig}{\end{figure}}
\newcommand{\cL}{\mathcal{L}}
\newcommand{\npl}{N_{pl}}
\newcommand{\tr}{\text{Tr}}
\newcommand{\mass}{\mathbf{m}}
\newcommand{\m}{m}
\newcommand{\hn}{\hat{n}}
\newcommand{\bv}{\mathbf \Phi}
\newcommand{\bk}{\mathbf k}
\newcommand{\bpp}{\mathbf p}
\newcommand{\bap}{{\bar {\mathbf p}}}
\newcommand{\bx}{\mathbf x}
\newcommand{\bu}{\mathbf \Psi}
\newcommand{\br}{\mathbf r}
\newcommand{\Righf}{I}
\newcommand{\fut}{III}
\newcommand{\Leff}{II}
\newcommand{\cM}{\mathcal{M}}
\newcommand{\cB}{\mathcal{B}}
\newcommand{\rs}{{r_*}}
\newcommand{\tip}{{\tilde{p}}}
\newcommand{\tl}{\tilde{l}}
\newcommand{\nd}{{\nu_d}}
\newcommand{\vacket}{\left|0\right>}
\newcommand{\vacbra}{\left<0\right|}
\newcommand{\oi}{I}
\newcommand{\link}{\prec\!\!*\;}
\newcommand{\hrho}{\widehat\rho}
\newcommand{\rin}[2]{{\langle #1 , #2\rangle}_R}
\newcommand{\htp}{\hat{p}}
\newcommand{\htq}{\hat{q}}
\newcommand{\htD}{{\Gamma}}
\newcommand{\pone}{\psi^{(1)}}
\newcommand{\ptwo}{\psi^{(2)}}
\newcommand{\tpone}{\varphi^{(1)}}
\newcommand{\tptwo}{\varphi^{(2)}}
\newcommand{\hK}{K}
\begin{document}
	
\begin{titlepage}
 \begin{center}
	%\vspace*{1cm}
	
     \Huge
	\textbf{Quantum Fields from Causal Order}
	
	\vspace{1.5cm}
	
	\Large
	\textbf{Abhishek Mathur}
	
	\vspace{0.5cm}
	\textit{under the supervision of}
	
	\vspace{0.2cm} 
	
	\textbf{Prof. Sumati Surya}
	
	\vspace{0.2cm}
	
	Raman Research Institute
	
	\vspace{0.8cm}

	\centering{\includegraphics[height=5cm]{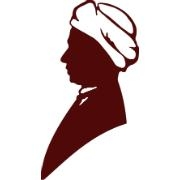} \hspace{1.5cm}
	\includegraphics[height=5cm]{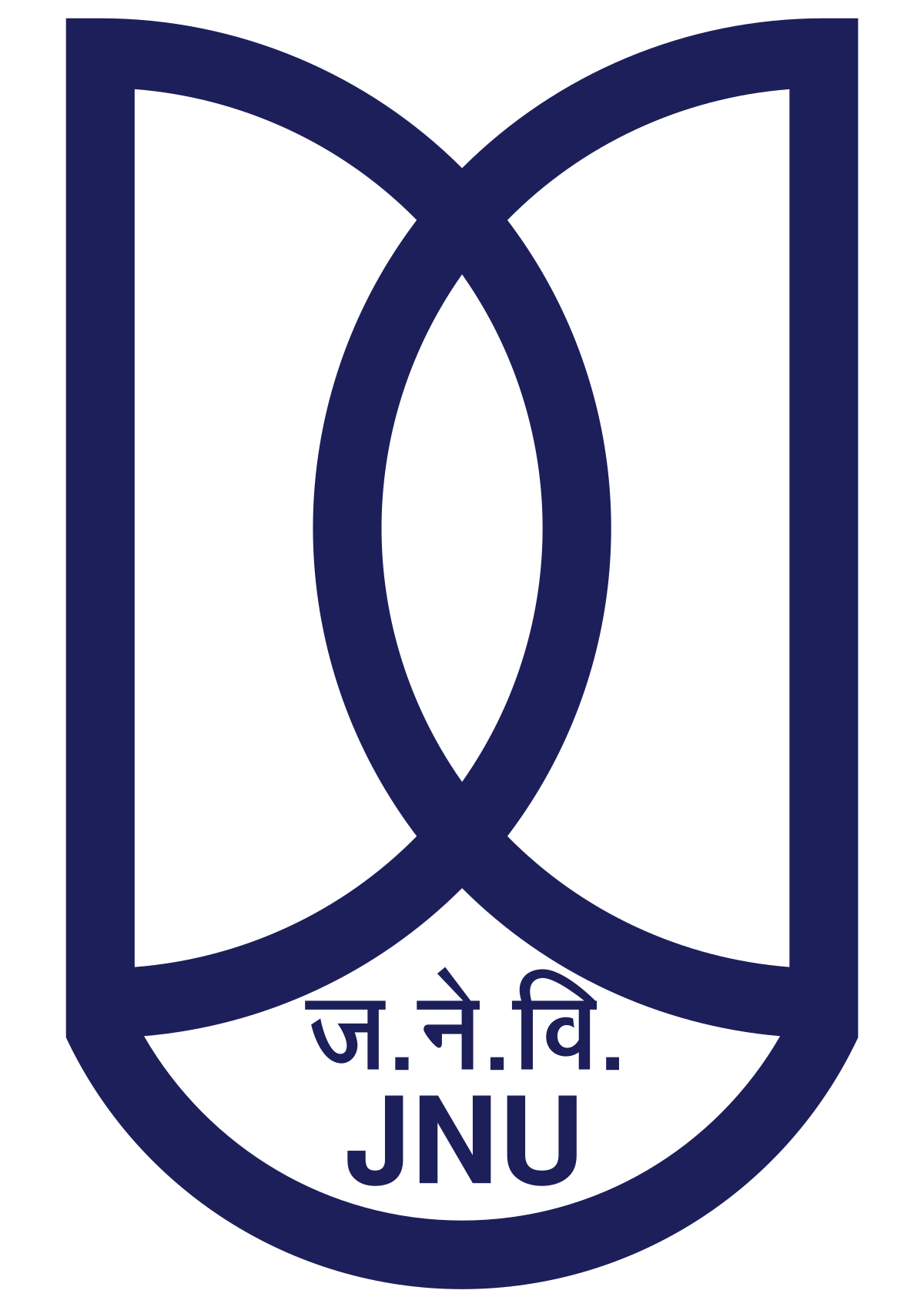}}
	
	\vfill
	
	A thesis submitted for the degree of\\
	Doctor of Philosophy\\
	to\\
	Jawaharlal Nehru University
	
\end{center}
\end{titlepage}

%%%%%%%%%%%%%%%%%%%%%%%%%%%%%%%%%%%%%%%%%%%%%%%%%%%%%%%%%%%%%%%%%%%%%%%%%%%%%%%%%%%%%%%%%%%

% -------------------------- FRONT MATTER ------------------------------------

%%%%%%%%%%%%%%%%%%%%%%%%%%%%%%%%%%%%%%%%%%%%%%%%%%%%%%%%%%%%%%%%%%%%%%%%%%%%%%%%%%%%%%%%%%%

%\input{chapters/declaration}
%\includepdf[pages={1}]{declaration and certificate-signed.pdf}

%\input{chapters/certificate}
%\includepdf[pages={2}]{declaration and certificate-signed.pdf}

\chapter*{Synopsis}
Quantum Field Theory (QFT) in curved spacetime describes physical phenomena in a regime where quantum fields interact with the classical gravitational field. These conditions are generally satisfied either in the vicinity of a black hole or in the post Planckian very early universe. The standard approach to QFT relies heavily on the Poincare symmetry of Minkowski spacetime as a guide to a particular choice of mode expansion and therefore a preferred choice of vacuum. In contrast, in a generic curved spacetime which does not have any symmetry, the choice of mode expansion and hence the vacuum is arbitrary. This suggests that the notion of vacuum and particles is subsidiary. This is the approach taken in algebraic QFT (AQFT), where the primary role is played by the algebra of observables and the choice of state, which is primarily defined as a complex function on this algebra, is relegated to a choice of the representation of this algebra, and is therefore not unique. It is natural to ask if there exists an alternative formulation of the QFT vacuum which is inherently covariant. One such formalism was developed by Sorkin \cite{sorkin} and Johnston \cite{Johnston:2009fr} called the Sorkin-Johnston (SJ) vacuum. In this thesis we will explore various aspects of the SJ vacuum both in the continuum and in the causal set theory. QFT also plays an important role in understanding the entropy of black holes and other horizons. Entanglement entropy of quantum fields has been proposed as a candidate for these entropies. The standard formulation of the entanglement entropy is given by the von Neumann entropy formula, which requires the QFT states to be defined on a Cauchy hypersurface and is therefore dependent on the choice of the hypersurface. Sorkin, in his 2014 paper \cite{ssee}, proposed a spacetime formulation of the entanglement entropy, which is covariant.

In this thesis, we study examples of the SJ or observer independent QFT vacuum in curved spacetimes and the Sorkin's spacetime entanglement entropy (SSEE) for various horizons. The work in this thesis can be divided into two parts. The first is the SJ vacuum for a free real scalar field theory \cite{sorkin, johnston, Johnston:2009fr, Afshordi:2012ez}. It can be covariantly defined in any globally hyperbolic spacetime $(\cM,g)$ with finite spacetime volume without appealing to the spacetime symmetries and the associated ``preferred'' class of observers. Rather than using an arbitrary or symmetry-dictated mode decomposition of the Klein-Gordon solution space, the SJ vacuum is constructed from the covariantly defined eigenmodes of the integral Pauli-Jordan operator $i\hD$, whose integral kernel is the field commutator, i.e., $i\Delta(x;x')=[\hPhi(x),\hPhi(x')]$. The Pauli-Jordan function $\Delta$ itself is the difference between the retarded and the advanced Green's function, which encodes the causal structure of the spacetime. This formalism is motivated by the attempts to study QFT on causal sets, which are locally finite partially ordered sets obtained from a covariant discretisation of the spacetime. The ordering on the elements in the casual set corresponds to spacetime causal ordering. It is not easy to construct Killing vectors, which are associated with spacetime symmetries, on a causal set and therefore the SJ formalism, which is covariant, is most suitable for QFT on a causal set. However explicit computation of the SJ vacuum in a continuum spacetime manifold is a challenge because one need to solve the integral eigenvalue equations. Therefore only in a very few cases has the SJ vacuum explicitly been found. These cases include the massless scalar field in the 2d flat causal diamond \cite{johnston,Afshordi:2012ez} and in a patch of trousers spacetime \cite{Buck:2016ehk}. The SJ vacuum for a massive scalar field has been studied in ultrastatic slab spacetimes by Fewster and Verch \cite{fewster2012} and in conformally ultrastatic slab spacetimes with time dependent conformal factor by Brum and Fredenhagen \cite{Brum:2013bia}.
In this work we explicitly compute the SJ vacuum for a massive scalar field in the small mass limit to $\cO((mL)^4)$, in the 2d flat causal diamond of height $2\sqrt{2}L$. We also compute the SJ vacuum for a conformally coupled massless scalar field in a bounded time slab of cosmological spacetimes with compact spacelike hypersurfaces. We study the properties of the SJ spectrum and the two point correlation function also known as the Wightman function ($W(x;x')=\langle\hPhi(x)\hPhi(x')\rangle$), and compare with that of the known vacua in the respective cases. We also compare our results with that obtained in causal sets approximated by the respective spacetimes.% Note that the SJ vacuum for a massive scalar field has been studied in ultrastatic slab spacetimes by Fewster and Verch \cite{fewster2012}.

The second part of the thesis is the study of the spacetime formulation of the entanglement entropy developed by Sorkin \cite{ssee} which makes use of the Wightman function restricted to the region of interest and the spacetime field commutator (Pauli Jordan function). Entanglement entropy of quantum fields inside the event horizon of a black hole has been proposed as a candidate for the Bekenstein-Hawking entropy of the black hole. This has led to a significant interest in the entanglement entropy of relativistic quantum fields. The standard formula of the entanglement entropy of a quantum field restricted to a region of space (system) is given by the von Neumann entropy formula. It is given in terms of the reduced density matrix which is obtained by tracing out degrees of freedom in the complementary region (surrounding). In order to compute the von Neumann entropy, one need to define quantum states on a Cauchy hypersurface. This spatial approach is good for non-relativistic systems where space and time are treated separately but for a relativistic system one would like to treat space and time on the same footing. Also, the von Neumann entropy cannot be extended to discrete theories of quantum gravity like causal set theory, which lacks the notion of a Cauchy hypersurface. One therefore needs a spacetime formulation of the entanglement entropy to make it applicable in those contexts. One such spacetime formalism for the entanglement entropy was proposed by Sorkin \cite{ssee}. In this thesis, we use Sorkin's spacetime entanglement entropy (SSEE) formula to compute the entanglement entropy in a 2d causal diamond embedded in a 2d cylinder spacetime and in the static patch of the de Sitter and Schwarzschild de Sitter spacetime.

The thesis is planned as follows. In chapter~\ref{ch.int} we introduce the SJ formalism for quantum fields in curved spacetime manifold and draw a comparison between the SJ formalism and the standard formalism for quantum fields in curved spacetime. We see that in the standard formulation of QFT in curved spacetime, there is an ambiguity in the choice of the QFT modes and hence the vacuum. On the other hand in the SJ formulation of QFT we have a unique vacuum, which is defined as the positive semidefinite part of the Pauli-Jordan operator. We will see that this SJ vacuum lies in the collection of all possible vacua one can have in the standard approach to QFT. Apart from the usual properties of the scalar field vacua, the SJ vacuum uniquely satisfies an additional property namely that of orthogonal support i.e. $WW^*=0$. This is a property which we know is satisfied by the standard Minkowski vacuum in flat spacetime. We discuss the SJ formalism both in the continuum and in the causal set. We end the chapter with an introduction to the spacetime formulation of the entanglement entropy by Sorkin \cite{ssee}.

In chapter~\ref{ch.2ddiamsj}, which consists of our work in \cite{Mathur:2019yvl}, we construct the SJ vacuum for the massive scalar field in the limit of small mass in the 2d flat causal diamond of height $2\sqrt{2}L$. Since the integral eigenvalue equations involved are difficult to solve exactly, we restrict ourselves to the small mass case and solve for the SJ modes upto $\cO((mL)^4)$. This allows us to construct the Wightman function upto this order. We study the behaviour of the SJ Wightman function in the center and the corner of the 2d causal diamond by using a combination of analytical and numerical methods. We find that in the center of the diamond, the small mass SJ Wightman function behaves like the massless Minkowski Wightman function with a fixed infrared cut-off depending on the size of the diamond, whereas in the corner of the diamond it behaves like the mirror Wightman function and not like the Rindler Wightman function as one might expect. In order to have an insight into the SJ vacuum for large masses, we compute the SJ Wightman function in a causal set approximated by the 2d causal diamond, using the retarded Green's function found by Johnston \cite{Johnston:2008za}. We find that the for a small mass upto an ultraviolet cut-off, the SJ spectrum and the Wightman function behaves in the same way as their continuum counterpart. Here we see that for a large mass, the SJ Wightman function behaves like the massive Minkowski Wightman function in the center of the diamond, but below a certain critical mass it starts behaving like the massless Minkowski Wightman function, mimicking the behaviour of the SJ Wightman function obtained in the continuum. In the corner of the diamond, we see that for large masses the correlation between the Rindler and the mirror Wightman function is better than that of the SJ Wightman function with any of them. Therefore for a large mass, one cannot associate the SJ vacuum to the mirror or the Rindler vacuum on the basis of the correlation plots. We also show that for the massless scalar field, one can get the Rindler vacuum from the SJ prescription in an appropriate limit by redefining the integral operator and the inner product on the space of $\cL^2$ functions using a non-trivial weight function.

In chapter~\ref{ch.sjcosm} we look at the SJ modes and associated SJ vacuum for conformally coupled massless scalar fields in finite time slabs of various cosmological spacetimes with compact spacelike hypersurface. We use the work of Fewster and Verch on ultrastatic spacetimes as our starting point to solve the integral SJ eigenvalue equation. The spacetimes include the time-symmetric slab of global de Sitter spacetime, as well as  finite time FLRW spacetimes with toroidal Cauchy hypersurfaces. A generalisation of this construction for a massive scalar field can be found in the work of Brum and Fredenhagen \cite{Brum:2013bia}. We show that the SJ modes differ from the conformal modes for these spacetimes. We study the dependence of SJ modes on the size of the time slab and find that in the infinite time limit this difference vanishes for FLRW spacetimes and odd-dimensional de Sitter spacetimes. In even-dimensional de Sitter spacetimes however, the SJ modes themselves become ill-defined in this limit, which is in agreement with earlier work of Aslanbeigi and Buck \cite{Aslanbeigi:2013fga}. We compare the 2d and 4d de Sitter SJ spectrum with that obtained from numerical simulations on causal sets approximated by these spacetimes and find that for 2d, the SJ spectrum obtained in the continuum and the causal set are in agreement upto an ultraviolet cut-off whereas there is a slight discrepancy in 4d SJ spectrum which may be attributed to the fundamental difference between the discrete and the continuum Green's function. This discrepancy is yet to be understood in detail. %may be attributed to the error in the approximation of causal set Green's function to the continuum Green's function.

In chapter~\ref{ch.sseeds}, which consists of our work in \cite{Mathur:2021ial}, we study SSEE in some special cases in which it can be evaluated analytically. In these cases the restriction of the Wightman function to a subregion $\cO\subset\cM$ is block diagonal in modes of $\cO$. We show that in those cases, one can easily evaluate the SSEE analytically, which is found to be dependent only on the Bogoliubov transformation between the modes in $\cM$ and the modes in $\cO$. We explicitly compute the SSEE for a massive scalar field in a static patch of de Sitter and for a massless scalar field in a static patch of Schwarzschild de Sitter spacetimes. We find that the SSEE in the static patch of de Sitter spacetime exactly matches with the von Neumann entropy obtained by Higuchi and Yamamoto \cite{Higuchi:2018tuk}. We also argue that the SSEE satisfies an area law as expected. In Schwarzschild de Sitter spacetime, we see that the SSEE satisfies the same area law as in de Sitter spacetime. What is surprising is that in 2d, instead of getting a logarithmic dependence on the UV cut-off, we find that the SSEE is a finite constant. This is also in keeping with the result of Higuchi and Yamamoto.% Even if we introduce a UV cut-off, the SSEE is independent of this choice.

In chapter~\ref{ch.sseecyl}, which consists of our work in \cite{Mathur:2021zzl}, we compute the SSEE of a massless scalar field restricted to a causal diamond in the 2d cylinder spacetime using a combination of analytical and numerical methods. We consider the scalar field to be in the SJ vacuum state in a slab of 2d cylinder which is dependent on its height. This calculation can be seen as the spacetime version of the Calabrese-Cardy calculation of the entanglement entropy \cite{cc} of a quantum field restricted to an arc of a circle. Here the 2d cylinder is the domain of dependence of the circle and the causal diamond is the domain of dependence of an arc of the circle. The SSEE is dependent on three parameters, they are the UV cut-off, the relative size of the diamond and the height of the cylinder. We find that the SSEE obtained here is of the Calabrese-Cardy form. The UV cut-off dependent term is universal when we use a covariant UV cut-off, i.e, a cut-off in the SJ spectrum. The relative size dependent term depends on the choice of the pure state in the cylinder (which is dependent on the height of the cylinder) but asymptotes to the one obtained by Calabrese and Cardy in the large height limit.

In chapter~\ref{ch.concl} we conclude the thesis with a discussion on results and some open questions for future investigations.

%In this work we study SJ vacuum for conformally coupled massless scalar field in a bounded time slab of cosmological spacetimes with compact spacelike hypersurface. In particular, we consider two spacetimes of physical interest. One of which is the de Sitter spacetime, which provides a promising inflationary model for very early universe. The other one is the flat Friedmann-Lema{\^i}tre-Robertson-Walker (FLRW) spacetimes with spacelike hypersurface compactified into a 3-torus. FLRW spacetimes provides a model for the post inflationary including the present day homogeneous, isotropic and expanding universe.

%We start with a brief review of SJ formalism in Sec.~\ref{sec:bground} which includes the definition of the SJ vacuum. We then in Sec.~\ref{sec:confustatic} explicitly solve the integral equation for SJ modes for conformally coupled massless scalar field in a conformally ultrastatic slab spacetime with a time dependent conformal factor after a review of the SJ modes in ultrastatic slab spacetimes found in \cite{fewster2012}. In Sec.~\ref{}, we take a look at the special case of two cosmologically interesting spacetime, de Sitter and FLRW. \blue{[[complete this]]}

\chapter*{\centering List of Publications}

\begin{enumerate}
\item {\bf{Sorkin-Johnston vacuum for a massive scalar field in the 2D causal diamond}}\\
Abhishek Mathur and Sumati Surya\\
Phys. Rev. D 100 (2019) 4, 045007\\
Erratum: Phys. Rev. D. 104 (2021) 8, 089901
%\item {\bf{Entropy and the Link action in the Causal Set Path-Sum}}\\
%Abhishek Mathur, Anup Anand Singh and Sumati Surya\\
%Class. Quant. Grav. 38 (2021) 4, 045017\\
%{\sl{(Not included in the thesis)}}
\item {\bf{A spacetime calculation of the Calabrese-Cardy entanglement entropy}}\\
Abhishek Mathur, Sumati Surya and Nomaan X\\
Phys. Lett. B 820 (2021) 136567
\item {\bf{Spacetime Entanglement Entropy of de Sitter and Black Hole Horizons}}\\
Abhishek Mathur, Sumati Surya and Nomaan X\\
Class. Quant. Grav. 39 (2022),3, 035004
\item{\bf{Spacetime Entanglement Entropy: Covariance and Dicreteness}}\\
Abhishek Mathur, Sumati Surya and Nomaan x\\
{\sl{Submitted to General Relativity and Gravitation, Springer}}
\end{enumerate}

%\vspace{2cm}
%
%\par Prof. Sumati Surya \hfill{Abhishek Mathur}%\tab {\tab{\tab{Abhishek Mathur}}}
%\par Thesis Supervisor  \\
%
%
%\par Raman Research Institute, 
%\par Bangalore -- 560 080
%\par INDIA.

\chapter*{Acknowledgements}

I start with thanking my thesis supervisor Sumati Surya for her guidance and support. She introduced me to interesting problems and was always available for discussions. She also helped me a lot in improving my writing and presentation skills. I would also like to thank Nomaan X and Anup Anand Singh for collaborating with me and for interesting discussions we had, and Justin David and Aninda Sinha for being a part of my thesis committee, listening to my annual presentation and providing valuable feedback.

I would like to thank several visitors who visited RRI during my presence. First of all I would like to thank Rafael Sorkin for long discussions we had. These discussions helped me a lot in my work. I would then like to thank A. P. Balachandran for helpful discussions. I would also like to thank other visitors including Kasia Rejzner, Maximilian Ruep, Yasaman Yazdi, Stav Zalel, David Rideout, Dionigi Benincasa and Will Cunningham. I would also like to thank professors at RRI and IISc including Joseph Samuel, Sachindeo Vaidya and Supurna Sinha for helphul discussions and also for their help with references. I would also like to thank several students and post-docs at RRI including Sujit Kumar Nath, Anirudh Reddy, Kumar Shivam, Santanu Das, Deepak Gupta, Ion Santra, Simanraj Sadana for discussions with them on different topics in physics. We used to have discussions during evening tea on different interesting topics which has nothing to do with our work.

I would like to thank theoretical physics staff Manjunath, Gayathri, Chaithanya and Mahadeva along with other RRI administration staff for their support. I would also like to thank library staff for the kind of environment they created at library, where I like to spend most of my time.

I would like to thank Alok Laddha for allowing me to deliver a couple of lectures at Chennai Mathematical Institute and everyone at CMI for asking interesting questions and providing valuable feedback.

Finally I would like to thank my mother and sister for providing their full support at home.% She took care of problems at home so that I can focus on my research.

\tableofcontents

%%%%%%%%%%%%%%%%%%%%%%%%%%%%%%%%%%%%%%%%%%%%%%%%%%%%%%%%%%%%%%%%%%%%%%%%%%%%%%%%%%%%%%%%%%%%%%%%

%----------------------- CHAPTERS ----------------------------------------------

%%%%%%%%%%%%%%%%%%%%%%%%%%%%%%%%%%%%%%%%%%%%%%%%%%%%%%%%%%%%%%%%%%%%%%%%%%%%%%%%%%%%%%%%%%%

% Reference of BH entropy required
\chapter{Introduction}\label{ch.int}

Quantum Field Theory (QFT) and General Relativity (GR) are two of the major areas of interest in modern theoretical physics. They together, as per our present understanding, provide the most fundamental description of our universe. %QFT describes physical phenomena in a regime where both special relativistic and quantum effects are prominent. 
QFT is used to understand the behaviour of fundamental particles and their interactions, and is invoked at small length scales. Einstein's GR on the other hand is a theory of spacetime itself. It is used to understand the influence of matter on the spacetime structure and vice versa, and is invoked either at large length scales or in the presence of high matter density leading to high gravitational effects.  %It describes how the large scale structure of the spacetime is influenced by the presence of matter. 
Despite their individual success in describing the physical reality in their respective regime, they are not fully compatible with each other.

In QFT quantum fields (degree of freedom specified at each point in space) are fundamental and particles are just the field excitations over the ground state. QFT is studied in an observer dependent way. In Minkowski spacetime we have Poincare symmetry to guide us towards a preferred family of inertial observers, which leads us to a preferred Poincare invariant vacuum. This Minkowski vacuum is considered as the bedrock of the theory and the Poincare invariance is used to explain many of its features.

Studying quantum field theory in curved spacetime backgrounds is seen as a step towards a theory of quantum gravity. It describes physics in a regime where both quantum and gravitational effects are prominent, for example in the presence of a black hole or in the post-Planckian early universe. %Here we study the effects of the spacetime structure on quantum fields but neglect the back-reaction of quantum fields on the spacetime.

In extending QFT to a curved spacetime with no spacetime symmetry to guide us towards a preferred observer, the choice of the vacuum and therefore the notion of particles is arbitrary. Even in de Sitter spacetimes, which are maximally symmetric, instead of a unique symmetry-dictated vacuum, there exists a two parameter family of non-equivalent de Sitter invariant $\alpha$-vacua \cite{Mottola:1984ar,Allen:1985ux}. In Algebraic Quantum Field Theory (AQFT)\cite{fewster2020algebraic} primary role is taken by the algebra of observables and the state is defined as a complex function on this algebra. The choice of vacuum is relegated to the choice of the representation, which is not unique. This non-uniqueness of the QFT vacuum leads to interesting phenomenological consequences like the Unruh effect \cite{Unruh:1976db}, the Hawking effect \cite{Hawking:1974rv,Hawking:1975vcx} etc. In these cases of physical interest, we work with a preferred notion of vacua which has some physical significance. For example in the ``Unruh effect" in flat spacetime, we express the standard Minkowski vacuum, which is the vacuum state in the representation adapted to the family of inertial observers, as a state in the representation adapted to an accelerating observer. But what about studying QFT in spacetimes which lack any symmetry and therefore a notion of vacuum associated with a physically interesting family of observers? It is therefore important to have a symmetry independent formulation of QFT leading to a preferred vacuum. One such proposal for a unique (observer independent) vacuum for a real free scalar QFT was made by Sorkin \cite{sorkin} and Johnston \cite{Johnston:2009fr} in which the vacuum, called as the Sorkin-Johnston(SJ) vacuum, is defined via its two point correlation function, also called as the Wightman function, as the positive semidefinite part of the field commutator\footnote{This is discussed in detail in section~\ref{sjint.sec}.}. This came out as a result of attempts of studying QFT on causal sets, where we have a notion of the retarded Green's function due to \cite{johnston,Nomaan:2017bpl}, which is then used to evaluate the field commutator and subsequently the SJ Wightman function. This procedure was then shown to be applicable to continuum spacetimes as well \cite{aas}. Computing the positive semidefinite part of the field commutator requires one to promote it to a convolution operator on the space of complex functions of compact support and then to find its eigenvalues and eigenfunctions\footnote{See section~\ref{sjcstint.sec} for details.}. This is a challenging task particularly in a spacetime without any symmetry to help us in guessing the eigenfunctions. As a result there are very few cases in which the SJ vacuum has been obtained explicitly. These include massless scalar field in a 2d causal diamond \cite{johnston,Afshordi:2012ez} and trouser spacetime \cite{Buck:2016ehk}, massive scalar field in ultrastatic slab spacetimes \cite{fewster2012} and conformally ultrastatic spacetimes with a time dependent conformal factor \cite{Brum:2013bia}. In \cite{fewster2012}, the authors further showed that the SJ vacuum is in general not Hadamard, which lead the authors of \cite{Brum:2013bia} to suggest a modification to the original SJ construction, to make it Hadamard in the ultrastatic slab spacetimes, by smoothening the field commutator at the boundaries of the spacetime. Though this modification cures the non-Hadamard behaviour of the vacuum, it comes at the cost of the uniqueness of the SJ vacuum. The SJ formalism being observer independent does not admit a straightforward description of physical phenomena like the Unruh effect, Hawking effect, cosmological particle production etc.

In this thesis we also study the entanglement entropy of quantum fields, which is an important quantity of interest in the study of QFT. Entanglement entropy of quantum fields across a black hole horizon is proposed as a candidate for the Bekenstein-Hawking entropy \cite{bkls,brink2019jacob}. The standard approach to compute the entanglement entropy is to extend the notion of the von Neumann entropy to the quantum field system. This approach requires one to define the density operator corresponding to the vacuum state on a Cauchy hypersurface and therefore cannot be extended to AdS or topology changing spacetimes or quantum gravity theories like causal set theory which does not admit a Cauchy hypersurface. A spacetime formulation of the entanglement entropy was therefore required which was developed by Sorkin \cite{ssee} and was shown to be in agreement with the von Neumann entropy for a system in a Gaussian state with finite degrees of freedom. As opposed to the von Neumann entropy which makes use of the density operator defined on a section of a Cauchy hypersurface, the Sorkin's spacetime entanglement entropy (SSEE) is evaluated in an open subset of the spacetime using the two point correlation function restricted to the subset\footnote{See section~\ref{sseeint.sec} for some discussion on this.}.

This thesis is broadly divided into two parts. In the first part of the thesis, which consists of chapter \ref{ch.2ddiamsj} and \ref{ch.sjcosm}, we will study the SJ formalism for a free real scalar field theory in a bounded (finite spacetime volume) globally hyperbolic spacetime. We explicitly compute the SJ vacuum for a small mass scalar field in a 2d causal diamond (chapter \ref{ch.2ddiamsj}) and for a conformally coupled massless scalar field in a slab of cosmological spacetimes (chapter \ref{ch.sjcosm}) and study their properties. We compare the SJ vacuum in these spacetimes with known vacua. In the second part of the thesis, which consists of chapter \ref{ch.sseeds} and \ref{ch.sseecyl}, we study the SSEE. %and the spacetime commutator function. 
In chapter \ref{ch.sseeds} we study a special class of spacetimes where the SSEE can be evaluated analytically. In particular we compute the SSEE in the static patch of de Sitter spacetime and compare our results with that of Higuchi and Yamamoto \cite{Higuchi:2018tuk}. We also compute the SSEE in the static region of Schwarzschild de Sitter spacetime. In chapter \ref{ch.sseecyl} we explicitly compute the SSEE for a massless scalar field in a causal diamond embedded in a cylinder spacetime using a combination of analytical and numerical techniques and compare our results with the entanglement entropy evaluated by Calabrese and Cardy \cite{cc}.

We start this chapter with a review of the standard approach to a QFT vacuum in Sec.~\ref{stdqft.sec}, followed by a review of the SJ approach in Sec.~\ref{sjint.sec}. We then review the extension of the SJ formalism to causal sets in Sec.~\ref{sjcstint.sec}, which was the original motivation of the formalism. We end this chapter with a brief review of the SSEE in Sec.~\ref{sseeint.sec}.%, which will later be discussed in some detail in chapter \ref{ch.sseeds}.

\section{The standard approach to QFT}\label{stdqft.sec}
We now discuss the standard approach to quantizing fields and defining a QFT vacuum. One can refer to the book by Birrell and Davies \cite{birrell} and Wald \cite{wald} for details on the subject. We start with identifying the degrees of freedom, which in our case is a scalar field $\Phi(x)$ defined on a globally hyperbolic spacetime $(\cM,g)$, where $\cM$ is the spacetime manifold and $g$ is the metric on it, and the field equation governing its classical dynamics, which in our case is the Klein-Gordon equation
\eq{\Bkg\Phi\equiv (\square-\m^2)\Phi = 0.\label{kg.eq}}
$\square = \nabla_\mu\nabla^\mu$ is the d' Alembertian operator and $\m$ is the mass of the field. We find the solution space $\kker(\Bkg)$ of the Klein-Gordon operator, since in a globally hyperbolic spacetime $(\cM,g)$ there is a one-one onto map between the space of solutions of $\Bkg$ in $(\cM,g)$ and phase space on a Cauchy hypersurface $\spac$. We write a general solution of the Klein-Gordon equation in terms of the Klein-Gordon orthonormal modes $\{\bv_k\}$
\eq{\Phi(x) = \sum_k\left(a_k \bv_k(x) + a_k^* \bv_k^*(x)\right),\label{phexp.eq}}
with
\eq{(\bv_j,\bv_k) = -(\bv_j^*,\bv_k^*) = \delta_{jk}\quad\text{and}\quad (\bv_j,\bv_k^*)=0,}
where
\eq{(f,g) = i\int_{\spac} d\spac^a \; \left(f^*\partial_a g - g\partial_a f^*\right), \label{kgip.eq}}
is the Klein-Gordon inner product on $\kker(\Bkg)$. $d\spac^a$ is the volume element on the spacelike hypersurface $\spac\in\cM$ with respect to a future pointing unit normal. It can be shown that if $f$ and $g$ are the solutions of the Klein-Gordon equation, then $(f,g)$ is independent of the choice of the Cauchy hypersurface and is therefore uniquely determined in the spacetime. The corresponding quantum theory is obtained by mapping the field $\Phi(x)$ for all $x\in\cM$ to an operator valued distribution $\hPhi(x)$ on an appropriate state space and imposing the following commutator condition
\eq{[\hPhi(x),\hPhi(x')] = i\Delta(x;x)\label{pbcond.eq}}
where $\Delta$ is the difference between the retarded and the advanced Green's function of the Klein-Gordon operator, and is called as the Pauli-Jordan (PJ) function
\eq{\Delta(x;x') \equiv G_R(x;x')-G_A(x;x') \label{pj.eq}.}
Here $G_A = G_R^T$. The coefficients $a_k$ and $a_k^*$ in Eqn.~\eqref{phexp.eq} then gets mapped to the annihilation and creation operators $\ha_k$ and $\ha_k^\dagger$ respectively satisfying the following commutation relation,
\eq{[\ha_j,\ha_k^\dagger] = \delta_{jk}\quad\text{and}\quad [\ha_j,\ha_k] = [\ha_j^\dagger,\ha_k^\dagger]=0.\label{caocc.eq}}
The vacuum $\vacket$ is then defined as a state which is annihilated by all annihilation operators, i.e., $\ha_k\vacket=0$ for all $k$. The state space, which is also called as the Fock space, is then generated by acting creation operators $\ha_k^\dagger$ on the vacuum. It can be checked that the basis states 
\eq{|(n_1)_{k_1},(n_2)_{k_2},\ldots\rangle = \frac{1}{\sqrt{n_1!n_2!\ldots}}(\ha_{k_1}^\dagger)^{n_1}(\ha_{k_2}^\dagger)^{n_2}\ldots\vacket}
 of the Fock space are the eigenstates of the number operator $N_k = \ha_k^\dagger \ha_k$ with non-negative integer eigenvalues, which can be thought of as the particle number. For example $N_j$ acting on $\ha_k^\dagger\vacket$ is
\eq{N_j\ha_k^\dagger\vacket = \ha_j^\dagger \ha_j \ha_k^\dagger\vacket = \ha_j^\dagger [\ha_j,\ha_k^\dagger]\vacket = \delta_{jk}\ha_k^\dagger\vacket.}
Therefore $\ha_k^\dagger\vacket$ is a single $k$-particle state. At this point we observe that the choice of the vacuum and the particle states depends on the choice of the mode decomposition in Eqn.~\eqref{phexp.eq}. In Minkowski spacetime plane waves modes serves as a preferred set of Klein-Gordon orthonormal modes, i.e., $\bv_k(x) = \frac{1}{\sqrt{4\pi\omega_k}}e^{-i\omega_k t +\vec{k}.\vx}$, where $\omega_k=\sqrt{|\vec{k}|^2+m^2}$. Therefore we have a preferred notion of a vacuum and particle states. In contrast, in a generic curved spacetime instead of having a preferred set of modes, we have a collection of all equally valid set of Klein-Gordon orthonormal modes
\eq{\hPhi(x) =  \sum_k\big(\ha_k \bv_k(x) + \ha_k^\dagger \bv_k^*(x)\big) =  \sum_p\big(\hb_p \bu_p(x) + \hb_p^\dagger \bu_p^*(x)\big)}
which are related to each other by a Bogoliubov transformation.
\eq{\bv_k(x) = \sum_p \big(\alpha_{kp}\bu_p(x) + \beta_{kp}\bu_p^*(x)\big),\quad \ha_k = \sum_p \big(\alpha_{kp}^*\hb_p - \beta_{kp}^*\hb_p^\dagger\big)}
where the Bogoliubov coefficients $\alpha_{kp}$ and $\beta_{kp}$ satisfies the following relations
\eq{
\sum_\bk \left(\alpha_{kp}\alpha_{kp'}^* - \beta_{kp}^*\beta_{kp'}\right) &= \delta_{pp'},\label{eq:bc1}\\
\sum_\bk \left(\alpha_{kp}\beta_{kp'}^* - \beta_{kp}^*\alpha_{kp'}\right) &=0,\label{bc2.eq}
}
which are obtained by imposing the commutation relation Eqn.~\eqref{caocc.eq} on both $\ha_k,\ha_k^\dagger$ and $\hb_p,\hb_p^\dagger$.

The vacuum corresponding to $\{\bv_k\}$ and $\{\bu_p\}$ modes are defined using $\ha_k\vacket_a=0$ and $\hb_p\vacket_b=0$ respectively which are different unless $\beta_{kp}=0$ for all $k$ and $p$.

So we see that in a curved spacetime, there are multiple choices of QFT vacua. In the next section we will look at a particular way of selecting a vacuum uniquely in an arbitrary bounded globally hyperbolic spacetime without appealing to any of its spacetime symmetry.

\section{The Sorkin-Johnston approach to QFT}\label{sjint.sec}

A free field vacuum $\vacket$ is a Gaussian state, i.e., it is completely determined by its two point correlation function $W(x;x')=\vacbra\hPhi(x)\hPhi(x')\vacket$ also called as the Wightman function. Higher order correlation functions can be obtained from Wightman function using Wick's rule which is as follows
\eq{\vacbra\hPhi\vacket=0,\quad \vacbra\hPhi\hPhi\ldots\hPhi\vacket = \sum  \vacbra\hPhi\hPhi\vacket\ldots\vacbra\hPhi\hPhi\vacket,}
where the sum is over all ways of pairing of $\hPhi's$ and within each pairing the original order is preserved. The Wightman function for a free real scalar field satisfies the following properties
\begin{itemize}
\item $W$ is Hermitian i.e. $W(x;x')=W^*(x';x)$.
\item $W-W^*=i\Delta$. This directly follows from the commutator condition Eqn.~\eqref{pbcond.eq} along with the Hermiticity of $W$.
\item $W$ is positive semidefinite, i.e., $\left<f, W\circ f\right>\geq 0$ for all test functions $f:\cM\rightarrow\mathbb{C}$, where $\left<.,.\right>$ denotes the $\cL^2$ inner product in the space of test functions with compact support $\cF(\cM,g)$,
\eq{\left<f,g\right> = \int_\cM dV_x f^*(x)g(x),\label{l2.eq}}
$dV_x$ is a volume element in $(\cM,g)$ and $\circ$ denotes an integral operation which is defined as follows,
\eq{A\circ f (x) = \int_\cM dV_x A(x;x')f(x').\label{intop.eq}}
Positive semidefiniteness of $W$ can be seen by writing $\left<f, W\circ f\right>$ as 
\eq{\left<f, W\circ f\right> =\int_\cM dV_x dV_{x'}f^*(x)f(x')\vacbra\hPhi(x)\hPhi(x')\vacket= \left|\left|\int_\cM dV_x f(x)\hPhi(x)\vacket\right|\right|^2,}
which is the norm of a state $\int_\cM dV_x f(x)\hPhi(x)\vacket$ and hence is expected to be greater than zero.
\end{itemize}
In terms of the modes $\{\bv_k\}$, the Wightman function takes the form
\eq{W(x;x') = \sum_k \bv_k(x)\bv_k^*(x').}

In a static spacetime $(\cM,g_s)$ where $\cM=\mr\times\spac$ and $g_s$ describes the spacetime metric leading to the invariant length element given by
\eq{ds^2 = -p(\vx)dt^2 + q_{ij}(\vx)dx^idx^j,\quad p(\vx)>0\;\;\forall\;x\in\spac,\quad t\in(-\infty,\infty)}
there exists a preferred set of positive frequency modes
\eq{\bv_k(x) = e^{-i\omega_k t}X_k(\vx)}
for which the Wightman function satisfies an additional condition $W\circ W^*=0$ given that it is well defined, which is true for a spacetime with compact spatial hypersurface. This is termed as the ground state condition. Further we claim that the ground state condition uniquely determines the Wightman function. An argument in favour of the above statement can be found in \cite{Afshordi:2012ez}, which is as follows. If possible let there be two vacua $W_1$ and $W_2$ satisfying the ground state condition, i.e, \footnote{We have omitted $\circ$ but it is understood to be there.} \eq{W_1W_1^* = W_2W_2^* = 0\label{gs2.eq}.} The commutator condition implies that
\eq{W_1-W_1^* = W_2-W_2^* \Rightarrow (W_1-W_1^*)^2 = (W_2-W_2^*)^2\label{cc2.eq}}
Eqn.~\eqref{gs2.eq} and \eqref{cc2.eq} implies that
\eq{(W_1+W_1^*)^2 = (W_2+W_2^*)^2.\label{re2.eq}}
Assuming that there is a unique positive definite square root of $(W+W^*)^2$, Eqn.~\eqref{re2.eq} implies
\eq{W_1+W_1^* = W_2+W_2^*.} This along with the commutator condition implies that $W_1=W_2$.

In the SJ formalism, we define the SJ vacuum via its Wightman function $W(x;x')$ by imposing the four defining conditions on it which are $(i)$ Hermiticity, $(ii)$ the commutator condition, $(iii)$ positive semidefinite condition and $(iv)$ the ground state condition. The first three conditions are true for any vacua whereas the fourth condition is motivated by a standard vacuum in static spacetimes. These conditions in a bounded globally hyperbolic spacetime leads to a unique vacuum called as the Sorkin-Johnston (SJ) vacuum, which is the positive part $\mathrm{Pos}(i\hD)$ of $i\hD$,
\eq{\wsj = \frac{1}{2}\left(\sqrt{(i\Delta)^2}+i\Delta\right) = \mathrm{Pos}(i\Delta),\label{idtosj.eq}}
where $\sqrt{(i\Delta)^2}$ is defined as a positive definite integral operator whose square is $(i\Delta)^2$, i.e., $\sqrt{(i\Delta)^2}\circ \sqrt{(i\Delta)^2} = (i\Delta)^2$. $i\Delta$ is imaginary and anti-symmetric and is therefore an Hermitian operator. Further in a bounded spacetime region, it is self-adjoint \cite{aas}. Eigenvalues of $i\Delta$ come in positive-negative pairs
\eq{i\Delta\circ u_k &= \lambda_k u_k,\label{cep.eq}\\ i\Delta\circ u_k^* &= -\lambda_k u_k^*.}
Note here that the boundedness (finite volume) of the spacetime $(\cM,g)$ ensures that $\lambda_k$'s are finite and $u_k$'s are $\cL^2$ normalisable\footnote{Check \cite{aas} for details}. $i\Delta$ can then be written as
\eq{i\Delta(x;x') = \sum_k \lambda_k \big(u_k(x)u_k^*(x') - u_k^*(x)u_k(x')\big),\label{idmode.eq}}
where $\lambda_k>0$ and $u_k$'s are assumed to be $\cL^2$ normal. The SJ Wightman function is the positive part of $i\Delta$, i.e.,
\eq{\wsj(x;x') = \sum_k \lambda_k u_k(x)u_k^*(x'),\label{eq:wsj}}
where $\lambda_k$'s are the positive eigenvalues of $i\Delta$ with $u_k$ being the corresponding eigenfunction. The problem of computing the SJ vacuum is now reduced to finding eigenvalues and eigenfunctions of the Pauli-Jordan operator. Since $\im(i\Delta)=\kker(\Bkg)$ \cite{wald,sorkin17}, the field operator $\hPhi$ can be expanded in terms of the SJ modes $\{\sqrt{\lambda_k}u_k\}$ as
\eq{\hPhi(x) = \sum_k \sqrt{\lambda_k}\big(\ha_k u_k(x) + \ha_k^\dagger u_k^*(x)\big),}
and the SJ vacuum $\ksj$ satisfies $\ha_k\ksj=0$.\footnote{One can check that eqn.~\eqref{idmode.eq} implies $[\ha_j,\ha_k]=[\ha_j^\dagger,\ha_k^\dagger]=0$ and $[\ha_j,\ha_k^\dagger]=\delta_{jk}$.} 

The algorithm for constructing the SJ vacuum can be summarised as follows
\eq{G_R(x;x')\rightarrow i\Delta(x;x')\rightarrow \wsj(x;x')=\mathrm{Pos}(i\Delta),\label{sjalgorithm.eq}}
i.e., we start with the retarded Green's function of the Klein-Gordon operator, which is unique in a globally hyperbolic spacetime. Starting from the retarded Green's function, one can construct the Pauli-Jordan function, as in Eqn.~\eqref{pj.eq}. The SJ vacuum using the Pauli-Jordan function can then be uniquely defined, as in Eqn.~\eqref{idtosj.eq}, which is always well defined in a bounded spacetime. SJ vacuum in an unbounded spacetime $(\cM,g)$ can be obtained by computing it in a bounded subspace $(\R,g)$ of $(\cM,g)$ and then taking the following limit
\eq{\wsj^{(\cM)}:=\lim_{\R\rightarrow\cM} \wsj^{(\R)},}
if the limit exist\footnote{This limit may not always exist. For example it does not exist in global de Sitter spacetime of even dimensions. We will see this in detail in chapter~\ref{ch.sjcosm}.}. Therefore to rigorously show that $\ksj$ reduces to a known vacuum in an unbounded spacetime one has to compare the full spacetime limit of the SJ Wightman function obtained in its bounded subset with that of the known vacuum. In \cite{aas} a mode comparison argument was used to show
that the SJ vacuum in Minkowski spacetime is the Minkowski vacuum. However, as argued in \cite{Surya:2018byh} a mode comparison may not indicate the equivalence of vacua. A more careful approach was adopted  in \cite{Afshordi:2012ez} where the massless SJ vacuum was calculated explicitly in a 2d causal diamond $\diam$ of length $2L$.  Evaluating $\wsj$ in the center of the diamond, i.e., with  $|\vec x -\vec x'| <<L $  and $|\vec x|, |\vec x'| <<L $ it  was shown that $\ksj \sim \vac_{\text{mink}}$.  Thus, away from the boundaries, the massless SJ vacuum is indeed the Minkowski vacuum. In chapter~\ref{ch.2ddiamsj} of this thesis, we perform a similar calculation for a massive scalar field with a small mass in a finite diamond, in which the SJ construction is well defined and study its behaviour in the center and the corner of the diamond.

As shown in \cite{fewster2012} the SJ vacuum is not necessarily Hadamard which implies that the nature of the UV divergence is different from that of flat spacetime. It was however shown by Brum and Fredanhagen \cite{Brum:2013bia} that an infinite family of Hadamard states can be constructed from the SJ state, by embedding the spacetime $(\cM,g)$ in a slightly larger spacetime, along with a deformation of the Pauli-Jordan operator, or as shown subsequently, by a deformation the $\cL^2$ measure on the space of test functions itself \cite{wingham}\footnote{The SJ vacuum was first constructed on a causal set, which can be thought of as a ``UV completion'' of continuum spacetime \cite{johnston,sorkin}. The  modification of the UV behaviour is therefore not unexpected. For example, an analysis of the causal set de Sitter SJ vacuum in  \cite{Surya:2018byh} suggests that it is significantly  different from the standard $\alpha$-vacuum of the continuum.}.
  
%Important to this calculation is not only the boundedness of $i\hD$ which ensures self-adjointness, but also its Hilbert-Schmidt property using which the completeness of its  eigenfunctions can be checked. In higher even dimensions, the massless retarded  Green's function has  $\delta$ functions. While  $i\hD$ is self-adjoint for bounded spacetime region, it is not Hilbert Schmidt.

\section{SJ vacuum on a causal set}\label{sjcstint.sec}
In this section, we will go through the construction of the SJ vacuum on a causal set approximated by a bounded spacetime with a given spacetime dimension. On a causal set, due to lack of the continuum structure, it is difficult to construct an analogue of differential operators, killing vectors and therefore spacetime symmetry. It is however shown that in some special cases, one can construct the retarded Green's function on a causal set \cite{johnston,Nomaan:2017bpl,X:2021dsa}, which then directly leads to the SJ vacuum following the path shown in Eqn.~\eqref{sjalgorithm.eq}. This was the original motivation for the SJ formalism.

%We start this section with a brief introduction to causal sets.
\subsection{A brief introduction to causal sets}\label{csrev.sec}
Causal set theory is an approach to Quantum Gravity which hypothesizes that the Lorentzian manifold structure of the spacetime is a large scale approximation to a more fundamental structure, which is that of a causal set \cite{Bombelli:1987aa}. A causal set $(\cC,\preceq)$ is a locally finite partially ordered set (poset) with the partial order $\preceq$ satisfying the following properties.
\begin{enumerate}
        \item {\sl{Reflexivity:}} For any $a\in\cC$, $a\preceq a$.
       \item {\sl{Transitivity:}} For any $a,b,c \in \cC$, $a\preceq b$ and $b\preceq c$ $\Rightarrow a\preceq c$.
       \item {\sl{Antisymmetry:}} For any $a, b \in \cC$, $a\preceq b$ and $b\preceq a$ implies $a=b$.
       \item {\sl{Local finiteness:}} For all $a,b\in\cC$, the set $\{z\in\cC|a\preceq z\preceq b\}$ is of finite size.
   \end{enumerate}
We further define $a\prec b$ iff $a\preceq b$ and $a\neq b$. Here the partial order $\prec$ represent causal relations, i.e., $a\prec b$ means that $b$ is in the causal future of $a$. Conditions 1-3 are satisfied by causal relations on any Lorentzian manifold whereas condition 4 implies spacetime discreteness. For any $a,b\in\cC$, we define the order interval $\oi[a,b]$ as
\eq{\oi[a,b] := \{z\in\cC|a\prec z\prec b\}}
which in the continuum approximation represents the Alexandrov interval (causal diamond) between $a$ and $b$. The cardinality of the order interval $|\oi[a,b]|$ can then be thought of as proportional to the volume of the causal diamond between $a$ and $b$. These information (causal structure + volume) is sufficient to determine the Lorentzian geometry of the spacetime, which is a result based on theorems by Hawking, King, McCarthy and Malament \cite{Hawking:1976fe,Malament:1977}.% A {\sl{link}} ($\link$) is defined as $a\link b$ iff $a\prec b$ and $|\oi[a,b]|=0$.

A causal set can be characterised by its {\sl{causal matrix}} $C$ or {\sl{link matrix}} $L$, which are defined as
\eq{C_{ab} = \begin{cases} 1 &\text{if}\; b\prec a \\ 0 &\text{otherwise} \end{cases},\quad L_{ab} = \begin{cases} 1 &\text{if}\; b\link a \\ 0 &\text{otherwise} \end{cases}}
where $\link$ denotes a {\sl{link}}, which is defined as $a\link b$ iff $a\prec b$ and $|\oi[a,b]|=0$.

A causal set approximated by a given spacetime manifold is obtained by the process of sprinkling. It is the process of picking points from a spacetime $(\cM,g)$ using a random Poisson discretisation such that the probability of picking $n$ points from a region of spacetime volume $V$ is
\eq{P(n;V) = \frac{(\rho V)^n}{n!} e^{-\rho V},}
where $\rho$ is the sprinkling density. This gives $\left<n\right> = \rho V$, where $\left<.\right>$ denotes average over sprinklings. The causal ordering $\prec$ is then simply inherited from that of the spacetime $(\cM,g)$. $\rho^{-1}$ defines the discreteness scale and continuum limit is defined as $\rho\rightarrow\infty$. One may refer to \cite{Bombelli:1987aa,Surya:2019ndm} for more details on causal set theory.

\subsection{Scalar field Green's function on a causal set}
Retarded Green's function $G_c(x;x')$ for a scalar field on a causal set $(\cC,\preceq)$ approximated by $d=2$ and $d=4$ Minkowski spacetimes ($\mathbb{M}^2$ and $\mathbb{M}^4$) was studied by Johnston in \cite{johnston}, which was later extended to de Sitter and anti de Sitter spacetimes in \cite{Nomaan:2017bpl}, in terms of the quantities like causal matrix and link matrix which are used to characterise a causal set. This is done by requiring that the continuum limit ($\rho\rightarrow\infty$) of the causal set Green's function averaged over sprinklings on the corresponding continuum spacetime is equal to the retarded Green's function of the Klein-Gordon operator in that continuum spacetime, i.e.,
\eq{\lim_{\rho\rightarrow\infty} \left<G_c(x;x')\right> = G_R(x;x').}
Retarded Green's function is the solution of
\eq{\left(\square_x - m^2\right)G_R(x;x') = \frac{\delta(x-x')}{\sqrt{-g}}}
with the boundary condition $G_R(x;x') = 0$ for all $x$ and $x'$ such that $x_0<x'_0$, where $x_0$ is the time coordinate of $x$. Here $g$ is the determinant of the spacetime metric tensor.
In $\mathbb{M}^2$ the retarded Green's function for a massless scalar field is
\eq{G_{R,0}^{(2)}(x;x') = -\frac{1}{2}\theta(x_0-x'_0)\theta(\tau^2(x;x')),}
where $\tau^2(x;x') = (x_0-x'_0)^2-(x_1-x'_1)^2$ with $x_0$ and $x_1$ being the temporal and the spatial coordinates of $x$.
The corresponding causal set Green's function for massless scalar field is
\eq{G_{c,0}^{(2)}(x;x') = -\frac{1}{2}C_{xx'}.\label{csgf2d.eq}}
In $\mathbb{M}^4$ the retarded Green's function for a massless scalar field is
\eq{G_{R,0}^{(4)}(x;x') = -\frac{1}{2}\theta(x_0-x'_0)\delta(\tau^2(x;x')),}
where $\tau^2(x;x') = (x_0-x'_0)^2-(x_1-x'_1)^2-(x_2-x'_2)^2-(x_3-x'_3)^2$.
The corresponding causal set Green's function for massless scalar field is
\eq{G_{c,0}^{(4)}(x;x') = -\frac{1}{2\pi}\sqrt{\frac{\rho}{6}}L_{xx'}.\label{csgf4d.eq}}
%As shown in \cite{Nomaan:2017bpl}, causal set Green's function given by Eqn.~\eqref{csgf2d.eq} and \eqref{csgf4d.eq} respectively holds true for $d=2$ and $d=4$  de Sitter and anti de Sitter spacetimes as well. That is the average of the causal set Green's function we have in Eqn.~\eqref{csgf2d.eq} and \eqref{csgf4d.eq} over the sprinklings in de Sitter and anti de Sitter spacetimes with $d=2$ and $d=4$ in the continuum limit is equal to the retarded Green's function in these respective spacetimes.

We obtain the retarded Green's function for a massive scalar field from that of a massless scalar field as
\eq{G_{R,m} &= G_{R,0}-m^2 G_{R,0}\circ G_{R,0} + m^4 G_{R,0}\circ G_{R,0}\circ G_{R,0} + \ldots,\nonumber\\ &= G_{R,0}-m^2 G_{R,0}\circ G_{R,m}.\label{g0togm.eq}}
This can be checked by operating the massive Klein-Gordon operator on both sides of Eqn.~\eqref{g0togm.eq}. We can therefore define the massive causal set Green's function in terms of the massless one as
\eq{G_{c,m} &= G_{c,0}-\frac{m^2}{\rho} G_{c,0}*G_{c,0} + \frac{m^4}{\rho^2} G_{c,0}*G_{c,0}*G_{c,0} + \ldots,\nonumber\\ &= G_{c,0}-\frac{m^2}{\rho}G_{c,0}*G_{c,m},\nonumber\\&= \left(\mathbb{I} +\frac{m^2}{\rho} G_{c,0}\right)^{-1}*G_{c,0}. \label{g0togmcs.eq}}
Here $*$ denotes the matrix multiplication. The operator $\rho^{-1}*$ is the causal set analogue of the integral operator $\circ$ defined in Eqn.~\eqref{intop.eq}, where $\rho^{-1}$ is the causal set analogue of the volume element $dV$. We use Eqn.~\eqref{g0togmcs.eq} to get the massive causal set Green's function in terms of the massless one, which then acts as the starting point towards the computation of the QFT vacuum state for a massive scalar field on a causal set approximated by $d=2$ and $d=4$ Minkowski, de Sitter and anti de Sitter spacetimes.

\subsection{SJ vacuum}
Now we have the causal set Green's function, we just have to follow the algorithm of Eqn.~\eqref{sjalgorithm.eq} to get to the causal set SJ Wightman function. The Pauli-Jordan matrix can simply be written as the difference between the causal set Green's function and its transpose,
\eq{\hD_c = G_c - G_c^T.}
We then solve for the eigenvectors $u_k$ and eigenvalues $\lambda_k$ of $i\hD_c$ and evaluate the SJ Wightman function as
\eq{\wsjc = \mathrm{Pos}(i\Delta_c)= \sum_{\lambda_k>0}\lambda_k u_k u_k^\dagger,}
where the sum is over positive eigenvalues of $i\Delta_c$.

\section{Spacetime Entanglement Entropy of Quantum Fields}\label{sseeint.sec}
%In this section we briefly discuss the spacetime formulation of the entanglement entropy of Quantum fields developed by Sorkin \cite{ssee}.
Entanglement entropy of quantum fields is usually studied using the von Neumann formula, which is in terms of the density matrix ($\hrho$)\footnote{Not to be confused with the density of a causal set.} of the state. For a state $\hrho$ in a bipartite system with two subsystems $A$ and $B$, the von Neumann entropy $\cS_A$ of the subsystem $A$ is defined as a trace
\eq{\cS_A = - \tr\big(\hrho_A\log\hrho_A\big),\label{vne.eq}}
where $\hrho_A$ is defined as the trace of $\hrho$ over the states of $B$,
\eq{\hrho_A = \tr_B(\hrho) = \sum_{\{\left|\psi_b\right>\}\subset \cF_B}\left<\psi_b\right|\hrho\left|\psi_b\right>,}
where the sum is over basis $\{\left|\psi_b\right>\}$ of the state space $\cF_B$ of the subsystem $B$. One can check that if the state $\hrho$ in $\cF_A\otimes\cF_B$ is not entangled, i.e., it is of the form $\hrho = \big|\psi^{(1)}_A\psi^{(2)}_B\big>\big<\psi^{(1)}_A\psi^{(2)}_B\big|$ for some $\big|\psi^{(1)}_A\big>\in\cF_A$ and some $\big|\psi^{(2)}_B\big>\in\cF_B$, we have $\cS_A = \cS_B = 0$. The quantum field state in the von Neumann entropy formula is defined on a Cauchy hypersurface (which is a constant time surface) and therefore may have dependence on the choice of the hypersurface.

In most definitions of entanglement entropy, including the von Neumann entropy, one considers the entanglement of the state at a moment of time between two {\it spatial} regions,  
which is restrictive in the context of quantum gravity, or even quantum fields in curved
spacetimes which may lack a preferred time. A  more global, covariant notion
of the entanglement entropy could be more  useful when  working with a covariant path-integral or histories based approach to quantum gravity.  In particular, the notion of a state at a moment of time might not survive, especially in theories where the manifold structure of spacetime breaks down in the deep UV regime, as in causal set theory \cite{Bombelli:1987aa}. Such a formulation is moreover in keeping with the broader framework of AQFT, where observables  are associated with spacetime regions rather than spatial hypersurfaces \cite{fewster2020algebraic}.

A spacetime formula for the entanglement entropy of quantum fields in a Gaussian state restricted to a globally hyperbolic region $\cO\subset\cM$ is recently proposed by Sorkin \cite{ssee} in terms of the Wightman function. Starting from a pure Gaussian state Wightman function in $\cM$ and restricting it to $\cO$, the Sorkin's spacetime entanglement entropy (SSEE) associated with this state is defined as
\eq{\cS_\cO = \sum_\mu \mu\log|\mu|,\quad W\big|_{\cO}\circ\chi=\mu i\Delta\circ\chi,\quad \chi\not\in\kker(i\Delta).\label{ssee.eq}}
The Wightman function in Eqn.~\eqref{ssee.eq} describes the state of the system which is inherently observer dependent, unless we fix it to be the SJ Wightman function. But this choice is not always possible, as we will see in chapter~\ref{ch.sjcosm} that the SJ vacuum in even dimensional de Sitter spacetime is ill defined. This expression was obtained in \cite{ssee} by observing that for discrete spaces the field operators, which are the generators of the free field algebra,  can be expanded in a ``position-momentum'' basis which renders $i\hD$ into a $2\times 2$ block diagonal form and simultaneously diagonalises the symmetric part of $\hW$. Each block is a single particle system for which the von Neumann entropy for a Gaussian state can be calculated using results of \cite{bkls}. The expression can be then rearranged in terms of the eigenvalues of $\hD^{-1} \hW$ \footnote{A derivation of the same can also be found in \cite{Chen:2020ild,Keseman:2021dkf}.}. Summing over all the blocks gives the SSEE expression Eqn.~\eqref{ssee.eq}. The equivalence between the von Neumann entropy Eqn.~\eqref{vne.eq} and the SSEE Eqn.~\eqref{ssee.eq} for a Gaussian state was also obtained in \cite{tommaso}. 

We now sketch the construction of the SSEE \cite{ssee} in the continuum for a compact region $\cO \subset M$ in a spacetime
$(M,g)$. The Wightman function $\hW$ is positive semi-definite and self-adjoint  with respect to the $\cL^2$ norm. Hence $\ker(\hR) \subseteq \ker(\hD)$, or equivalently $\im(\hR) \supseteq \im(\hD)$, where $R:=(W+W^*)/2$ is the real part of W.% \mnote{Is it true that they are equal?} 

In what follows we  restrict to $\im(\hR)$.  Notice that  $\hR$ is symmetric and  positive definite for $f \in \im(\hR)$ and can therefore be viewed as a ``metric'', with a  symmetric inverse 
\begin{equation}
(\hR^{-1} \circ \hR) (x,x')  = (\hR\circ \hR^{-1})(x,x') = \delta(x-x').  \label{Rinv.eq}
\end{equation}
We can use it to  define the new operators 
\begin{eqnarray} 
  i \htD &\equiv&   i \hR^{-1}\circ \hD \Rightarrow  i \hD= i \hR \circ \htD  \nonumber \\
 \Omega &\equiv& - \htD \circ \htD,  
\end{eqnarray}
%so that  $\htD$  can be thought of as  $\hD$ with one ``lowered index''. 
Since $\hD \circ  f \in \im(\hD) \subseteq  \im(\hR) $, $\htD$ is well defined and  $\ker \htD=\ker \hD$. The operator
$\Omega$ is moreover  positive semi-definite with respect to the $\hR$ norm
\begin{equation}
\rin{f}{g} = \int dVdV'f^*(x) R(x,x') g(x'). 
  \end{equation} 
since 
\begin{equation} 
\rin{ f}{  \Omega \circ f}   = \rin{f}{ i \htD \circ i \htD f } = \rin{ i \htD \circ f}{  i \htD
 \circ f} \geq 0  
\end{equation}
for  $f \in \im(\hR)$. Thus the eigenvalues $\{\sigma_k^2 \}$  of $\Omega$ are all positive.  They are moreover degenerate
since the eigenfunctions come in pairs $(\psi, \psi^*)$, which in turn means that we can use a  real eigenbasis $\{\pone_k,
\ptwo_k\}$, where  $\ptwo_k= \sigma_k^{-1} \htD \circ  \pone_k$, satisfying
\begin{equation}
\rin{\pone}{\pone} = \rin{\ptwo}{\ptwo}=1,  \quad \rin{\pone}{\ptwo}=0
  \end{equation} 
  Since $\Omega$  is symmetric and therefore self adjoint, it admits a  
spectral decomposition 
\begin{equation}
  \Omega (x,x') = \sum_k  \sigma_k^2\biggl(  \pone_k(x) \tpone_k(x')+  \ptwo_k(x) \tptwo_k(x')\biggr)
  \end{equation} 
where $\tpone_k(x) \equiv (\hR \circ \pone_k)(x), \tptwo_k(x) \equiv (\hR \circ \ptwo_k)(x) $. Hence 
\begin{equation}
  i\htD (x,x') = \sum_k  \sigma_k\biggl( \pone_k(x) \tptwo_k(x') - \ptwo_k(x) \tpone_k(x')  \biggr),  
  \end{equation} 
with the related two point function $\hK = \hR^{-1}\circ \hW = {\mathbf 1} + \frac{i}{2} \htD$,  having  the ``block'' diagonal form
\eq{ K(x,x') \! = \!\sum_k   \pone_k(x) \tpone_k(x') +  \ptwo_k(x) \tptwo_k(x')  +\frac{i}{2} \sigma_k\biggl( \pone_k(x)
  \tptwo_k(x') - \ptwo_k(x) \tpone_k(x')\biggr).  \label{blockdiagK.eq}}
Since each block is  decoupled from all
others it can be viewed as a single pair of harmonic oscillators whose entropy can then be calculated relatively easily.

Since $\im(\Omega)=\im(\hD)= \ker \Box$, the eigenfunctions $\{\pone_k, \ptwo_k \}$  of $\Omega$ can be used for  a mode
decomposition for the quantum field. 
In  the harmonic oscillator basis,
\eq{\hat \Phi (x) = \sum_k \biggl(\htq_k \tpone_k(x) + \htp_k \tptwo_k(x) \biggr), \quad
  [\htq_k,\htp_{k'}]=i \sigma_k\delta_{k,k'}.\label{hphq.eq}}
which is consistent with the commutator $i \Delta(x,x') = [\hat \Phi(x), \hat \Phi(x')]$. In this basis 
\begin{eqnarray} 
K(x,x') & =&  < (\hR^{-1} \circ \hat \Phi(x))  \hat \Phi(x') >  \nonumber \\ &=&  \sum_{k} \biggl( <\htq_k \htq_{k}> 
             \pone_k(x)  \tpone_k(x') +  
  < \htq_k \htp_{k}> 
  \pone_k(x) \tptwo_k(x') \nonumber \\
  && + <\htp_k \htq_{k}>  
\ptwo_k(x) \tpone_k(x')+  <\htp_k \htp_{k}> 
\ptwo_k(x)  \tptwo_k(x') \biggr). 
\end{eqnarray}
where the block diagonalisation is obtained by comparing with $\mathbf 1=\frac{1}{2} (\hK + \hK^*)$ and $\htD=\frac{1}{2 i} (\hK
- \hK^*)$.  For each $k$ therefore we  have a harmonic oscillator with 
\begin{equation} 
  \hK_k \equiv
  \begin{pmatrix}
    <\htq_k \htq_k> &   <\htq_k \htp_k>\\
    <\htp_k \htq_k>& <\htp_k \htp_k>
  \end{pmatrix}  =  \begin{pmatrix}
   1 &  \frac{ i}{2} \sigma_k\\
   -\frac{ i}{2} \sigma_k & 1 
  \end{pmatrix}, \quad  i\htD_k \equiv \begin{pmatrix}
   0 &  i \sigma_k\\
   -i \sigma_k & 0  
 \end{pmatrix},
 \label{singleharm.eq}
\end{equation}
The problem thus reduces to a single degree  of freedom  with $\hK_k$ representing the two-point correlation functions
for a single harmonic oscillator in a given state, so that we can from now on drop  the index $k$.

Associated with any Gaussian state is a density matrix of the form 
\begin{equation}
\rho(q,q') = \sqrt{\frac{A}{\pi}} e^{-\frac{A}{2} (q^2 +q'^2) + i \frac{B}{2} (q^2-q'^2)-\frac{C}{2} (q-q')^2}.\label{gdm.eq}
  \end{equation} 
  One can use the replica trick \cite{Chen:2020ild,Keseman:2021dkf} to find the von Neumann entropy of $\rho$ to be  
  \begin{equation}
    S= - \frac{ \mu \ln \mu + (1-\mu) \ln (1-\mu) }{1-\mu},\label{rts.eq}
  \end{equation}
  where $\mu=\frac{\sqrt{1+2C/A} -1}{\sqrt{1+2C/A} + 1}$. The task at hand is to find the density matrix
  associated with the state Eqn. \eqref{singleharm.eq}. 

Consider the position  eigenbasis $\{\ket{q}\}$ of $\htq$, in which $\htq\ket{q} = q\ket{q}$, $\htp\ket{q}=-i\sigma\partial_q\ket{q}.$
The correlators $<\hat\eta_a \hat \eta_b > = \mathrm{Tr}(\hat \eta_a \hat \eta_b \hrho)$, for $(\hat \eta_1,
\hat \eta_2) \equiv (\htq,\htp)$ can then be explicitly calculated in this basis, using $\bra{q} \hrho \ket{q'} =\rho(q,q')$ as in Eqn
\eqref{gdm.eq}.  The integrals reduce to  the Gaussians $\int dq \, q^2\,  e^{-Aq^2}$ and are easily evaluated to
give 
% Also defining
% \eq{Q(q_1,q_2)\equiv\bra{q_1}\hq\ket{q_2} = q_2\delta(q_1-q_2)\quad\text{and}\quad P(q_1,q_2)\equiv\bra{q_1}\hp\ket{q_2} = -i\sigma\partial_{q_2}\delta(q_1-q_2).}
% Correlators can then be written as
\eq{<\htq\htq> = 1/(2A), \quad 
<\htq\htp> = \frac{ i}{2}\sigma,  \quad, 
<\htp\htq> =-\frac{ i}{2}\sigma,  \quad, 
<\htp\htp> = \sigma^2(A/2+C), }
where we have used the fact that $<qp>$ is purely imaginary and $<qq>$ and $ <pp>$ are purely real from Eqn
\eqref{singleharm.eq} to put $B=0$.    
% integrals using tare evaluated in \cite{Chen:2020ild} for the density matrix given by Eqn.~\eqref{gdm.eq}, which leads to\footnote{$\sigma=1$ is assumed in \cite{Chen:2020ild}.}
% \eq{
%  \hK %\equiv
%   %\begin{pmatrix}
%    % <q_k q_k> &   <q_k p_k>\\
%    % <p_k q_k>& <p_k p_k>
%   %\end{pmatrix} 
% =  \begin{pmatrix}
%    1/(2A) &  \frac{ i}{2}\sigma\\
%    -\frac{ i}{2}\sigma & \sigma^2(A/2+C) 
%   \end{pmatrix}.% \quad  i\htD = \begin{pmatrix}
%    %0 &  i\sigma\\
%    %-i\sigma & 0  
%   %\end{pmatrix}
% }
Equating to Eqn. \eqref{singleharm.eq} gives  $A=1/2$ and $C=1/\sigma^2 - 1/4$, so that $\mu=\frac{2-\sigma}{2+\sigma}$
for the associated single oscillator  von Neumann entropy Eqn. \eqref{rts.eq}.  
Following the work of \cite{Chen:2020ild}, we notice that  the eigenvalues of $(i\htD)^{-1}\hK$  are 
\eq{\mu^\pm = \frac{1}{2}\pm\sqrt{\frac{1}{4}+\frac{C}{2A}}=\frac{1}{2} \pm \frac{1}{\sigma},}
in terms of which Eqn.~\eqref{rts.eq} simplifies to 
\eq{S= \mu^+\ln|\mu^+| + \mu^-\ln|\mu^-|.}
Since the eigenvalues of $(i\htD)^{-1}\hK$ are same as the generalised eigenvalues of $\hW\circ\chi = i\mu\hD\circ\chi$,
we come to the SSEE formula Eqn.~\eqref{ssee.eq} after summing over all $k$.

The SSEE formula generalises the calculation of the entanglement entropy for a state at a given time to that associated with a spacetime region. SSEE is a covariant entropy formula, depending on the choice of the vacuum and the UV cut-off required to compute the SSEE, and is also readily available for an extension to quantum gravity theories, such as causal set theory, where the Cauchy hypersurface may not be well defined. 

This formula has been applied in the continuum to the $d=2$ nested causal diamonds and shown to give the expected Calabrese-Cardy logarithmic  behaviour \cite{saravani2014spacetime}. It has also been calculated in the discrete setting, i.e., for causal sets approximated by causal diamonds in Minkowski and  de Sitter spacetimes: in $d=2$ where it shows the expected logarithmic behaviour and in $d=4$ where it shows the expected area behaviour 
~\cite{Sorkin:2016pbz,Belenchia:2017cex,Surya_2021,Duffy:2021dtc,Keseman:2021dkf}. 

In chapter \ref{ch.sseeds} and \ref{ch.sseecyl} we evaluate the SSEE in the static patch of de Sitter spacetime and find it to be in agreement with the result of Higuhi and Yamamoto \cite{Higuchi:2018tuk}, and in a causal diamond embedded in a $d=2$ cylinder spacetime and find it to be in agreement with the von Neumann entropy evaluated by Calabrese and Cardy in an arc of a circle \cite{cc}. Based on this and previous works in \cite{saravani2014spacetime} we conclude that the von Neumann entropy of a quantum field state defined on a spacelike hypersurface $\spac$ and restricted to a subset $A$ of $\spac$ is same as the SSEE of a pure state defined in the domain of dependence $D(\spac)$ of $\spac$ and restricted to the domain of dependence $D(A)$ of $A$.\footnote{Domain of dependence $D(A)\subset\cM$ of a spacelike hypersurface $A$ is a region such that for any $x\in D(A)$, all inextendible causal curves which pass through $x$ intersect $A$ exactly once. $D(A)$ is then a globally hyperbolic region. $\cM$ is globally hyperbolic iff there exist a spacelike hypersurface $\spac\in\cM$ such that $D(\spac)=\cM$.}. %A rigorous proof of this equivalence between the SSEE and the von Neumann entropy is required, which we will try to construct in the future.
\chapter{Sorkin-Johnston vacuum in the 2d causal diamond}\label{ch.2ddiamsj}
As seen in Chapter~\ref{ch.int}, it is straight-forward to define the SJ vacuum in a bounded globally hyperbolic spacetime, however the integral form of the eigenvalue equation makes it a challenging task solve for the SJ vacuum. As a result there are very few cases in which it has been obtained explicitly. These include the  massless free scalar field in a 2d flat
causal diamond \cite{Afshordi:2012ez,johnston}, in a patch of trousers spacetime \cite{Buck:2016ehk}, in ultrastatic slab spacetimes \cite{fewster2012} etc.
In this chapter we study the SJ vacuum for a massive free scalar field in the 2d flat causal diamond $\diam$ of length $2L$,  both in the continuum and on a  causal set $\cc_\diam$ obtained from sprinkling into $\diam$.

A causal diamond $\diam_{pq}\subset\cM$ is the intersection of the interior of the future light cone $J^+(p)$ of a point $p\in\cM$ and that of the past light cone $J^-(q)$ of a point $q\in J^+(p)$, i.e, $\diam_{pq} = J^+(p)\cap J^-(q)$. The causal diamond $\diam$ which we are here interested in corresponds to $p$ and $q$ with the same spatial coordinate. In lightcone coordinates
\eq{u=\frac{t+x}{\sqrt{2}}\quad\text{and}\quad v=\frac{t-x}{\sqrt{2}},\label{lcc.eq}}
$\diam$ is defined as the region with $u\in[-L,L]$ and $v\in[-L,L]$, the diamond therefore is of length $2L$ or height $2\sqrt{2}L$.
\begin{figure}[h!]
\centering{\includegraphics[height=6cm]{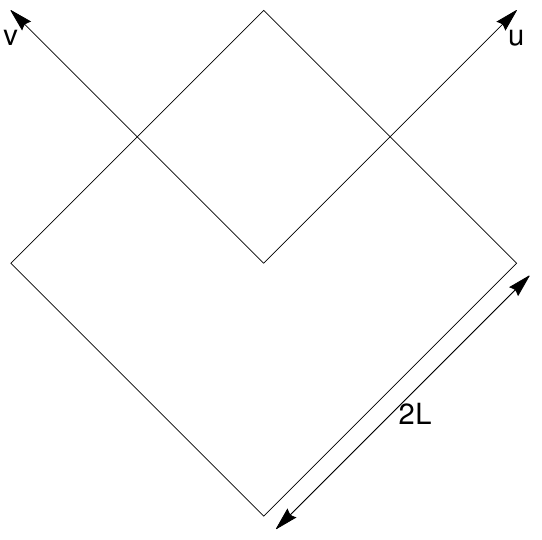}}
\caption{A causal diamond of length $2L$ with the lightcone $(u,v)$ coordinates shown.}
\label{diam1.fig}
\end{figure}

Since in the continuum causal diamond it is difficult to solve the central SJ eigenvalue problem Eqn.~\eqref{cep.eq} for a massive scalar field explicitly, we restrict our attention to a small mass scalar field by keeping terms only up to $\cO(m^4)$ (in dimensionless units, with $L=1$). The eigenfunctions and eigenvalues so obtained reduce  to their massless counterparts when $m=0$ \cite{Afshordi:2012ez}.  This allows us to formally construct the SJ Wightman function $\wsj$ in $\diam$ upto this order. As in \cite{Afshordi:2012ez} we consider two  regimes of interest: one in the center of the diamond, which resembles 2d Minkowski spacetime, and the other in one of the corners (right or left), which resembles a Rindler Wedge.  In a small central region $\diam_l$ of size $l$,   we find analytically that $\wsj$ resembles the {\it massless} Minkowski vacuum $\wmink$ with an IR cut-off up to a
small mass-dependent constant $\emc$, rather than the massive Minkowski vacuum $\wminkm$.  In the corner,  
$\wsj$ resembles the massive mirror vacuum  $\wmirrm$, with the difference depending on a small  mass-dependent constant $\emcr$,
rather than the expected agreement with the massive Rindler vacuum $\wrindm$.   Both $\emc$ and $\emcr$ are the errors
that arise in the approximation of quantization conditions which are solutions of a mass dependent transcendental equation,  and
are therefore non-trivial to calculate analytically.  

In order to find  $\e_m^{center}, \e_m^{corner}$ we evaluate 
$\wsj$ numerically using a convergent  truncation $\wsj^t$ of the mode-sum. The
calculations show that $\emc, \emcr$ contribute negligibly to $\wsj$ both in the center and the corner.   
This confirms that for a small mass $\wsj$ corresponds to the massless  
Minkowski vacuum. %This behavior is unexpected, and suggests that at least  in this small mass approximation $\wsj$ does not satisfy the expected massive Poincare invariance of the vacuum but rather the massless Poincare invariance.
In the corner, again $\e_m^{corner}$ is found to be small, and confirms that  $\wsj$ resembles  $\wmirrm$ rather than
$\wrindm$.   

We then examine the behavior of this truncated $\wsj^{t}$ in a slightly enlarged  region in the center. We find that it continues to  differ  from $\wminkm$, while agreeing with $\wmink$ at least up to $l\sim 0.1$.   In an
enlarged corner region  $\wsj$ shows a marked deviation from $\wmirrm$, but it still does not resemble the Rindler
vacuum.

Later in this chapter we obtain the SJ Wightman function $\wsjc$ numerically for a causal set $\cc_\diam$ obtained by sprinkling into $\diam$, for a range of masses. We find that in the small mass regime $\wsjc$ agrees with our analytic calculation of $\wsj$ in the center of the diamond and therefore resembles $\wmink$. This means that it differs from $\wminkm$ in the small mass
regime. However, as the mass is increased, there is a cross-over
point at which the massless and massive Minkowski vacuum coincide. This occurs when the mass $m_c\equiv 2\cof \sim
0.924$, where $\cof \sim 0.462$ is the IR cut-off for the massless vacuum calculated in \cite{Afshordi:2012ez}.  For $m \geq
m_c$, $\wsjc$ then tracks the massive Minkowski vacuum instead of the massless Minkowski vacuum.   
In the corner of the diamond, the causal set $\wsjc$ looks like the mirror vacuum for small masses, whereas for large masses it behaves like the mirror and the Rindler vacuum equally well and therefore we can't make a distinction. In fact for large masses the correspondence between the mirror and the Rindler vacuum is better than the correspondence between $\wsjc$ and any of them.

Our calculations suggest that the massive $\wsj$ has a well defined
$m\rightarrow 0$ limit, unlike  $\wminkm$.  A possible reason for this is that
the SJ vacuum is built from the Green's function which is a continuous function of $m$ even as $m\rightarrow 0$. The behavior of
$\wsj$ for $m>0$  is also curious.   For  $\wmink$, $\cof$   sets a scale and 
dominates in the small $m$ regime, while for large $m$, the opposite is true.  At $m_c$, $\wmink$ and $\wminkm$
coincide at small distance scales, so that  $\wsj$ tracks  $\wmink$ for $m<m_c$ and  $\wminkm$ 
for $m>m_c$ in a continuous fashion.

Whether this  unexpected small mass behavior of $\wsj$ is the result of finiteness of $\diam$ or an intrinsic feature
of the 2d SJ vacuum is unclear at the moment.  Further examination of the massive SJ vacuum in different spacetimes
should shed light on these questions.  %The mass dependent behavior in the 2d causal diamond echoes that in 4d de Sitter spacetime \cite{Surya:2018byh}.  For de Sitter spacetime  it is known that there is no massless de Sitter invariant vacuum, and that the Mottola-Allen vacua do not have an $m \rightarrow 0$ limit. However, for a causal set that is approximated by de Sitter spacetime $\wsj^c$ seems to behave very differently, and in  particular,   does have a well defined $m \rightarrow 0 $ limit.   Understanding how these differences in behavior between the SJ and the standard vacua manifest themselves in the conditions Eqn (\ref{eq:wsjconditions}) should shed some light. However this is beyond the scope of the present work.  

We begin in Sec.~\ref{sec:soln} with setting up the SJ eigenvalue problem for the massive scalar field in $\diam$ and find the SJ spectrum in the small mass limit to $\cO(m^4)$.  Sec.~\ref{sec:wightmann} contains the analytic and numerical calculations of $\wsj$ in different regions of $\diam$. In Sec.~\ref{sec:causet} we show the results of simulations of the causal set SJ vacuum $\wsjc$ for a range of masses.  We then compare  $\wsjc$ with the analytical calculation $\wsj$ in the small mass regime, as well as  with the standard
vacua  in the large mass regime, both in the center and the corner of the
diamond for small and large values of $m$. In Sec.~\ref{sec:rindler} we present a trick to get the massless 2d Rindler vacuum from the SJ prescription. We end with a brief discussion of our results in Sec.~\ref{sec:conclusions}.  Appendixes \ref{expressions}, \ref{app:speigel} and \ref{sec:wight-app} contain the details of many of the calculations.

\section{The Spectrum of the Pauli Jordan Function: The small mass limit}
\label{sec:soln}

As we have stated earlier, the SJ modes are also solutions of the KG equation. A natural starting  point for
constructing these modes is therefore to start with a complete set of solutions $\{ s_k\}$ in the space $
\kker(\Bkg)$ where $\Bkg\equiv \Box-\m^2$,  and to find the action of $i \hD$ on this set. In light-cone coordinates the 2d Klein Gordon equation in Minkowski spacetime takes the simple form 
\be
\Bkg\Phi(u,v) \equiv \left(2\partial_u\partial_v+\m^2\right)\Phi(u,v)=0.  \label{kgeqn}
\ee
where $u$ and $v$ are the lightcone coordinates given by Eqn.~\eqref{lcc.eq}.
%\be
%u=\frac{1}{\sqrt{2}}\left(t-x\right), \;\;\;\ \;\;\;v=\frac{1}{\sqrt{2}}\left(t+x\right). 
%\ee
Thus, for $\m=0$ any differentiable  function $\psi(u)$ or $\xi(v)$ is in $\ker(\Bkg)$. 

One can generate a larger class of solutions starting  from a given differentiable function $\psi(u)$. The infinite sum 
\be
\Phi(u,v)\equiv \sum_{n=0}^\infty\frac{(-1)^nm^{2n}}{2^nn!}v^n\int^n\psi(u),
\ee
with  $\int^n\psi(u)\equiv\int du\int du\dots\int du\psi(u)$, can be seen to belong to $\ker(\Bkg)$. Similarly one can generate solutions starting with a differentiable function $\xi(v)$.  Different choices of $\psi(u),\xi(v)$ gives
different $\Phi(u,v)$. 

From the { Weierstrass theorem}, we know that any continuous function $\psi(u)$ in a bounded interval in $u$ can be
written as $\psi(u)=\sum_{n}a_nu^n$ for some $a_n's$. Hence  a natural class of solutions is generated by  $\psi(u)=u^l$,
\be
Z_l(u,v) \equiv
\sum_{n=0}^\infty\frac{(-1)^nm^{2n}l!}{2^nn!(n+l)!}u^{n+l}v^n=\frac{2^{l/2}l!}{m^l}\left(\frac{u}{v}\right)^{l/2}J_l\left(m\sqrt{2uv}\right)
\label{eq:serieszl},
\ee
for $l$ a whole number. Thus the SJ modes,  can in general be written as a sum over $Z_l(u,v)$
and $Z_l(v,u)$ for an appropriate set of $l$ values.  
Since plane waves are an important class of solutions, we note that starting from a function $\psi(u)=e^{au}$ for some
constant $a$ the plane wave solutions 
\be
U_a(u,v) \equiv \sum_{n=0}^\infty\frac{(-1)^nv^nm^{2n}}{2^nn!a^n}e^{au}=e^{au-\frac{m^2}{2a}v} \label{eq:seriesua}
\ee
and similarly, $U_a(v,u)$, can be obtained.

Before we proceed with the construction of the SJ modes, it will be useful to look at its following property.
\vskip 0.1in
\begin{claim}
In $\diam$ the SJ modes can be arranged into a complete set of eigenfunctions, each of which is either symmetric or
antisymmetric under the interchange of $u$ and $v$ coordinates.
\end{claim}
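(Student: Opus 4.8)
The plan is to exploit the $\mz_2$ symmetry of the causal diamond $\diam$ under the coordinate swap $(u,v)\mapsto(v,u)$, which is an isometry of the flat metric $ds^2 = -2\,du\,dv$ and maps $\diam$ (the region $u,v\in[-L,L]$) to itself. First I would define the unitary operator $P$ on $\cL^2(\diam)$ by $(Pf)(u,v) = f(v,u)$; since the swap is a volume-preserving isometry taking $\diam$ onto itself, $P$ is a well-defined unitary involution, $P^2 = \id$. The key observation is that the Pauli-Jordan operator $i\hD$ commutes with $P$: this follows because $i\Delta(x;x')$ depends only on the causal structure and the geodesic distance (through $G_R - G_A$), both of which are invariant under the isometry, so $i\Delta(u,v;u',v') = i\Delta(v,u;v',u')$, which is exactly the statement $[i\hD, P] = 0$ as integral operators.

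Given that $i\hD$ is self-adjoint on $\cL^2(\diam)$ (stated in the excerpt, citing \cite{aas}) and commutes with the self-adjoint involution $P$, the two operators can be simultaneously diagonalized. Concretely, for each eigenvalue $\lambda_k$ of $i\hD$ the corresponding eigenspace is $P$-invariant (if $i\hD u = \lambda u$ then $i\hD(Pu) = P(i\hD u) = \lambda(Pu)$), and since $P$ restricted to this finite-dimensional eigenspace is an involution, it decomposes into its $+1$ and $-1$ eigenspaces. Choosing a basis of each piece yields eigenfunctions of $i\hD$ that are simultaneously eigenfunctions of $P$ with eigenvalue $\pm 1$, i.e. symmetric or antisymmetric under $u\leftrightarrow v$. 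Concretely one can just replace any eigenfunction $u_k$ by the pair $u_k \pm Pu_k$ (discarding whichever is zero) and renormalize; these still span the same eigenspace, hence the full collection over all $k$ remains complete in $\im(i\hD) = \kker(\Bkg)$.

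A couple of technical points need care, and these are where I expect the real work to be. First, one should confirm that the positive–negative pairing $u_k \leftrightarrow u_k^*$ of eigenfunctions (Eqn.~\eqref{cep.eq}) is compatible with the symmetrization: since $P$ commutes with complex conjugation (both act entry-wise in these real coordinates), if $u_k$ is symmetric/antisymmetric then so is $u_k^*$, so the construction does not disturb the pairing structure needed for \eqref{idmode.eq}. Second, for degenerate eigenvalues one must check that choosing the real symmetric/antisymmetric basis is consistent across the whole eigenspace — this is immediate from finite-dimensional linear algebra once boundedness of $\diam$ guarantees discreteness and finite multiplicity of the spectrum (again \cite{aas}). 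The main obstacle, such as it is, is purely bookkeeping: making sure the explicit mode families $Z_l(u,v)$, $Z_l(v,u)$ built in \eqref{eq:serieszl} actually organize into these symmetric/antisymmetric combinations $Z_l(u,v) \pm Z_l(v,u)$ in the small-mass construction that follows, so that the abstract claim proved here matches the concrete eigenfunctions used in Sec.~\ref{sec:soln}.
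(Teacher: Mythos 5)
Your proposal is correct and is essentially the paper's own argument: the statement $[i\hD,P]=0$ is exactly the paper's observation that $\Delta(u,v;u',v')$ is symmetric under $\{u,u'\}\leftrightarrow\{v,v'\}$, so that $u_k(v,u)$ is again an eigenfunction with the same eigenvalue, and forming $u_k\pm Pu_k$ is precisely the paper's symmetric/antisymmetric combinations $u_k(u,v)\pm u_k(v,u)$. The operator-theoretic dressing (unitary involution, simultaneous diagonalization, eigenspace decomposition) adds rigor on the degeneracy/completeness bookkeeping but does not change the route.
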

\begin{proof}
Let $u_k$ be an eigenfunction of $i\hD$ with eigenvalue $\lambda_k\neq 0$ i.e.
\be
i\hD\circ u_k=\lambda_ku_k \label{eqn:proof1}.
\ee
Define an operator $\hD'$ with integral kernel $\Delta'(u,v;u',v')=\Delta(v,u;v',u')$ and let $v_k$ such that $v_k(u,v)=u_k(v,u)$. Interchanging $u$ and $v$ since $u,v\in[-L,L]$, Eqn.~(\ref{eqn:proof1}) can be rewritten as
\be
i\hD'\circ v_k=\lambda_kv_k \label{eqn:proof2}.
\ee
Since $\Delta(u,v;u',v')$ is symmetric under $\{u,u'\}\leftrightarrow \{v,v'\}$, this implies that
\be
i\hD\circ v_k=i\hD'\circ v_k=\lambda_kv_k \label{eqn:proof3}.
\ee
Therefore $v_k$ is also an eigenfunction of $i\hD$ with same eigenvalue $\lambda_k$. This means that, the symmetric combination $u^S_k(u,v)=u_k(u,v)+u_k(v,u)$ and the antisymmetric combination $u^A_k(u,v)=u_k(u,v)-u_k(v,u)$ are also eigenfunctions of $i\hD$ with eigenvalue $\lambda_k$.
\end{proof}

In $\mink^2$ for $m=0$ the natural choice of solutions is the set of plane wave modes $\{e^{iku}, e^{ikv}\}$. However, in the
finite causal diamond, the constant function is also a solution. The explicit form of the corresponding SJ modes are
given  in Johnston's thesis \cite{johnston}. There are two
sets of eigenfunctions. The first set found by Johnston are the $f_k=e^{-iku}-e^{-ikv}$ modes with $k =n \pi/L$ and are
antisymmetric with respect to $u\leftrightarrow v$. The
second set  $g_k=e^{-iku}+e^{-ikv}-2\cos(kL)$, were found by Sorkin and satisfy the more complicated quantization condition $\tan(kL)=2kL$. These are symmetric with respect to $u\leftrightarrow v$.  The eigenvalues for each set are $L/k$.

We now proceed to set up the calculation for the central SJ eigenvalue problem.  We will find it useful to work with the
dimensionless quantities.
\be
  m L \rightarrow m,\;kL\rightarrow k,\;\frac{u}{L}\rightarrow u,\;\frac{v}{L}\rightarrow v,\;\frac{u'}{L}\rightarrow u',\;\frac{v'}{L}\rightarrow v'.
\ee
The  massive Pauli Jordan function in $\mink^2$ is  
\be
i\Delta(u,v;u',v')=-\frac{i}{2}J_0\left(m\sqrt{2\Du\Dv}\right)\left(\theta (\Du)+\theta(\Dv)-1\right) \label{idelta}
\ee
where $\Du=u-u', \Dv=v-v'$ and $\theta(x)$ is the Heaviside function.  
The SJ modes are thus given by (Eqn.~\ref{cep.eq}) 
\be
-\frac{iL^2}{2}\int_{-1}^1du'dv'J_0\left(m\sqrt{2\Du\Dv}\right)\biggl(\theta(\Du)+\theta(\Dv)-1\biggr) u_k(u',v')=\lambda_ku_k(u,v).
\ee
We will find it useful to make the change of variables $\Du=p, \Dv=q$ so that the above expression becomes  
\be
\frac{iL^2}{2}\left(\intm dp dq -\intp dp dq \right)  J_0\left(m\sqrt{2pq}\right)u_k(u-p,v-q)=\lambda_ku_k(u,v),  
\ee
where we have  used the short-hand $\intm dp dq \equiv \int_0^{u-1}dp\int_0^{v-1}dq$ and $\intp dp dq \equiv
\int_0^{u+1}dp\int_0^{v+1}dq$. 
Our strategy is to begin with the action of $i\hD$ on the   symmetric and antisymmetric combinations of the
$Z_l(u,v)$ and $U_a(u,v)$ solutions defined above,
\begin{eqnarray} 
  U^A_{a}(u,v)\equiv U_{a}(u,v)-U_{a}(v,u), &\;\;\;&  U^S_{a}(u,v)\equiv U_{a}(u,v)+U_{a}(v,u),\nonumber \\ 
        Z^A_{l}(u,v)\equiv Z_{l}(u,v)-Z_{l}(v,u), &\;\;\;&  Z^S_{l}(u,v)\equiv Z_{l}(u,v)+Z_{l}(v,u).     \label{eq:symasym}                                            
\end{eqnarray} 
so that the  general form for the two sets $u^{A/S}$ of SJ modes is given by 
\begin{equation} 
 u^{A/S}_{\vec{a},\vec{l}}(u,v) \equiv  \sum_{a \in \vec{a}} \alpha^{A/S}_aU^{A/S}_a(u,v) + \sum_{l \in \vec{l}}
 \beta^{A/S}_l Z^{A/S}_l(u,v).
\end{equation}
Here $\vec{a}, \vec{l}$ denote  set of values for  $a$ and $l$ which satisfy  quantization conditions. Of course each $U_a(u,v)$ is itself an infinite sum over  $Z_l(u,v)$, but
we nevertheless consider it separately, taking our cue from the massless calculation. 

The expressions 
\begin{eqnarray} 
i\hD\circ U_a(u,v) &=&\frac{iL^2}{2}\left( \intm dp dq-\intp dp dq\right) J_0\left(m\sqrt{2pq}\right)
                          U_a^*(p,q)U_a(u,v), \nonumber \\
i\hD\circ Z_l(u,v) &=&  \frac{iL^2}{2}\left( \intm dp dq-\intp dp dq\right) J_0\left(m\sqrt{2pq}\right)
                         Z_l(u-p,v-q) \label{integration2} 
\end{eqnarray} 
are in general not easy to evaluate and  subsequently manipulate in order to obtain the SJ modes.  We instead begin by
looking for solutions order by order in $m^{2}$ assuming that for some $n$, $m^{2n}<<1$.\footnote{The series expansion
  of $U^{A/S}_{ik}$ in the SJ modes for small $m$ can be truncated to a finite order of $m^2$ if and only if $k$ is of
  the order of unity or higher. However, this is the case for small $m$, since small $k$ corresponds to wavelengths much
  larger than the size of the diamond.}
We use the series form of $Z_l(u,v)$ in Eqn.~(\ref{eq:serieszl}) and $U_a(u,v)$ in Eqn.~(\ref{eq:seriesua})
  as well as 
\be
J_0\left(m\sqrt{2pq}\right) =\sum_{n=0}^\infty\frac{(-1)^nm^{2n}}{2^n(n!)^2}p^nq^n. 
\label{eq:seriesbess}
\ee
As we will show, for $n=4$, we find that, to $\cO(m^4)$ the two families of eigenfunctions, antisymmetric and symmetric are 

{\bf Antisymmetric:}
\be
u^A_k(u,v)=\left[U^A_{ik}(u,v)-\cos(k)\left(\left(\frac{im^2}{2k}-\frac{im^4(6+k^2)}{24k^3}\right)Z^A_1(u,v)-\frac{m^4}{4k^2}Z^A_2(u,v)\right)\right]+\cO(m^6), \label{eq:asym}
\ee
with eigenvalue $-\frac{L^2}{k}$ with $k \in \cK_A$ satisfying the quantization condition 
\be
\sin(k)=\left(\frac{m^2}{k}+\frac{m^4}{12k}\left(1-\frac{3}{k^2}\right)\right)\cos(k)+\cO(m^6).\label{eq:qcondA}
\ee
Solving for $k$, order by order in $m^2$ up to $\cO(m^4)$, as shown in Sec. \ref{sec:compl},  gives $k=\ka(n)$, where
\be
\ka(n)\equiv n\pi+\frac{m^2}{n\pi}+m^4\left(\frac{1}{12n\pi}-\frac{5}{4n^3\pi^3}\right)+\cO(m^6), \label{eq:qcondka}
\ee
where $n\in{\mathbb Z}$ and $n \neq 0$.

{\bf Symmetric:}
\bea
u^S_k(u,v)&=&\left[U^S_{ik}(u,v)-\cos(k)\left(\left(1+\frac{m^2}{2}-\frac{m^4}{8k^2}(2-9k^2)\right)Z^S_0(u,v)\right.\right.\nonumber\\
&&\left.\left.+\left(\frac{3im^2}{2k}-\frac{im^4}{24k^3}(6-31k^2)\right)Z^S_1(u,v)-\frac{m^4}{8k^2}(4-k^2)Z^S_2(u,v)\right)\right]+\cO(m^6), \label{eq:sym}
\eea
with eigenvalue $-\frac{L^2}{k}$, where $k \in \cK_S$ satisfies
\be
\sin(k)=\left(2k-\frac{m^2}{k}(1-2k^2)+\frac{m^4}{12k^3}(3-29k^2+28k^4)\right)\cos(k)+\cO(m^6). \label{eq:qcondS}
\ee
Solving for $k$, order by order in $m^2$ up to $\cO(m^4)$, as shown in Sec. \ref{sec:compl}, gives $k=\ks(k_0)$, where
\be
\ks(k_0)\equiv k_0+m^2\frac{1-2{k_0}^2}{k_0(1-4{k_0}^2)}+m^4\frac{(3-4{k_0}^2)(-5+35{k_0}^2-40{k_0}^4+16{k_0}^6)}{12{k_0}^3(1-4{k_0}^2)^3}+\cO(m^6) \label{eq:qcondks},
\ee
where $k_0$ are the solutions of $\sin(k)=2k\cos(k)$.

We plot these eigenvalues in Fig.~\ref{fig:eigenvalue} for m=0, 0.2 and 0.4. In the expressions for the eigenfunctions,  Eqns.~(\ref{eq:asym}) and (\ref{eq:sym}), it is to be noted that we have kept 
$U^{A/S}_{ik}$ and $Z^{A/S}_l$ as they are, rather than use their expansion to $\cO(m^4)$.   The reason for this is to
remind ourselves that they are solutions of the Klein Gordon equation. Note that in  Eqn.~(\ref{eq:asym}) and
Eqn.~(\ref{eq:sym}), we keep terms only up to $\cO(m^4)$ within the square bracket. In Sec. \ref{sec:compl} we show that
these form a complete set of orthonormal modes.

Here we have moved away from the $f_k$ and $g_k$ notation of \cite{Afshordi:2012ez,johnston} to $u^A_k$ and $u^S_k$ for
the antisymmetric and symmetric SJ modes respectively.

\begin{figure}[h]
\centerline{\begin{tabular}{ccc}
\includegraphics[height=5cm]{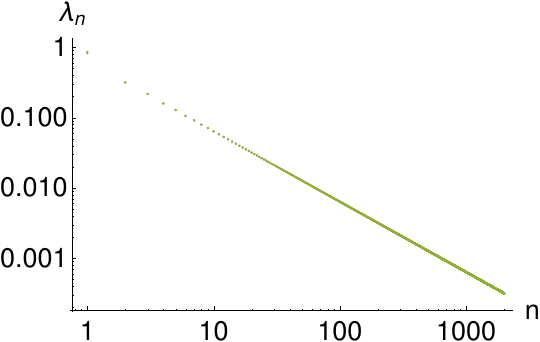}&
\includegraphics[height=5cm]{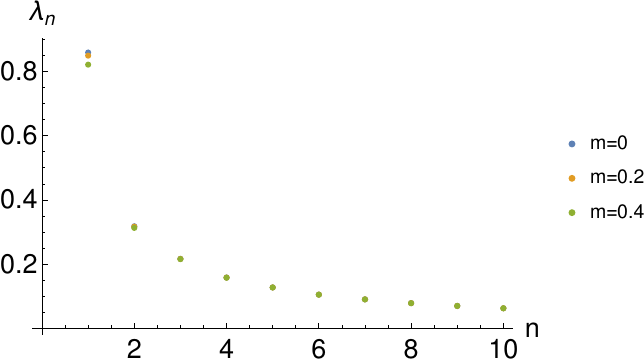}\\
(a)&(b)
\end{tabular}}
\caption{(a):A log-log plot of the SJ eigenvalues $\lambda_n$ vs $n$ for $m=0, 0.2$ and $0.4$, (b): a plot of $\lambda_n$ vs $n$ for small $n$. As one can see, the eigenvalues for $m=0.2$ and $0.4$ are barely distinguishable from $m=0$, except for the very smallest $n$ values.}
\label{fig:eigenvalue}
\end{figure}

\subsection{Details of the calculations of SJ modes}
We now show the calculation in broad strokes below, leaving some of the details to the Appendix \ref{expressions}. We begin by  reviewing the massless case. Here  $Z_l(u,v)$ reduces to $u^l$ and $U_a(u,v)$ to $e^{au}$.

Operating $i \hD$ on $u^l$ or $v^l$  we find that
\begin{eqnarray} 
  i\hD_{m=0}\circ u^l&=&\frac{iL^2}{2(l+1)}\left(\left(1+(-1)^{l+1}\right)-v\left(1-(-1)^{l+1}\right)-2u^{l+1}\right), \nonumber \\
  i\hD_{m=0}\circ v^l&=&\frac{iL^2}{2(l+1)}\left(\left(1+(-1)^{l+1}\right)-u\left(1-(-1)^{l+1}\right)-2v^{l+1}\right), 
\end{eqnarray}
while on the plane wave modes  
\bea
i\hD_{m=0}\circ e^{iku}&=&
-\frac{L^2}{k}\left(e^{iku}-\cos(k)+iv\sin(k)\right), \nonumber \\
i\hD_{m=0}\circ e^{ikv}&=&
-\frac{L^2}{k}\left(e^{ikv}-\cos(k)+iu\sin(k)\right).
\eea 

Here, $k$ takes on all values including $k=0$, which is the constant solution. From the antisymmetric combination
\be
i\hD_{m=0}\circ\left(e^{iku}-e^{ikv}\right)=-\frac{L^2}{k}\left(e^{iku}-e^{ikv}-i\sin(k)(u-v)\right), 
\ee
we find the first set of massless eigenfunctions
\be
u_k^{A(0)}(u,v)\equiv e^{iku}-e^{ikv}
\ee
with  $k \in \cK_f$ satisfying the quantization condition 
\be
\sin(k)=0\;\text{or}\;k=n\pi.  \label{f condition m0}
\ee
with eigenvalues $-\frac{L^2}{k}$. The symmetric  combination
on the other hand gives 
\begin{equation} 
i\hD_{m=0}\circ\left(e^{iku}+e^{ikv}\right)=
-\frac{L^2}{k}\left(e^{iku}+e^{ikv}-2\cos(k)\right)-\frac{iL^2}{k}\sin(k)(u+v).
\end{equation} 
Since the  symmetric eigenfunction can include a  constant piece and noting that 
\begin{equation}
 \hD_{m=0}\circ c =  -icL^2(u+v), 
  \end{equation} 
we find the second set of eigenfunctions 
\be
u_k^{S(0)}(u,v)\equiv e^{iku}+e^{ikv}-2\cos(k)
\ee
with eigenvalue $-\frac{L^2}{k}$, where $k \in \cK_g$ satisfies
\be
\sin(k)=2k\cos(k). \label{g condition m0}
\ee
$\{u_k^{A(0)}\}$ and $\{u_k^{S(0)}\}$ together form a complete set of eigenfunctions of $i\Delta$ as can be shown by \cite{johnston}.

This sets the stage for the calculation of the massive SJ modes. We begin by again looking the action of $i\hD$ on the
solutions $Z_l(u,v)$ and $U_a(u,v)$,  
\begin{eqnarray} 
i\hD\circ Z_l(u,v)&=&\frac{iL^2}{2}\sum_{j,s=0}^\infty\frac{(-1)^{j+s}m^{2(j+s)}l!}{2^{l+s}(j!)^2s!(s+l)!} \gOjsl ,\label{eq:ideltaz}   \\ 
i\hD\circ U_{a}(u,v) &=&\frac{iL^2}{2} U_a(u,v) \sum_{j,s=0}^\infty\frac{(-1)^j
                            m^{2(j+s)}}{2^{j+s}(j!)^2s!a^s}\djsa(u,v),
\label{eq:genzluaexp}                             
\end{eqnarray}
where
\begin{eqnarray}
\gOjsl(u,v) &\equiv & \biggl(\intm dp\, dq - \intp dp\, dq\biggr)p^jq^j(u-p)^{l+s}(v-q)^s, \nonumber \\  
\djsa(u,v) & \equiv &\biggl(\intm dp\, dq - \intp dp\, dq\biggr)p^jq^{j+s}e^{-a p}. \label{deltajs}
\end{eqnarray}
It is useful to re-express Eqn.~(\ref{eq:genzluaexp})  as 
\begin{equation} 
i\hD\circ U_{a}(u,v) = \frac{iL^2}{2} U_a(u,v) \sum_{n=0}^\infty m^{2n} A_{a,n}(u,v), \label{eq:ideltau} 
\end{equation}
where
\begin{equation}
A_{a,n}(u,v)\equiv \sum_{j=0}^{n}\frac{(-1)^j}{2^{n}(j!)^2(n-j)!a^{(n-j)}}\dja(u,v). 
\end{equation}
This gives
\begin{equation}
  i\hD\circ U_{a}(u,v) = -\frac{iL^2}{a}U_{a}(u,v) - \frac{iL^2}{a} \sum_{n=0}^\infty m^{2n} \cF_{a,n}(u,v),
  \label{eq:genexp} 
  \end{equation} 
  where
  \begin{equation}
\cF_{a,n}(u,v) \equiv F_{a,n}(u,v)\sinh(a)+G_{a,n}(u,v)\cosh(a) ,
    \end{equation} 
with
\bea
F_{a,n}(u,v) &\equiv & \sum_{s=0}^n\sum_{j=0}^s\sum_{l=0}^j \frac{(-1)^{n-s+j}v^{n-s}\left((u+1)^{j-l}(v+1)^{s+1}+(u-1)^{j-l}(v-1)^{s+1}\right)}{2^{n+1}a^{n-j+l}(n-s)!j!(s-j)!(j-l)!(s+1)}, \nonumber\\
G_{a,n}(u,v) &\equiv & \sum_{s=0}^n\sum_{j=0}^s\sum_{l=0}^j \frac{(-1)^{n-s+j}v^{n-s}\left((u-1)^{j-l}(v-1)^{s+1}-(u+1)^{j-l}(v+1)^{s+1}\right)}{2^{n+1}a^{n-j+l}(n-s)!j!(s-j)!(j-l)!(s+1)}. \label{f and g}
\eea

Our first guess, inspired by the massless calculation, is that in order to  find the SJ modes,  we will need the antisymmetrised and symmetrised versions
of Eqns.~(\ref{eq:ideltaz}) and (\ref{eq:ideltau}), which we denote
by $A/S$.  As noted above, and is evident from Eqn.~(\ref{eq:genexp}), in order to obtain the SJ modes,
$\UAS_a(u,v)$  must be supplemented by a function $\HAS_a(u,v)$ made from the $Z_l(u,v)$.

Taking our cue from the massless case,  let us assume that such a function
exists, i.e., 
\be
i\hD\circ\left(U^{A/S}_{a}(u,v)+\HAS_a(u,v)\right)=-\frac{i L^2}{a}\left(U^{A/S}_{a}(u,v)+\HAS_a(u,v)\right) \label{eq:u,h},
\ee
where $k$ satisfies an appropriate quantization condition $\cK^{A/S}$.  Then, from Eqn.~(\ref{eq:genexp}) $\HAS_a(u,v)$
must satisfy 
\begin{equation}
i\hD\circ \HAS_a(u,v)+\frac{i L^2}{a}\HAS_a(u,v)-\frac{iL^2}{a} \sum_{n=0}^\infty m^{2n} \cF^{A/S}_{a,n}(u,v)=0 .
  \end{equation} 

Up to now the discussion has been general. If the expressions above could be calculated in closed form,  then one would be able to
solve the SJ mode problem for any mass $m$.  It is unclear how to proceed to do this, except order by order in $m^2$.

We now demonstrate this explicitly up to $\cO(m^4)$. We begin by taking $a=ik$ and writing Eqn.~(\ref{eq:genexp}) as
\begin{equation} 
i\hD\circ \UAS_{ik}(u,v) \approx -\frac{L^2}{k}\UAS_{ik}(u,v)-\frac{L^2}{k}\left( i\sin(k) \sum_{n=0}^\infty m^{2n} \FAS_{ik,n}(u,v)
     + \cos(k) \sum_{n=0}^\infty m^{2n} \GAS_{ik,n}(u,v)\right),\label{ideltaua}
\end{equation}
where the expressions for $F_{ik,n}(u,v)$ and $G_{ik,n}(u,v)$ for different $n$ have been calculated in  Appendix \ref{expressions}. 
The function $H^{A/S}_k(u,v)$ must therefore satisfy
\begin{equation} 
i\hD\circ H^{A/S}_{ik}(u,v)+\frac{L^2}{k}\left(H^{A/S}_{ik}(u,v)-i\sin(k) \sum_{n=0}^\infty m^{2n} \FAS_{ik,n}(u,v)-\cos(k) \sum_{n=0}^\infty m^{2n} \GAS_{ik,n}(u,v)\right)=0. \label{eq:has}
\end{equation} 

From the result for the massless case, we expect the quantization condition for $k$ to be of the general form
\be
\sin(k)=\cos(k) \sum_{n=0}^\infty m^{2n} \QAS_n(k), \label{f condition}
\ee
with $Q^A_0(k)=0$ and $Q^S_0(k)=2k$. Inserting this into Eqn.~(\ref{eq:has}) gives
\begin{equation} 
  i\hD\circ \HAS_{ik}(u,v)+\frac{L^2}{k}\HAS_{ik}(u,v)-\frac{L^2}{k}\cos(k) \left( \sum_{n=0}^\infty m^{2n} \PAS_n(u,v) \right)
  =0,
  \label{eq:hask}
  \end{equation} 
  where
  \begin{equation}
    \PAS_n(u,v) \equiv \GAS_n(u,v) + i\sum_{j=0}^n \QAS_j(k) \FAS_{n-j}(u,v).  \label{eq:pas}
    \end{equation} 
The challenge is therefore to obtain the explicit form for these expressions. Finding a general expression in this
 manner is very
 challenging, but we will now show that it can be found to $\cO(m^4)$. 

Since the $\HAS_a(u,v)$ must be constructed from the $Z_l(u,v)$, we are interested in the action of $i\hD$ on $Z_l(u,v)$
up to $\cO(m^4)$ i.e.,
\begin{equation}
i\hD\circ Z_l(u,v)=  \frac{iL^2}{2}\sum_{j,s, j+s \leq 2 } \frac{(-1)^{j+s}m^{2(j+s)}l!}{2^{l+s}(j!)^2s!(s+l)!} \gOjsl+\cO(m^6).
  \end{equation}   
We calculate this expression for  $l=0,1,2$, up to $\cO(m^4)$  in the Appendix \ref{expressions}. Using the expression of
$P_n^A(u,v)$ given in Appendix \ref{expressions}, we find that up to  $\cO(m^4)$ the antisymmetric version of  Eqn.~(\ref{eq:hask}) reduces to  
\be
\left(i\hD+\frac{L^2}{k}\right)\circ\left(H^A_{ik}(u,v)+\cos(k)\left(\left(\frac{im^2}{2k}-\frac{im^4(6+k^2)}{24k^3}\right)Z^A_1(u,v)-\frac{m^4}{4k^2}Z^A_2(u,v)\right)\right)\approx 0. 
\ee
Therefore  
\be
u^A_k(u,v)=U^A_{ik}(u,v)-\cos(k)\left(\left(\frac{im^2}{2k}-\frac{im^4(6+k^2)}{24k^3}\right)Z^A_1(u,v)-\frac{m^4}{4k^2}Z^A_2(u,v)\right)+\cO(m^6),
\ee
with eigenvalue $-\frac{L^2}{k}$ with $k \in \cK_A$ satisfying the quantization condition 
\be
\sin(k)=\left(\frac{m^2}{k}+\frac{m^4}{12k}\left(1-\frac{3}{k^2}\right)\right)\cos(k)+\cO(m^6).
\ee
Similarly using the expression of
$P_n^S(u,v)$ given in Appendix \ref{expressions} and after more painstaking algebra, 
we find that Eqn.~(\ref{eq:hask}) can  be written as
\bea
\left(i\hD+\frac{L^2}{k}\right)\circ\left(H^S_{ik}(u,v)+\cos(k)\left(\left(1+\frac{m^2}{2}-\frac{m^4}{8k^2}(2-9k^2)\right)Z^S_0(u,v)\right.\right.\nonumber\\
\left.\left.+\left(\frac{3im^2}{2k}-\frac{im^4}{24k^3}(6-31k^2)\right)Z^S_1(u,v)-\frac{m^4}{8k^2}(4-k^2)Z^S_2(u,v)\right)\right)
\approx 0 .
\eea
Therefore the symmetric eigenfunction is 
\bea
u^S_k(u,v)&=&U^S_{ik}(u,v)-\cos(k)\left(\left(1+\frac{m^2}{2}-\frac{m^4}{8k^2}(2-9k^2)\right)Z^S_0(u,v)\right.\nonumber\\
&&\left.+\left(\frac{3im^2}{2k}-\frac{im^4}{24k^3}(6-31k^2)\right)Z^S_1(u,v)-\frac{m^4}{8k^2}(4-k^2)Z^S_2(u,v)\right)+\cO(m^6),
\eea
with eigenvalue $-\frac{L^2}{k}$, where $k \in \cK_S$ satisfies
\be
\sin(k)=\left(2k-\frac{m^2}{k}(1-2k^2)+\frac{m^4}{12k^3}(3-29k^2+28k^4)\right)\cos(k)+\cO(m^6).
\ee

Unfortunately, the structure of neither the coefficients in $\uAS_k$  nor the quantization condition are enough to
suggest a generalization to all orders. One could of course proceed to the next order $\cO(m^6)$ but the calculation  gets
prohibitively more complex.  

\subsection{Completeness of the eigenfunctions}\label{sec:compl}

We now show that the eigenfunctions $\{ u_k^A| k \in \cK_A\}$ and  $\{ u_k^S| k \in \cK_S\}$  form a complete set of
eigenfunctions of $i\Delta$. If this is the case, then we can decompose $i\Delta$ as  
\be
i\Delta(u,v;u',v')=\sum_{k\in \cK_A}  -\frac{L^2}{k}  \frac{u^A_k(u,v){u^A_k}^*(u',v')}{||u^A_k||^2} + \sum_{k\in \cK_S}  -\frac{L^2}{k}
\frac{u^S_k(u,v){u^S_k}^*(u',v')}{||u^S_k||^2}+\cO(m^6), 
\ee
which implies that 
\be
\int_{S} du\, dv\,du'\, dv' |\Delta(u,v;u',v')|^2=\sum_{k\in \cK_A} \biggl(\frac{L^2}{k}\biggr) ^2 + \sum_{k\in \cK_S} \biggl(\frac{L^2
}{k}\biggr) ^2+\cO(m^6).\label{eq:trace}
\ee 

To  $\cO(m^4)$  the LHS  of Eqn.~(\ref{eq:trace}) reduces to 
\bea
&& \frac{L^4}{4}\int_{-1}^1dudv\left(\intm dp \, dq+\intp dp\, dq\right) J_0^2\left(m\sqrt{2pq}\right) \nonumber \\ 
&=&
\frac{L^4}{4}\int_{-1}^1dudv\left( \intm dp \, dq+\intp dp\, dq\right)\left(1-m^2pq+\frac{3}{8}m^4p^2q^2\right)+\cO(m^6)\nonumber\\
&=&2L^4\left(1-\frac{4}{9}m^2+\frac{1}{6}m^4\right)+\cO(m^6). 
\eea
For the RHS  $k\in \cK_{A/S}$, we make use of the expansion $\kAS \approx \kAS_0+m^2\kAS_1+m^4\kAS_2$.
For the antisymmetric quantization condition Eqn.~(\ref{eq:qcondA}) since $k_0^A=n \pi$ this gives, up to $\cO(m^4)$  
\be
m^2k_1^A+m^4k_2^A=\frac{m^2}{k_0^A}\left(1-m^2\frac{k_1^A}{k_0^A}\right)-\frac{m^4}{4{k_0^A}^3}+\frac{m^4}{12k_0^A}+\cO(m^6).
\ee
Solving the above equation for different orders of $m^2$, we get
\bea
k_1^A&=&\frac{1}{n\pi}, \\
k_2^A&=&\frac{1}{12n\pi}-\frac{5}{4n^3\pi^3},
\eea
so that 
\bea
\sum_{k \in \cK_A} L^4 \frac{1}{k^2}
&=&2L^4\sum_{n=1}^\infty\frac{1}{n^2\pi^2}\left(1-2m^2\frac{1}{n^2\pi^2}-m^4\left(\frac{1}{6n^2\pi^2}-\frac{11}{2n^4\pi^4}\right)\right)+\cO(m^6)\nonumber\\
&=&2L^4\left(\frac{1}{6}-\frac{m^2}{45}+\frac{m^4}{252}\right)+\cO(m^6).
\eea
For the symmetric contribution  Eqn.~(\ref{eq:qcondS})   up to $\cO(m^4)$ we have 
\be
\sum_{n=0}^2 m^{2n}K_n(k^S_0,k^S_1,k^S_2)+\cO(m^6)=0,
\ee
where
\bea
K_1(k^S_0,k^S_1,k^S_2)&=&\sin(k^S_0)-2k^S_0\cos(k^S_0), \nonumber\\
K_2(k^S_0,k^S_1,k^S_2)&=&\left(\frac{2{k^S_0}^2-1+k^S_1k^S_0}{k^S_0}\right)\cos(k^S_0)-2k^S_1k^S_0\sin(k^S_0),\nonumber\\
K_3(k^S_0,k^S_1,k^S_2)&=&\left(\frac{3-29{k^S_0}^2+28{k^S_0}^4+12k^S_1k^S_0}{12{k^S_0}^3}+2k^S_1+k^S_2-{k^S_1}^2k^S_2\right)\cos(k^S_0)\nonumber\\
&&+\left(\frac{k^S_1-2k^S_1k^S_0-2{k^S_0}^3}{k^S_0}-\frac{3}{2}{k^S_1}^2\right)\sin(k^S_0).
\eea
Equating the above order by order in $m^2$, we get
\bea
\sin(k^S_0)&=&2k^S_0\cos(k^S_0),\\
k^S_1&=&\frac{1-2{k^S_0}^2}{k^S_0(1-4{k^S_0}^2)},\\
k^S_2&=&\frac{(3-4{k^S_0}^2)(-5+35{k^S_0}^2-40{k^S_0}^4+16{k^S_0}^6)}{12{k^S_0}^3(1-4{k^S_0}^2)^3}.
\eea
\bea
\sum_{k \in \cK_S} L^4 \frac{1}{k^2}
&=&2L^4\sum_{k^S_0\in \cK_g}\left(\frac{1}{{k^S_0}^2}-2m^2\left(\frac{1}{{k^S_0}^4}+\frac{2}{{k^S_0}^2}-\frac{8}{4{k^S_0}^2-1}\right)\right.\nonumber\\
&&\left.+m^4\left(\frac{11}{2{k^S_0}^6}+\frac{127}{6{k^S_0}^4}+\frac{280}{3{k^S_0}^2}+\frac{32}{(4{k^S_0}^2-1)^3}+\frac{32}{(4{k^S_0}^2-1)^2}-\frac{1120}{3(4{k^S_0}^2-1)}\right)\right)+\cO(m^6).\nonumber\\ \label{eqn:gcomp}
\eea
We evaluate the above series by using the method developed in \cite{speigel} and used in \cite{Afshordi:2012ez,johnston}, details of which can be found in Appendix \ref{app:speigel}. This leads to
\be
\sum_{k^S_0\in\cK_g}\frac{1}{{k^S_0}^2}=\frac{5}{6}\;\;,\;\;\;\sum_{k^S_0\in\cK_g}\frac{1}{{k^S_0}^4}=\frac{49}{90}\;\;\text{and}\;\;\;\sum_{k^S_0\in\cK_g}\frac{1}{{k^S_0}^6}=\frac{377}{945} \label{eq:ser1}
\ee
and
\bea
\sum_{k^S_0\in\cK_g}\frac{1}{4{k^S_0}^2-1}&=&\frac{1}{4},\nonumber\\
\sum_{k^S_0\in\cK_g}\frac{1}{(4{k^S_0}^2-1)^2}&=&-\frac{1}{4}\left(\frac{\cos(1/2)-2\sin(1/2)}{\cos(1/2)-\sin(1/2)}\right),\nonumber\\
\sum_{k^S_0\in\cK_g}\frac{1}{(4{k^S_0}^2-1)^3}&=&\frac{1}{64}\left(1+\frac{19\cos(1/2)-35\sin(1/2)}{\cos(1/2)-\sin(1/2)}\right).\label{eq:ser2}
\eea
This simplifies Eqn.~(\ref{eqn:gcomp}) to 
\be
\sum_{k\in \cK_S} 2L^4\frac{1}{k^2}=2L^4\left(\frac{5}{6}-\frac{19}{45}m^2+\frac{41}{252}m^4\right)+\cO(m^6).
\ee
Adding the contributions from the antisymmetric and symmetric eigenfunctions the RHS of Eqn.~(\ref{eq:trace}) reduces to 
\be
\sum\lambda_k^2=2L^4\left(1-\frac{4}{9}m^2+\frac{1}{6}m^4\right)+\cO(m^6),
\ee
which is same as its LHS. Thus, to $\cO(m^4)$  the $\uAS_k$ are a complete set of eigenfunctions of $i\hD$.

\section{The Wightman function: the small mass limit}
\label{sec:wightmann}
We can now write down the formal expression for the SJ  Wightman function to $\cO(m^4)$  using the SJ modes
obtained above, as 
\begin{equation}
\wsj(u,v,u',v')=\sum_{k\in\cK_{A},\, k<0}-\frac{L^2}{k}\frac{u^{A}_k(u,v){u^{A}_k}^*(u',v')}{||u_k^{A}||^2}
+\sum_{k\in\cK_{S}, \, k<0}-\frac{L^2}{k}\frac{u^{S}_k(u,v){u^{S}_k}^*(u',v')}{||u_k^{S}||^2}+\cO(m^6),  \label{eq:fullw}
\end{equation}
where $\cK_{A/S}$ denote  the positive SJ eigenvalues.  In particular $k=-\ka(n)$ with $n\in{\mathbb Z}^+$ (Eqn.~(\ref{eq:qcondka})) and  $k=-\ks(k_0)$ with
$k_0$ satisfying  $\tan(k_0)=2k_0$ (Eqn.~(\ref{eq:qcondks})). 
Here $||u^{A/S}_k||$ denotes  the $\mL^2$ norm of the modes $u^{A/S}_k$ 
\be
||u^{A/S}_k||^2=L^2\int_{-1}^1 du \int_{-1}^1 dv u^{A/S}_k(u,v){u^{A/S}_k}^*(u,v). 
\ee
For $k=-\ka(n)$ 
\be
||u^A_k||^2 =
8L^2\left(1+\frac{m^2}{n^2\pi^2}+\frac{m^4}{n^2\pi^2}\left(\frac{1}{12}-\frac{11}{4n^2\pi^2}\right)\right)+\cO(m^6). \label{eq:norma} 
\ee
In the symmetric case, $k =-\ks(k_0) $ the quantization condition is complicated. Following \cite{Afshordi:2012ez}, we
make the approximation
\be
\ks(n) \approx \left(n-\frac{1}{2}\right)\pi, \, \, n\in {\mathbb Z}^+.   \label{eq:appquant}
\ee
As shown in  Fig.~\ref{fig:quant}, we see that except for the first few modes this is a good approximation, and in fact
improves with increasing mass\footnote{Of course, at the same time, our approximation of the SJ modes becomes worse with
  increasing mass.}.  This approximation in the quantization condition makes $\cos(\ks)=0$, thus simplifying
$u^S_k(u,v)$ to 
\be
u^S_{-\ks}(u,v)=U^S_{-i\ks}(u,v) \Rightarrow ||u^S_{\ks}||=8L^2. \label{eq:norms} 
\ee
\begin{figure}[h]
\centerline{\includegraphics[height=6cm]{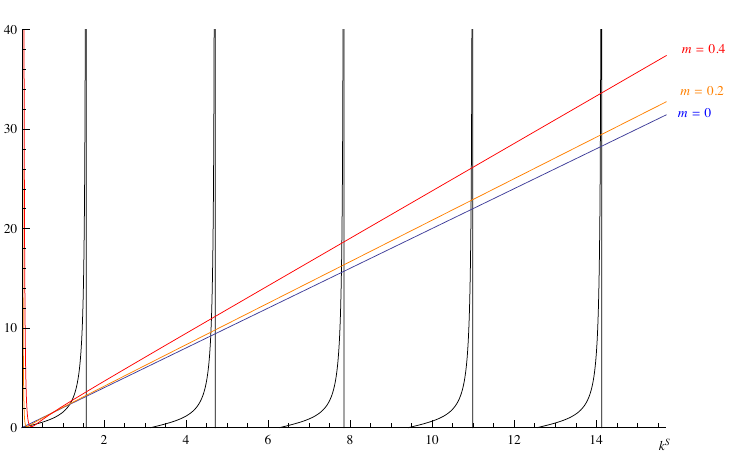}} 
\caption{Plot of the quantization condition, Eqn.~(\ref{eq:qcondS}) for the symmetric SJ modes for m=0,0.2 and 0.4,
  where $k_S>0$. }
\label{fig:quant}
\end{figure}

We examine the antisymmetric and symmetric contributions to $\wsj$ separately 
\be
\wsj=\wsj^A+\wsj^S. \label{eq:wsjsplit} 
\ee

For the antisymmetric contribution, using the quantization condition $k=-k_A(n)$ and the simplification Eqn.~(\ref{eq:norma}) for
the norm
\begin{equation}
  \wsj^A(u,v,u',v') = \sum_{n=1}^\infty \frac{1}{8n\pi}\left(1-\frac{2m^2}{n^2\pi^2}+\frac{m^4}{n^2\pi^2}\left(\frac{7}{n^2\pi^2}-\frac{1}{6}\right)\right) u^A_{k}(u,v) u_{k}^{A*}(u',v') + \cO(m^6).
\end{equation}
To leading order $u^A_{k}$ can be re-expressed as 
\begin{eqnarray} 
  u^A_{k}(u,v) &=& e^{-in\pi u}-e^{-in\pi v}+ \Psi_A(n,u,v) + \cO(m^6),\nonumber  \\ 
\Psi_A(n,u,v) &=&  \sum_{j=1}^3\left(\frac{(-1)^nf_j(m;u,v)}{n^j}+\frac{g_j(m;u,v)e^{-in\pi
      u}}{n^j}-\frac{g_j(m;v,u)e^{-in\pi v}}{n^j}\right), 
\end{eqnarray}
where 
\bea
f_1(m;u,v)\equiv \frac{im^2}{2\pi}(u-v)-\frac{im^4}{24\pi}(u-v)(1+3uv), &\quad& g_1(m;u,v)\equiv -\frac{im^2(2u+v)}{2\pi}-\frac{im^4u}{12\pi},\nonumber\\
f_2(m;u,v)\equiv -\frac{m^4}{4\pi^2}(u^2-v^2), &\quad& g_2(m;u,v) \equiv -\frac{m^4(2u+v)^2}{8\pi^2},\nonumber\\
f_3(m;u,v)\equiv-\frac{3im^4}{4\pi^3}(u-v), &\quad& g_3(m;u,v)\equiv \frac{im^4(15u+6v)}{12\pi^3}. 
\label{eq:gjfj}
\eea
We further split
\be
\wsj^A=\wa+\waa+\waaa+\waaaa+\cO(m^6),   \label{eq:splitwsja}
\ee
where 
\bea
\wa&\equiv& \sum_{n=1}^\infty\frac{1}{8n\pi}\left(1-\frac{2m^2}{n^2\pi^2}+\frac{m^4}{n^2\pi^2}\left(\frac{7}{n^2\pi^2}-\frac{1}{6}\right)\right)\left(e^{-in\pi u}-e^{-in\pi v}\right)\left(e^{in\pi u'}-e^{in\pi v'}\right),\nonumber\\
\waa&\equiv& \sum_{n=1}^\infty\frac{1}{8n\pi}\left(1-\frac{2m^2}{n^2\pi^2}\right)\left(e^{-in\pi u}-e^{-in\pi v}\right)\Psi_A^*(n,u',v'),\nonumber\\
\waaa&\equiv& \sum_{n=1}^\infty\frac{1}{8n\pi}\left(1-\frac{2m^2}{n^2\pi^2}\right)\Psi_A(n,u,v)\left(e^{in\pi u'}-e^{in\pi v'}\right),\nonumber\\
\waaaa&\equiv& \sum_{n=1}^\infty\frac{1}{8n\pi}\Psi_A(n,u,v)\Psi_A^*(n,u',v').\nonumber\\
\label{eq:aone}
\eea
These terms can be further simplified to $\cO(m^4)$ as we have shown in  Appendix.~\ref{sec:wight-app}. 

For the symmetric contribution $\wsj^S$ we use the simplification Eqns.~(\ref{eq:appquant}) and (\ref{eq:norms}) to express   
\be
\wsj^S = \sum_{n=1}^\infty\frac{1}{4\pi(2n-1)}U^S_{-i\ks}(u,v){U^S}_{-i\ks}^*(u',v')+\e_m(u,v,u',v') + \cO(m^6). 
\ee
Here $\e_m(u,v;u',v')$ is the correction term coming from the approximation of the quantization condition Eqn.~(\ref{eq:appquant}). This is
analytically difficult to obtain and in Sec.~\ref{sec:numerical}, we will evaluate it numerically for different values of $m$.   

Using the $\cO(m^4)$ expansion of $U_{-ik}$ from Eqn.~(\ref{eq:seriesua}), we write $U_{-ik_S}^S$ as
\bea
U_{-ik_S(n)}^S(u,v)&=&\left(e^{-i\left(n-\frac{1}{2}\right)\pi u}+e^{-i\left(n-\frac{1}{2}\right)\pi v}\right)+\Psi_S(n,u,v)+\cO(m^6), \nonumber\\
\Psi_S(n,u,v)&=&-\frac{im^2}{(2n-1)\pi}\left(v e^{-i\left(n-\frac{1}{2}\right)\pi u}+ue^{-i\left(n-\frac{1}{2}\right)\pi v}\right)\nonumber\\
&&\quad\quad\quad-\frac{m^4}{4(2n-1)^2\pi^2}\left(v^2e^{-i\left(n-\frac{1}{2}\right)\pi u}+u^2e^{-i\left(n-\frac{1}{2}\right)\pi v}\right).
\eea
Again for the symmetric part, we can write
\be
\wsj^S=\ws+\wss+\wsss+\wssss+\e_m(u,v,u',v')+\cO(m^6),
\label{eq:splitwsjs} 
\ee
where
\bea
\ws &\equiv& \frac{1}{4\pi}\sum_{n=1}^\infty\frac{1}{2n-1}\left(e^{-i\left(n-\frac{1}{2}\right)\pi u}+e^{-i\left(n-\frac{1}{2}\right)\pi v}\right)\left(e^{i\left(n-\frac{1}{2}\right)\pi u'}+e^{i\left(n-\frac{1}{2}\right)\pi v'}\right),\nonumber\\
\wss &\equiv& \frac{1}{4\pi}\sum_{n=1}^\infty\frac{1}{2n-1}\left(e^{-i\left(n-\frac{1}{2}\right)\pi u}+e^{-i\left(n-\frac{1}{2}\right)\pi v}\right)\Psi_S^*(n,u',v'),\nonumber\\
\wsss &\equiv& \frac{1}{4\pi}\sum_{n=1}^\infty\frac{1}{2n-1}\Psi_S(n,u,v)\left(e^{i\left(n-\frac{1}{2}\right)\pi u'}+e^{i\left(n-\frac{1}{2}\right)\pi v'}\right),\nonumber\\
\wssss &\equiv& \frac{1}{4\pi}\sum_{n=1}^\infty\frac{1}{2n-1}\Psi_S(n,u,v)\Psi_S^*(n,u',v').
\label{eq:sone} 
\eea
Using the following result
\be
\sum_{n=1}^\infty \frac{e^{i\left(n-\frac{1}{2}\right)\pi x}}{(2n-1)^j} =
\li_j\left(e^{i\pi\frac{x}{2}}\right)-\frac{1}{2^j}\li_j\left(e^{i\pi x}\right),
\label{eq:soneexp}
\ee
$\ws,\wss,\wsss$ and $\wssss$ can further be simplified up to $\cO(m^4)$ as we have shown in Appendix \ref{sec:wight-app}.
In particular, $\ws$ can be written as
\be
\ws=\frac{1}{4\pi}\left(\tanh^{-1}\left(e^{-\frac{i\pi(u-u')}{2}}\right)+\tanh^{-1}\left(e^{-\frac{i\pi(v-v')}{2}}\right)+\tanh^{-1}\left(e^{-\frac{i\pi(u-v')}{2}}\right)+\tanh^{-1}\left(e^{-\frac{i\pi(v-u')}{2}}\right)\right).\label{eq:sonetanh}
\ee

Despite these simplifications in $\wsj$, it  is difficult to find a general closed form expression for $\wsj$. Instead, as  was done in \cite{Afshordi:2012ez}, we
focus on two subregions of $\diam$, as shown in Fig.~\ref{fig:cd}. In the center, far away from the boundary,  one expects to obtain the Minkowski
vacuum, while in the corner, one expects the Rindler vacuum. In  the massless case studied by \cite{Afshordi:2012ez} the
former expectation was shown to be the case. However, in the corner, instead of the Rindler vacuum, they found that 
that $\wsj$ looks like the massless mirror vacuum. One of the main motivations to look at the massive case, is to compare with
these results. 

\begin{figure}[h]
\centering{\includegraphics[height=4.5cm]{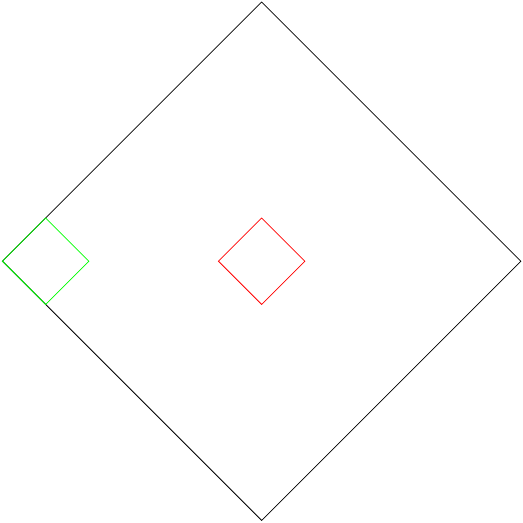}
\caption{The  center and corner regions in the causal diamond $\diam$.}
\label{fig:cd}}
\end{figure}

We now write down the expressions for the various vacua that we wish to compare with: 
\begin{eqnarray}
  \wmink(u,v;u',v')&=&-\frac{1}{4\pi}\ln\left(\cof^2e^{2\gamma}|2\Du\Dv|\right)-\frac{i}{4}\sgn(\Du+\Dv)\theta(\Du\Dv),\label{eq:minkm0}\\
  \wminkm(u,v;u',v')&=&\frac{1}{2\pi}K_0\left(m\sqrt{-2\Du\Dv+i(\Du+\Dv)\e}\right), \label{eq:massmink} \\ 
 \wrind(\eta,\xi,
  \eta',\xi')&=&-\frac{1}{4\pi}\ln\left(\cof^2e^{2\gamma}|\Delta\eta^2-\Delta\xi^2|\right)-\frac{i}{4}\sgn(\Delta\eta)\theta(\Delta\eta^2-\Delta\xi^2), \label{eq:rindm0}
  \\
  \wrindm(\eta,\xi, \eta',\xi') &=& \wminkm(u,v,u',v')-\frac{1}{2\pi}\int_{-\infty}^\infty
                                    \frac{dy}{\pi^2+y^2}K_0(m\gamma_1), \label{eq:rindm} \\
 \wmirr(u,v,u',v')&=&\wmink(u,v;u',v')-\wmink(u,v;v',u'), \label{eq:mirror} \\
\wmirrm(u,v,u',v')&=&\wminkm(u,v;u',v')-\wminkm(u,v;v',u').   \label{eq:mirrorm}                                 
\end{eqnarray}
In the expression Eqn.~(\ref{eq:minkm0})   for the massless Minkowski vacuum,  $\gamma$ is the Euler-Mascheroni constant
and $\cof=0.462$ (obtained in \cite{Afshordi:2012ez} by comparing $\wsj$ with $\wmink$). In the expression  Eqn.~(\ref{eq:massmink})  for the
massive Minkowski vacuum \cite{abdallah}, $K_0$ is the modified Bessel function of the second kind, with $\e$ a constant
such that that $0< \e \ll 1$.
In the expressions Eqn.~(\ref{eq:rindm0}) and Eqn.~(\ref{eq:rindm}) (see \cite{candelas:1976})
 for the Rindler vacua, $\alpha$ is the acceleration
parameter, with
\begin{eqnarray} 
\eta = \frac{1}{\alpha}\tanh^{-1}\left(\frac{u+v}{u-v}\right),&\quad & 
\xi= \frac{1}{2\alpha}\ln\left(-2\alpha^2uv\right), \nonumber \\ 
\Delta\eta=\eta-\eta', \quad \Delta\xi=\xi-\xi',  &\quad& \gamma_1=\sqrt{\zeta^2+\zeta'^2+2\zeta\zeta'\cosh(y-\alpha(\eta-\eta'))},\nonumber\\
\zeta=\sqrt{-2uv}.
\end{eqnarray} 

\subsection{The center}

We now consider a small diamond ${\diam}_l$ at the center of  $\diam$ with $l \ll 1$  where one expects  $\wsj$ to resemble  
$\wminkm$.  For small $\Delta u,
\Delta v$, $\wminkm$ can be written as
\be
\wminkm(u,v;u',v')\approx
-\frac{1}{4\pi}\ln\left(\frac{m^2e^{2\gamma}}{2}\left|\Du\Dv\right|\right)-\frac{i}{4}\sgn(\Du+\Dv)\theta(\Du\Dv)J_0\left(m\sqrt{2\Du\Dv}\right).\label{eq:centerminkm}
\ee
To leading logarithmic order this is similar in form to $\wmink$ (Eqn.~(\ref{eq:minkm0})), with $m$ replaced by $2 \cof$.
We plot these functions in Fig.~\ref{fig:wmink}. For $m \ll \cof$ the real part of $\wminkm$ is larger than
$\wmink$ and for $m \gg \cof$ it is smaller. When $m_c=2\cof$, the two coincide in this approximation. 

\begin{figure}[h]
\centerline{\includegraphics{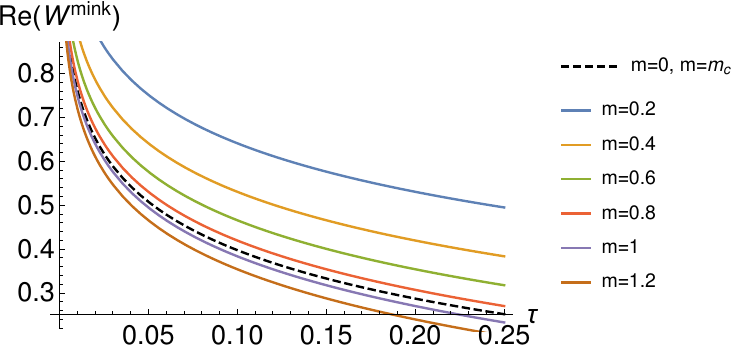}}
\caption{Plot of $\re(\wmink)$ and $\re(\wminkm)$ vs the proper time ($\tau$)}
\label{fig:wmink}
\end{figure}

Let us begin with $\wsj^A$, Eqns.~(\ref{eq:splitwsja})  and (\ref{eq:aone}). As shown in  Appendix
\ref{sec:wight-app},  the expressions for $\wa,\waa,\waaa$ and $\waaaa$ can be written in terms of Polylogarithms 
$\li_s(x)$.  For small $x$, i.e., near the center of $\diam$ they simplify for the  $s=1,3$ and $5$ to 
\bea
\li_1\left(e^{i\pi x}\right)&=& -\ln(-i\pi x)-\frac{i\pi x}{2}+\frac{\pi^2x^2}{24}+\cO(x^3), \label{eq:li1}\\
\li_3\left(e^{i\pi x}\right)&=& \zeta(3)+\frac{i\pi^3x}{6}+\left(-\frac{3\pi^2}{4}+\frac{\pi^2}{2}\ln(-i\pi
  x)\right)x^2+\cO(x^3), \\
\li_5\left(e^{i\pi x}\right)&=& \zeta(5)+\frac{i\pi^5x}{90}-\frac{\pi^2\zeta(3) x^2}{2}+\cO(x^3),
\label{eq:polylog} 
\eea
where $\zeta$ are the Riemann Zeta function and $x$ denotes $u$ or $v$.  In the expression for $\wa$, the
constant and linear terms in $x$ cancel out, so that
\bea
\wa &=& -\frac{1}{8\pi}\left(\ln(|u-u'||v-v'|)-\ln(|u-v'||v-u'|)-C_1\frac{i\pi}{2}\right)\nonumber\\
&&-\left(\frac{\pi}{96}+\frac{3m^2}{8\pi}+\frac{m^4}{8\pi}\left(\frac{1}{4}-\frac{7\zeta(3)}{\pi^2}\right)\right)(u-v)(u'-v')\nonumber\\
&&-\frac{m^2}{8\pi}\left(1+\frac{m^2}{12}\right)\left[(u-u')^2\ln\left(-i\pi(u-u')\right)+(v-v')^2\ln\left(-i\pi(v-v')\right)\right.\nonumber\\
&&\left.-(u-v')^2\ln\left(-i\pi(u-v')\right)-(v-u')^2\ln\left(-i\pi(v-u')\right) \right] +\cO(\Delta^3), 
\eea
where $C_1=\sgn(u-u')+\sgn(v-v')-\sgn(u-v')-\sgn(v-u')$ and $\Delta$ collectively denotes either $u-u',v-v',v'-u$ or $ v-u'$.   
For sufficiently small $x$, the logarithmic term dominates significantly over other terms,  and hence in $\diam_l$ 
\be
\wa=-\frac{1}{8\pi}\left(\ln(|u-u'||v-v'|)-\ln(|u-v'||v-u'|)-C_1\frac{i\pi}{2}\right) + \cO(m^2,\Delta^2), \label{eq:wa-cent}
\ee
where we have hidden all the mass dependence in the correction.

Next, $\waa,\waaa$ and $\waaaa$ also involve another set of Polylogarithms of the type $\li_s(-e^{i\pi x})$ for $ s \geq 2$
as well as $\li_s(e^{ i\pi x})$ for $s=2,3,4$, which are multiplied to the functions $g_j(m;u,v)$ and $f_j(m;u,v)$ given in Eqn.~(\ref{eq:gjfj}). The $g_j(m;u,v)$ and $f_j(m;u,v)$ themselves go to zero  either linearly or quadratically with $u,v$. This second set of  Polylogarithms, unlike the first in Eqn.~(\ref{eq:polylog}), are strictly convergent as $x \rightarrow 0$. Hence the $\waa,\waaa $ and $\waaaa$ are  
strongly sub-dominant with respect to $\wa$ so that 
\be
\wsj^A(u,v,u',v') = -\frac{1}{8\pi}\left(\ln(|u-u'||v-v'|)-\ln(|u-v'||v-u'|)-C_1\frac{i\pi}{2}\right)+\cO(m^2,\Delta^2).
\ee
Here we note that while the mass correction is significant in the antisymmetric SJ modes, it becomes  insignificant in $\wsj^A$ in the center of the diamond, compared to the dominating logarithmic term.  Thus we see that in the center of $\diam$,  $\wsj^A$ is identical to the massless case found in \cite{Afshordi:2012ez}. 

We now turn to the symmetric part $\wsj^S$, Eqns.~(\ref{eq:splitwsjs})  and (\ref{eq:sone}). The expressions for
$\ws,\wss,\wsss$  and $\wssss$ can again be written in terms of Polylogarithms $\li_s(x)$  as shown in Appendix
\ref{sec:wight-app}. For $\ws$ however, the form given in Eqn.~(\ref{eq:sonetanh}) is easier to analyze.  
Noting that for small $x$ 
\be
\tanh^{-1}\left(e^{i\pi x/2}\right)=-\frac{1}{2}\ln\left(\frac{-i\pi x}{4}\right)-\frac{\pi^2x^2}{96}+\cO(x^3),
\ee
near the center of $\diam$ we see that
\bea
\ws&=&-\frac{1}{8\pi}\left[\ln(|u-u'||v-v'|)+\ln(|u-v'||v-u'|)+4\ln\left(\frac{\pi}{4}\right)-C_2\frac{i\pi}{2}\right]\nonumber\\
&&-\frac{\pi}{384}\left((u-u')^2+(u-v')^2+(v-u')^2+(v-v')^2\right)+\cO(\Delta^3),
\eea
where $C_2=\sgn(u-u')+\sgn(v-v')+\sgn(u-v')+\sgn(v-u')$. Since the logarithmic term dominates,
\be
\ws=-\frac{1}{8\pi}\left[\ln(|u-u'||v-v'|)+\ln(|u-v'||v-u'|)+4\ln\left(\frac{\pi}{4}\right)-C_2\frac{i\pi}{2}\right]+\cO(\Delta^2).  
\ee
Next, we see that $\wss,\wsss$ and $\wssss$ involve a set of Polylogarithms of the type $\li_s(e^{i\pi x}),$ for $
s=2,3$, multiplied by linear and quadratic functions of $u,v,u'$ and $v'$. This set of  Polylogarithms are in fact
strictly convergent as $x \rightarrow 0$. Hence the $\wss,\wsss $ and $\wssss$ are strongly sub-dominant, with respect
to $\ws$, so that
\bea
\wsj^S(u,v,u',v')&=&-\frac{1}{8\pi}\left[\ln(|u-u'||v-v'|)+\ln(|u-v'||v-u'|)+4\ln\left(\frac{\pi}{4}\right)-C_2\frac{i\pi}{2}\right]\nonumber\\
&&\quad+\emc+\cO(m^2,\Delta^2), \label{eq:emc} 
\eea
where $\emc$ is the correction in the center coming from the
approximation to the quantization condition Eqn.~(\ref{eq:appquant}). We will determine this 
numerically in Section \ref{sec:numerical}.
Up to this mass correction $\wsj^S$ resembles the massless case found in \cite{Afshordi:2012ez}.

Putting these pieces together we find that
\be
\wsj^{center}(u,v,u',v') \approx 
-\frac{1}{4\pi}\ln|\Du\Dv|-\frac{i}{4}\sgn(\Du+\Dv)\theta(\Du\Dv)-\frac{1}{2\pi}\ln\left(\frac{\pi}{4}\right)+\e_m^{center}. \label{eq:wcent}
\ee
A direct comparison with $\wmink$ gives 
\begin{equation}
\wsj^{center}(u,v,u',v') - \wmink(u,v,u',v') \approx -\frac{1}{2\pi}\ln\left(\frac{\pi}{4}\right)+\e_m^{center} +
\frac{1}{4\pi}\ln\left(2 \cof^2e^{2\gamma}\right),  \label{eq:compcenter} 
  \end{equation} 
where $\cof \approx 0.462$ is fixed by comparing the massless $\wsj$ with $\wmink$ as in \cite{Afshordi:2012ez}.

\subsection {The corner}

We now consider either of the two spatial corners of the diamond, $\diam_c \subset \diam$ as shown in Fig.~\ref{fig:cd}. We use the small $\Delta u, \Delta v$ form of $\wminkm$ to
express
\be
\wmirrm \approx -\frac{1}{4\pi}\ln\left|\frac{\Du\Dv}{(u-v')(v-u')}\right|-\frac{i}{4}\sgn(\Du+\Dv)\left(\theta(\Du\Dv)-\theta((u-v')(v-u'))\right) \label{eq:approxmirrm} .
 \ee
 As in \cite{Afshordi:2012ez} we make the coordinate transformation
\be
\{u,u',v,v'\}\rightarrow \{u-1,u'-1,v+1,v'+1\}, \label{eq:corner}
\ee
which brings the origin $(0,0)$ to the left corner of the diamond. 

For $\wsj^A$ (Eqn.~(\ref{eq:splitwsja}) and Eqn.~(\ref{eq:aone})),  we note that $\wa$ is invariant under this coordinate
transformation and hence given by Eqn.~(\ref{eq:wa-cent}) near the origin of $\diam_c$. In $\waa,\waaa$ and $\waaaa$ the
constant terms cancel out and, similar to the center calculation, they goes to zero linearly with $u,v$ and hence are
strongly sub-dominant with respect to $\wa$. 
Therefore,  in the corner, $\wsj^A$ simplifies to
\be
\wsj^A(u,v,u',v') = -\frac{1}{8\pi}\left(\ln(|u-u'||v-v'|)-\ln(|u-v'||v-u'|)-C_1\frac{i\pi}{2}\right)+\cO(m^2,\Delta),
\ee
and the sub-dominant part is now linear in $\Delta$.

For $\wsj^S$ (Eqn.~(\ref{eq:splitwsjs}) and Eqn.~(\ref{eq:sone})), under the coordinate transformation
\be
\ws=\frac{1}{4\pi}\left(\tanh^{-1}\left(e^{-\frac{i\pi(u-u')}{2}}\right)+\tanh^{-1}\left(e^{-\frac{i\pi(v-v')}{2}}\right)-\tanh^{-1}\left(e^{-\frac{i\pi(u-v')}{2}}\right)-\tanh^{-1}\left(e^{-\frac{i\pi(v-u')}{2}}\right)\right).
\ee
In the corner $\diam_c\subset\diam$ this simplifies to
\bea
\ws&=&\frac{1}{8\pi}\left[-\ln(|u-u'||v-v'|)+\ln(|u-v'||v-u'|)+C_1\frac{i\pi}{2}\right]\nonumber\\
&&-\frac{\pi}{384}\left((u-u')^2+(v-v')^2-(u-v')^2-(v-u')^2\right)+\cO(\Delta^3).
\eea
For sufficiently small $\Delta$, the logarithmic term dominates the other terms so that
\be
\ws = \frac{1}{8\pi}\left[-\ln(|u-u'||v-v'|)+\ln(|u-v'||v-u'|)+C_1\frac{i\pi}{2}\right]+\cO(\Delta^2).
\ee
As in the center, $\wss$ and $\wsss$ go  to zero while 
\be
\wssss=\frac{7\zeta(3)m^4}{8\pi^3}+\cO(\Delta)\approx 0.034m^4.
\ee
Therefore in the corner  we see that 
\be
\wsj^S \approx  \frac{1}{8\pi}\left[-\ln(|u-u'||v-v'|)+\ln(|u-v'||v-u'|)+C_1\frac{i\pi}{2}\right]+0.034m^4+\emcr  \label{eq:emcr}
\ee
i.e., there is a mass correction to the massless $\wsj^S$. $\emcr$ is, as in the center calculation, a small but finite term coming from the
approximation to the quantization condition Eqn.~(\ref{eq:appquant}), which we will  evaluate numerically in
Sec.~\ref{sec:numerical}. 

Putting these pieces together we find that in the corner $\wsj$ takes the form 
\bea
\wsj^{corner}(u,v,u',v') &\approx& -\frac{1}{4\pi}\ln\left|\frac{\Du\Dv}{(u-v')(v-u')}\right|-\frac{i}{4}\sgn(\Du+\Dv)\left(\theta(\Du\Dv)-\theta((u-v')(v-u'))\right)\nonumber\\
&&+0.034m^4+\emcr.
\eea
A direct comparison with $\wmirrm$ Eqn.~(\ref{eq:approxmirrm})  gives 
\be
\wsj^{corner}(u,v,u',v') - \wmirrm(u,v,u',v') \approx 0.034m^4+\emcr  \label{eq:compcorner} .
\ee 

\subsection {Numerical simulations for determining $\e_m$} \label{sec:numerical}
The formal expansion of $\wsj $ in terms of the SJ modes Eqn.~(\ref{eq:fullw}) can be truncated and evaluated numerically
in $\diam$. Here we do not need to use the approximation of the quantization condition Eqn.~(\ref{eq:appquant}). This
allows us to evaluate the ensuing corrections $\e_m^{center},\e_m^{corner}$ numerically,  and thus quantify the
comparisons of $\wsj$ obtained in the center and  corner of $\diam$ with the standard vacua. 

We begin with the $N^{\mathrm{th}}$ truncation $\wsj^t$ of the series form of $\wsj$ Eqn.~(\ref{eq:fullw}) in the full
diamond $\diam$ for $N=100, 200, \ldots 1000$.  Fig.~\ref{fig:truncations}  shows an explicit convergence of $\wsj^t$ for
these values of $N$. 
For the plot we considered  the pairs  $(u,v)=(x,x)$ and $(u',v')=(-x,-x)$ for timelike separated points, and $(u,v)=(x,-x)$ and $(u',v')=(-x,x)$
for spacelike separated points.  From this point onwards, we will consider $\wsj^t$ for $N=1000$.
\begin{figure}[h]
  \centering{\begin{tabular}{cc}
\includegraphics[height=4.2cm]{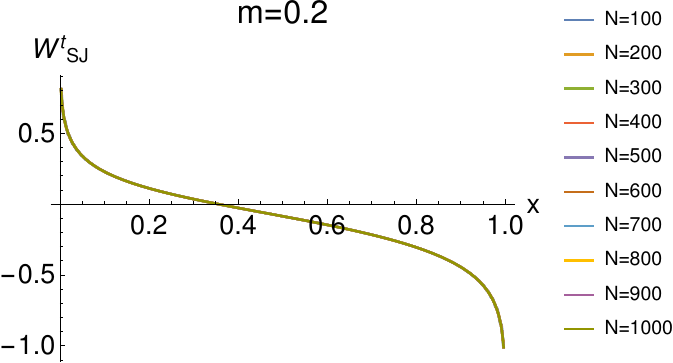} &\hskip 2cm 
                                                            \includegraphics[height=4.2cm]{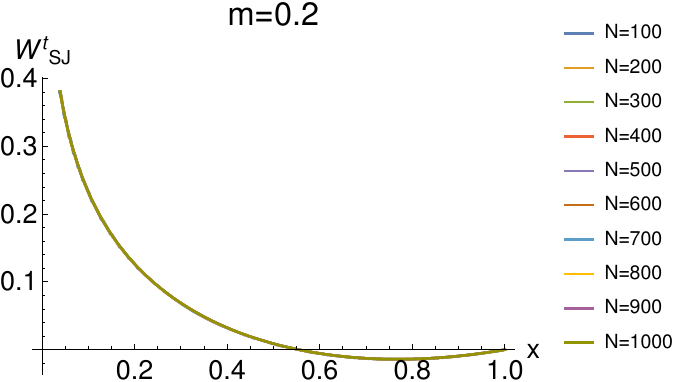}\\
                \includegraphics[height=4.2cm]{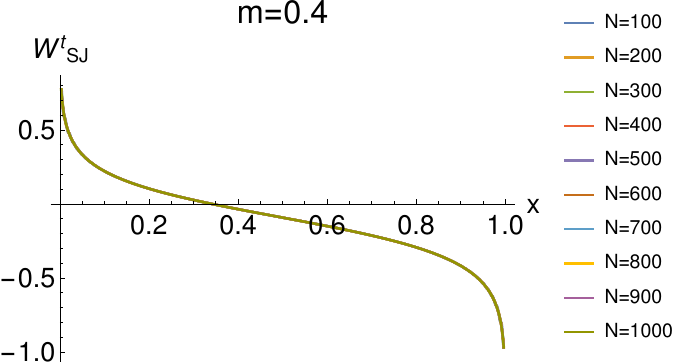} &\hskip 2cm 
\includegraphics[height=4.2cm]{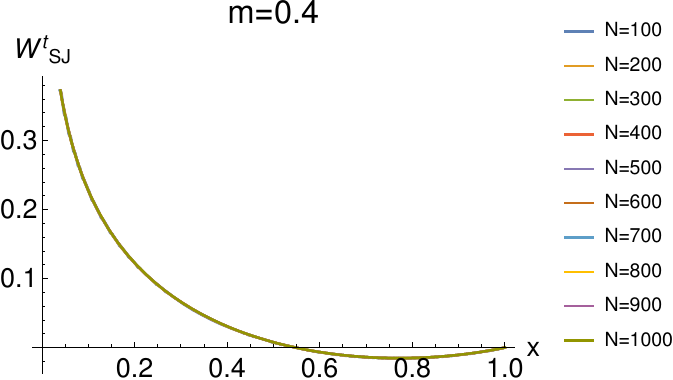} \\
\end{tabular}
\caption{We show the convergence of the truncation of the series $\wsj^t$ with $N$ for $m=0.2,0.4$ for timelike separated points (left) and spacelike separated points (right).}
\label{fig:truncations}}
\end{figure} 
  
Next, we consider the difference $\wsj^t-\wsj^{t,approx}$ where the latter uses the approximation Eqn.~
(\ref{eq:appquant}), both in the center and the corner of $\diam$ in order to obtain $\emc,\emcr$.  It suffices to
look at their  symmetric parts $\wsj^{S,t}$ since only these contribute (see Eqns.~(\ref{eq:emc}),
(\ref{eq:emcr})). $\emc$ and $\emcr$ are {\it not} strictly constants. However, as we will see, they are approximately
so. As in \cite{Afshordi:2012ez}, they are evaluated by taking a set of randomly selected points in a small diamond in
the center as well as in the corner. Here we take $10$ points and consider all $55$ pairs between them to calculate
$\emc,\emcr$. What we find in Fig.~\ref{fig:em} is that they are very nearly equal and hence we can consider their
average. The explicit averages for these masses are tabulated in Table \ref{tab:em} for future reference.  
\vskip 1cm 
\begin{figure}[h]
\centerline{\begin{tabular}{cc}
\includegraphics[height=4.2cm]{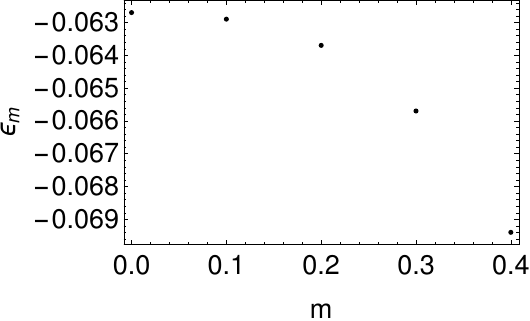}& \hskip 1cm 
\includegraphics[height=4.2cm]{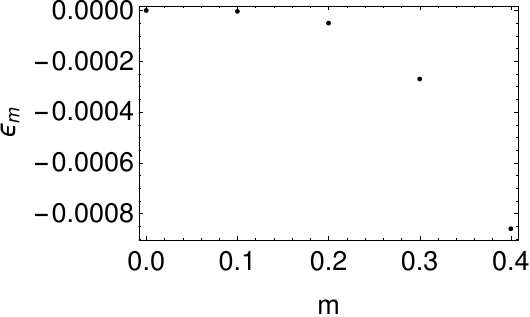}\\
\end{tabular}}
\caption{$\emc$ and $\emcr$  evaluated in a small diamond of $l=10^{-5}$ in the center and the corner of $\diam$, for
  m=0,0.1,0.2,0.3 and 0.4. The standard deviation is very small and hence we can take $\emc$ and $\emcr$ to be
  approximately constant.}
\label{fig:em}
\end{figure} 
\vskip 1cm 
\begin{table}[htb]
\centering{
\begin{tabular}{|c|c|c|}
\hline
mass & $\e_m^{center}$ & $\e_m^{corner}$\\
\hline
0 & -0.0627 & 0\\
0.1 & -0.0629 & $-3.5\times 10^{-6}$\\
0.2 & -0.0637 & -0.00005\\
0.3 & -0.0657 & -0.00027\\
0.4 & -0.0694 & -0.00086\\
\hline
\end{tabular}
\caption{A tabulation of $\emc, \emcr$  for different $m$}
\label{tab:em}}
\end{table}
\vskip 0.1in
This allows us to now compare $\wsj$ calculated in the center Eqn.~(\ref{eq:wcent}) with $\wmink, \wminkm$. 
The difference with $\wmink$ given in  Eqn.~(\ref{eq:compcenter}) is indeed very small. For  $m=0.2$,  for example,  
\begin{equation}
\wmink - \wsj^{center}  \simeq -\frac{1}{4 \pi} \log(2 \times 0.462^2)- 
 \frac {\gamma}{2\pi} - (-\frac{1}{2 \pi} \log(\frac{\pi}{4}) - \emc) \simeq 0.001. 
\end{equation}
Similarly, in the corner, the difference with $\wmirrm$ is again very small. For example for $m=0.2$ it gives
\begin{equation} 
\wmirrm - \wsj^{corner}   \simeq   0.034\times (0.2)^4+\emcr \simeq 4\times 10^{-6} 
 \end{equation} 
Thus, we see that in the small mass limit, $\wsj$ does not differ from the massless Minkowski vacuum in the center region, and
continues to mimic the mirror vacuum in the corner.

Since our analytical calculation is restricted to a very small $\Delta u, \Delta v$, where perhaps the effect of a
small mass is small, we can use the truncation $\wsj^t$ for comparisons with the standard vacuum in larger regions of $\diam$. 
This is shown in the residue plots in Figs.~\ref{fig:res-m0}. In the full diamond, we consider 
 the pairs  $(u,v)=(x,x)$ and $(u',v')=(-x,-x)$ for timelike separated points, and $(u,v)=(x,-x)$ and $(u',v')=(-x,x)$
 for spacelike separated points. We find that for $m=0.2$, $l\sim 0.02$, $\wsj^t$ differs very
 little from the 
 massless Minkowski vacuum, while as the mass increases, so does the discrepancy. 
 \begin{figure}[h]
\centering{\begin{tabular}{cc}
              \includegraphics[height=4cm]{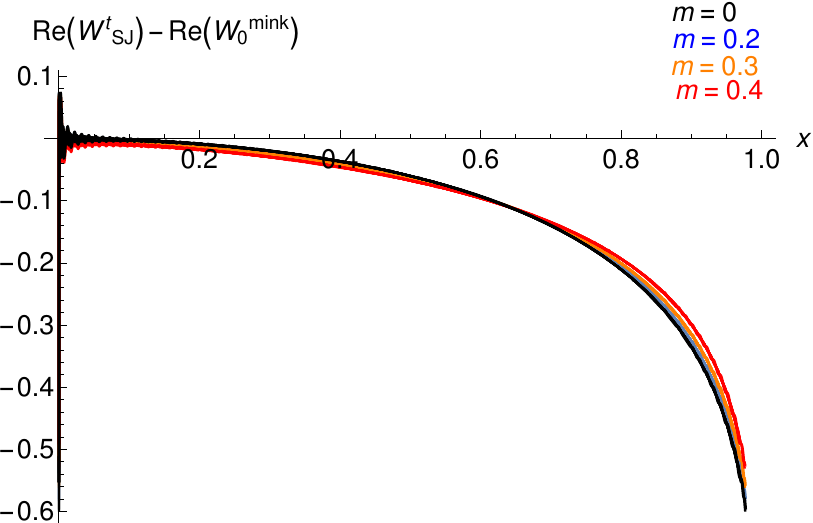}&\hskip 1cm
                                                                     \includegraphics[height=4cm]{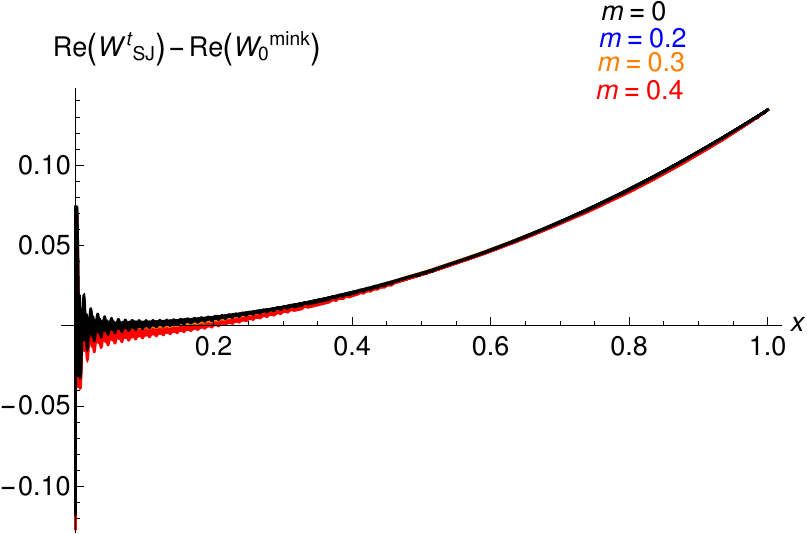}\\
\includegraphics[height=4cm]{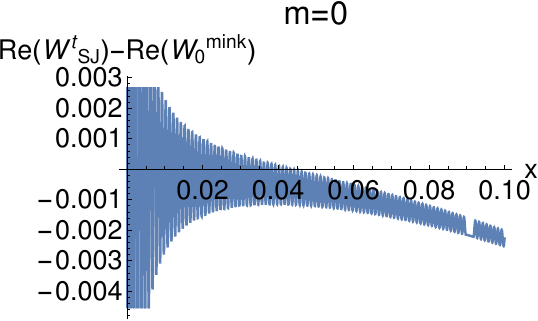}&\hskip 2cm
                                                            \includegraphics[height=4cm]{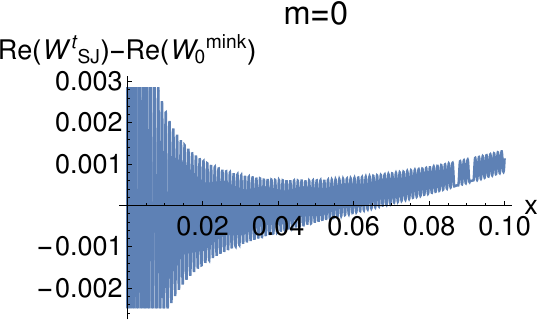}\\

         \includegraphics[height=4cm]{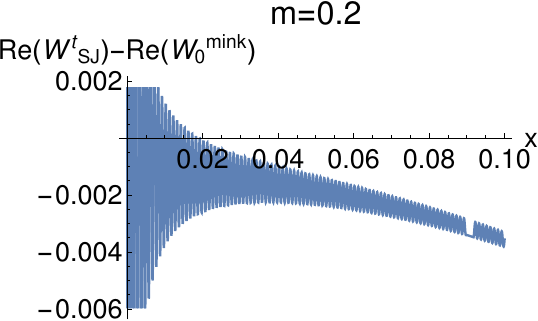}&\hskip 2cm
                                                            \includegraphics[height=4cm]{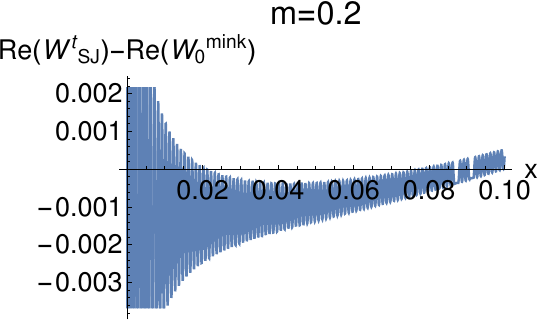}\\

              \includegraphics[height=4cm]{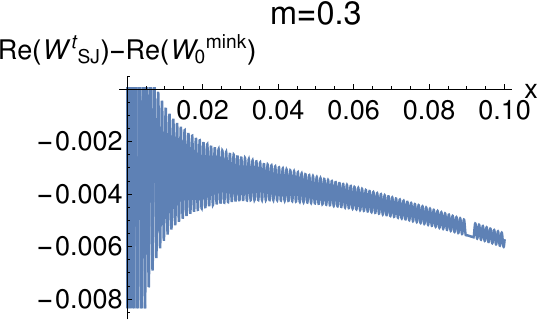}&\hskip 2cm
                                                            \includegraphics[height=4cm]{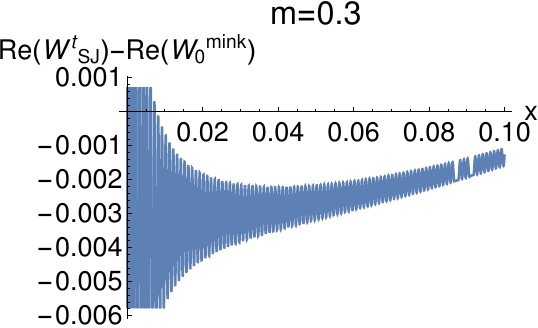}\\

              \includegraphics[height=4cm]{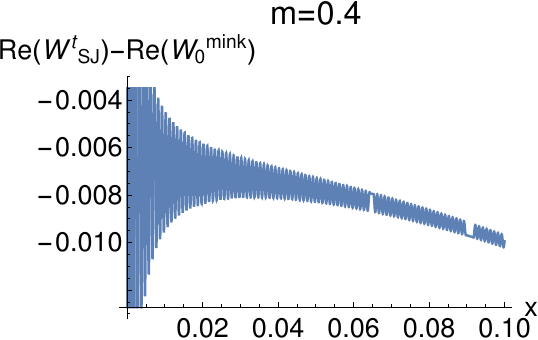}&\hskip 2cm
                                                            \includegraphics[height=4cm]{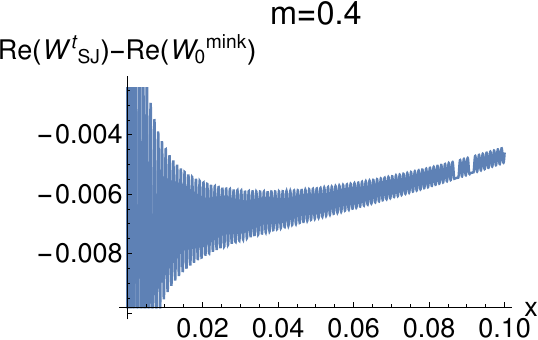}
            \end{tabular}
\caption{Residue plot of $\re(\wsj^t-\wmink)$ for timelike and spacelike separated points respectively, for the full
  diamond, as well as in a center region of size $l\sim 0.1$.}
\label{fig:res-m0}}
\end{figure}
On the other hand, as we see in  Figs.~\ref{fig:res-m}  we find that $\wsj^t$ clearly does  {\it not} agree with the massive Minkowski vacuum, in this small mass
limit.
\begin{figure}[h]
\centerline{\begin{tabular}{cc}
\includegraphics[height=5cm]{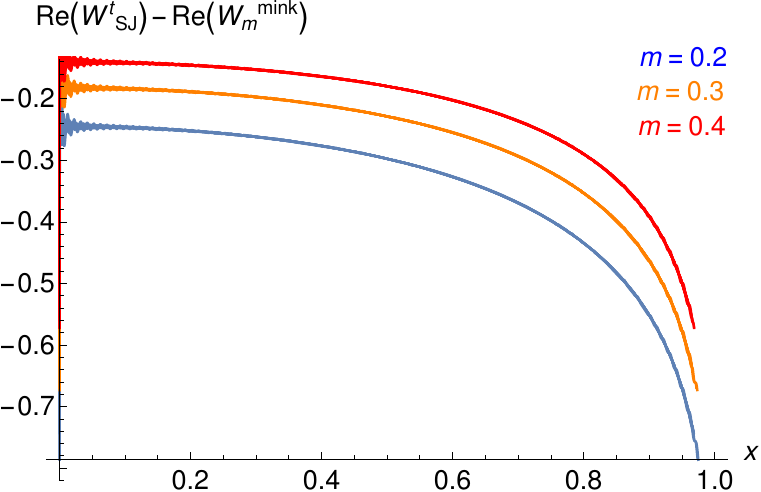}&\hskip 1cm
\includegraphics[height=5cm]{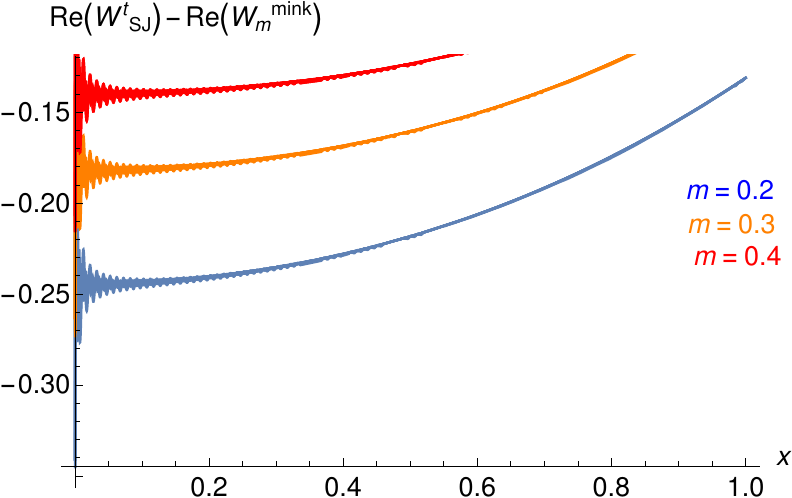}
\end{tabular}}
\caption{Residue plot of $\re(\wsj^t-\wminkm)$ for timelike and spacelike separated points respectively, for the full
  diamond. The discrepancy is obvious. }
\label{fig:res-m}
\end{figure}
 
A similar calculation in the corner shows that $\wsj^t$ looks like the massive mirror vacuum rather than the Rindler vacuum. Here, we consider pairs of points: $(u,v)=(l+x,-l+x)$ and $(u',v')=(l-x,-l-x)$ for timelike separation  and
$(u,v)=(l+x,-l-x)$ and $(u',v')=(l-x,-l+x)$ for spacelike separation, where the origin $(0,0)$ is at the left corner of the diamond $\diam$ and $2l$ is the length of the corner diamond $\diam_c$. This is shown in the residue plots in Figs.~\ref{fig:corres01} and \ref{fig:corres}.

\begin{figure}[h]
\centering{\begin{tabular}{cc}
\includegraphics[height=4cm]{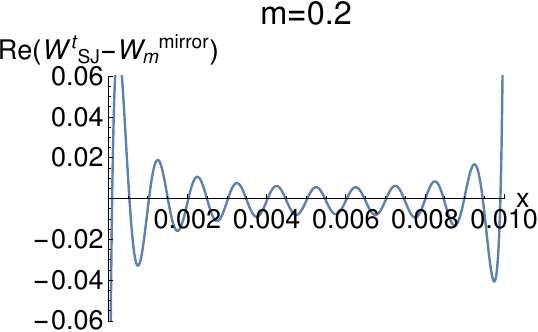}&\hskip 2cm
\includegraphics[height=4cm]{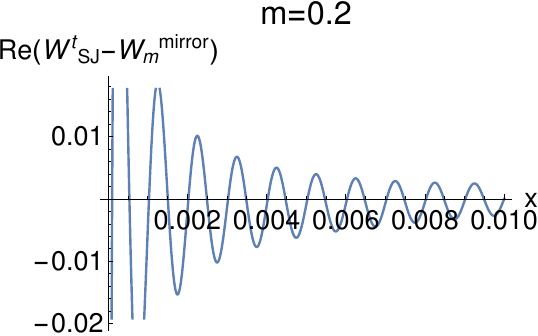}\\
\includegraphics[height=4cm]{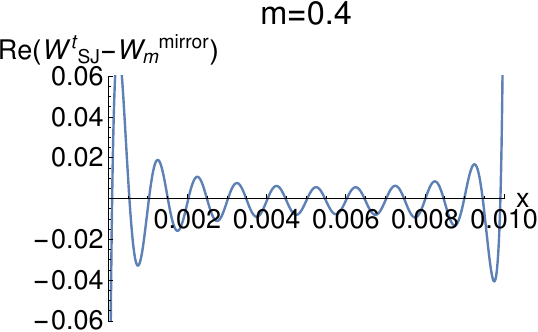}&\hskip 2cm
\includegraphics[height=4cm]{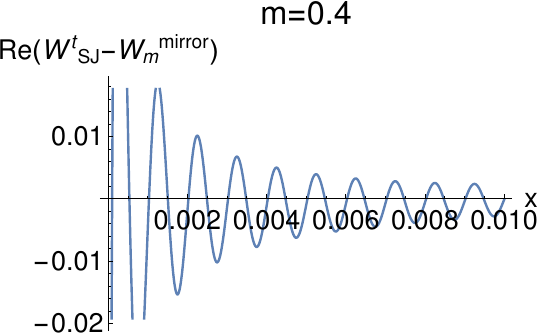}
\end{tabular}
\caption{Residue plot of $\re(\wsj^t-\wmirrm)$ for timelike and spacelike separated points respectively in the corner
  region,  $l\sim 0.01$.}
\label{fig:corres01}}
\end{figure}
\begin{figure}[h]
\centering{\begin{tabular}{cc}
\includegraphics[height=4cm]{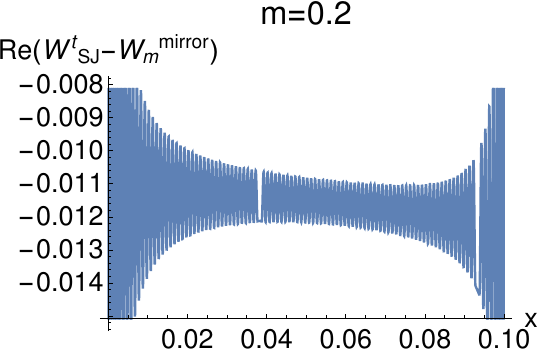}&\hskip 2cm
\includegraphics[height=4cm]{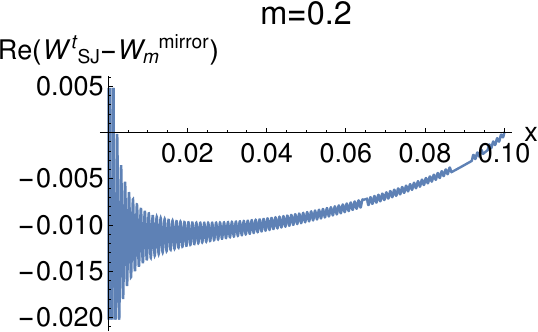}\\
\includegraphics[height=4cm]{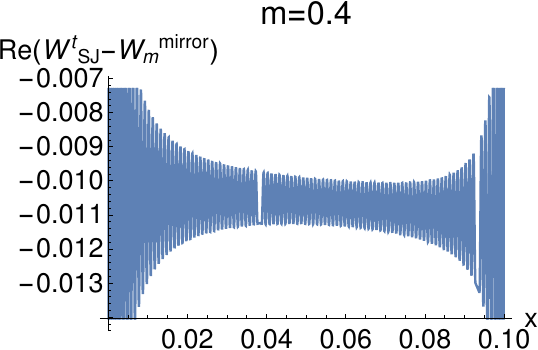}&\hskip 2cm
\includegraphics[height=4cm]{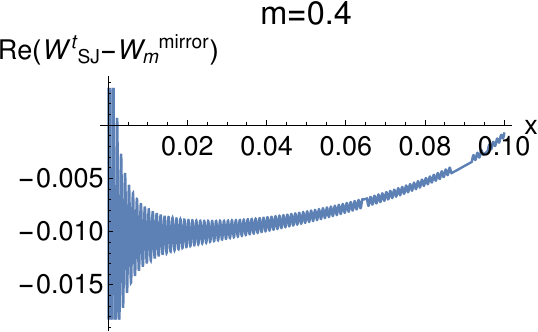}
\end{tabular}
\caption{Residue plot of $\re(\wsj^t-\wmirrm)$ for timelike and spacelike separated points respectively in the corner
  region, $l\sim 0.1$.}
\label{fig:corres}}
\end{figure}
\begin{figure}[h]
\centerline{\begin{tabular}{ccc}
\includegraphics[height=3.5cm]{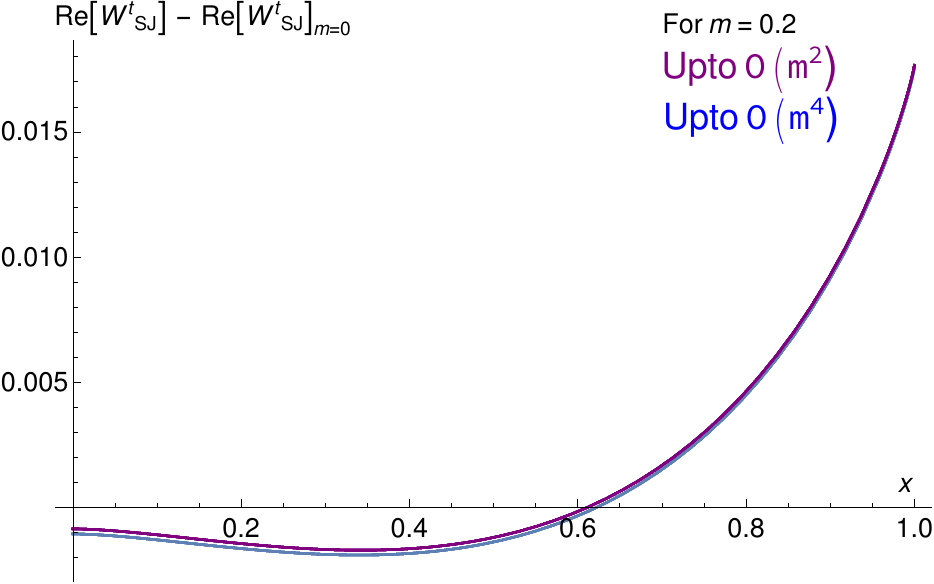}&
\includegraphics[height=3.5cm]{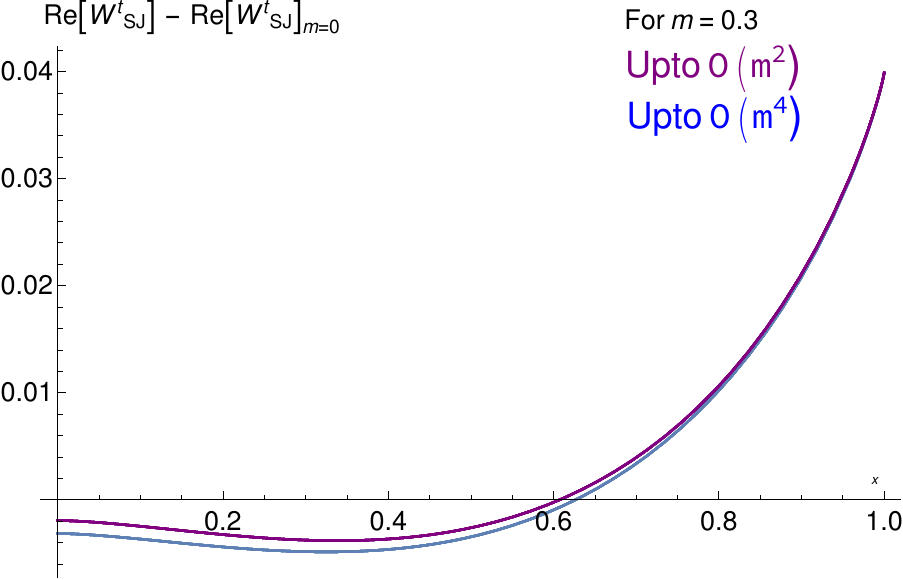}&
\includegraphics[height=3.5cm]{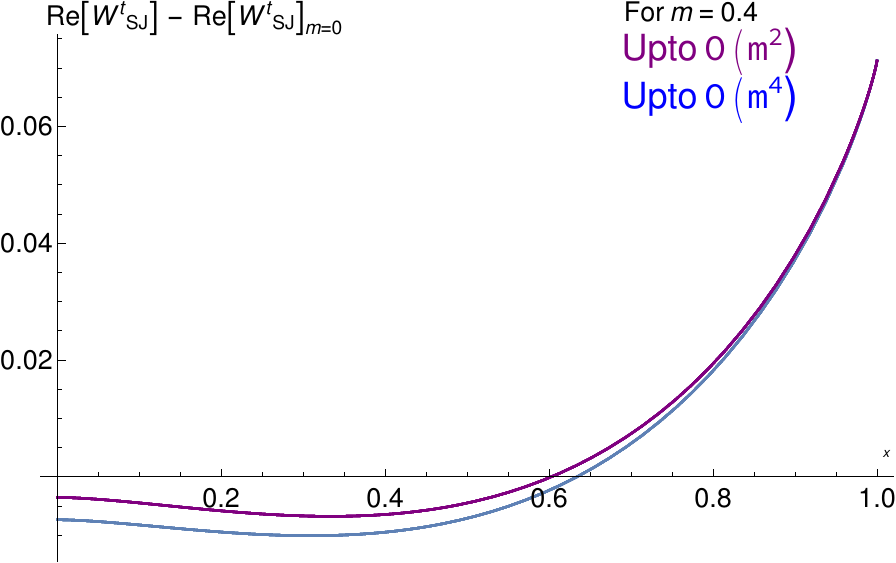}\\
\includegraphics[height=3.5cm]{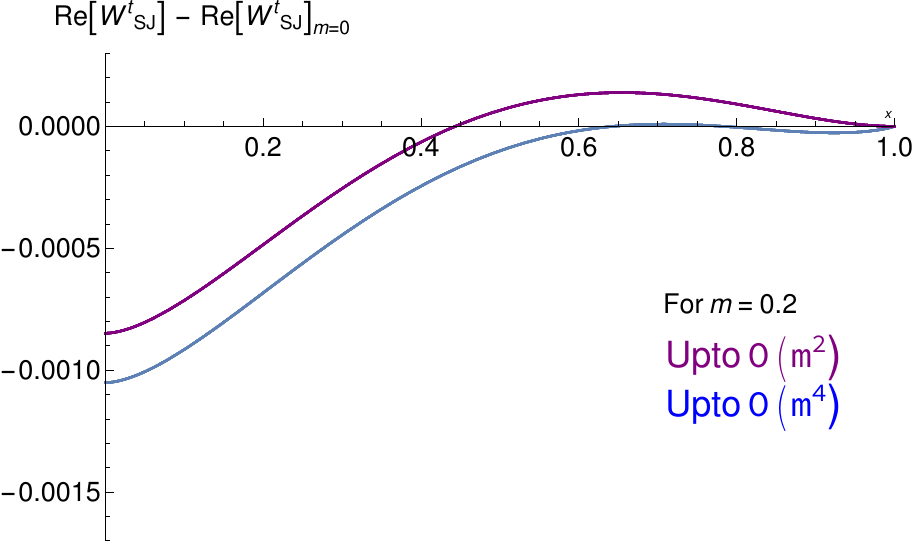}&
\includegraphics[height=3.5cm]{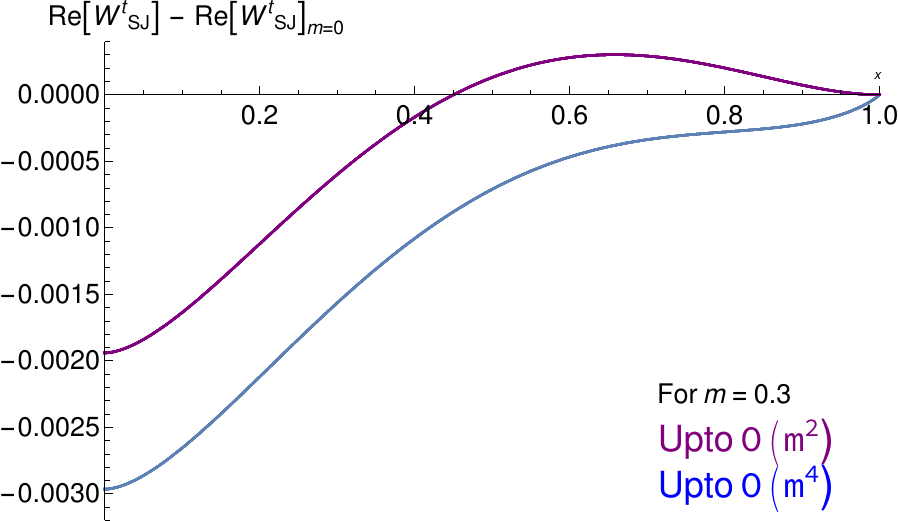}&
\includegraphics[height=3.5cm]{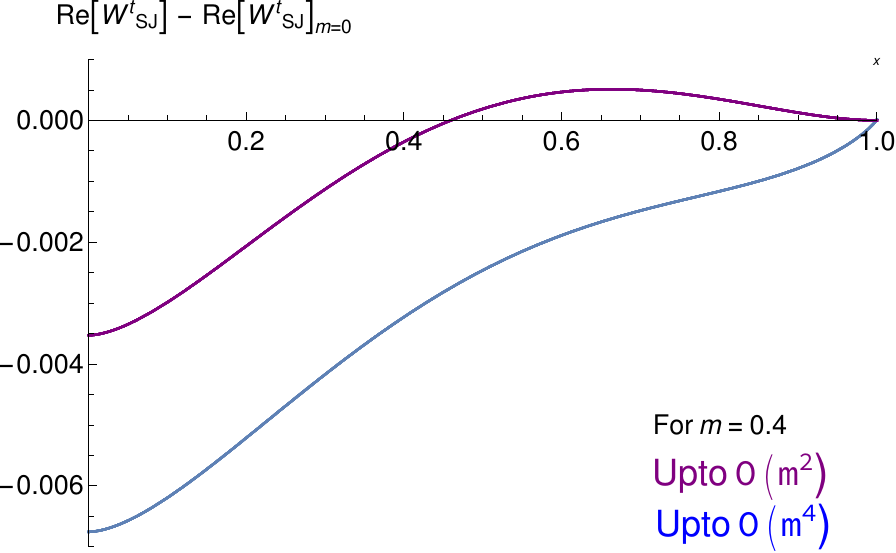}
\end{tabular}}
\caption{Plot of $\re(\wsj^t)-\re(\wsj)_{m=0}$ vs $x$ for $\cO(m^2)$ and $\cO(m^4)$ corrections. The plots in the first line are all for timelike separated points while those in the second line are for spacelike separated points.}
\label{fig:diff-re}
\end{figure}

Our calculation suggest that the $\cO(m^4)$ corrections are largely irrelevant to $\wsj$ in the center and the corner
of $\diam$.  A question that occurs is whether increasing the order of the correction makes a significant difference. In Fig.~\ref{fig:diff-re}
we show the sensitivity of  the difference in $\wsj^t$ with $\wmink$,   to $\cO(m^2)$ and $\cO(m^4)$.   As we can see,
the $\cO(m^4)$  corrections while  not negligible, are relatively small for $m\sim 0.2$.

What we have seen from our calculations so far is that in the small mass approximation, $\wsj$  continues to behave in
the center like the massless Minkowski vacuum, and in the corner as the massive Mirror vacuum. This behavior is very
curious since it suggests an unexpected mass dependence in  $\wsj$, not seen in the standard vacuum. In order to explore
this we must examine $\wsj$ for large masses. Because  we are limited in our analytic calculations, we now proceed to
a fully numerical calculation of  $\wsj$ in a causal set for comparison.  

\section{The massive SJ Wightman function in the causal set }\label{sec:causet}

This curious behavior of the SJ vacuum seems to be a result of our small mass approximation. Since we do not know how
to evaluate it analytically for finite mass we look for a numerical evaluation on a causal set $\cc_\diam$ that is approximated
by $\diam$ (see Sec.~\ref{csrev.sec} for a brief introduction to causal sets). 

$\cc_\diam$ is obtained via a Poisson sprinkling into  $\diam$ at density $\rho$. The expected total number of elements is
then $\langle N \rangle =\rho V_\diam$, where $V_\diam$ is the total volume of the diamond. The partial order is then determined by the causal relation among the elements i.e. $x_i \prec x_j$ iff $x_j$ is in the causal future of $x_i$.

The causal set SJ Wightman function $\wsjc$ is constructed using the same procedure as in the continuum,
namely starting from the  causal set retarded Green's function. The massive Green's function  in $\diam$ is
\cite{johnston,Johnston:2008za}
\be
G_m=\left(\id+\frac{m^2}{\rho}G_0\right)^{-1}G_0,
\ee
where $\id$ is the $N\times N$ identity matrix and $G_0$ is the massless retarded Green's function which is given by $G_0 = -C/2$, where $C$ is the causal matrix (see Eqn.~\eqref{csgf2d.eq}) 

We sprinkle $N=10,000$ elements in $\diam$ of length $2$, i.e., of  density $\rho=2500$ for different values of mass.   
In Fig.~\ref{fig:dis-ev}  we plot the SJ eigenvalues for these various masses. We find that the eigenvalues for small
masses are very close to the massless eigenvalues, especially for small $n$. As $n$ increases, they become
indistinguishable. In Fig.~\ref{fig:scatterwsjc} we show the scatter plot of $\wsjc$. For the smaller masses, $\wsjc$
tracks the massless case closely, but at larger masses $m\sim 10$ it shows the characteristic behavior expected of the
massive Minkowski  vacuum \cite{Johnston:2009fr}.  
\begin{figure}[h]
\centering{\begin{tabular}{cc}
\includegraphics[height=4cm]{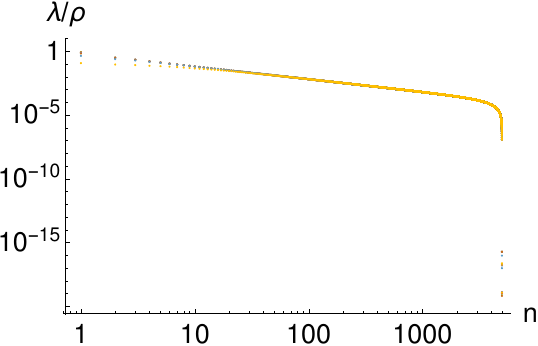}& \hskip 1cm
\includegraphics[height=4cm]{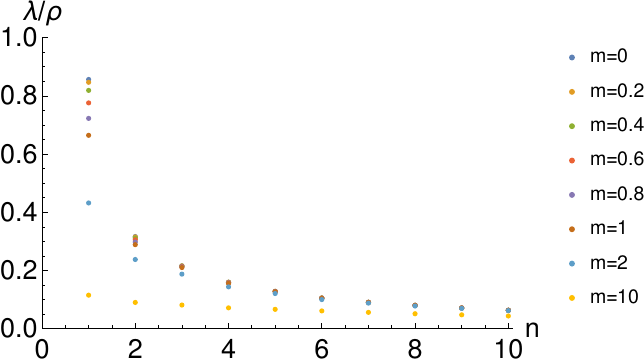}\\
(a)&(b)
\end{tabular}
\caption{(a):A log-log plot of the SJ eigenvalues $\lambda$ divided by density $\rho$ vs $n$ for $m=0,0.2,0.4,0.6,0.8,1,2$ and $10$, (b): a plot of $\lambda/\rho$ vs $n$ for small $n$.}
\label{fig:dis-ev}}
\end{figure}
\vskip 0.5cm
\begin{figure}[h]
\centering{\begin{tabular}{cc}
\includegraphics[height=4cm]{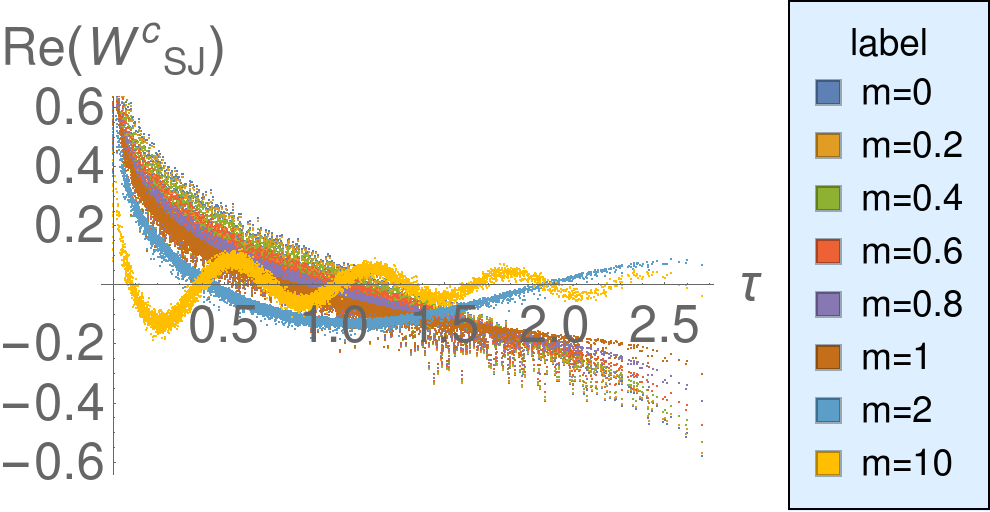}& \hskip 1cm
\includegraphics[height=4cm]{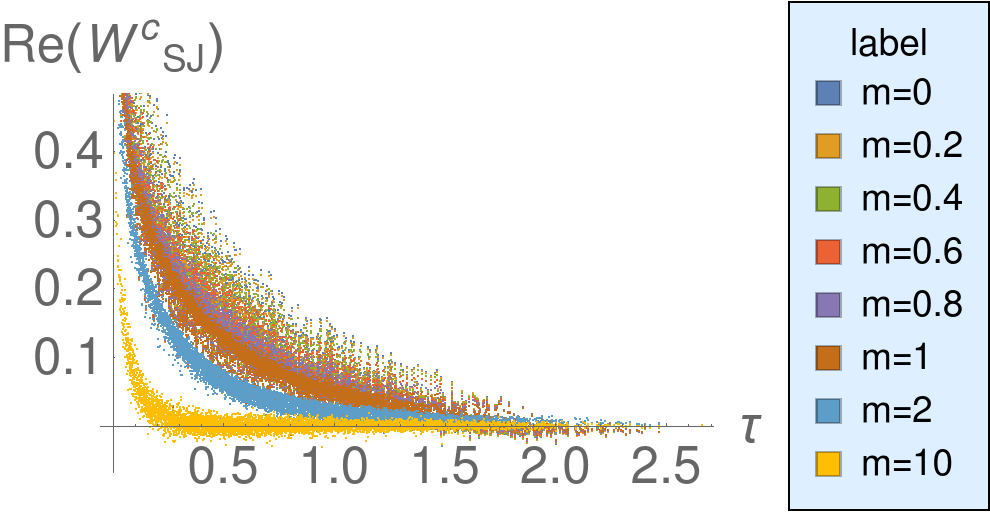}\\
(a)&(b)
\end{tabular}
\caption{$\wsjc$ for $m=0,0.2,0.4,0.6,0.8,1,2$ and $10$ vs (a) proper time for timelike separated points and (b) proper distance for spacelike separated points.}
\label{fig:scatterwsjc}}
\end{figure}

Next, we focus our attention to the center of the diamond so that we can compare with our analytic results.  We consider
a central region $\diam_l$ with $l =0.1$. Figs ~\ref{fig:smallmass-sj-tau} and ~\ref{fig:largemass-sj-tau} shows $\wsjc$ vs proper time and proper distance for  timelike
and spacelike separated pairs, respectively for small and large masses.  The comparisons with the massless and massive
Minkowski vacuum show a curious behavior. For the small $m$ values $\wsjc$  agrees perfectly with our analytic results
above, namely that $\wsj$ is more like $\wmink$ than $\wminkm$. However, as $m$ increases,  $\wminkm$ approaches
$\wmink$,  coinciding with it at $m=2\cof$. After this value of $m$,
$\wsjc$ then tracks $\wminkm$ rather than $\wmink$. This transition is continuous, and suggests that the small $m$ behavior
of $\wsjc$ goes continuously over to $\wmink$, unlike $\wminkm$. % The departure for small mass of $\wsjc$ from $\wminkm$ is
% very curious. Since we know that $\wminkm$ is the unique Poincare invariant vacuum, does this mean that $\wsjc$ violates
% Poincare invariance?       

Next we compare $\wsjc $ in the  corner of the diamond with  $\wmirrm$
and $\wrindm$ for all pair of spacetime points in the left corner of the diamond for a range of masses. Instead of
plotting the actual functions, we consider the correlation plot as was done in \cite{Afshordi:2012ez}. 
To generate these plots we considered a small causal diamond in the corner of length $l=0.2$ which contained  118
elements. $\wmirrm$ and $\wrindm$ were calculated for each pair of elements and compared with $\wsjc$ (see Figs.~\ref{fig:mirrsj} and \ref{fig:rindsj}). %In \cite{Afshordi:2012ez} the IR cut-off $\cof$ was determined from Fig.~\ref{fig:sj-rind} for $m=0$ by setting the intercept  to zero.
We observe that for small masses there is much better
correlation between $\wsj$ and $\wmirrm$ as compared to $\wrind$ which is in
agreement with our analytic calculations. Figs.~\ref{fig:mirrsj} and \ref{fig:rindsj} show that while the correlation of $\wsjc$ and $\wmirrm$ remains largely unchanged with mass, that with $\wrindm$ increases with mass.  This can be traced to the  increased correlation between $\wrindm$ and $\wmirrm$ with mass as shown in Fig.~\ref{fig:rindmirr}.  This  in turn is related to the dominance of $\wminkm$  in the expressions for $\wrindm$ and $\wmirrm$ (Eqn.~\eqref{eq:rindm} and \eqref{eq:mirrorm})  for large mass, as shown in  Fig.~\ref{fig:mrm}. 

%but for large masses $\wrind$ and $\wmirr$ themselves are indistinguishable as shown in Figs.~\ref{fig:rindmirr} and \ref{fig:mrm}. Therefore one cannot establish the equivalence of $\wsj$ with one or the other.

\begin{figure} 
\centerline{\begin{tabular}{cc}
\includegraphics[height=4cm]{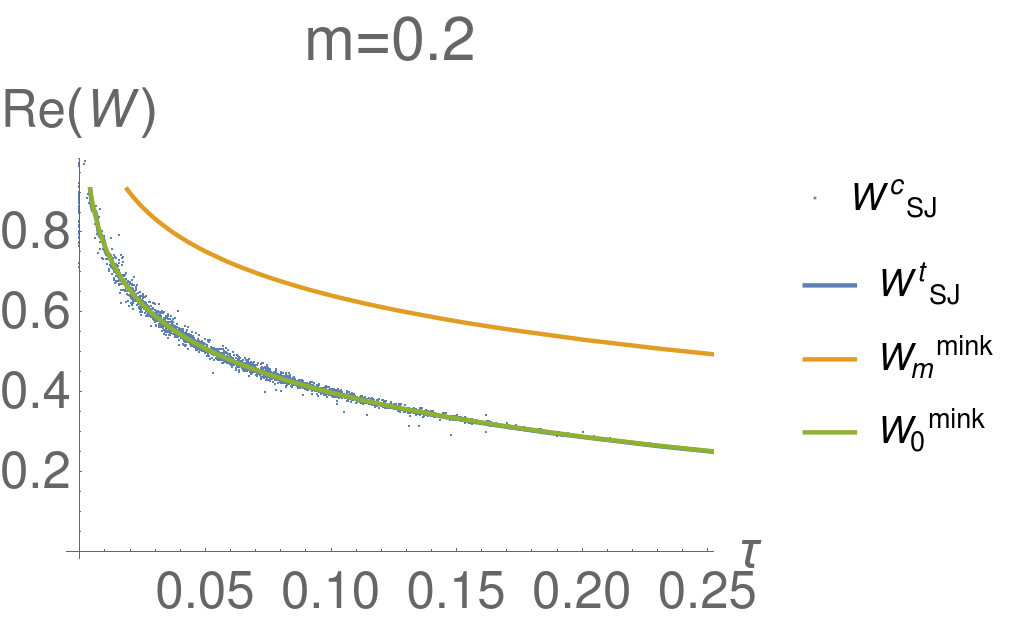}&
\includegraphics[height=4cm]{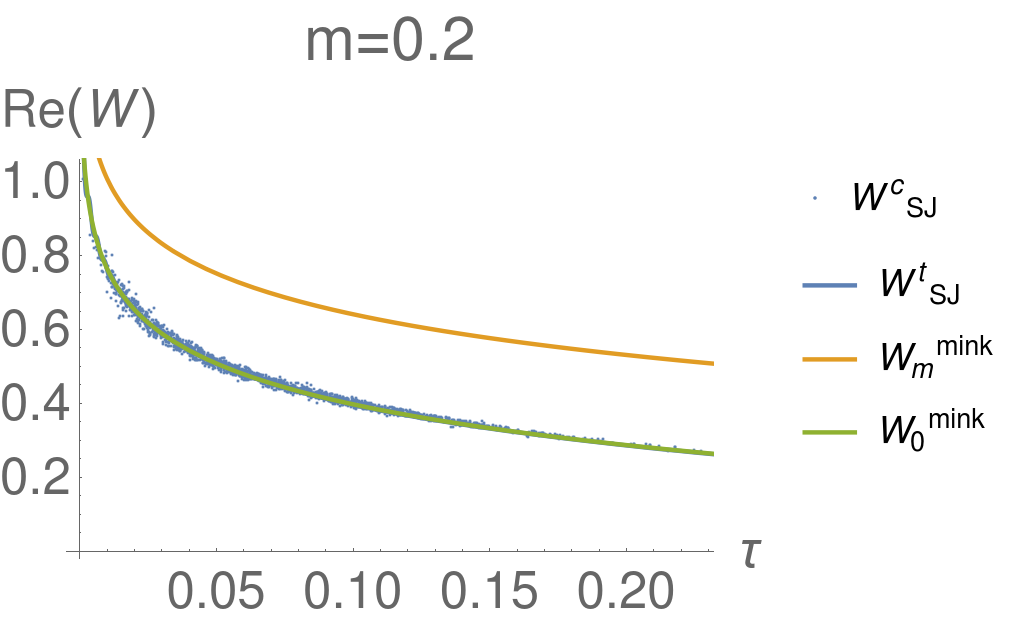}\\
\includegraphics[height=4cm]{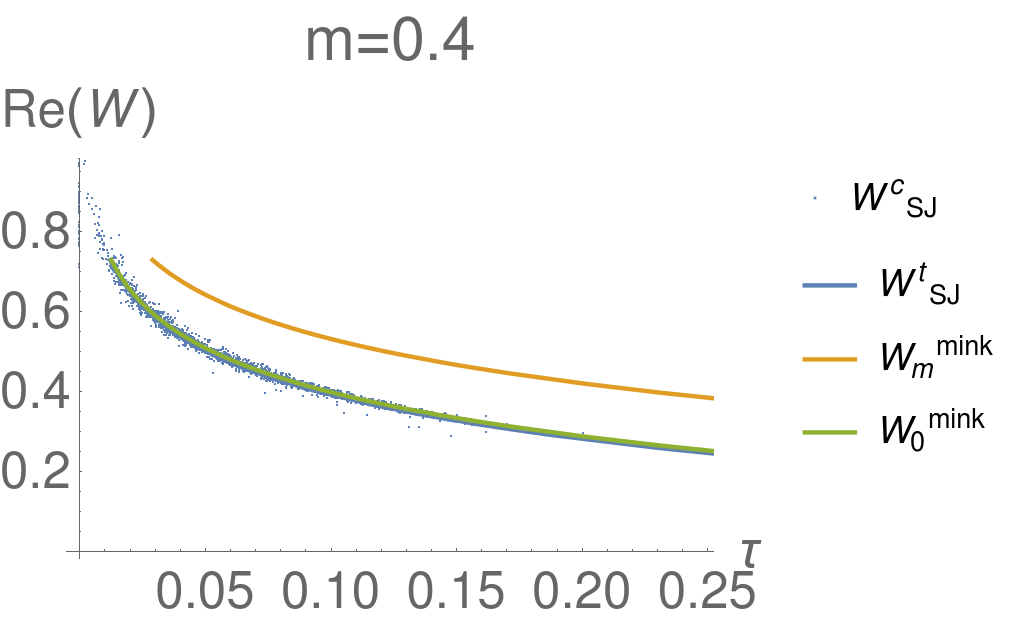}&
 \includegraphics[height=4cm]{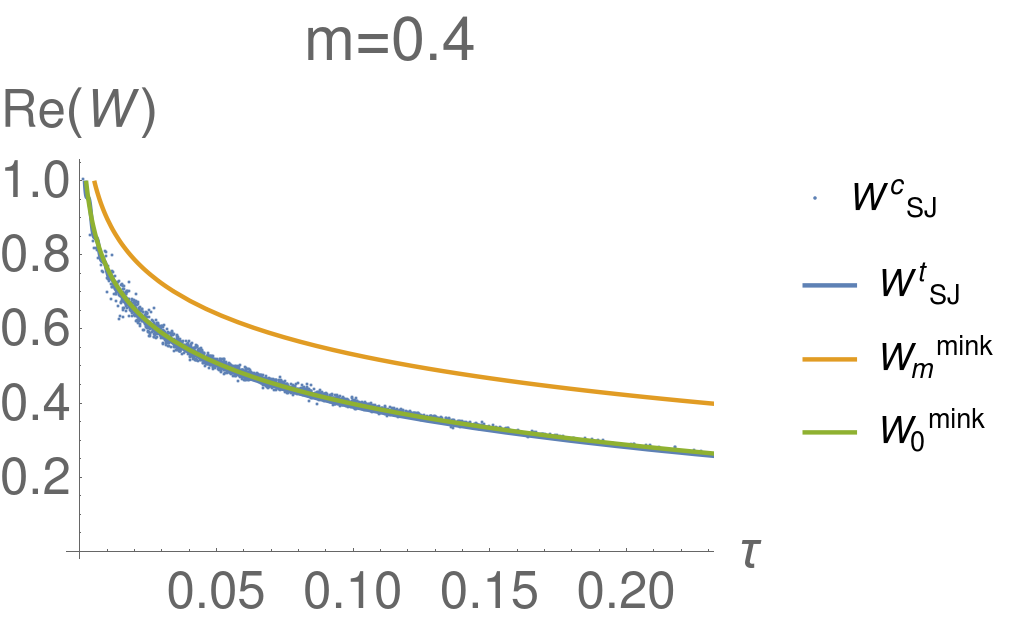}\\
            \end{tabular}}
\caption{$\wsjc$ (blue dots) vs proper time ($\tau$) in the center of the diamond. The plots on the left are for
  timelike separated points and those on the right are for spacelike separated points, for the small mass regime, $m=0.2$ and $ 0.4$. We show
  $\wmink$ (green), $\wminkm$ (orange) and our previous analytic calculation of $\wsj$ (blue
  line). The scatter plot clearly follows the massless green curve for these masses.}
        \label{fig:smallmass-sj-tau}\end{figure}
        \begin{figure} 
\centerline{\begin{tabular}{cc}
\includegraphics[height=4cm]{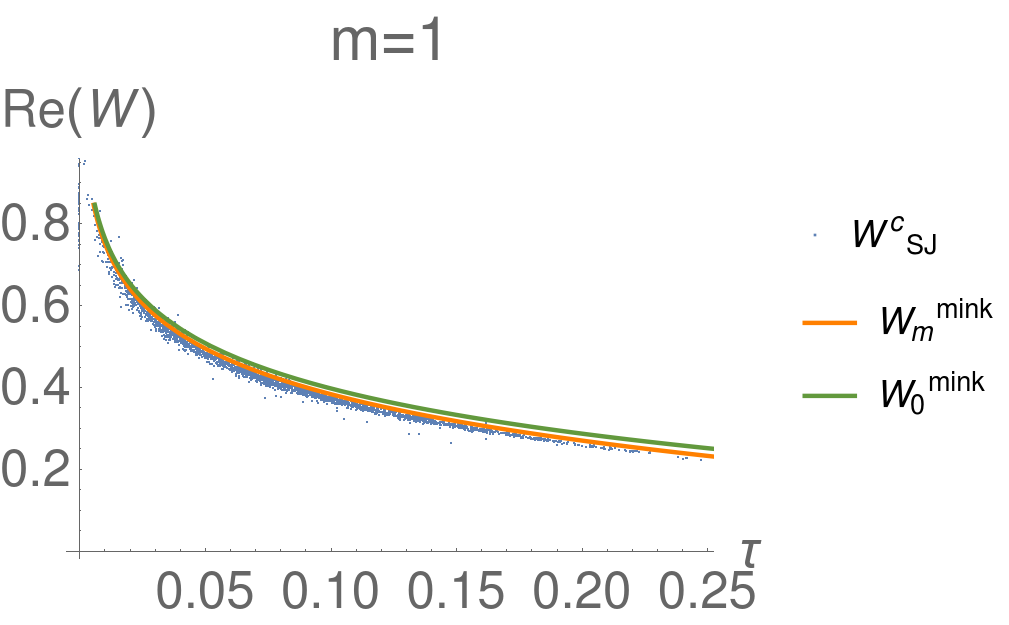}&
\includegraphics[height=4cm]{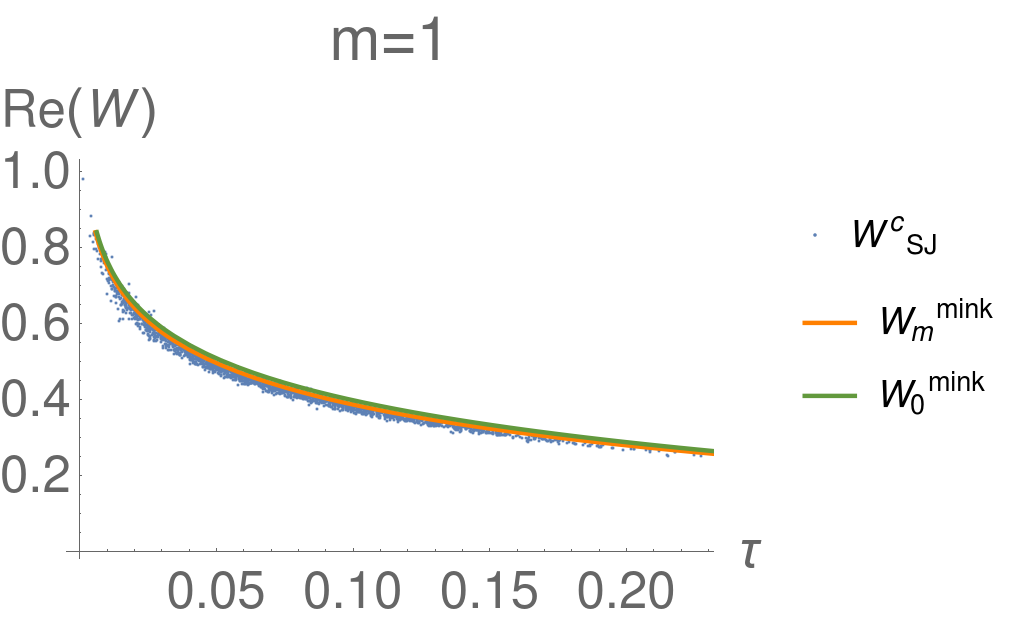}\\
\includegraphics[height=4cm]{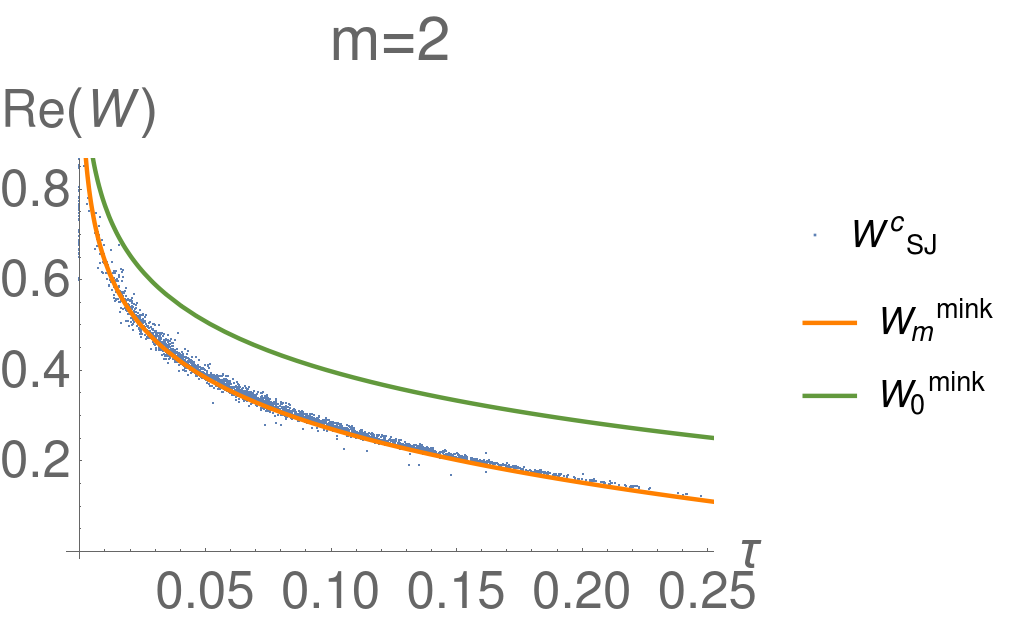}&
\includegraphics[height=4cm]{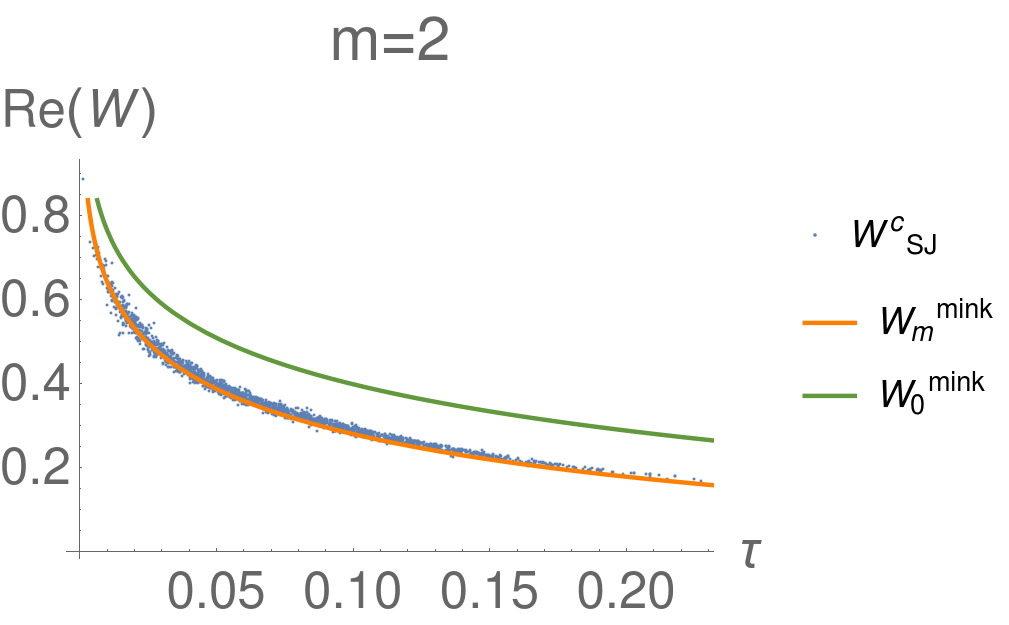}\\
\end{tabular}}
\caption{ The same plots as in Fig.~\ref{fig:smallmass-sj-tau} but for  $m=1$ and $m=2$. The scatter plot follows 
  the massive orange curve for $m\geq m_c$.}
\label{fig:largemass-sj-tau}
\end{figure}
%\vskip 1 cm 
%\begin{figure}
%\centering{\begin{tabular}{cccc}
%\includegraphics[height=4cm]{}& 
%\includegraphics[height=4cm]{} \\ 
%\includegraphics[height=4cm]{}& 
%\includegraphics[height=4cm]{}\\
%\includegraphics[height=4cm]{}& 
%\includegraphics[height=4cm]{}\\ 
%\includegraphics[height=4cm]{}& 
%\includegraphics[height=4cm]{}
%\end{tabular}
%\caption{Correlation plot of  $\wsjc$ vs $\wmirrm$ in the left corner of the diamond for a range of masses.}
%\label{fig:sj-mir}}
%\end{figure}
%\vskip 1 cm 
%\begin{figure} 
%\centerline{\begin{tabular}{cccc}
%\includegraphics[height=4cm]{}& 
%\includegraphics[height=4cm]{}\\
%\includegraphics[height=4cm]{}&
%\includegraphics[height=4cm]{}\\
%\includegraphics[height=4cm]{}&
%\includegraphics[height=4cm]{}\\
%\includegraphics[height=4cm]{}&
%\includegraphics[height=4cm]{}
%\end{tabular}}
%\caption{Correlation plot of  $\wsjc$ vs $\wrindm$ in the left corner of the diamond for a range of masses.}
%\label{fig:sj-rind}
%\end{figure}
\vskip 0.1in
\begin{figure}[h]
 \centering
   \subfloat[$m=0$]{\includegraphics[height=2.3cm]{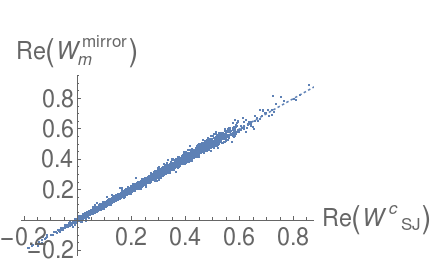} }\, 
  \subfloat[$m=0.1$]{\includegraphics[height=1.9cm]{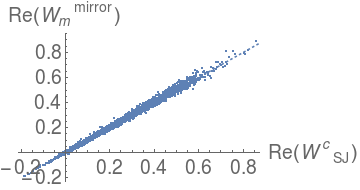} }\, 
  \subfloat[$m=0.2$]{\includegraphics[height=1.9cm]{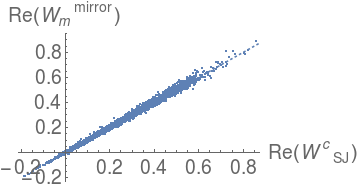}} \, 
  \subfloat[$m=0.3$]{\includegraphics[height=1.9cm]{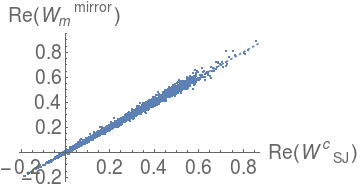}} \,
  \subfloat[$m=0.4$]{\includegraphics[height=1.9cm]{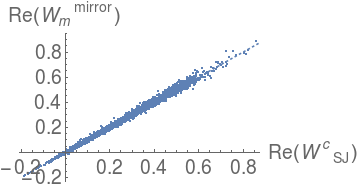} }\, 
  \subfloat[$m=1$]{\includegraphics[height=1.9cm]{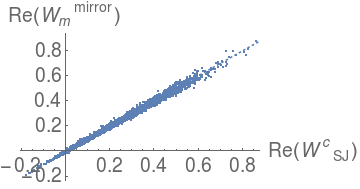}} \, 
  \subfloat[$m=2$]{\includegraphics[height=1.9cm]{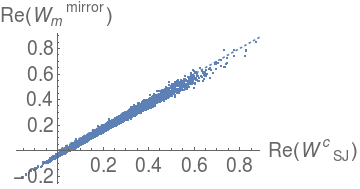}} \,
 \subfloat[$m=5$]{\includegraphics[height=1.9cm]{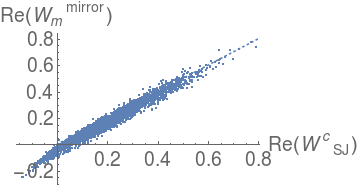}} \,
 \subfloat[$m=8$]{\includegraphics[height=1.9cm]{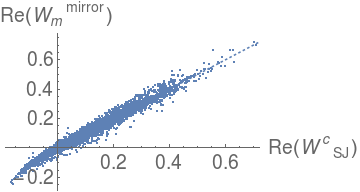}} \,
  \subfloat[$m=10$]{\includegraphics[height=1.9cm]{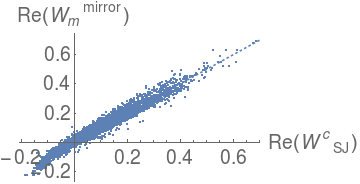}}\,
 \subfloat[$m=12$]{\includegraphics[height=1.9cm]{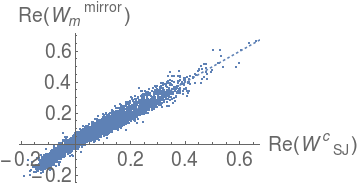}} \,
 \subfloat[$m=15$]{\includegraphics[height=1.9cm]{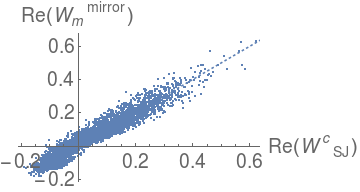}} 
  \caption{A correlation plot of the real parts of $\wsjc$ vs $\wmirrm$ in the left-hand corner of the 2d causal diamond for a range of masses. The
    diagonal is denoted by a dotted line. As is evident, the correlation remains largely unchanged with mass. The increase in scatter with mass is related to the fact that the density of sprinkling is left
 unchanged. }
   \label{fig:mirrsj} 
 \end{figure}
\begin{figure}[h]
 \centering
   \subfloat[$m=0$]{\includegraphics[height=2.3cm]{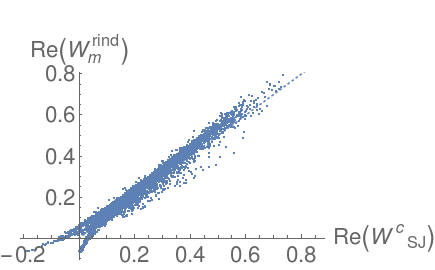} }\, 
  \subfloat[$m=0.1$]{\includegraphics[height=1.9cm]{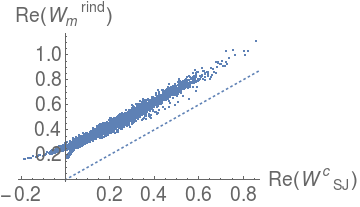} }\, 
  \subfloat[$m=0.2$]{\includegraphics[height=1.9cm]{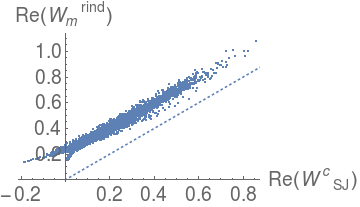}} \, 
  \subfloat[$m=0.3$]{\includegraphics[height=1.9cm]{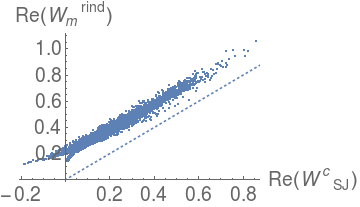}} \,
  \subfloat[$m=0.4$]{\includegraphics[height=1.9cm]{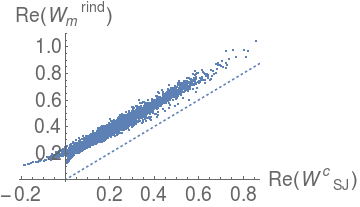} }\, 
  \subfloat[$m=1$]{\includegraphics[height=1.9cm]{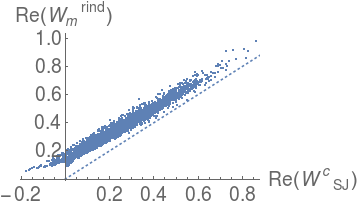}} \, 
  \subfloat[$m=2$]{\includegraphics[height=1.9cm]{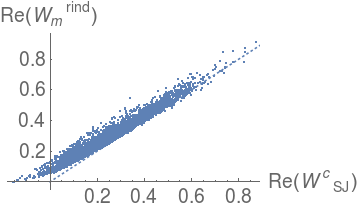}} \,
 \subfloat[$m=5$]{\includegraphics[height=1.9cm]{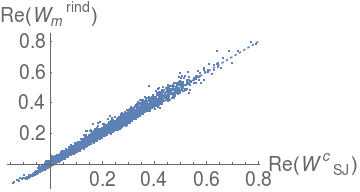}} \,
 \subfloat[$m=8$]{\includegraphics[height=1.9cm]{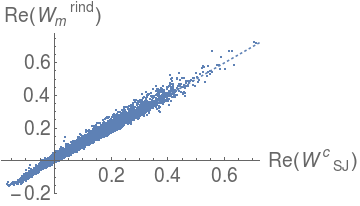}} \,
  \subfloat[$m=10$]{\includegraphics[height=1.9cm]{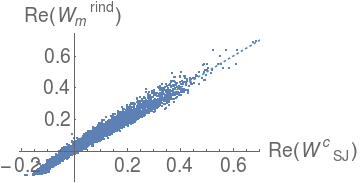}}\,
 \subfloat[$m=12$]{\includegraphics[height=1.9cm]{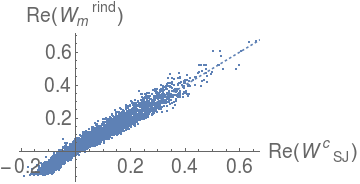}} \,
 \subfloat[$m=15$]{\includegraphics[height=1.9cm]{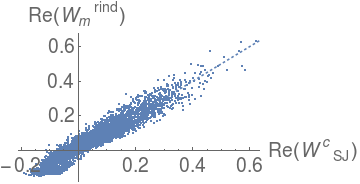}} 
  \caption{A correlation plot of the real parts of $\wsjc$ vs $\wrindm$ for the same
    range of masses. For small masses, the correlation is poor but improves
    with mass.}
   \label{fig:rindsj} 
 \end{figure}    
\begin{figure}[h]
    \centering
    \subfloat[$m=0.1$]{\includegraphics[height=1.9cm]{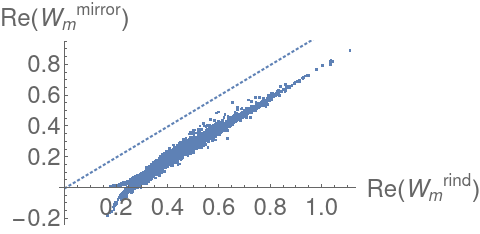}}\,
    \subfloat[$m=0.2$]{\includegraphics[height=1.9cm]{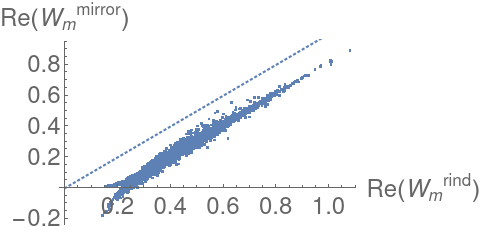}}\,
    \subfloat[$m=0.3$]{\includegraphics[height=1.9cm]{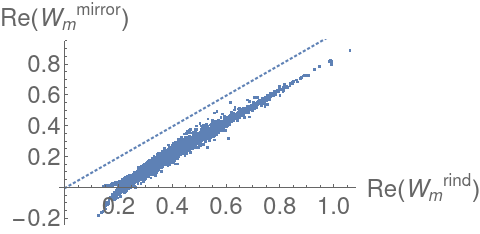}}\,    
    \subfloat[$m=0.4$]{\includegraphics[height=1.9cm]{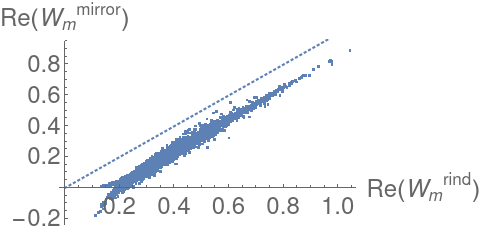}}\,
    \subfloat[$m=1$]{\includegraphics[height=1.9cm]{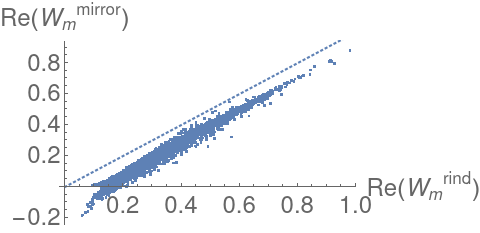}}\,
    \subfloat[$m=2$]{\includegraphics[height=1.9cm]{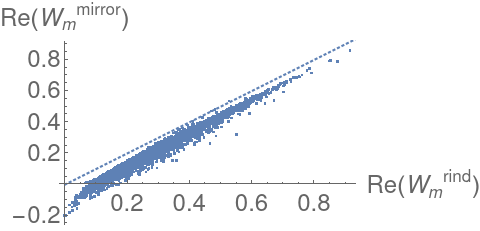}}\,
    \subfloat[$m=5$]{\includegraphics[height=1.9cm]{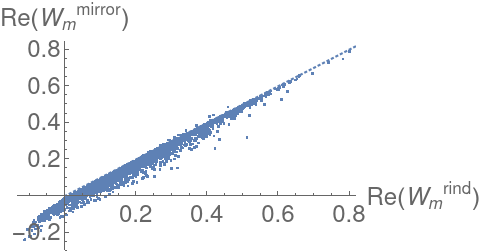}}\,
 \subfloat[$m=8$]{\includegraphics[height=1.9cm]{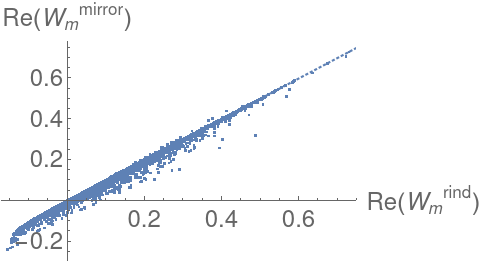}}\,
 \subfloat[$m=10$]{\includegraphics[height=1.9cm]{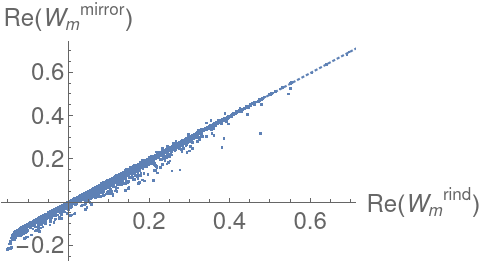}}\,
 \subfloat[$m=12$]{\includegraphics[height=1.9cm]{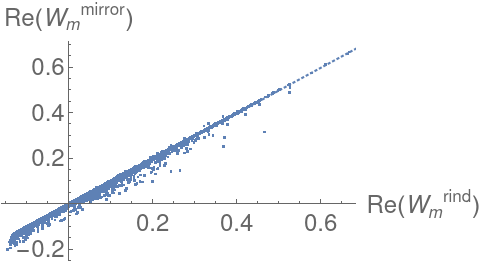}}\,
 \subfloat[$m=15$]{\includegraphics[height=1.9cm]{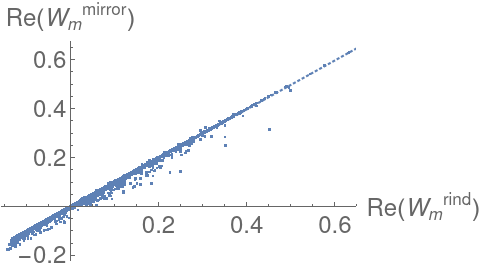}}\,
    \caption{A correlation plot of the real parts of $\wmirrm$ vs $\wrindm$ for the same
    range of masses. For small masses, the correlation is poor but improves
    with mass.}
    \label{fig:rindmirr}
  \end{figure}
 
  \begin{figure}[h]
    \centering
    \subfloat[$m=0.1$]{\includegraphics[height=2.4cm]{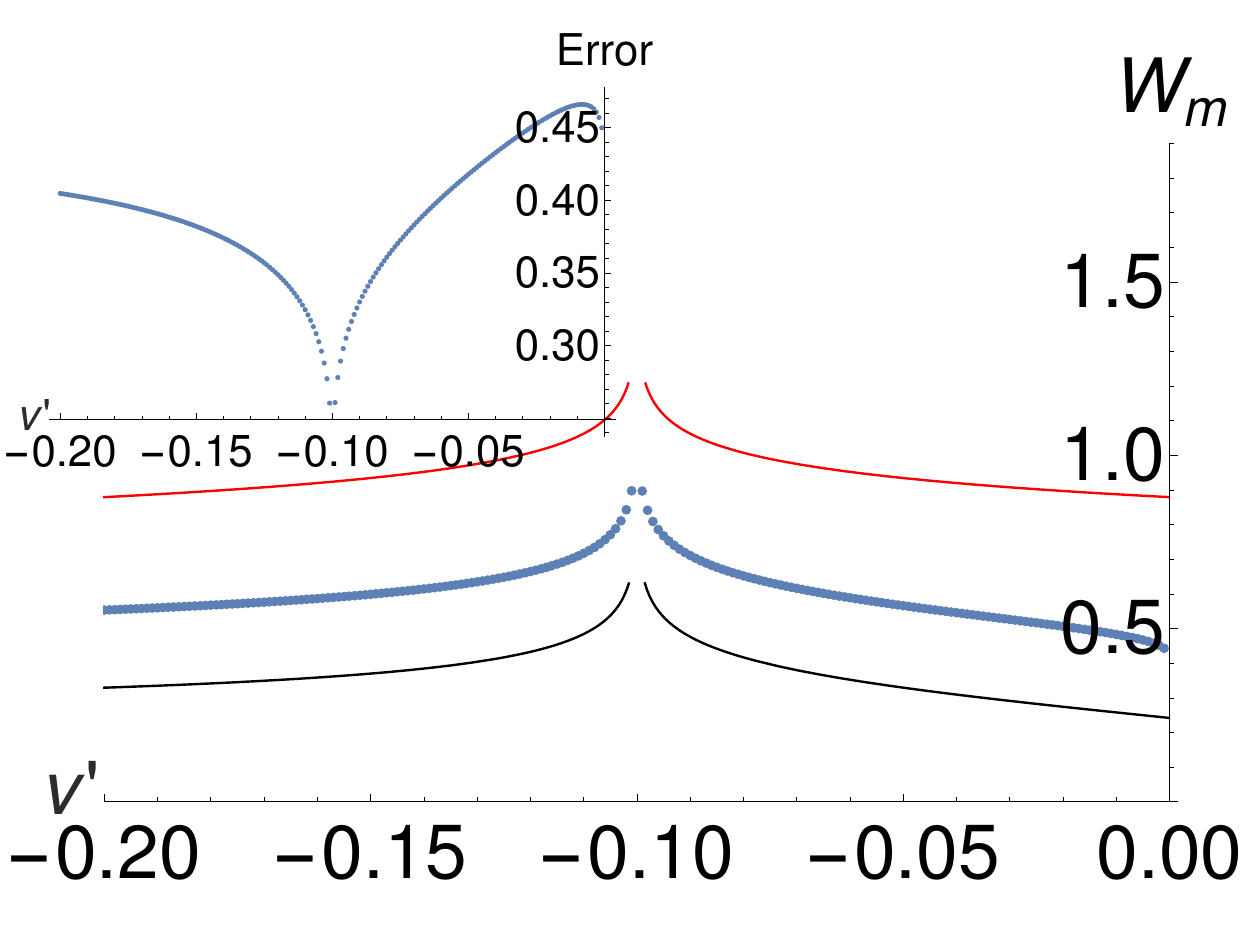}}\hskip 0.1in\,
    \subfloat[$m=0.2$]{\includegraphics[height=2.4cm]{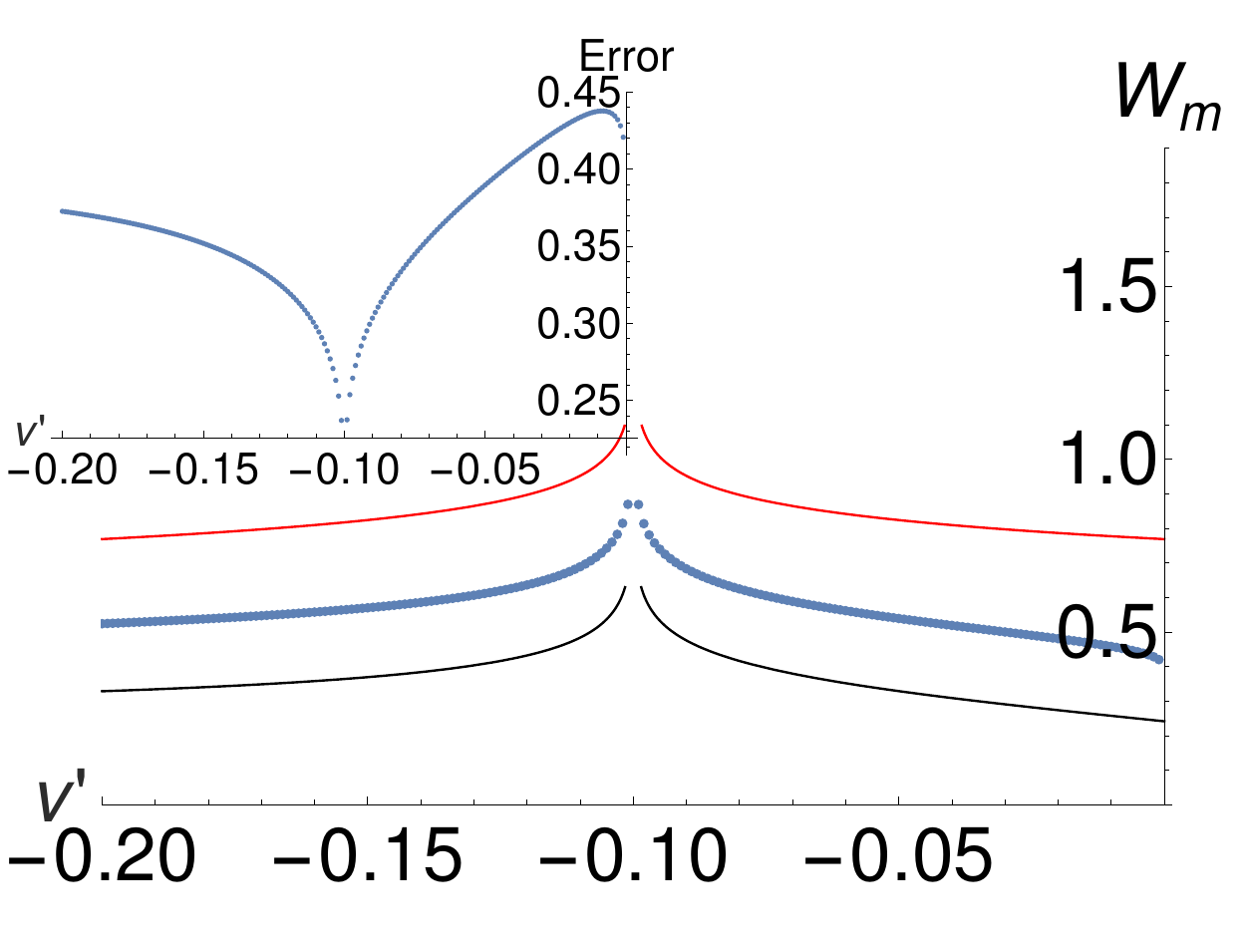}}\hskip 0.1in\,
    \subfloat[$m=0.3$]{\includegraphics[height=2.4cm]{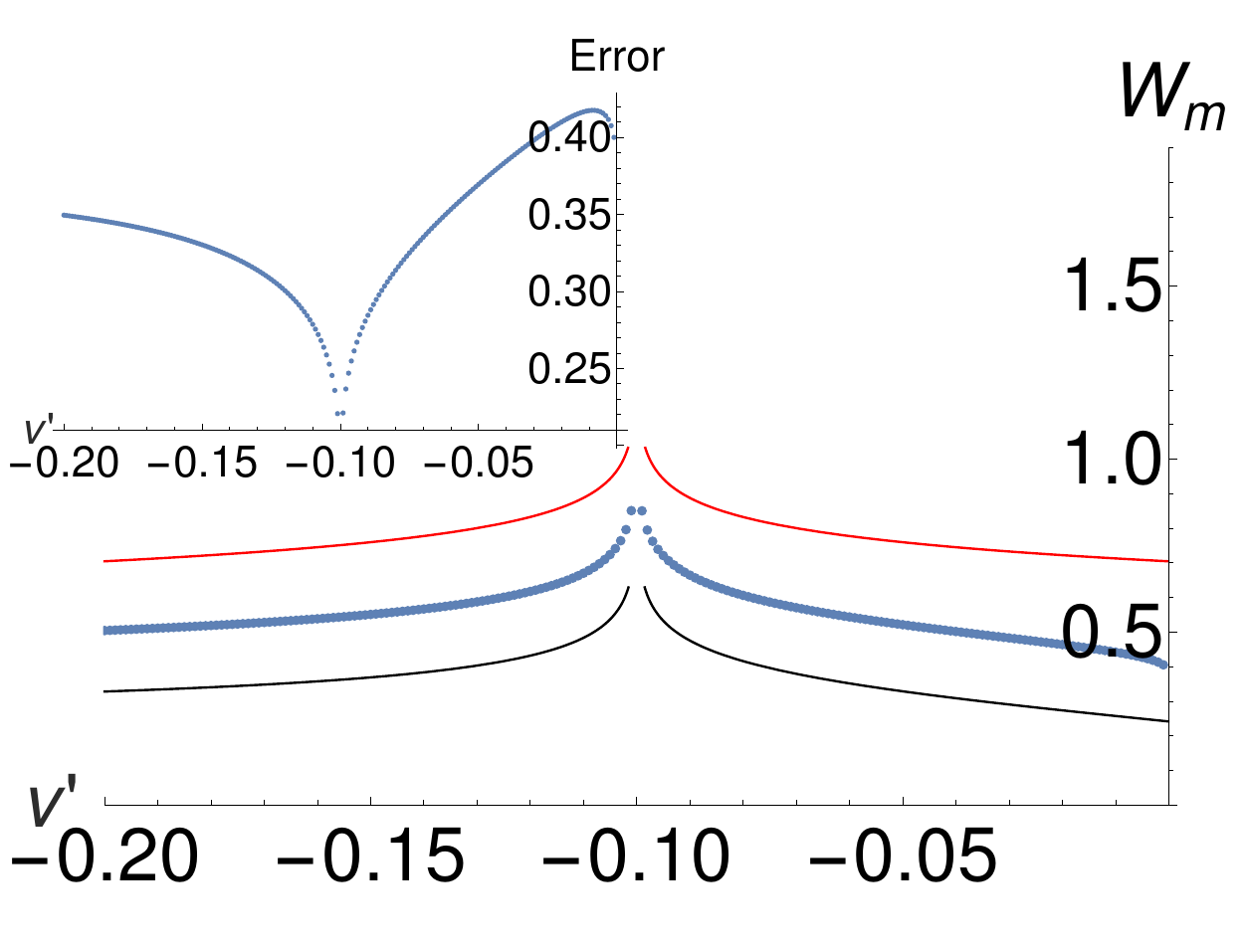}}\hskip 0.1in\,    
    \subfloat[$m=0.4$]{\includegraphics[height=2.4cm]{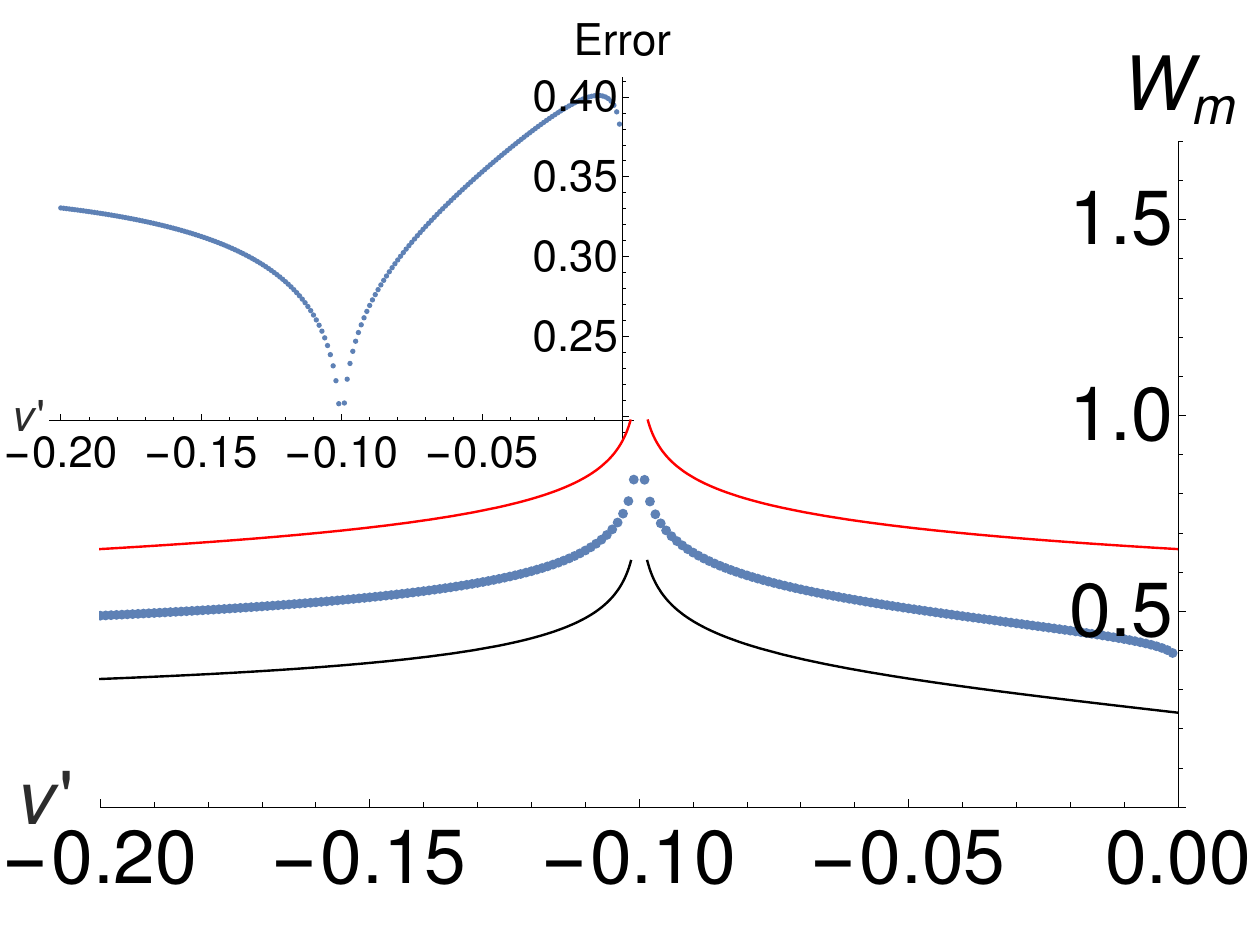}}\hskip 0.1in\,
 \subfloat[$m=1$]{\includegraphics[height=2.4cm]{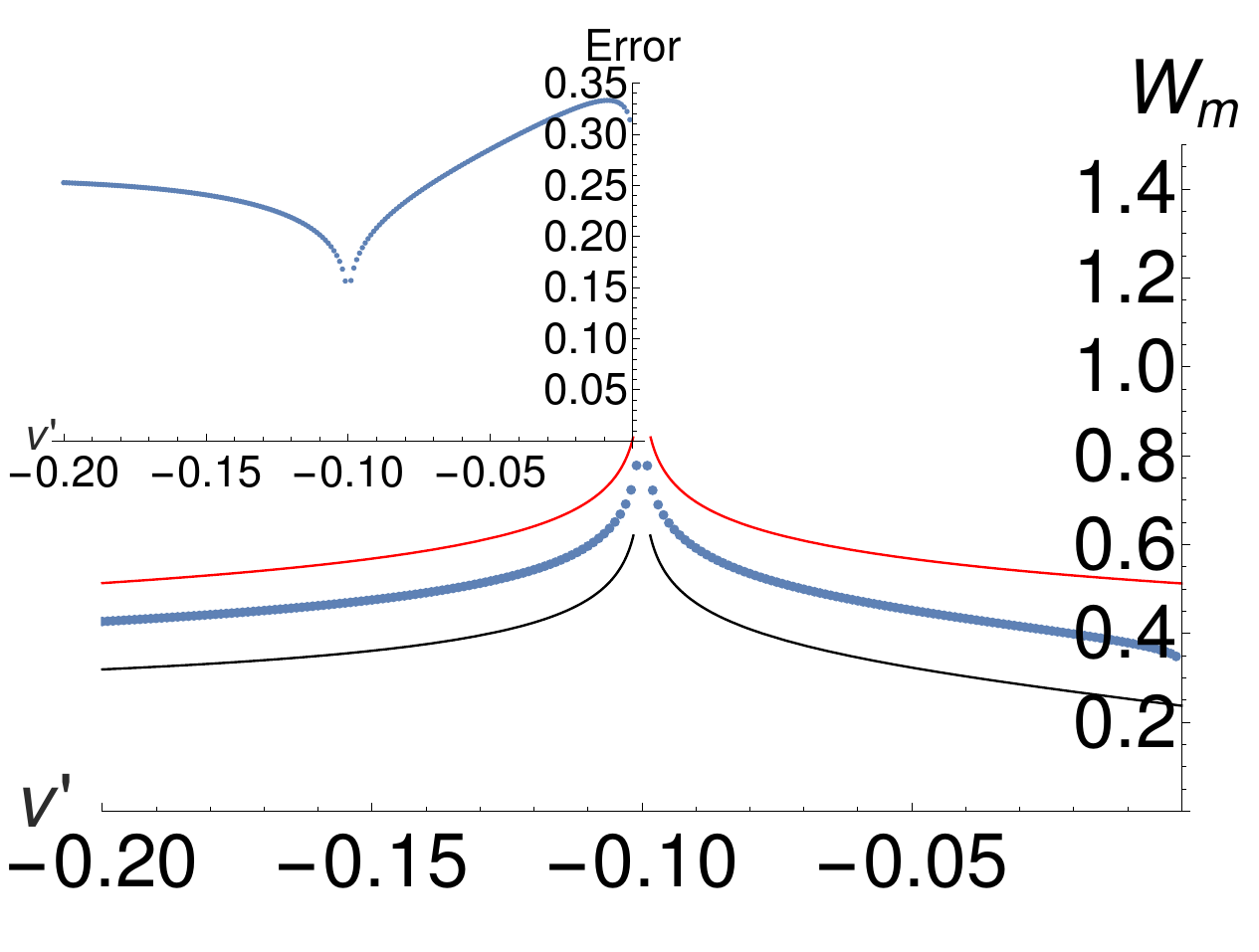}}\hskip 0.1in\,
    \subfloat[$m=2$]{\includegraphics[height=2.4cm]{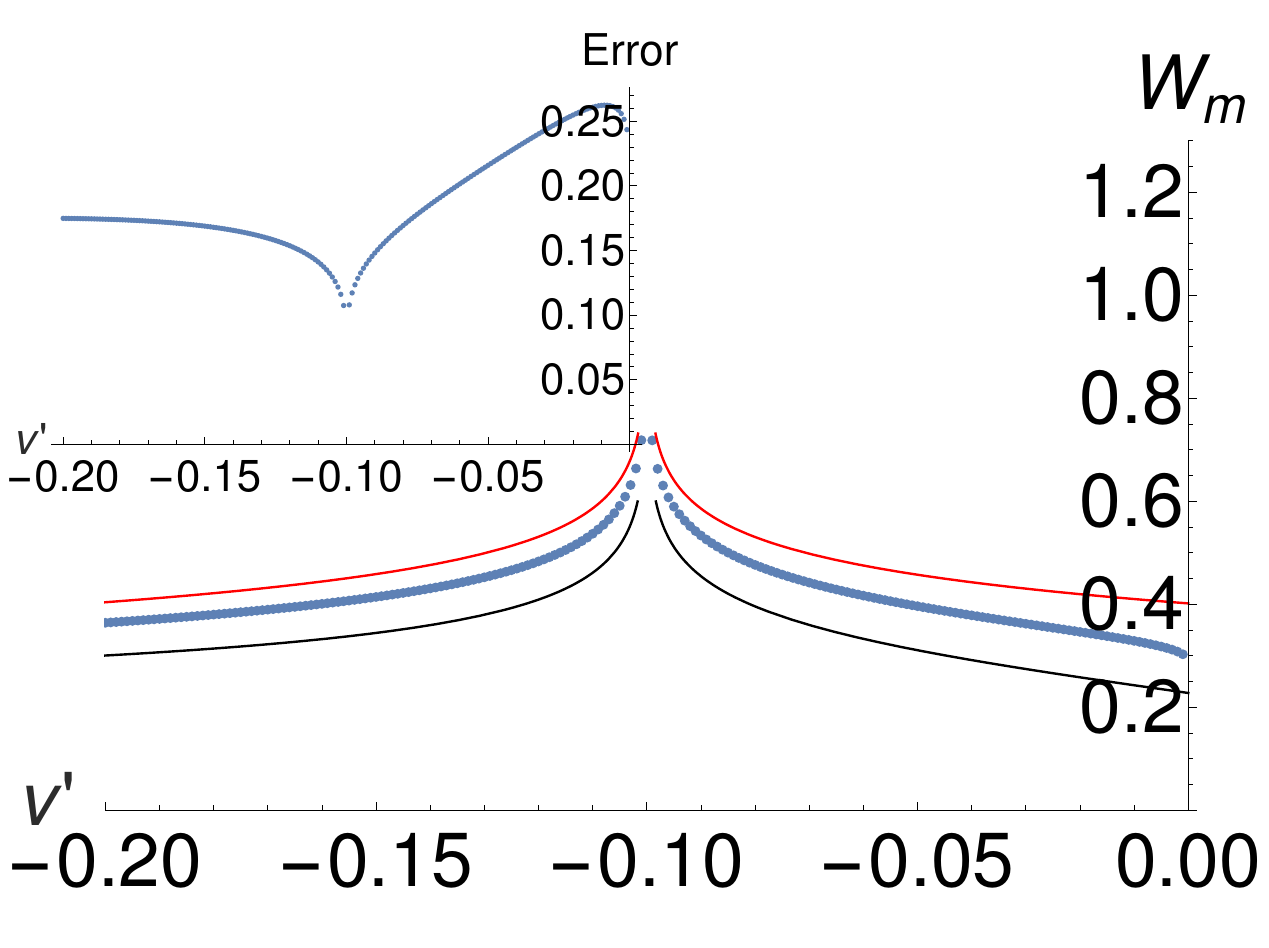}}\hskip 0.1in\,
    \subfloat[$m=5$]{\includegraphics[height=2.4cm]{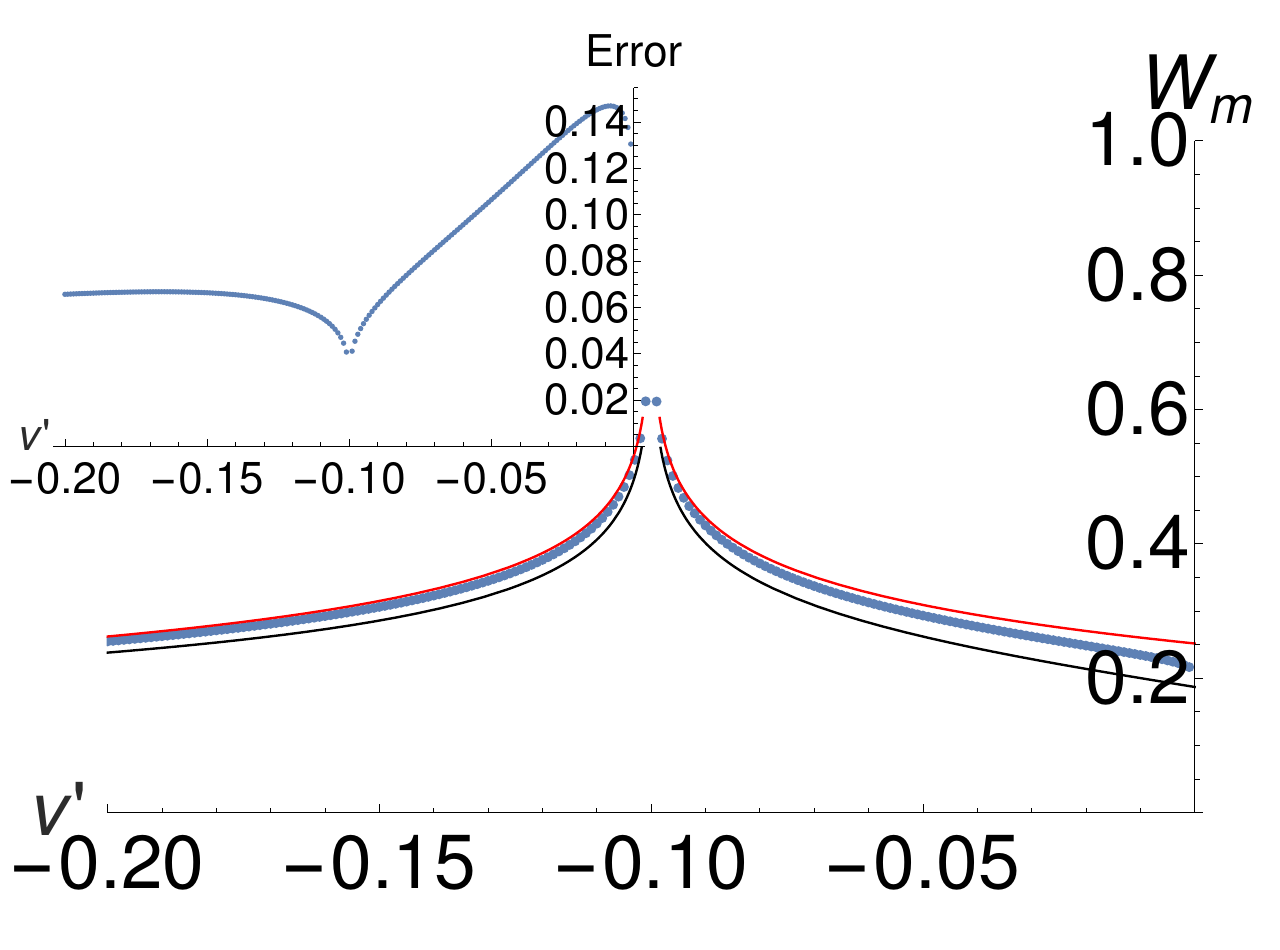}}\hskip 0.1in\,    
    \subfloat[$m=8$]{\includegraphics[height=2.4cm]{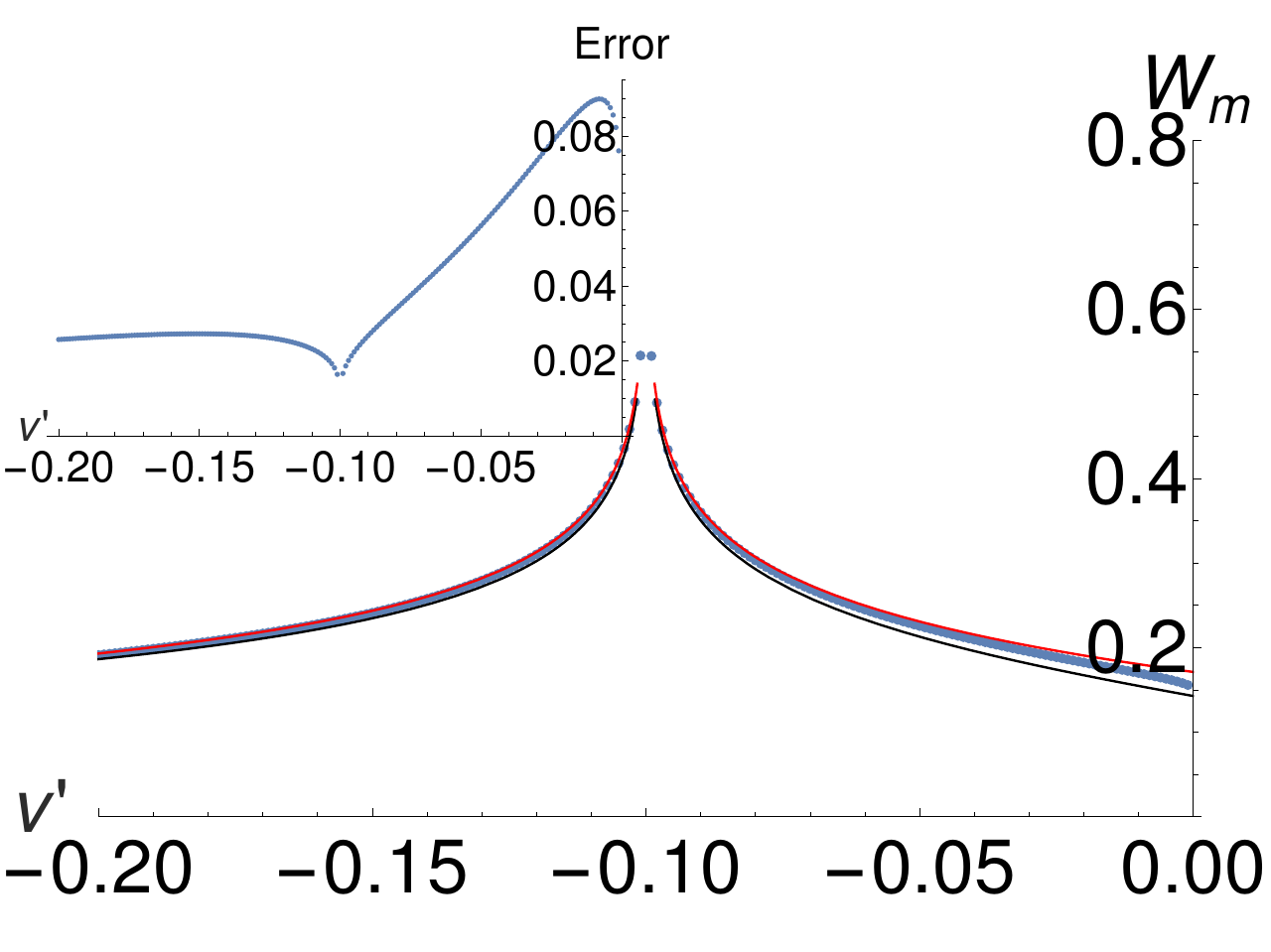}}\hskip 0.1in\,
 \subfloat[$m=10$]{\includegraphics[height=2.4cm]{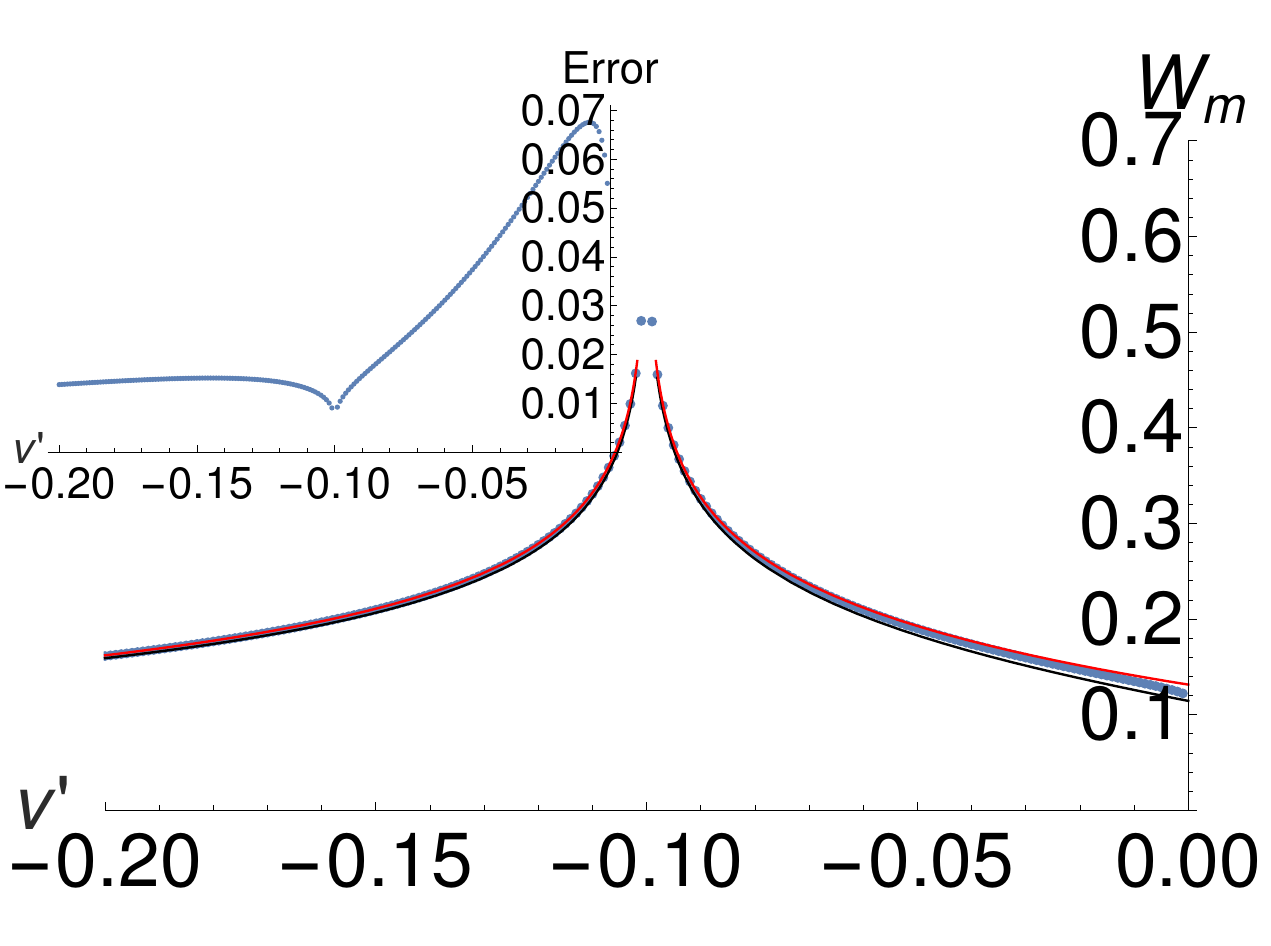}}\hskip 0.1in\,
    \subfloat[$m=12$]{\includegraphics[height=2.4cm]{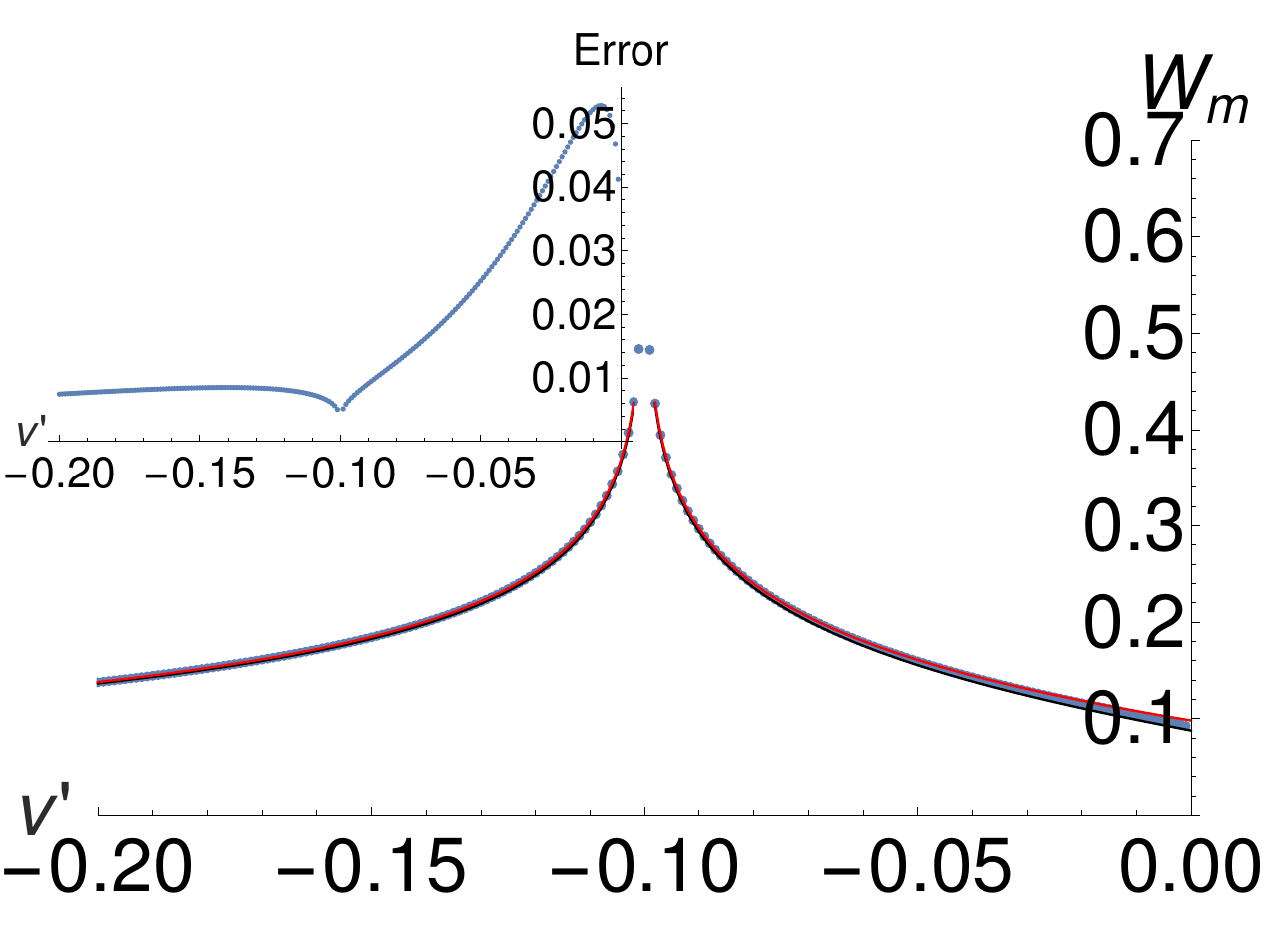}}\hskip 0.1in\,
    \subfloat[$m=15$]{\includegraphics[height=2.4cm]{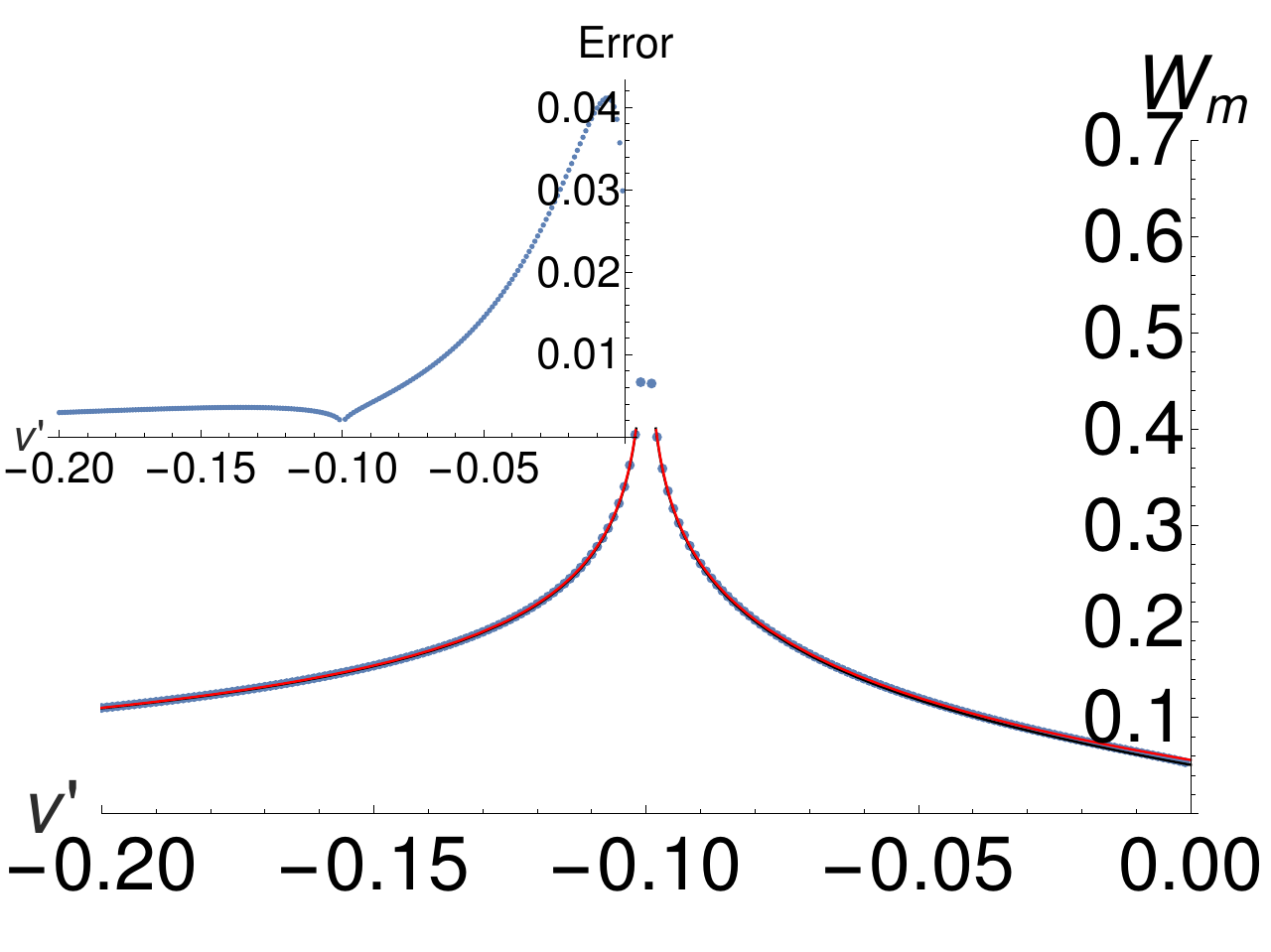}}
     \caption{Real parts of $\wminkm$ (red), $\wrindm$ (blue) and $\wmirrm$ (black) for a pair of points
      $(u=0.1,v=-0.1)$ and $(u'=0.11,v')$ with  varying $v'$. % We fix $(u,v)=(0.1,-0.1)$, $u'=0.11$ and vary $v'$.
      As the mass increases all three converge to a common value.  To make the comparison explicit, the inset figure shows the relative error  between
    the real parts of   $\wrindm$ and $\wmirrm$ as a function of $v'$. } 
\label{fig:mrm}
\end{figure}

\section{Modifying the inner product to get the 2d Rindler Vacuum}
\label{sec:rindler}
In this section we obtain the massless Rindler Wightman function in the right Rindler Wedge as a particular limit of the
massless SJ Wightman function in 2d causal diamond. We achieve this by deviating from the standard $\mL^2$ inner product
on the function space $\cF(\cM,g)$, by  introducing a suitable non-trivial weight function $w(\bx)$, 
\be
(f,g)_w=\int_M f^*(\bx)g(\bx) \,  w(\bx) dV_\bx \label{eq:wip}
\ee
where $dV_X$ is the spacetime volume element. $w(\bx)$ takes real, positive and finite value for all $\bx\in\cM$. The inner product
defined in Eqn.~(\ref{eq:wip}) is well defined in $(\cM,g)$ and  satisfies the defining properties of an  inner product: 
\begin{itemize}
\item $(f,g)_w$ is linear in $g$.
\item $(f,g)_w$ is anti-linear in $f$.
\item $(f,f)_w\geq 0$. Equality holds iff $f=0$.
\end{itemize}
Similarly,  we redefine the integral operator  $i\hD$  to make it consistent with this inner product 
\be
\left(i\hD\circ_w f\right)(\bx)=\int_M i\Delta(\bx,\bx')f(\bx') \,  w(\bx') dV_{\bx'}. \label{eq:ihdw}
\ee
It is straightforward to check that even with this modification, $i\hD$ is Hermitian: 
\be
\left(f,i\hD\circ_w g\right)_w = \left(i\hD\circ_w f,g\right)_w. 
\ee
Next, we see that: 
\begin{claim}
$\kker(\Bkg)=\im_w(i\hD)$ for $w(\bx)$ real, positive and finite valued  in  $\bx$.
\end{claim}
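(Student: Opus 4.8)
\emph{Proof plan.} The plan is to reduce this to the undeformed identity $\kker(\Bkg)=\im(i\hD)$, which is already available \cite{wald,sorkin17}, by noting that passing from $i\hD\circ$ to $i\hD\circ_w$ amounts to nothing more than a pointwise rescaling of the argument. Comparing the definition \eqref{eq:ihdw} with the undeformed integral operation \eqref{intop.eq}, one has the trivial identity
\be
\big(i\hD\circ_w f\big)(\bx)=\big(i\hD\circ(wf)\big)(\bx),\qquad (wf)(\bx)\equiv w(\bx)f(\bx),
\ee
so that $i\hD\circ_w(\cdot)$ is just the ordinary operation $i\hD\circ(\cdot)$ precomposed with the multiplication map $m_w:f\mapsto wf$.

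The first step is then to check that $m_w$ is a linear bijection of the function space $\cF(\cM,g)$ onto itself. Since $w$ is real, positive and finite (and smooth, say), multiplication by $w$ preserves smoothness and support, and its inverse is multiplication by $1/w$, which is again smooth (as $w$ is nowhere zero) and hence also maps $\cF(\cM,g)$ into itself; on the bounded region relevant to the SJ construction, $w$ and $1/w$ are moreover bounded on each compact support, so this works equally at the $\cL^2$ level. Consequently $m_w\big(\cF(\cM,g)\big)=\cF(\cM,g)$, and therefore
\be
\im_w(i\hD)=\{\,i\hD\circ(wf):f\in\cF(\cM,g)\,\}=\{\,i\hD\circ g:g\in\cF(\cM,g)\,\}=\im(i\hD)=\kker(\Bkg).
\ee

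If one prefers to see the two inclusions separately: for $\im_w(i\hD)\subseteq\kker(\Bkg)$, note that $wf$ is again a test function and $\Bkg\big(i\hD\circ(wf)\big)=0$ because $\Bkg(G_R\circ g)=\Bkg(G_A\circ g)=g$; for $\kker(\Bkg)\subseteq\im_w(i\hD)$, given $\phi\in\kker(\Bkg)$ use $\im(i\hD)=\kker(\Bkg)$ to write $\phi=i\hD\circ g$ with $g\in\cF(\cM,g)$, and then $f:=g/w\in\cF(\cM,g)$ satisfies $i\hD\circ_w f=i\hD\circ(w\cdot g/w)=i\hD\circ g=\phi$.

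The only genuinely delicate point I anticipate is the function-space statement, namely that $m_w$ and $m_{1/w}$ preserve the admissible test-function (or $\cL^2$) space; this is exactly where the hypotheses on $w$ enter (real, positive, finite, and regular enough --- e.g.\ smooth --- so that $1/w$ is of the same class). Once that is in hand, the rest is purely the identity $i\hD\circ_w=i\hD\circ m_w$ together with the already-established fact $\kker(\Bkg)=\im(i\hD)$, so no new analytic input is required beyond what went into the undeformed case --- just as in the preceding claim on the $w$-Hermiticity of $i\hD$.
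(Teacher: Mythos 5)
Your proposal is correct and follows essentially the same route as the paper's own proof: the central identity $i\hD\circ_w f=i\hD\circ(wf)$ together with the observation that multiplication by $w$ and by $1/w$ maps the test-function space to itself, followed by an appeal to the known fact $\im(i\hD)=\kker(\Bkg)$. Your packaging of this as "$m_w$ is a bijection of $\cF(\cM,g)$" is just a tidier statement of the paper's two-inclusion argument, and your caveat about the regularity of $w$ (so that $1/w$ stays in the admissible class) is the same implicit assumption the paper makes when it writes $\psi/w\in\cF(\cM,g)$.
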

\begin{proof}
For any  $\chi\in\im_w(i\hD)$,  there exists a $\psi\in\cF(\cM,g)$ such that $\chi=i\hD\circ_w\psi$. Since 
\be
i\hD\circ_w(\psi) = i\hD\circ (w\psi)
\ee
this implies that $\chi=i\hD\circ (w\psi) \in\im (i\hD)$, since  $w\psi\in\cF(\cM,g)$. Thus $\im_w (i\hD)\subseteq  \im (i\hD)$. Conversely, for
any $\chi'\in\im (i\hD)$, there exists a $\psi'\in\cF(\cM,g)$ such that $\chi'=i\hD\circ\psi'$. Since  $w$ is  real,
positive and finite valued  in  $\bx$,  $\psi/w \in \cF(\cM,g)$ and hence  $\chi'=i\hD\circ_w (\psi/w) \in \im_w
(i\hD)$. Hence $\im_w (i\hD) = \im (i\hD) = \kker(\Bkg)$.
\end{proof}

The 2d Minkowski metric in Rindler coordinates is 
\be
ds^2=e^{2a\xi}\left(-d\eta^2+d\xi^2\right)
\ee
where  
\be
t=a^{-1}e^{a\xi}\sinh(a\eta)\;,\quad x=a^{-1}e^{a\xi}\cosh(a\eta)
\ee
and $a>0$ is the acceleration parameter. Consider a causal diamond of length $2l$ centered at $(0,0)$ in 
$(\eta,\xi)$ coordinates. The center of the diamond  $(u,v)=(0,0)$ in the $u-v$
plane is at $(t,x)=(0,a^{-1})$,  and thus to the corner of the diamond in the $t-x$ plane as shown in Fig.~\ref{fig:rind}  
\begin{figure}[h]
\centering{\begin{tabular}{cc}
\includegraphics[height=4.5cm]{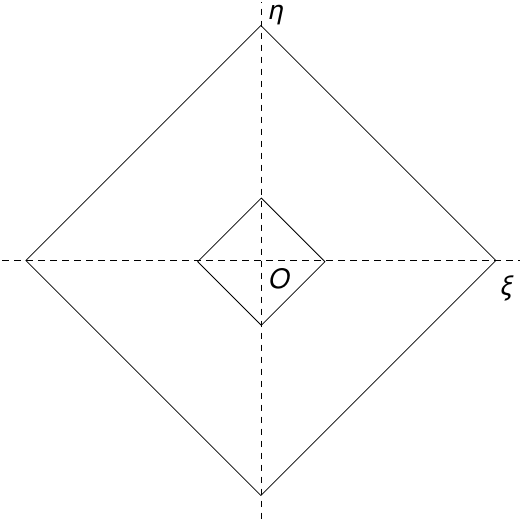}& \hskip 1.5cm
\includegraphics[height=4.5cm]{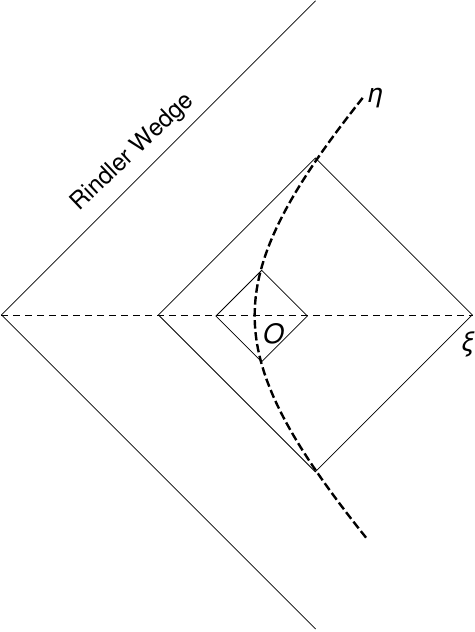}\\
(a)&\hskip 1.5cm (b)
\end{tabular}
\caption{A small causal diamond centered in a causal diamond $\diam$ in the $\eta-\xi$ plane is shifted to the corner of
  $\diam$ in the $t-x$ plane.}
\label{fig:rind}}
\end{figure}
The Pauli Jordan function is then similar to that in Minkowski coordinates
\be
i\Delta(u,v;u',v')=-\frac{i}{2}\left(\theta(u-u')+\theta(v-v')-1\right), 
\ee
where we have used the new light cone coordinates  $u=\frac{1}{\sqrt{2}}(\eta+\xi)$ and $ v=\frac{1}{\sqrt{2}}(\eta-\xi)$. 
The ``w-SJ''  modes  $u_k^w$ are then given by 
\bea
\int_{-L}^Li\Delta(u,v;u',v')u^w_k(u',v') w(u',v') e^{2a\xi'}du'dv'=\lambda_ku^w_k(u,v) \label{eq:rindev}
\eea
If  we now choose $w(u,v)=e^{-2a\xi}$, Eqn.~(\ref{eq:rindev}) is exactly the same as the eigenfunction equation for the
massless SJ modes in $\diam$ and hence  $\wsj$  is the same as the massless SJ function of \cite{Afshordi:2012ez}. Thus,
at the center of this diamond $\wsj$ takes the same form as Eqn.~(\ref{eq:minkm0}). The critical difference is that in
this case the $u$ and $v$ are lightcone coordinates for a  Rindler observer instead of an  inertial observer. Thus, in
$(t,x)$ coordinates, $\wsj$ {\it is} the Rindler vacuum (see Eqn.~(\ref{eq:rindm0})). The small diamond at the center
of $\diam$  the $\eta-\xi$ plane is a small diamond near (but not at)  the corner of  $\diam$ in the $t-x$
plane. Here, $\wsj$ then resembles the Rindler vacuum.   

Of course, the question is whether $\wsj$ will also look like $\wmink$ near the center of the diamond in the $t-x$ plane,
i.e. at $(t,x)=(0,a^{-1}\cosh(\sqrt{2}L a))$, which is $(0,a^{-1}\ln(\cosh(\sqrt{2}La)))$ in the $\eta-\xi$ plane. This is the
mirror vacuum, $\wmirr$ which rather than corresponding to  $\wmink$ is a  ``Rindler-mirror'' vacuum. This is clearly
not desirable. 

What we have presented here is a ``trick'' for achieving a desired form for the vacuum in the corner. However, this
messes up the expected form at the center. The question is
whether a smooth modification of $w$ from $1$ in the center of the $t-x$ plane diamond to $\exp(-a \xi)$ at the corners
could lead to  the desired form. However, modifications of the inner product mean that the SJ vacuum is no longer
unique.

\section{Discussion}
\label{sec:conclusions}

In this chapter, we calculated the massive scalar field SJ  modes up to  fourth order of mass. The
procedure we have developed for solving the central  eigenvalue problem can be used in principle  to find the SJ modes for higher
order mass corrections.  
 
Our work shows that $\wsjc$ in the causal set  is compatible with our analytic results in the small
mass regime. The curious behavior of  $\wsjc$ with mass in the center of the diamond suggests a hidden
subtlety in the finite region, ab-initio construction, that has hitherto been missed. In particular, it shows that the massive $\wsj$ in 2d has a well defined massless limit, unlike $\wminkm$. Such a continuous behavior with mass was also seen in the calculation of $\wsjc$ in de Sitter spacetime \cite{Surya:2018byh}. A possible source for this
behavior is that $\wsj$ is built from the advanced/retarded Green's functions, which  themselves have a well defined 
 massless limit. It is surprising however  that $\wsj$ for small mass lies in the massless representation of the Poincare
 algebra rather than the expected massive representation. What this means for the uniqueness of the SJ vacuum is unclear
 and we hope to explore this in future work. 
 
In the corner of the diamond, we see that as in the massless case,  $\wsj$ resembles the  massive mirror vacuum for small masses whereas for large masses the mirror and the Rindler vacuum themselves are indistinguishable.

We end with a broad comment on the SJ formalism. It is possible to construct a $\wsj$ using a different  inner product on
$\cF(\cM,g)$, instead of the $\mL^2$ inner product adopted here. One way of doing this is to introduce a non
trivial weight function in the integral. Thus, different choices of inner product give different SJ Wightman functions
even with the same defining conditions (see Sec.~\ref{sjint.sec}).  As an almost trivial example, in Sec.~\ref{sec:rindler}  we show that the choice of inner product can yield the Rindler vacuum for massless scalar field in the corner.  In future work we hope to explore  this possibility in more detail. In the next chapter we study the SJ vacuum and its properties in de Sitter and flat FLRW spacetime which are of cosmological interest.
\vskip 0.5in
\leftline{\bf\Large{Appendices}}
\begin{subappendices}

\section{Some expressions and derivation of results used in Sec.~\ref{sec:soln}}\label{expressions}

In this appendix we show some of the details of the calculations of Sec.~\ref{sec:soln}. These details include the simplified expression of $F_{ik,n}(u,v)$ and $G_{ik,n}(u,v)$ for $n=0,1,2$ , $Z^{A/S}_l(u,v)$ and $i\hD\circ Z^{A/S}_l(u,v)$, for $l=0,1,2$ and $\PAS_n(u,v)$ for $n=0,1,2$ up to the order in $m^2$, which is required in the calculation of SJ modes up to $\cO(m^4)$. Some details of the calculations of $u^A_k(u,v)$ and $u^S_k(u,v)$ can be found in Appendix \ref{sec:dasym} and \ref{sec:dsym} respectively.

Evaluating $F_{ik,n}(u,v)$ and $G_{ik,n}(u,v)$ defined in Eqn.~(\ref{f and g}) for $n=0,1,2$, we get
\bea
F_{ik,0}(u,v)&=&v, \nonumber\\
F_{ik,1}(u,v)&=&\frac{iv^2}{2k}-\frac{1}{4}(v^2u+2v+u), \nonumber\\
F_{ik,2}(u,v)&=&-\frac{v^3}{8k^2}-\frac{i}{24k}(2v^3u+3v^2-1)+\frac{1}{48}(v^3u^2+v^3+6v^2u+3vu^2+3v+2u). \nonumber
\eea
\bea
G_{ik,0}(u,v)&=&-1, \nonumber\\
G_{ik,1}(u,v)&=&-\frac{iv}{2k}+\frac{1}{4}(v^2+2uv+1), \nonumber\\
G_{ik,2}(u,v)&=&\frac{v^2}{8k^2}+\frac{i}{24k}(2v^3+3uv^2-u)-\frac{1}{48}(2v^3u+3v^2u^2+3v^2+6uv+u^2+1). \nonumber
\eea

Next, we list $Z^A_l(u,v)$ and $Z^S_l(u,v)$ defined in Eqn.~(\ref{eq:serieszl}) and Eqn.~(\ref{eq:symasym}) for $l=0,1,2$ up to the required order of $m^2$.
 \begin{eqnarray} 
Z^A_0(u,v) =0, & \quad & Z^S_0(u,v) \approx 2-m^2uv+\frac{m^4}{8}u^2v^2, \nonumber \\
Z^A_1(u,v) \approx (u-v)-\frac{m^2}{4}uv(u-v), & \quad & Z_1^S(u,v)\approx (u+v)-\frac{m^2}{4}uv(u+v),  \nonumber \\
Z^A_2(u,v)\approx u^2-v^2,& \quad & Z^S_2(u,v) \approx u^2+v^2.
 \end{eqnarray}

Next, we list $i\hD\circ Z^A_l(u,v)$ and $i\hD\circ Z^S_l(u,v)$ for $l=0,1,2$ up to the required order of $m^2$, where $i\hD\circ Z_l(u,v)$ is described in Eqn.~(\ref{eq:ideltaz}) 
\begin{eqnarray} 
  i\hD \circ Z^A_0(u,v)&=&0,   \nonumber \\
  i\hD \circ Z^S_0(u,v) &\approx&  -i\frac{L^2}{24}(u+v) (48-12 m^2(1+uv)+m^4(3+3uv+u^2v^2))),
                                                               \nonumber \\
  i\hD \circ Z^A_1(u,v)&\approx& iL^2\left(-\frac{1}{2}(u^2-v^2)+\frac{m^2}{24}(2uv+1)(u^2-v^2)\right),  \nonumber \\
  i\hD \circ Z^S_1(u,v) &\approx&  iL^2\left(\frac{1}{2}(2-u^2-v^2)-\frac{m^2}{24}\left(6(1+2uv)+(u^2+v^2)(1-2uv)\right)\right),\nonumber \\
  i\hD \circ Z^A_2(u,v) &\approx& \frac{iL^2}{3}\left((u-v)-(u^3-v^3)\right),   \nonumber \\
i\hD \circ Z^S_2(u,v) &\approx& \frac{iL^2}{3}\left((u+v)-(u^3+v^3)\right).
  \end{eqnarray} 

$\PAS_n(u,v)$ defined in Eqn.~(\ref{eq:pas}) for $n=0,1,2$.
\bea
P_0^A(u,v) &=&0, \nonumber\\
P_1^A(u,v) &=& \left(i\left(\frac{1}{2k}-Q^A_1(k)\right)(u-v)-\frac{1}{4}(u^2-v^2)\right),\nonumber\\
P_2^A(u,v) &= & -\frac{u^2-v^2}{8k^2}-\frac{i}{24k}(u-v)(2u^2+2v^2+5uv+1)+\frac{1}{24}(1+uv)(u^2-v^2),
  \nonumber \\ &&+ Q^A_1(k)\left(\frac{u^2-v^2}{2k}+\frac{i}{4}(u-v)(uv+1)\right)-iQ^A_2(k)(u-v) ,
\label{eq:pai}
\eea
\bea
P_0^S(u,v) &=& \frac{}{}-2+2ik(u+v), \nonumber\\
P_1^S(u,v)&=& -\frac{i(u+v)}{2k}+\frac{1}{4}(u^2+v^2+4uv+2)-\left(u^2+v^2+\frac{ik}{2}(uv+3)(u+v)\right)+iQ_1^S(k)(u+v),
\nonumber\\
P_2^S(u,v)&=& \frac{u^2+v^2}{8k^2}+\frac{i}{24k}(u+v)(2u^2+2v^2+uv-1)-\frac{1}{48}((2uv+4)(u^2+v^2)+6v^2u^2+12uv+2)
\nonumber\\ &&
+2k\left(-\frac{i(u^3+v^3)}{8k^2}+\frac{1}{24k}((2uv+3)(u^2+v^2)-2)+\frac{i}{48}(u+v)(u^2v^2+u^2+v^2+8uv+5)\right)\nonumber\\
&&
+Q_{1}^S(k)\left(-\frac{u^2+v^2}{2k}-\frac{i}{4}(uv+3)(u+v)\right)+iQ_{2}^S(k)(u+v), \label{psk}
\eea
where $Q^A_n(k)$ and $Q^S_n(k)$ for $n=1,2$ can be found in Sec.~\ref{sec:dasym} and \ref{sec:dsym} respectively.

\subsection{Details of the calculations for the antisymmetric SJ modes}\label{sec:dasym}
In this section we solve Eqn.~(\ref{eq:hask}) for $H^A_k(u,v)$ by constructing each $m^{2n}P^A_n(u,v)$ out of $Z_l(u,v)$ and $i\Delta\circ Z_l(u,v)$ for different $l$. Let us start with the first non zero $P^A_n(u,v)$. It can be observed that $m^2P^A_1(u,v)$ can be constructed out of $m^2Z^A_1(u,v)$ and $m^2i\Delta\circ Z^A_1(u,v)$ up to $\cO(m^2)$ as
\be
m^2P^A_1(u,v)=\frac{im^2}{2L^2}\left(\frac{L^2}{k}\left(1-2kQ^A_1(k)\right)Z^A_1(u,v)-i\Delta\circ Z^A_1(u,v)\right)
\ee
To make the term in the bracket look like $\left(i\Delta+\frac{L^2}{k}\right)\circ Z^A_1(u,v)$, we fix
\be
Q^A_1(k)=\frac{1}{k}.
\ee
Therefore Eqn.~(\ref{eq:hask}) for $H^A_k(u,v)$ up to $\cO(m^4)$ can be written as
\bea
&&\left(i\Delta+\frac{L^2}{k}\right)\circ\left(H^A_k(u,v)+\frac{im^2\cos(k)}{2k}Z^A_1(u,v)\right)-\frac{m^4L^2\cos(k)}{k}\left(\frac{3(u^2-v^2)}{8k^2}-\frac{i}{12k}(u^3-v^3)\right.\nonumber\\
&&\quad\left.+\frac{5i}{24k}(u-v)+\frac{1}{48}(u^2-v^2)-iQ^A_2(k)(u-v)\right)=0.\nonumber\\ \label{extra term a2}
\eea
In the remaining terms, i.e., the terms which are not yet written as $Z^A_l(u,v)$ or $i\Delta\circ Z^A_l(u,v)$, the highest order of $u$ and $v$ are $u^3$ and $v^3$, which can be identified with $i\Delta\circ Z_2(u,v)$. Therefore we use
\be
-\left(i\Delta+\frac{L^2}{k}\right)\circ\frac{m^4\cos(k)}{4k^2}Z^A_2(u,v)=-\frac{m^4L^2\cos(k)}{k}\left(\frac{i}{12k}(u-v)-\frac{i}{12k}(u^3-v^3)+\frac{1}{4k^2}(u^2-v^2)\right),
\ee
to write Eqn.~(\ref{extra term a2}) as
\bea
\left(i\Delta+\frac{L^2}{k}\right)\circ\left(H^A_k(u,v)+\cos(k)\left(\frac{im^2}{2k}Z^A_1(u,v)-\frac{m^4}{4k^2}Z^A_2(u,v)\right)\right)\nonumber\\
-\frac{m^4L^2\cos(k)}{k}\left(\frac{u^2-v^2}{8k^2}+\frac{i}{8k}(u-v)+\frac{1}{48}(u^2-v^2)-iQ^A_2(k)(u-v)\right)=0. \label{extra term a3}
\eea
The remaining terms in Eqn.~(\ref{extra term a3}) can be written as
\be
\left(i\Delta+\frac{L^2}{k}\right)\circ\left(-\frac{im^4\cos(k)}{24k^3}(6+k^2)Z^A_1(u,v)\right),
\ee
by fixing
\be
Q^A_2(k)=\frac{1}{12k}-\frac{1}{4k^3}.
\ee
Finally Eqn.~(\ref{extra term a3}) can be written as
\be
\left(i\Delta+\frac{L^2}{k}\right)\circ\left(H^A_k(u,v)+\cos(k)\left(\left(\frac{im^2}{2k}-\frac{im^4(6+k^2)}{24k^3}\right)Z^A_1(u,v)-\frac{m^4}{4k^2}Z^A_2(u,v)\right)\right)=0
\ee
which implies that
\be
u^A_k(u,v)=U^A_{ik}(u,v)-\cos(k)\left(\left(\frac{im^2}{2k}-\frac{im^4(6+k^2)}{24k^3}\right)Z^A_1(u,v)-\frac{m^4}{4k^2}Z^A_2(u,v)\right)+\cO(m^6)
\ee
with eigenvalue $-\frac{L^2}{k}$, where $k$ satisfies
\be
\sin(k)=\left(\frac{m^2}{k}+\frac{m^4}{12k}\left(1-\frac{3}{k^2}\right)\right)\cos(k)+\cO(m^6)\label{f cond.}
\ee
\subsection{Details of the calculations for the symmetric SJ modes}
\label{sec:dsym}
In this section we solve Eqn.~(\ref{eq:hask}) for $H^S_k(u,v)$ by constructing each $m^{2n}P^S_n(u,v)$ out of $Z_l(u,v)$ and $i\Delta\circ Z_l(u,v)$ for different $l$. Let us start with the first non zero $P^S_n(u,v)$. It can be observed that $P^S_0(u,v)$ can be constructed out of $Z^S_0(u,v)$ and $i\Delta\circ Z^S_0(u,v)$ up to $\cO(m^0)$ as
\be
P^S_0(u,v)=\left(i\Delta+\frac{L^2}{k}\right)\circ\left(-\frac{k}{L^2}Z^S_0(u,v)\right).
\ee
Therefore Eqn.~(\ref{eq:hask}) for $H^S_k(u,v)$ up to $\cO(m^4)$ can be written as
\bea
&&\left(i\Delta+\frac{L^2}{k}\right)\circ\left(H^S_k(u,v)+Z^S_0(u,v)\cos(k)\right)-\frac{L^2\cos(k)}{k}\left(m^2\left(-\frac{3}{4}(u^2+v^2)\right.\right.\nonumber\\
&&\left.\left.+i\left(Q^S_1(k)-k-\frac{1}{2k}\right)(u+v)+\frac{1}{2}\right)+m^4\left(\frac{u^2+v^2}{8k^2}+\frac{i}{24k}(u+v)(2u^2+2v^2+uv-1)\right.\right.\nonumber\\
&&\left.\left.-\frac{1}{24}((-3uv-4)(u^2+v^2)+6uv+5)+\left(-\frac{i(u^3+v^3)}{4k}+\frac{ik}{24}(u+v)(u^2+v^2+5uv+2)\right)\right.\right.\nonumber\\
&&\left.\left.+Q^S_1(k)\left(-\frac{u^2+v^2}{2k}-\frac{i}{4}(uv+3)(u+v)\right)+iQ^S_2(k)(u+v)\right)\right)=0. \label{extra term s2}
\eea
Since the extra terms in Eqn.~(\ref{extra term s2}) has $m^2$ as a factor, we need to look for $Z^S_l$ and $i\Delta\circ Z^S_l$ only up to $\cO(m^2)$. $\cO(m^2)$ terms in Eqn.~(\ref{extra term s2}) can be written in terms of $\left(i\Delta+\frac{L^2}{k}\right)\circ Z^S_0(u,v)$ and  $\left(i\Delta+\frac{L^2}{k}\right)\circ Z^S_1(u,v)$ for
\be
Q^S_1=2k-\frac{1}{k}
\ee
as
\be
\left(i\Delta+\frac{L^2}{k}\right)\circ m^2\cos(k)\left(\frac{3i}{2k}Z^S_1(u,v)+\frac{1}{2}Z^S_0(u,v)\right).
\ee
Therefore Eqn.~(\ref{extra term s2}) can further be written as
\bea
\left(i\Delta+\frac{L^2}{k}\right)\circ\left(H^S_k(u,v)+\cos(k)\left(\left(1+\frac{m^2}{2}\right)Z^S_0(u,v)+\frac{3im^2}{2k}Z^S_1(u,v)\right)\right)+\frac{im^4L^2\cos(k)}{48k^3}\left(8ik^2\right.\nonumber\\
\left.+k\left(-34-kQ^S_2(k)+56k^2\right)(u+v)+i(30-37k^2)(u^2+v^2)+2k(4-k^2)(u^3+v^3)\right)=0. \label{extra term s3}
\eea
Remaining $\cO(m^4)$ terms in Eqn.~(\ref{extra term s3}) can be written in terms of $\left(i\Delta+\frac{L^2}{k}\right)\circ Z^S_0(u,v)$, $\left(i\Delta+\frac{L^2}{k}\right)\circ Z^S_1(u,v)$, $\left(i\Delta+\frac{L^2}{k}\right)\circ Z^S_2(u,v)$ for
\be
Q^S_2(k)=\frac{3-29k^2+28k^4}{12k^3}
\ee
as
\be
-\frac{m^4\cos(k)}{8k^2}\left((4-k^2)Z^S_2(u,v)+\frac{i(6-31k^2)}{3k}Z^S_1(u,v)+(2-9k^2)Z^S_0(u,v)\right).
\ee
Hence Eqn.~(\ref{extra term s3}) can be written as
\bea
\left(i\Delta+\frac{L^2}{k}\right)\circ\left(H^S_k(u,v)+\cos(k)\left(\left(1+\frac{m^2}{2}-\frac{m^4}{8k^2}(2-9k^2)\right)Z^S_0(u,v)\right.\right.\nonumber\\
\left.\left.+\left(\frac{3im^2}{2k}-\frac{im^4}{24k^3}(6-31k^2)\right)Z^S_1(u,v)-\frac{m^4}{8k^2}(4-k^2)Z^S_2(u,v)\right)\right)=0.
\eea
Therefore the symmetric SJ modes are
\bea
u^S_k(u,v)&=&U^S_{ik}(u,v)-\cos(k)\left(\left(1+\frac{m^2}{2}-\frac{m^4}{8k^2}(2-9k^2)\right)Z^S_0(u,v)\right.\nonumber\\
&&\left.+\left(\frac{3im^2}{2k}-\frac{im^4}{24k^3}(6-31k^2)\right)Z^S_1(u,v)-\frac{m^4}{8k^2}(4-k^2)Z^S_2(u,v)\right)+\cO(m^4),
\eea
with eigenvalue $-\frac{L^2}{k}$, where $k$ satisfies
\be
\sin(k)=\left(2k-\frac{m^2}{k}(1-2k^2)+\frac{m^4}{12k^3}(3-29k^2+28k^4)\right)\cos(k)+\cO(m^4). \label{g cond.}
\ee

\section{Summation of series with inverse powers of roots of a transcendental equation} \label{app:speigel}
In this appendix we make use of the work of \cite{speigel} to evaluate the series (Eqn.~(\ref{eq:ser1}) and Eqn.~(\ref{eq:ser2})), which involves the roots of the transcendental equation (Eqn.~(\ref{g condition m0})). They are used in Sec(\ref{sec:compl}) to determine the completeness of the SJ modes

Let us start with a brief discussion on the work of \cite{speigel}. Consider a transcendental equation of the form
\be
S(x) \equiv 1+\sum_{n=1}^\infty a_n x^n=0 \label{eqn:gentr}
\ee
with $x_1,x_2,x_3\dots$ as its roots, which means the equation can be factorised as
\be
\left(1-\frac{x}{x_1}\right)\left(1-\frac{x}{x_2}\right)\left(1-\frac{x}{x_3}\right)\dots=0 \label{eqn:gentrfac}
\ee
On comparing Eqn.~(\ref{eqn:gentr} and \ref{eqn:gentrfac}), we find that
\be
a_1= \sum_{i=1}^\infty \frac{1}{x_i}, \quad
a_2= \sum_{i<j} \frac{1}{x_ix_j}, \quad
a_3= \sum_{i<j<k} \frac{1}{x_ix_jx_k} 
\ee
and so on. It is straight forward to see that
\be
\sum_{i=1}^\infty\left(\frac{1}{x_i}\right)^2=\left(\sum_{i=1}^\infty \frac{1}{x_i}\right)^2-2\sum_{i<j} \frac{1}{x_ix_j}=a_1^2-2a_2
\ee
and similarly
\be
\sum_{i=1}^\infty\left(\frac{1}{x_i}\right)^3=3a_1a_2-3a_3-a_1^3.
\ee
Similarly we can get the sum of higher inverse powers of the roots.

Now let us come to the equation of our interest i.e. Eqn.~(\ref{g condition m0}), which on series expansion becomes
\be
S(k^2)\equiv 1-\left(1-\frac{1}{3!}\right)k^2+\left(\frac{2}{4!}-\frac{1}{5!}\right)k^4-\left(\frac{2}{6!}-\frac{1}{7!}\right)k^6\dots=0.\label{eq:expand}
\ee
The roots of Eqn.~(\ref{eq:expand}) are $k^S_0\in\cK_g$, and therefore
\bea
\sum_{k^S_0\in\cK_g}\frac{1}{{k^S_0}^2}&= &a_1 =\frac{5}{6},\nonumber\\
\sum_{k^S_0\in\cK_g}\frac{1}{{k^S_0}^4}&= &a_1^2-2a_2 =\frac{49}{90},\nonumber\\
\sum_{k^S_0\in\cK_g}\frac{1}{{k^S_0}^6}&= &3a_1a_2-3a_3-a_1^3 =\frac{377}{945}.
\eea
We are also interested in the series involving the inverse power of $4{k^S_0}^2-1$, where $k^S_0\in\cK_g$. We start with finding an equation whose solutions are given by $4{k^S_0}^2-1$. If ${k^S_0}^2$ are the solutions of $S(k^2)=0$, then $4{k^S_0}^2-1$ are the solutions of $S\left(\frac{k^2+1}{4}\right)=0$.
\be
S\left(\frac{k^2+1}{4}\right) \equiv 1-\frac{1}{4}k^2+\frac{5\cos(1/2)-9\sin(1/2)}{32\left(\cos(1/2)-\sin(1/2)\right)}k^4-\frac{53\cos(1/2)-97\sin(1/2)}{384\left(\cos(1/2)-\sin(1/2)\right)}k^6\dots=0.
\ee
Using the same method as above, we find
\bea
\sum_{k^S_0\in\cK_g}\frac{1}{4{k^S_0}^2-1}&=&\frac{1}{4},\\
\sum_{k^S_0\in\cK_g}\frac{1}{(4{k^S_0}^2-1)^2}&=&-\frac{1}{4}\left(\frac{\cos(1/2)-2\sin(1/2)}{\cos(1/2)-\sin(1/2)}\right),\\
\sum_{k_0\in\cK_g}\frac{1}{(4{k^S_0}^2-1)^3}&=&\frac{1}{64}\left(1+\frac{19\cos(1/2)-35\sin(1/2)}{\cos(1/2)-\sin(1/2)}\right).
\eea

\section{Some expressions used in Sec.~\ref{sec:wightmann}}\label{sec:wight-app}

Here we list the expressions of $\wa,\waa,\waaa$ and $\waaaa$ defined in Eqn.~(\ref{eq:aone}) in terms of Polylogarithms.
\bea
\wa&\equiv& \sum_{n=1}^\infty\frac{1}{8n\pi}\left(1-\frac{2m^2}{n^2\pi^2}+\frac{m^4}{n^2\pi^2}\left(\frac{7}{n^2\pi^2}-\frac{1}{6}\right)\right)\left(e^{-in\pi u}-e^{-in\pi v}\right)\left(e^{in\pi u'}-e^{in\pi v'}\right)\nonumber\\
&=&\frac{1}{8\pi}\left[\li_1\left(e^{-i\pi(u-u')}\right)+\li_1\left(e^{-i\pi(v-v')}\right)-\li_1\left(e^{-i\pi(u-v')}\right)-\li_1\left(e^{-i\pi(v-u')}\right)\right]\nonumber\\
&&-\frac{m^2}{4\pi^3}\left(1+\frac{m^2}{12}\right)\left[\li_3\left(e^{-i\pi(u-u')}\right)+\li_3\left(e^{-i\pi(v-v')}\right)-\li_3\left(e^{-i\pi(u-v')}\right)-\li_3\left(e^{-i\pi(v-u')}\right)\right]\nonumber\\
&&+\frac{7m^4}{8\pi^5}\left[\li_5\left(e^{-i\pi(u-u')}\right)+\li_5\left(e^{-i\pi(v-v')}\right)-\li_5\left(e^{-i\pi(u-v')}\right)-\li_5\left(e^{-i\pi(v-u')}\right)\right],
\eea
\bea
\waa&\equiv& \sum_{n=1}^\infty\frac{1}{8n\pi}\left(1-\frac{2m^2}{n^2\pi^2}\right)\left(e^{-in\pi u}-e^{-in\pi v}\right)\Psi_A^*(n,u',v')\nonumber\\
&=& \frac{1}{8\pi} \sum_{j=1}^3 f_j^*(m;u',v')\left[\li_{j+1}\left(-e^{-i\pi u}\right)-\li_{j+1}\left(-e^{-i\pi v}\right)\right]+\frac{im^4}{8\pi^4}(u'-v')\left[\li_{4}\left(-e^{-i\pi u}\right)-\li_{4}\left(-e^{-i\pi v}\right)\right]\nonumber\\
&&+\frac{1}{8\pi}\sum_{j=1}^3\left( g_j^*(m;u',v')\left[\li_{j+1}\left(e^{-i\pi(u-u')}\right)-\li_{j+1}\left(-e^{-i\pi(v-u')}\right)\right]\right.\nonumber\\
&&\left.\quad\quad\quad-g_j^*(m;v',u')\left[\li_{j+1}\left(-e^{-i\pi(u-v')}\right)-\li_{j+1}\left(-e^{-i\pi(v-v')}\right)\right]\right)\nonumber\\
&&-\frac{im^4}{8\pi^4}\left((2u'+v')\left[\li_{4}\left(-e^{-i\pi(u-u')}\right)-\li_{4}\left(-e^{-i\pi(v-u')}\right)\right]\right.\nonumber\\
&&\left.\quad\quad\quad-(2v'+u')\left[\li_{4}\left(-e^{-i\pi(u-v')}\right)-\li_{4}\left(-e^{-i\pi(v-v')}\right)\right]\right),
\eea
\bea
\waaa&\equiv& \sum_{n=1}^\infty\frac{1}{8n\pi}\left(1-\frac{2m^2}{n^2\pi^2}\right)\Psi_A(n,u,v)\left(e^{in\pi u'}-e^{in\pi v'}\right)\nonumber\\
&=& \frac{1}{8\pi} \sum_{j=1}^3 f_j(m;u,v)\left[\li_{j+1}\left(-e^{i\pi u'}\right)-\li_{j+1}\left(-e^{i\pi v'}\right)\right]-\frac{im^4}{8\pi^4}(u-v)\left[\li_{4}\left(-e^{i\pi u'}\right)-\li_{4}\left(-e^{i\pi v'}\right)\right]\nonumber\\
&&+\frac{1}{8\pi}\sum_{j=1}^3\left( g_j(m;u,v)\left[\li_{j+1}\left(e^{-i\pi(u-u')}\right)-\li_{j+1}\left(-e^{-i\pi(v-u')}\right)\right]\right.\nonumber\\
&&\left.\quad\quad\quad-g_j(m;v,u)\left[\li_{j+1}\left(-e^{-i\pi(u-v')}\right)-\li_{j+1}\left(-e^{-i\pi(v-v')}\right)\right]\right)\nonumber\\
&&+\frac{im^4}{8\pi^4}\left((2u+v)\left[\li_{4}\left(-e^{-i\pi(u-u')}\right)-\li_{4}\left(-e^{-i\pi(v-u')}\right)\right]\right.\nonumber\\
&&\left.\quad\quad\quad-(2v+u)\left[\li_{4}\left(-e^{-i\pi(u-v')}\right)-\li_{4}\left(-e^{-i\pi(v-v')}\right)\right]\right),
\eea
\bea
\waaaa&\equiv& \sum_{n=1}^\infty\frac{1}{8n\pi}\Psi_A(n,u,v)\Psi_A^*(n,u',v')\nonumber\\
&=&\frac{m^4}{32\pi^3}\left[\zeta(3)(u-v)(u'-v')-(u-v)(2u'+v')\li_3\left(-e^{i\pi u'}\right)-(2u+v)(u'-v')\li_3\left(-e^{-i\pi u}\right)\right.\nonumber\\
&&\left.+(u-v)(u'+2v')\li_3\left(-e^{i\pi v'}\right)+(u+2v)(u'-v')\li_3\left(-e^{-i\pi v}\right)\right.\nonumber\\
&&\left.+(2u+v)(2u'+v')\li_3\left(e^{-i\pi(u-u')}\right)+(u+2v)(u'+2v')\li_3\left(e^{-i\pi(v-v')}\right)\right.\nonumber\\
&&\left.-(2u+v)(u'+2v')\li_3\left(e^{-i\pi(u-v')}\right)-(u+2v)(2u'+v')\li_3\left(e^{-i\pi(v-u')}\right)\right],
\eea

Here we list the expressions of $\ws,\wss,\wsss$ and $\wssss$ defined in Eqn.~(\ref{eq:sone}) in terms of Polylogarithms.
\bea
\ws&\equiv& \frac{1}{4\pi}\sum_{n=1}^\infty \frac{1}{(2n-1)}\left(e^{-i\left(n-\frac{1}{2}\right)\pi u}+e^{-i\left(n-\frac{1}{2}\right)\pi v}\right)\left(e^{i\left(n-\frac{1}{2}\right)\pi u'}+e^{i\left(n-\frac{1}{2}\right)\pi v'}\right)\nonumber\\
&=&\frac{1}{4\pi}\left[\li_1\left(e^{-i\pi\frac{(u-u')}{2}}\right)+\li_1\left(e^{-i\pi\frac{(u-v')}{2}}\right)+\li_1\left(e^{-i\pi\frac{(v-u')}{2}}\right)+\li_1\left(e^{-i\pi\frac{(v-v')}{2}}\right)\right]\nonumber\\
&&-\frac{1}{8\pi}\left[\li_1\left(e^{-i\pi(u-u')}\right)+\li_1\left(e^{-i\pi(u-v')}\right)+\li_1\left(e^{-i\pi(v-u')}\right)+\li_1\left(e^{-i\pi(v-v')}\right)\right],
\eea
\bea
\wss &\equiv& \frac{1}{4\pi}\sum_{n=1}^\infty\frac{1}{2n-1}\left(e^{-i\left(n-\frac{1}{2}\right)\pi u}+e^{-i\left(n-\frac{1}{2}\right)\pi v}\right)\Psi_S^*(n,u',v')\nonumber\\
&=& \frac{im^2v'}{4\pi^2}\left[\li_2\left(e^{-i\pi\frac{(u-u')}{2}}\right)+\li_2\left(e^{-i\pi\frac{(v-u')}{2}}\right)-\frac{1}{4}\li_2\left(e^{-i\pi(u-u')}\right)-\frac{1}{4}\li_2\left(e^{-i\pi(v-u')}\right)\right]\nonumber\\
&&+\frac{im^2u'}{4\pi^2}\left[\li_2\left(e^{-i\pi\frac{(u-v')}{2}}\right)+\li_2\left(e^{-i\pi\frac{(v-v')}{2}}\right)-\frac{1}{4}\li_2\left(e^{-i\pi(u-v')}\right)-\frac{1}{4}\li_2\left(e^{-i\pi(v-v')}\right)\right]\nonumber\\
&&-\frac{m^4v'^2}{16\pi^3}\left[\li_2\left(e^{-i\pi\frac{(u-u')}{2}}\right)+\li_2\left(e^{-i\pi\frac{(v-u')}{2}}\right)-\frac{1}{4}\li_2\left(e^{-i\pi(u-u')}\right)-\frac{1}{4}\li_2\left(e^{-i\pi(v-u')}\right)\right]\nonumber\\
&&-\frac{m^4u'^2}{16\pi^3}\left[\li_2\left(e^{-i\pi\frac{(u-v')}{2}}\right)+\li_2\left(e^{-i\pi\frac{(v-v')}{2}}\right)-\frac{1}{4}\li_2\left(e^{-i\pi(u-v')}\right)-\frac{1}{4}\li_2\left(e^{-i\pi(v-v')}\right)\right],\nonumber\\
\eea
\bea
\wsss &\equiv& \frac{1}{4\pi}\sum_{n=1}^\infty\frac{1}{2n-1}\Psi_S(n,u,v)\left(e^{i\left(n-\frac{1}{2}\right)\pi u'}+e^{i\left(n-\frac{1}{2}\right)\pi v'}\right)\nonumber\\
&=& -\frac{im^2v}{4\pi^2}\left[\li_2\left(e^{-i\pi\frac{(u-u')}{2}}\right)+\li_2\left(e^{-i\pi\frac{(v-u')}{2}}\right)-\frac{1}{4}\li_2\left(e^{-i\pi(u-u')}\right)-\frac{1}{4}\li_2\left(e^{-i\pi(v-u')}\right)\right]\nonumber\\
&&-\frac{im^2u}{4\pi^2}\left[\li_2\left(e^{-i\pi\frac{(u-v')}{2}}\right)+\li_2\left(e^{-i\pi\frac{(v-v')}{2}}\right)-\frac{1}{4}\li_2\left(e^{-i\pi(u-v')}\right)-\frac{1}{4}\li_2\left(e^{-i\pi(v-v')}\right)\right]\nonumber\\
&&-\frac{m^4v^2}{16\pi^3}\left[\li_2\left(e^{-i\pi\frac{(u-u')}{2}}\right)+\li_2\left(e^{-i\pi\frac{(v-u')}{2}}\right)-\frac{1}{4}\li_2\left(e^{-i\pi(u-u')}\right)-\frac{1}{4}\li_2\left(e^{-i\pi(v-u')}\right)\right]\nonumber\\
&&-\frac{m^4u^2}{16\pi^3}\left[\li_2\left(e^{-i\pi\frac{(u-v')}{2}}\right)+\li_2\left(e^{-i\pi\frac{(v-v')}{2}}\right)-\frac{1}{4}\li_2\left(e^{-i\pi(u-v')}\right)-\frac{1}{4}\li_2\left(e^{-i\pi(v-v')}\right)\right],\nonumber\\
\eea
\bea
\wssss &\equiv& \frac{1}{4\pi}\sum_{n=1}^\infty\frac{1}{2n-1}\Psi_S(n,u,v)\Psi_S^*(n,u',v')\nonumber\\
&=&\frac{m^4}{4\pi^3}\left[vv'\left(\li_3\left(e^{-i\pi\frac{(u-u')}{2}}\right)-\frac{1}{8}\li_3\left(e^{-i\pi(u-u')}\right)\right)\right.+vu'\left(\li_3\left(e^{-i\pi\frac{(u-v')}{2}}\right)-\frac{1}{8}\li_3\left(e^{-i\pi(u-v')}\right)\right)\nonumber\\
&&\quad+ uv'\left(\li_3\left(e^{-i\pi\frac{(v-u')}{2}}\right)-\frac{1}{8}\li_3\left(e^{-i\pi(v-u')}\right)\right)+\left.uu'\left(\li_3\left(e^{-i\pi\frac{(v-v')}{2}}\right)-\frac{1}{8}\li_3\left(e^{-i\pi(v-v')}\right)\right)\right],\nonumber\\
\eea
\end{subappendices}

\chapter{The SJ vacuum in cosmological spacetimes}\label{ch.sjcosm}

%In chapter~\ref{ch.2ddiamsj} we studied the SJ vacuum for a small mass scalar field (upto $\cO(m^4)$) in a 2d causal diamond and for a massive scalar field with arbitrary mass in a causal set approximated by a 2d causal diamond. We find that in the center of the diamond, the SJ vacuum behaves like the massive Minkowski vacuum as expected for large mass but for small masses it converges to the massless Minkowski vacuum. In the corner of the diamond the SJ vacuum agrees with the mirror vacuum for small masses whereas for large masses, the Rindler and the mirror vacuum themselves become indistinguishable and therefore we cannot associate the behaviour of the SJ vacuum with either of them.

In this chapter we study the SJ vacuum for a conformally coupled massless scalar field in cosmological spacetimes. We explicitly compute the SJ vacuum in a finite time slab of cosmological spacetimes, which are conformally related to ultrastatic slab spacetimes with a time dependent conformal factor, and then study its behaviour in the full spacetime limit. We use the work of Fewster and Verch \cite{fewster2012} on ultrastatic slab spacetimes and the work of Brum and Fredenhagen on expanding spacetimes \cite{Brum:2013bia} as our starting point to solve the integral eigenvalue equation. In particular, we consider two spacetimes of physical interest. One of which is the de Sitter spacetime, which provides a promising inflationary model for a post-Planckian very early universe. The other one is a collection of flat Friedmann-Lema{\^i}tre-Robertson-Walker (FLRW) spacetimes, which provide a model for post inflationary including the present day homogeneous, isotropic and expanding universe. In order to ensure the finiteness of the spacetime volume of flat FLRW spacetimes the spacelike hypersurface is compactified into a 3-torus. This costs spatial isotropy but local properties remain unchanged. We may note that the SJ vacuum in full and in conformal patch of a de Sitter spacetime is studied by Aslanbeigi and Buck starting from the Euclidean or the Bunch-Davies modes in respective cases \cite{Aslanbeigi:2013fga}. %The purpose of this chapter is not just to construct the SJ modes, which unlike the one in 2d diamond, can easily be done using the method developed by \cite{Brum:2013bia},
We study the SJ modes and the spectrum and its dependence on the size of the slab in both de Sitter and flat FLRW spacetimes and compare with the known vacua in respective spacetimes. We see that the SJ vacuum in a conformally ultrastatic slab spacetimes cannot be obtained merely by a conformal transformation of the SJ vacuum in the corresponding ultrastatic slab spacetime. We find that the SJ vacuum obtained here is consistent with that obtained in \cite{Aslanbeigi:2013fga} in full de Sitter spacetime limit, i.e., in odd dimensions, it reduces to the Euclidean vacuum and in even dimensions the SJ vacuum is ill defined. The results obtained here along with the one obtained in \cite{Aslanbeigi:2013fga} is inconsistent with the result of \cite{Surya:2018byh}, where the SJ vacuum has been obtained in a causal set sprinkled on a de Sitter slab spacetime and is found to be well defined but different from any of the known de Sitter vacua in full spacetime limit. In order to have a better understanding of this we compare the SJ spectrum we  obtained in a de Sitter spacetime with that obtained by the authors of \cite{Surya:2018byh} on a causal set approximated by the de Sitter spacetime. We find that in the 2d case the continuum and the causal set SJ spectrum matches with each other upto a UV cut-off associated with the sprinkling density of the causal set but there is a mismatch in the 4d continuum and the causal set de Sitter SJ spectrum even at the scales below the UV cut-off, which is yet to be understood. In the FLRW spacetime the SJ vacuum reduces to the conformal vacuum in the full spacetime limit. %Also calculation of the massive SJ modes in a finite time slab of spacetimes which are conformally related to an ultrastatic slab spacetime with time dependent conformal factor can be found in \cite{Brum:2013bia}.

We start this chapter with a review of the SJ vacuum in ultrastatic slab spacetimes in Sec.~\ref{sec:ustatic}, followed by the SJ vacuum in a finite time slab of conformally ultrastatic slab spacetimes with a time dependent conformal factor in Sec.~\ref{sec:confustatic}. In Sec.~\ref{sec:cuss}, we in particular study the SJ vacuum in de Sitter and flat FLRW spacetimes, which are of cosmological interests. We end this chapter with a discussion of the results in Sec.~\ref{sec.discch3}.

\section{SJ vacuum in ultrastatic slab spacetimes}\label{sec:ustatic}
We start with a brief review of the work of Fewster and Verch \cite{fewster2012} on the construction of SJ modes for massive scalar field in ultrastatic slab spacetimes. Fewster and Verch studied the minimally coupled massive field i.e., $\xi=0$ but their analysis can easily be generalised to any coupling $\xi$ as long as $\xi R>0$, as we will see in this section. An ultrastatic spacetime $(\M,g)$ consists of a spacetime manifold $\M = \mr\times\spac$, where $\spac$ is a compact spacelike hypersurface and a Lorentzian metric $g$ leading to an invariant length element given by
\be
ds^2 = -d\eta^2+d\spac^2 \label{eq:usmetric}
\ee
where $\eta\in\mr$ and $d\spac$ is the length element on $\spac$, which is independent of $\eta$. The Klein-Gordon equation for the massive scalar field given by Eqn.~\eqref{kg.eq} in $(\M,g)$ takes the form
\be
(-\partial^2_\eta + \nabla_i\nabla^i -m^2 -\xi R)\Phi = 0,\label{eq:kgus}
\ee
where $\nabla_i$'s are the spatial covariant derivatives and $R$ is the Ricci scalar of the spacelike hypersurface $\spac$. Ultrastatic spacetimes admits a timelike Killing vector $\partial_\eta$, which leads to a preferred mode expansion of $\Phi$ (given that $\xi R>0$) in terms of the positive frequency modes $\up_j(\eta,\vx)\equiv\frac{1}{\sqrt{2\omega_j}}e^{-i\omega_j\eta}\psi_j(\vx)$ and negative frequency modes $\un_j(\eta,\vx)\equiv\frac{1}{\sqrt{2\omega_j}}e^{i\omega_j\eta}\psi_j(\vx)$ with respect to $\partial_\eta$. We call the vacuum associated with this preferred mode expansion as the static vacuum. Here $\vx,\vx'\in\spac$, $\{\omega_j^2\}$ and $\{\psi_j\}$ are the eigenvalues and eigenfunctions respectively of the spatial part $\spkg$ of the Klein-Gordon equation in Eqn.~\eqref{eq:esugr} and $j$ stands for the collection of indices required to specify a particular eigenfunction.
\be
\spkg\psi_j = \omega_j^2\psi_j,\quad \spkg\equiv -\nabla_i\nabla^i +m^2 +\xi R.
\ee
It is straightforward to see that for $R\geq 0$, $\spkg$ is a positive semi-definite operator (positive definite for $m>0$). The fact that $\spac$ is compact and $\spkg$ is self-adjoint ensures that the eigenfunctions $\{\psi_j\}$ of $\spkg$ form a countable basis for the space of square integrable functions on $\spac$. We now proceed to compute the SJ modes in ultrastatic slab spacetimes. The retarded Green's function of Eqn.~\eqref{eq:kgus} is known to be given by
\be
G_{R}\left(\eta,\vx;\eta',\vx'\right)= -\theta(\eta-\eta')\sum_{j}\frac{\sin\left(\omega_j(\eta-\eta')\right)}{\omega_j}\psi_j(\vx)\psi^*_j(\vx'), \label{eq:esugr}
\ee
One may note here that since $\spkg$ is real, for any of its complex eigenfunction $\psi_j$, its complex conjugate $\psi^*_j$ is also an eigenfunction of $\spkg$ with same eigenvalue $\omega_j^2$. This leads to a confirmation that the retarded/advanced Green's function in Eqn~\eqref{eq:esugr} is real and also to the fact that it is symmetric under the exchange of $\vx$ and $\vx'$. The same follows for the retarded/advanced Green's function of the Klein-Gordon equation in any conformally related spacetime as well, which we will see in Sec.~\ref{sec:confustatic}. The Pauli-Jordan function defined in Eqn.~\eqref{pj.eq} is then
\be
i\Delta(\eta,\vx;\eta',\vx') = -i\sum_{j}\frac{\sin\left(\omega_j(\eta-\eta')\right)}{\omega_j}\psi_j(\vx)\psi^*_j(\vx')
\ee
 
An ultrastatic slab spacetime $(\R,g)\subset(\M,g)$ is defined by putting bounds on $\eta$, i.e., $\eta\in[-\tau,\tau]$. We start with $u_j(\eta,\vx)\equiv \chi_j(\eta)\psi_j(\vx)$ as an ansatz for the eigenfunction of $i\hd$ in $(\R,g)$. This ansatz along with the orthonormality of $\{\psi_j\}$ on $\spac$ reduces the eigenfunction equation $i\hd\circ u_j=\lambda_j u_j$ to the following mode-wise integral eigenvalue equations for $\chi_j$.
\be
\he_j\circ\chi_j\equiv-i\int_{-\tau}^{\tau}d\eta'\frac{\sin\left(\omega_j(\eta-\eta')\right)}{\omega_j}\chi_j(\eta')=\lambda_j\chi_j(\eta). \label{eq:sjeqnus}
\ee
Now we observe that
\be
\he_j\circ\sin(\omega_j\eta) = iA_j\cos(\omega_j\eta)\quad\text{and}\quad
\he_j\circ\cos(\omega_j\eta) = -iB_j\sin(\omega_j\eta),
\ee
where
\bea
A_j&=\frac{1}{\omega_j}\int_{-\tau}^{\tau} d\eta'\sin^2(\omega_j\eta')& = \frac{\tau}{\omega_j}\left(1-\frac{\sin(2\omega_j\tau)}{2\omega_j\tau}\right),\\\quad B_j&=\frac{1}{\omega_j}\int_{\eta_1}^{\eta_2} d\eta'\cos^2(\omega_j\eta')& =\frac{\tau}{\omega_j}\left(1+\frac{\sin(2\omega_j\tau)}{2\omega_j\tau}\right).
\label{eq:abcus}\eea
This suggests us that $\chi_j(\eta)$ is of the form
\be
\chi_j(\eta)=\sin(\omega_j\eta)+\beta_j\cos(\omega_j\eta)
\ee
\be
\he_j\circ\chi_j(\eta)=-i\beta_jB_j\left(\sin(\omega_j\eta)-\frac{A_j}{\beta_jB_j}\cos(\omega_j\eta)\right)
\ee
For $\chi_j$ to satisfy Eqn~\eqref{eq:sjeqnus}, we need $\beta_j$ and $\lambda_j$ to be
\be
\beta_j=\beta_j^{\pm}\equiv \pm i\sqrt{A_j/B_j} \label{eq:betajus}
\ee
and
\be
\lambda_j=-i\beta_jB_j =\lambda_j^{\pm}\equiv\pm \sqrt{A_jB_j} \label{eq:lambdaus}
\ee
respectively. Positive eigenvalues of $i\hd$ and the corresponding eigenfunctions are $\lambda_j^+$ and
\be
u_j(\eta,\vx)=\left(\sin(\omega_j\eta)+i\sqrt{A_j/B_j}\cos(\omega_j\eta)\right)\psi_j(\vx) \label{eq:sjmodesus}
\ee
respectively with $||u_j||^2=\frac{2\left(A_jB_j-C_j^2\right)}{B_j}\omega_j$. The SJ modes are therefore
\be
\usj_j(\eta,\vx)\equiv\frac{\sqrt{\lambda_j}}{||u_j||}u_j(\eta,\vx)
=\frac{1}{\sqrt{2\omega_j}}\left(\left(\frac{B_j}{A_j}\right)^{1/4}\sin(\omega_j\eta) + i\left(\frac{A_j}{B_j}\right)^{1/4}\cos(\omega_j\eta)\right)\psi_j(\vx).\label{eq:themodesus}
\ee
These modes along with their complex conjugate form a complete basis in $\im(i\hd)$, an argument for this will be presented in Sec.~\ref{sec:completenessus}. The corresponding SJ Wightman function $\wsj = \rew + \frac{i\Delta}{2}$ where
\be
\rew(\eta,\vx;\eta',\vx')=\sum_j\frac{1}{2\omega_j}\left(\sqrt{\frac{B_j}{A_j}}\sin(\omega_j\eta)\sin(\omega_j\eta')+\sqrt{\frac{A_j}{B_j}}\cos(\omega_j\eta)\cos(\omega_j\eta')\right)\psi_j(\vx)\psi_j^*(\vx')
\ee
SJ modes $\usj_j$ obtained in Eqn.~\eqref{eq:themodesus} can be seen to be a mixture of positive and negative frequency modes $\up_j$ and $\un_j$ respectively with respect to the timelike killing vector $\partial_\eta$
\be
\usj_j = c^+_j\up_j + c^-_j\un_j,
\ee
where
\be
c^+_j = \frac{i}{2}\left(\left(\frac{A_j}{B_j}\right)^{1/4}+\left(\frac{B_j}{A_j}\right)^{1/4}\right)\quad\text{and}\quad c^-_j = \frac{i}{2}\left(\left(\frac{A_j}{B_j}\right)^{1/4}-\left(\frac{B_j}{A_j}\right)^{1/4}\right).
\ee
For large enough $\omega_j\tau$, which is the case in either $\R\rightarrow\M$ limit or for high enough frequency modes, we have
\be
A_j\approx B_j\approx \frac{\tau}{\omega_j}\quad \Rightarrow\quad |c_j^+|\approx 1\;\text{and}\;|c_j^-|\approx 0.
\ee
This suggests that the mode mixing in the SJ modes is due to the boundedness in time and in full ultrastatic spacetime, the SJ modes reduces to the purely positive frequency modes and hence the SJ vacuum reduces to the static vacuum. It would be interesting to see if this behaviour is specific to the ultrastatic slab spacetime or is true also for other spacetimes. In the next section we extend this result to a general case of a conformally ultrastatic spacetimes with time dependent conformal factor, but before getting into this we show that the SJ modes obtained in this section indeed form a complete basis in $\im(i\hd)$.

\subsection{Completeness of the SJ modes}\label{sec:completenessus}
In this section we present an argument to show that the eigenfunctions of $i\hd$ obtained in Eqn~\eqref{eq:sjmodesus} form a complete set of eigenfunctions with positive eigenvalue.

If possible, let us assume that apart from those given in Eqn~\eqref{eq:sjmodesus}, there are other eigenfunctions $\{v_{j'}\}$ of $i\hd$ with positive eigenvalues $\{\gamma^+_{j'}\}$ respectively. Let $\wsj'$ be the new Wightman function which takes both $\{u_j\}$ and $\{v_{j'}\}$ modes into account, i.e.,
\be
\wsj'(\bx;\bx')=\wsj(\bx;\bx')+\sum_{j'}\frac{\gamma^+_{j'}}{||v_{j'}||^2}v_{j'}(\bx)v_{j'}^*(\bx')
\ee
where $\wsj$ is the Wightman function which takes only $\{u_j\}$ modes into account. Using the Peierls bracket condition $\wsj'-\wsj'^*=i\hd$, we have
\be
i\hd(\bx;\bx')=\wsj(\bx;\bx)-\wsj^*(\bx;\bx')+\sum_{j'}\frac{\gamma^+_{j'}}{||v_{j'}||^2}\left(v_{j'}(\bx)v_{j'}^*(\bx')-v_{j'}^*(\bx)v_{j'}(\bx')\right). \label{eq:comp1}
\ee
But we know that $\wsj-\wsj^*=i\hd$ which implies that
\be
\sum_{j'}\frac{\gamma^+_{j'}}{||v_{j'}||^2}\left(v_{j'}(\bx)v_{j'}^*(\bx')-v_{j'}^*(\bx)v_{j'}(\bx')\right)=0.\label{eq:comp2}
\ee
Since ${v_{j'}}'s$ are linearly independent, Eqn~\eqref{eq:comp2} holds true if and only if $\gamma^+_{j'}=0$ for all $j'$, and therefore $v_{j'}\not\in \im(i\hd)$ for any $j'$. This means that $\{u_j\}$ in Eqn~\eqref{eq:sjmodesus} forms a complete set of eigenfunctions of $i\hd$ with positive eigenvalues and hence SJ modes obtained here forms a complete basis in $\im(i\hd)$.

\section{SJ vacuum in conformally ultrastatic spacetimes with time dependent conformal factor}\label{sec:confustatic}

In this section we study SJ vacuum in a slab $(\R,\gt)$ of a spacetime $(\M,\gt)$ which is conformally related to the ultrastatic spacetime $(\M,g)$ we studied in Sec.~\ref{sec:ustatic}. Unlike in Sec.~\ref{sec:ustatic}, in this section we restrict ourself to the conformally coupled massless scalar field and therefore have to deal with zero modes which may appear depending on the spacelike hypersurface $\spac$. The spacetime metric $\gt$ corresponds to the invariant length element given by
\be
d\tilde{s}^2=\conf(\eta)^2 ds^2,\label{eq:grwmetric}
\ee
where $ds$ is the invariant length element in $(\M,g)$ and is given by Eqn.~\eqref{eq:usmetric}. $\conf:\M\rightarrow\mr^+$ is a smooth non-vanishing time dependent conformal factor which describes uniform expansion or contraction of the spacetime. Klein-Gordon equation given by Eqn.~\eqref{kg.eq} then takes the form
\be
-\conf(\eta)^{-d}\partial_\eta(\conf(\eta)^{d-2}\partial_\eta\Phi)+\conf(\eta)^{-2}\nabla_i\nabla^i\Phi-\xi \Rt\Phi = 0,
\ee
where $i\in\{1,2,\dots,d-1\}$, $\nabla_i$ is the covariant derivative in $(\M,g)$ and $d$ is the spacetime dimension. Solutions of the above equation can be written as $\Phi = \conf(\eta)^{1-\frac{d}{2}}\Phit$, where $\Phit$ is the solution of the Klein-Gordon equation in $(\M,g)$ i.e.,
\be
(-\partial_\eta^2 + \nabla_i\nabla^i - \xi R)\Phit = 0.
\ee
Spacetime $(\M,\gt)$ admits a timelike conformal Killing vector associated to the Killing vector $\partial_\eta$ in $(\M,g)$, which leads to a preferred mode expansion of $\hPhi$ in terms of conformal positive frequency modes $\uconfp_j\equiv\conf(\eta)^{1-\frac{d}{2}}\up_j$ and conformal negative frequency modes $\uconfn_j\equiv\conf(\eta)^{1-\frac{d}{2}}\un_j$,
\be
\hPhi = \sum_j (\hb_j\uconfp_j + \hb_j^\dagger\uconfn_{-j}).
\ee
The associated vacuum state satisfies $\hb_j\kconf = 0$ and is called as the {\sl{conformal vacuum}}.

We now proceed to compute the SJ modes and the Wightman function for conformally coupled massless scalar field in $(\R,\gt)$ where $\eta\in[\eta_1,\eta_2]$, . As in the previous section, we start with the retarded/advanced Green's function. For any two spacetimes $(\M,g)$ and $(\M,\gt)$ which are conformally related to each other i.e. $\gt_{\mu\nu}=\conf^2 g_{\mu\nu}$, where $\conf:\M\rightarrow\mr^+$ is smooth and non-vanishing, the Pauli-Jordan functions of the Klein-Gordon operator for a conformally coupled massless scalar field are related by
\be
\tilde{\Delta}(\bx;\bx')=\conf(\bx)^{1-\frac{d}{2}}\Delta(\bx;\bx')\conf(\bx')^{1-\frac{d}{2}}, \label{eq:ccgf}
\ee
for all $\bx,\bx'\in\M$, where $\Delta$ and $\tilde{\Delta}$ are Pauli-Jordan functions in $(\M,g)$ and $(\M,\gt)$ respectively. Pauli-Jordan function $i\Delta$ in $(\M,\gt)$ is therefore
\be
\itd(\eta,\vx;\eta',\vx') = -i(\conf(\eta)\conf(\eta'))^{1-\frac{d}{2}}\sum_j\frac{\sin(\omega_j(\eta-\eta'))}{\omega_j}\psi_j(\vx)\psi_j(\vx') \label{eq:itd}
\ee
To get to the SJ modes in $(\R,\gt)$, we follow same steps as in Sec.~\ref{sec:ustatic} with $\itd$ given in Eqn.~\eqref{eq:itd}
\be
\itd\circ f(\eta,\vx) \equiv i\int_{\eta_1}^{\eta_2} \conf(\eta')^d d\eta' \int_\spac dV_{\vx'}\,\tilde{\Delta}(\eta,\vx;\eta',\vx')f(\eta',\vx').
\ee
for any $f:\R\rightarrow \mathbb{C}$. We start with $u_j(\eta,\vx)\equiv \chi_j(\eta)\psi_j(\vx)$ as an ansatz for the SJ modes. The corresponding mode-wise integral equation for $\chi_j$ is
\be
\he_j\circ\chi_j\equiv-i\int_{\eta_1}^{\eta_2}d\eta'\conf(\eta)^{1-\frac{d}{2}}\conf(\eta')^{1+\frac{d}{2}}\frac{\sin\left(\omega_j(\eta-\eta')\right)}{\omega_j}\chi_j(\eta')=\lambda_j\chi_j(\eta). \label{eq:sjeqn}
\ee
Now we observe that
\bea
\he_j\circ\left(\conf(\eta)^{1-\frac{d}{2}}\sin(\omega_j\eta)\right)&=&i\conf(\eta)^{1-\frac{d}{2}}\left(A_j\cos(\omega_j\eta)-C_j\sin(\omega_j\eta)\right)\nonumber\\
\he_j\circ\left(\conf(\eta)^{1-\frac{d}{2}}\cos(\omega_j\eta)\right)&=&i\conf(\eta)^{1-\frac{d}{2}}\left(C_j\cos(\omega_j\eta)-B_j\sin(\omega_j\eta)\right)
\eea
where
\be\begin{split}
&A_j=\frac{1}{\omega_j}\int_{\eta_1}^{\eta_2} d\eta'\conf(\eta')^2\sin^2(\omega_j\eta'),\quad B_j=\frac{1}{\omega_j}\int_{\eta_1}^{\eta_2} d\eta'\conf(\eta')^2\cos^2(\omega_j\eta')\\&\quad\text{and}\quad C_j=\frac{1}{\omega_j}\int_{\eta_1}^{\eta_2}d\eta'\conf(\eta')^2\sin(\omega_j\eta')\cos(\omega_j\eta').
\end{split}\label{eq:abc}\ee
This suggests us an ansatz for $\chi_j(\eta)$ of the form
\be
\chi_j(\eta)=\conf(\eta)^{1-\frac{d}{2}}\left(\sin(\omega_j\eta)+\beta_j\cos(\omega_j\eta)\right).
\ee
\be
\he_j\circ\chi_j(\eta)=-i\left(C_j+\beta_jB_j\right)\conf(\eta)^{1-\frac{d}{2}}\left(\sin(\omega_j\eta)-\frac{A_j+\beta_jC_j}{C_j+\beta_jB_j}\cos(\omega_j\eta)\right).
\ee
For $\chi_j$ to satisfy Eqn~\eqref{eq:sjeqn}, we need $\beta_j$ and $\lambda_j$ to be
\be
\beta_j=\beta_j^{\pm}\equiv\frac{-C_j\pm i\sqrt{A_jB_j-C_j^2}}{B_j} \label{eq:betaj}
\ee
and
\be
\lambda_j=-i\left(C_j+\beta_jB_j\right)=\lambda_j^{\pm}\equiv\pm \sqrt{A_jB_j-C_j^2} \label{eq:lambda}
\ee
respectively. By Cauchy-Schwarz inequality, $A_jB_j\geq C_j^2$ which confirms that $\{\lambda_j\}\subset\mr$. Therefore positive eigenvalues of $i\hd$ and the corresponding eigenfunctions are $\lambda_j^+$ and
\be
u_j(\eta,\vx)=\conf(\eta)^{1-\frac{d}{2}}\left(\sin(\omega_j\eta)+\frac{i\sqrt{A_jB_j-C_j^2}-C_j}{B_j}\cos(\omega_j\eta)\right)\psi_j(\vx) \label{eq:sjmodes}
\ee
respectively with $||u_j||^2=\frac{2\left(A_jB_j-C_j^2\right)}{B_j}\omega_j$. The SJ modes are therefore
\bea
\usj_j(\eta,\vx)&\equiv&\frac{\sqrt{\lambda_j}}{||u_j||}u_j(\eta,\vx)\label{eq:themodes}\\
&=&\frac{\conf(\eta)^{1-\frac{d}{2}}\sqrt{B_j}}{\sqrt{2\omega_j}\left(A_jB_j-C_j^2\right)^{1/4}}\left(\sin(\omega_j\eta)+\frac{i\sqrt{A_jB_j-C_j^2}-C_j}{B_j}\cos(\omega_j\eta)\right)\psi_j(\vx).\nonumber
\eea
We may note here that if we have a zero mode i.e. $\omega_0=0$ then $A_0\rightarrow 0$ and $B_0\rightarrow\infty$ but $P_0^2\equiv A_0B_0$ and $Q_0\equiv \omega_0B_0$ are finite, which makes $\lambda_0^+$ and $u_0(\eta,\vx)$ finite. Zero modes $\usj_0(\eta,\vx)$ takes the form
\be
\usj_0(\eta,\vx)=\frac{\conf(\eta)^{1-\frac{d}{2}}}{\sqrt{2}\left(P_0^2-C_0^2\right)^{1/4}}\left(\sqrt{Q_0}\eta+\frac{i\sqrt{P_0^2-C_0^2}-C_0}{\sqrt{Q_0}}\right)\psi_0(\vx).\label{eq:zeromodes}
\ee
Arguments on the line of that of Sec.~\ref{sec:completenessus} suggests that the SJ modes obtained in Eqn.~\ref{eq:themodes} and \ref{eq:zeromodes} along with their complex conjugates forms a complete basis in $\im(i\hd)$. We also observe that these SJ modes are not merely a conformal transformation of the SJ modes obtained in the corresponding ultrastatic slab spacetime\footnote{This is unlike the conformal modes which are just a conformal transformation of the static modes}. This is because of the non-trivial dependence of the conformal factor $\conf(\eta)$ on $A_j,B_j$ and $C_j$. Similar to the SJ modes in an ultrastatic slab spacetime, non-zero SJ modes $\usj_j$ obtained here in Eqn.~\eqref{eq:themodes} can be seen as a mixture of the positive and negative frequency conformal modes $\uconfp_j$ and $\uconfn_j$,
\be
\usj_j = c^+_j\uconfp_j + c^-_j\uconfn_j
\ee
where
\be
c^+_j = \frac{i\left(\sqrt{A_jB_j-C_j^2}+B_j\right)-C_j}{2\sqrt{B_j}(A_jB_j-C_j^2)^{1/4}}\quad\text{and}\quad c^-_j = \frac{i\left(\sqrt{A_jB_j-C_j^2}-B_j\right)-C_j}{2\sqrt{B_j}(A_jB_j-C_j^2)^{1/4}}.
\ee
For zero modes, i.e., $\omega_0=0$, the SJ modes are well defined but not the corresponding conformal modes. These zero modes are absent when $\spac=\mathbb{S}^{d-1}$ for $d>2$, i.e, in de Sitter spacetimes but are present when $\spac=\mathbb{T}^3$, i.e., in flat FLRW spacetimes. For large enough $\omega_j$ where $\conf(\eta)$ vary slowly with respect to $\sin(\omega_j\eta)$ and $\cos(\omega_j\eta)$, we have the $A_j$'s, $B_j$'s and $C_j$'s given by Eqn.~\eqref{eq:abc} goes to
\be
A_j\text{ and }B_j\approx \frac{1}{2\omega_j}\int_{\eta_1}^{\eta_2}d\eta'\conf(\eta')^2,\quad\text{and}\quad C_j\approx 0,\label{eq:ablargeomega}
\ee
and therefore
\be
|c^+_j|\approx 1,\quad\text{and}\quad |c^-_j|\approx 0.
\ee
which means that the high frequency SJ modes are same as the conformal modes and the difference comes mainly from the low frequency modes. This is expected as the high frequency modes sees only a local region and is unaffected by the global structure of the spacetime. Behaviour of $|c^+_j|$ and $|c^-_j|$ in the full spacetime limit $\R\rightarrow\M$ depends specifically on the conformal factor $\conf(\eta)$. In Sec.~\ref{sec:cuss}, we study this in two particular example of cosmological spacetimes: de Sitter and FLRW spacetimes. 

The SJ Wightman function defined in Eqn~\eqref{eq:wsj} takes the form $\wsj=\rew+\frac{\itd}{2}$ where $\rew$ is the real part of $\wsj$
\bea
\rew(\eta,\vx;\eta',\vx')&=&\sum_{j\in J}\frac{\left(\conf(\eta)\conf(\eta')\right)^{1-\frac{d}{2}}}{2\omega_j\sqrt{A_jB_j-C_j^2}}\left(B_j\sin(\omega_j\eta)\sin(\omega_j\eta')\right.\\
&&\quad\left.+A_j\cos(\omega_j\eta)\cos(\omega_j\eta')-C_j\sin(\omega_j(\eta+\eta'))\right)\psi_j(\vx)\psi_j^*(\vx').\nonumber
\eea
Now we have the form for the SJ Wightman function in spacetimes that are conformally related to an ultrastatic slab spacetimes with compact spatial hypersurface and time dependent conformal factor. In order to further specify the vacuum we now need to specify $\omega_j$ and $\psi_j$ which depends on the choice of the spacelike hypersurface $\spac$, and the constants $A_j,B_j$ and $C_j$ which depends on the choice of the conformal factor $\conf(\eta)$. In Sec.~\ref{sec:cuss}, we will do it explicitly in a finite time slab of de Sitter and flat FLRW spacetimes

\section{The SJ vacuum in cosmological spacetimes}\label{sec:cuss}
\subsection{de Sitter}\label{sec:ds}
de Sitter spacetime is defined as a submanifold of the Minkowski spacetime $\mr^{1,d}$ in one higher dimension. It is described by the following hyperboloid
\be
-x_0^2+\sum_{i=1}^d x_i^2 = \frac{1}{H^2},
\ee
where $x_i's$ are the coordinates of the Minkowski spacetime $\mr^{1,d}$. It is a special case of the class of spacetimes discussed in Sec~\ref{sec:confustatic} for which $\spac=\ms^{d-1}$ with $d$ being the spacetime dimension and $\conf(\eta)=\frac{1}{H\cos(\eta)}$ where $\eta\in(-\pi/2,\pi/2)$. The spacetime metric leads to an invariant length element given by
\be
d\tilde{s}^2=\left(\frac{1}{H\cos(\eta)}\right)^2ds^2,\quad ds^2=-d\eta^2+d\Omega_{d-1}^2.
\ee
$H$ is called as the Hubble's constant and $H^{-1}$ is the only length scale in the metric. The Klein-Gordon equation for a conformally coupled massless scalar field in de Sitter spacetime takes the form
\be
H^{2}\left(-\cos^d(\eta)\partial_\eta\left(\cos^{2-d}(\eta)\partial_\eta\Phi\right)+\cos^2(\eta)\spl\Phi\right)+\xi\Rt\Phi = 0
\ee
where $\spl$ is the spherical Laplacian in $d-1$ dimension. Solutions of the above equation can be written as $\Phi = \left(\frac{1}{H\cos(\eta)}\right)^{1-\frac{d}{2}}\Phit$, where $\Phit$ satisfies
\be
(-\partial_\eta^2+\spl-\xi R)\Phit=0,
\ee
where $R$ is the Ricci scalar of $\ms^{d-1}$. QFT in de Sitter spacetimes has been studied in \cite{Chernikov:1968zm,Mottola:1984ar,Allen:1985ux} and it is known that there exist a two parameter family of de Sitter invariant vacua for a scalar field. They are called as $\alpha$-vacua. These $\alpha$-vacua are alternatives to the Euclidean or Bunch-Davies vacuum, which corresponds to $\alpha=0$ \cite{Bunch:1978yq}. In particular the SJ formalism in de Sitter spacetime has been studied in \cite{Aslanbeigi:2013fga} starting from the Euclidean modes and it has been found that in full de Sitter spacetime, the SJ vacuum for a conformally coupled massless scalar field is in agreement with the Euclidean vacuum in odd spacetime dimensions and is ill defined in even spacetime dimensions. SJ vacuum on a causal set approximated by a slab in de Sitter spacetime has been studied in \cite{Surya:2018byh} and it has been found that the SJ vacuum in a causal set approximated by a slab of  4d de Sitter spacetime is well defined for both minimally and conformally coupled massless scalar field but does not resemble any of the known de Sitter vacua, which is inconsistent with the result of \cite{Aslanbeigi:2013fga}.

In this section, we solve for the SJ modes in a slab of a de Sitter spacetime using the method developed in Sec~\ref{sec:confustatic} and compare the SJ vacuum obtained with the results of \cite{Aslanbeigi:2013fga} and \cite{Surya:2018byh}. In $\ms^{d-1}$ eigenfunctions $\{\psi_j\}$ of the operator $\spkg=-\spl+\xi R$ are the spherical harmonics $Y_{j}$ given by
\be
Y_j\left(\theta_1,\dots,\theta_{d-2},\phi\right)\equiv \frac{1}{\sqrt{2\pi}}e^{il_{d-1}\phi}\prod_{i=1}^{d-2}{_{d-i}\bar{P}_{l_i}^{l_{i+1}}(\theta_i)}
\ee
with
\be
J\equiv\{j\}=\left\{\left(l_1,\dots,l_{d-2},l_{d-1}\right)|l_1,\dots,l_{d-2}\in\mz^+\cup\{0\}, l_{d-1}\in\mz\;\text{and}\; l_1\geq\dots\geq l_{d-2}\geq|l_{d-1}|\right\}
\ee
%where $l_{d-1}\equiv m$\footnote{$m$ is an index here and should not be confused with the mass of the field}
and
\be
{_a\bar{P}_b^c(z)}\equiv \sqrt{\frac{(a+2b-1)(a+b+c-2)!}{2(b-c)!}}\sin^{1-\frac{a}{2}}(z)P^{1-c-\frac{a}{2}}_{b+\frac{a}{2}-1}\left(\cos(z)\right)
\ee
with $P^\alpha_\beta(z)$ being the associated Legendre functions. The corresponding eigenvalues $\omega_j^2$ are
\be
\omega_j^2=\omega_{l_1}^2\equiv l_1(l_1+d-2)+\xi\Rt = \left(l_1+\frac{d}{2}-1\right)^2. \label{eq:omegal1}
\ee
Note that $\omega_{l_1}$'s are integer in even dimensions and half-integer in odd dimensions. A detailed description of spherical harmonics and its properties can be found in %\cite{2020Chapter2S,frye:2012sph}.
\cite{frye:2012sph}. We study the SJ vacuum in a slab of de Sitter spacetime such that $\eta\in[-\tau,\tau]$ i.e. $\eta_1=-\tau$ and $\eta_2=\tau$. In this case we have
\be
A_{l_1}=\frac{1}{H^2\omega_{l_1}}\int_{-\tau}^\tau d\eta'\frac{\sin^2(\omega_{l_1}\eta')}{\cos^2(\eta')},\quad B_{l_1}=\frac{1}{H^2\omega_{l_1}}\int_{-\tau}^\tau d\eta'\frac{\cos^2(\omega_{l_1}\eta')}{\cos^2(\eta')}\quad\text{and }C_{l_1}=0. \label{eq:dsabc}
\ee
Therefore positive eigenvalues and the corresponding eigenfunctions of the Pauli-Jordan operator in de Sitter spacetime reduces to
\be
\lambda_{l_1}^{+}=\sqrt{A_{l_1}B_{l_1}}\quad\text{and}\quad u_j(\eta,\vx)=\left(H\cos(\eta)\right)^{\frac{d}{2}-1}\left(\sin(\omega_{l_1}\eta)+i\sqrt{\frac{A_{l_1}}{B_{l_1}}}\cos(\omega_{l_1}\eta)\right)Y_{j}(\theta_1\dots\theta_{d-2},\phi) \label{eq:dsspec}
\ee
respectively. Since $\lambda_{l_1}^+$ depends on $l_1$ via $\omega_{l_1}$ only, we have degenerate SJ modes with degeneracy $\degn_{l_1,d}$ depending on the spacetime dimensions.
\be
\degn_{l_1,d} = \frac{(2l_1+d-2)(l_1+d-3)!}{l_1!(d-2)!}. \label{eq:degen}
\ee
We now take a look at the dependence of the SJ spectrum $\lambda_{l_1}^+$ of Eqn~\eqref{eq:dsspec} on $\omega_{l_1}$ for different sizes of the slab. Dependence of the SJ spectrum on $l_1$ can then be determined from Eqn~\eqref{eq:omegal1} for a given spacetime dimension.
\begin{figure}[htb]
\centering
\includegraphics[height=5cm]{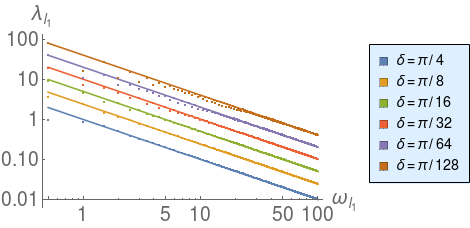}
\caption{A log-log plot of the SJ spectrum $\lambda_{l_1}$ (dotted line) for integer and half-integer values of $\omega_{l_1}$ with $H=1$ and $\tau = \pi/2-\delta$ with different values of $\delta$. Comparison with $\frac{\tan(\tau)}{H^2\omega_{l_1}}$ (solid line)}
\label{fig:dsspec}
\end{figure}
In Fig~\ref{fig:dsspec}, we plot the SJ spectrum for integer and half integer values of $\omega_{l_1}$. In even spacetime dimensions, $\omega_{l_1}$ takes on integer values and in odd spacetime dimensions it takes on half-integer values. We observe that for a fixed size of the slab $\tau<\pi/2$ and for sufficiently large $\omega_{l_1}$, the SJ spectrum of Eqn~\eqref{eq:dsspec} follows $\lambda^+_{l_1} \approx \frac{\tan(\tau)}{H^2(l_1+1)}$. This behaviour of the spectrum is expected, as for large $\omega_{l_1}$ and $\tau<\pi/2$ Eqn~\eqref{eq:dsabc} reduces to
\be
A_{l_1} = B_{l_1} = \frac{1}{H^2\omega_{l_1}}\tan(\tau)\label{eq:ablods}
\ee
as suggested by Eqn.~\eqref{eq:ablargeomega}. The bifurcation in the small $\omega_{l_1}$ region corresponds to the SJ spectrum for integer and half-integer values of $\omega_{l_1}$. The agreement between the SJ spectrum and its large $\omega_{l_1}$ limit gets better with decrease in $\tau$. Larger the $\tau$, larger has to be $\omega_{l_1}$ for the SJ spectrum to agree with its large $\omega_{l_1}$ limit.

The real part of the Wightman function is
%\eq{
%\rew(\eta,\vx;\eta',\vx')=\frac{1}{2}\left(\frac{\cos(\eta)\cos(\eta')}{\alpha^2}\right)^{\frac{d}{2}-1}\sum_{{l_1}=0}^{\infty}\frac{1}{\omega_{l_1}}\left(\sqrt{\frac{B_{l_1}}{A_{l_1}}}\sin(\omega_{l_1}\eta)\sin(\omega_{l_1}\eta')\right.\\
%\left.+\sqrt{\frac{A_{l_1}}{B_{l_1}}}\cos(\omega_{l_1}\eta)\cos(\omega_{l_1}\eta')\right)\sum_{l_2,\dots,l_{d-2},m}Y_{j}(\theta_1,\dots,\theta_{d-2},\phi)Y_{j}^*(\theta_1',\dots,\theta_{d-2}',\phi')
%}
\eq{
\rew(\eta,\vx;\eta',\vx')=\frac{1}{2}\left(H^2\cos(\eta)\cos(\eta')\right)^{\frac{d}{2}-1}\sum_{j\in J}\frac{1}{\omega_{l_1}}\left(\sqrt{\frac{B_{l_1}}{A_{l_1}}}\sin(\omega_{l_1}\eta)\sin(\omega_{l_1}\eta')\right.\nonumber\\
\left.+\sqrt{\frac{A_{l_1}}{B_{l_1}}}\cos(\omega_{l_1}\eta)\cos(\omega_{l_1}\eta')\right)Y_{j}(\theta_1,\dots,\theta_{d-2},\phi)Y_{j}^*(\theta_1',\dots,\theta_{d-2}',\phi')
}
The addition theorem of spherical harmonics suggests that for $d>2$
\be
\sum_{l_2,\dots,l_{d-2},m}Y_{j}(\theta_1,\dots,\theta_{d-2},\phi)Y_{j}^*(\theta_1',\dots,\theta_{d-2}',\phi') = \frac{\degn_{l_1,d}}{\Omega_{d-1}}\frac{l_1!(d-3)!}{(l_1+d-3)!} \gp_{l_1,\frac{d}{2}-1}(\cos(\varphi))
\ee
where degeneracy $\degn_{l_1,d}$ is given by Eqn~\eqref{eq:degen}, $\Omega_{d-1}$ is the total solid angle of $\ms^{d-1}$, $\gp_{n,\nu}$ is the Gegenbauer polynomial of degree $n$ with index $\nu$ and $\varphi$ is the angle between unit vectors pointing at $(\theta_1,\dots,\theta_{d-2},\phi)$ and $(\theta'_1,\dots,\theta'_{d-2},\phi')$. %For $d=2$, this takes a simpler form which will be discussed in Sec~\ref{sec:2dds}. 
Therefore the real part of the Wightman function simplifies to
\be
\begin{split}
\rew(\eta,\vx;\eta',\vx')=\frac{1}{2\Omega_{d-1}}\left(H^2\cos(\eta)\cos(\eta')\right)^{\frac{d}{2}-1}\sum_{{l}=0}^{\infty}\frac{\degn_{l,d}}{\omega_{l}}\frac{l!(d-3)!}{(l+d-3)!}\left(\sqrt{\frac{B_{l}}{A_{l}}}\sin(\omega_{l}\eta)\sin(\omega_{l}\eta')\right.\\
\quad\quad\left.+\sqrt{\frac{A_{l}}{B_{l}}}\cos(\omega_{l}\eta)\cos(\omega_{l}\eta')\right)\gp_{l,\frac{d}{2}-1}(\cos(\varphi)).
\end{split}\label{eq:rewds}
\ee
In Eqn~\eqref{eq:rewds}, we have replaced $l_1$ with $l$, which we will continue with from here on. In order to completely determine the contribution of each mode to the SJ Wightman function, we need to look at the behaviour of $\sqrt{A_{l}/B_{l}}$ with $\omega_{l}$ for different $\tau$. We find this to be different in odd and even dimensions as shown in Fig.~\ref{fig:dsa/bo} and \ref{fig:dsa/be}.

\begin{figure}[htb]
\centerline{\begin{tabular}{cc}
\includegraphics[height=4cm]{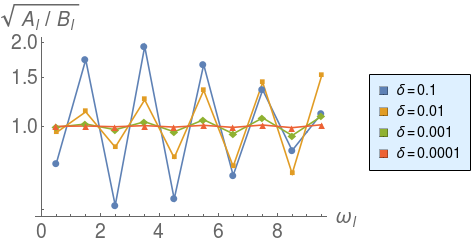} &
\includegraphics[height=4cm]{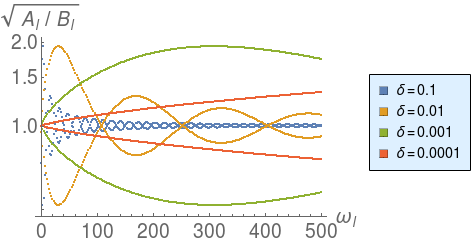} \\
\end{tabular}}
\caption{Log plots of $\sqrt{A_{l}/B_{l}}$ vs $\omega_l$ in odd dimensions for $\tau=\pi/2-\delta$ and $H=1$. In the left, we have the plots for first few $\omega_l$ to show that $\sqrt{A_{l}/B_{l}}$ follows different trends for $\omega_l$ of the form $n+1/2$ with odd and even $n$. In the right, we zoom out to include higher $\omega_l$'s to see how the trend goes.}
\label{fig:dsa/bo}
\end{figure}
In odd dimensions, where $\omega_l$ takes half-integer values, $\sqrt{A_l/B_l}$ follows different trends for $\omega_l$ of the form $(n+1/2)$  with even and odd $n$, but converges to unity in large $\omega_l$ limit as expected from Eqn.~\eqref{eq:ablods}. In the full spacetime limit $\tau\rightarrow\pi/2$, $\sqrt{A_l/B_l}\rightarrow 1$ for any given $\omega_l$. This means that in full spacetime limit, SJ modes agrees with the conformal modes,which is same as Euclidean modes for a conformally coupled massless scalar field\footnote{See chapter~5.4 of \cite{birrell} for a discussion on this.}. This is in agreement with \cite{Aslanbeigi:2013fga}.

\begin{figure}[htb]
\centerline{\begin{tabular}{cc}
\includegraphics[height=4cm]{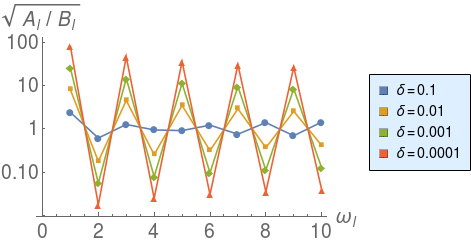} &
\includegraphics[height=4cm]{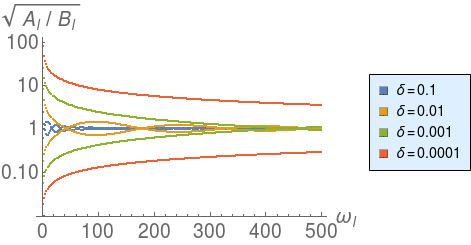} \\
\end{tabular}}
\caption{Log plots of $\sqrt{A_{l}/B_{l}}$ vs $\omega_l$ in even dimensions for $\tau=\pi/2-\delta$ and $H=1$. In the left, we have the plots for first few $\omega_l$ to show that $\sqrt{A_{l}/B_{l}}$ follows different trends for odd and even $\omega_l$. In the right, we zoom out to include higher $\omega_l$'s to see how the trend goes.}
\label{fig:dsa/be}
\end{figure}
In even dimensions, $\omega_l$ takes integer values. $\sqrt{A_l/B_l}$ follows different trends for even and odd $\omega_l$ and converge to unity in large $\omega_l$ limit. In the full spacetime limit $\tau\rightarrow\pi/2$, $\sqrt{A_l/B_l}\rightarrow 0$ for even $\omega_l$ and $\sqrt{A_l/B_l}\rightarrow\infty$ for odd $\omega_l$. In both these cases, the SJ modes are ill defined. This suggests that the SJ vacuum does not exist in even dimension de Sitter spacetimes, which is in agreement with \cite{Aslanbeigi:2013fga}. It will however be interesting to find the closed form for the series in Eqn.~\eqref{eq:rewds} for a given $\tau<\pi/2$ and then to look at the behaviour of $\wsj$ in full de Sitter limit. %\blue{[[Need to do this]]}
Now we have obtained the SJ vacuum in de Sitter slab spacetime of arbitrary dimensions. We now move on to compare the SJ spectrum we have obtained to that obtained in \cite{Surya:2018byh} in a causal set approximated by a slab of 2d and 4d de Sitter spacetime.

\subsubsection{Comparison with numerical results}
SJ vacuum in a causal set approximated by a slab of 2d and 4d de Sitter spacetime has been studied in detail in \cite{Surya:2018byh}. The result obtained by them is in contrast with what we obtained above. %the result of \cite{Aslanbeigi:2013fga} in continuum de Sitter. In \cite{Aslanbeigi:2013fga}, they%
We showed that in full de Sitter spacetime $(\tau\rightarrow\pi/2)$ limit, the SJ modes and hence the SJ vacuum for the conformally coupled massless scalar field reduces to the Euclidean modes in odd spacetime dimensions and are not well defined in even spacetime dimensions. On the other hand, in \cite{Surya:2018byh}, the authors showed that the SJ vacuum obtained in a causal set approximated by a de Sitter slab in the full spacetime limit is well defined but does not correspond to any of the known vacua. It is therefore natural to compare the SJ spectrum and modes to look for the source of discrepancy. It is difficult to compare the SJ modes of the continuum and discrete de Sitter, as they are the functions of coordinates but comparing the SJ spectrum in the continuum and discrete case is straightforward. In this section we compare the SJ spectrum for conformally coupled massless scalar field in 2d and 4d de Sitter slab with those obtained by \cite{Surya:2018byh} in a causal set approximated by these spacetimes. We leave the comparison of the SJ modes for future work.

We start with 2d de Sitter spacetime for which we have $\xi=0$,
\be
\psi_j(x) = \frac{1}{\sqrt{2\pi}}e^{ijx}\quad\text{and}\quad \omega_{j} = |j|,
\ee\
where $j\in\mz$ and $x\sim x+2\pi$. We observed in Fig~\ref{fig:dsspec} that for a fixed size of the slab $\tau<\pi/2$ and for sufficiently large $l$, the SJ spectrum of Eqn~\eqref{eq:dsspec} follows $\lambda^+_{j} \approx \frac{\tan(\tau)}{H^2|j|}$, where the degeneracy 
\be
\degn_{|j|,2}=\left\{\begin{tabular}{cc} $2$ & for $|j|\neq 0$ \\ $1$ & for $|j|=0$.\end{tabular}\right.
\ee
\begin{figure}[htb]
\centerline{\begin{tabular}{cc}
\includegraphics[height=3.5cm]{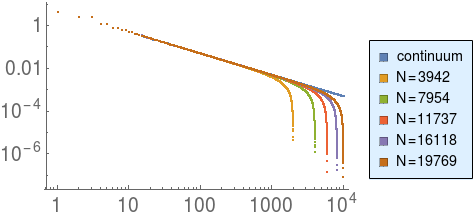} &
\includegraphics[height=3.5cm]{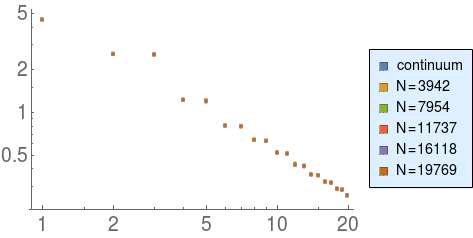}
\end{tabular}}
\caption{In the left we have a log-log plot of the continuum SJ spectrum $\lambda$ and causal set SJ spectrum $\lambda/\rho$ in a slab of 2d de Sitter spacetime with $\tau=1.2$ and $H=1$, where $\lambda$ is the positive eigenvalue of $i\hd$ and $\rho$ is the sprinkling density. In the right we have the same plot but only for the first few eigenvalues. $N$ here denotes the number of causal set elements which is proportional to the sprinkling density $\rho$.}
\label{fig:2ddscontdisc}
\end{figure}
Fig~\ref{fig:2ddscontdisc} shows us discrepancies between the continuum and the causal set SJ spectrum in a slab of 2d de Sitter spacetime. We observe a knee in the causal set SJ spectrum which occurs at smaller eigenvalue for larger $N$. This knee is also observed in the causal set SJ spectrum obtained in chapter~\ref{ch.2ddiamsj} of this thesis and other works \cite{Surya:2018byh}. This effect can clearly be attributed to the natural short distance cut-off in causal set.

Next, we move on to 4d de Sitter spacetime for which we have $\xi=1/6$,
\be
\psi_j(\vx) = Y_{l,l_2,l_3}(\theta_1,\theta_2,\phi)\quad\text{and}\quad \omega_{l} = (l+1),
\ee\
where $Y_{l,l_2,l_3}$ are the spherical harmonics on $\ms^3$ where $l,l_2\in \mz^+\cup\{0\}$ and $l_3\in \mz$ satisfying $l\geq l_2\geq |l_3|$. We observed in Fig~\ref{fig:dsspec} that for a fixed size of the slab $\tau<\pi/2$ and for sufficiently large $l$, the SJ spectrum of Eqn~\eqref{eq:dsspec} follows $\lambda^+_{l} \approx \frac{\tan(\tau)}{H^2\omega_l}$, where the degeneracy 
\be
\degn_{l,4}=\left(l+1\right)^2.
\ee
\begin{figure}[htb]
\centerline{\begin{tabular}{cc}
\includegraphics[height=3.5cm]{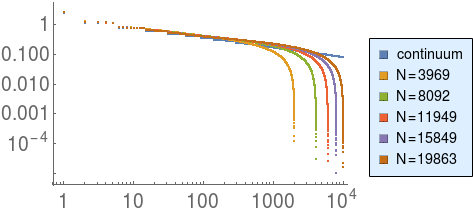} &
\includegraphics[height=3.5cm]{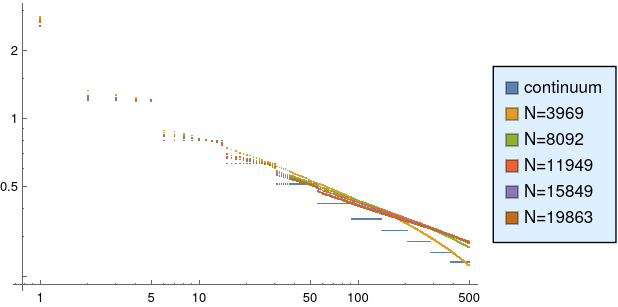}
\end{tabular}}
\caption{In the left we have a log-log plot of the continuum SJ spectrum $\lambda$ and causal set SJ spectrum $\lambda/\rho$ in a slab of 4d de Sitter spacetime with $\tau=1.2$ and $H=1$, where $\lambda$ is the positive eigenvalue of $i\hd$ and $\rho$ is the sprinkling density. In the right we have the same plot but only for the first few eigenvalues. $N$ here denotes the number of causal set elements which is proportional to the sprinkling density $\rho$.}
\label{fig:dscontdisc}
\end{figure}
Fig~\ref{fig:dscontdisc} shows us differences between the continuum and the causal set SJ spectrum in a slab of 4d de Sitter spacetime. We observe that the degeneracy of the continuum SJ spectrum is lifted in the causal set SJ spectrum. This is because of the nature of the sprinkling which doesn't allow causal sets sprinkled on a de Sitter spacetime to be foliated into causal sets sprinkled on spherical hypersurfaces. We also observe a knee in the causal set SJ spectrum which occurs at smaller eigenvalue for larger $N$, similar to the one observed in 2d SJ spectrum in Fig~\ref{fig:2ddscontdisc} which can be attributed to the natural short distance cut-off in causal set. Apart from the knee, in 4d we also observe another difference between the continuum and the discrete SJ spectrum. We see that before the knee, the discrete SJ spectrum instead of going along with the continuum SJ spectrum, goes above it. These features of the spectrum are yet to be understood and will be a subject for future work. %\blue{[[This may have to do with the fact that unlike in 2d de Sitter, in 4d the causal set Green's function does not correspond to the exact continuum Green's function because the causal set Green's function is the link matrix and links corresponds to light-like related points only in an approximate sense.]]}

\subsection{FLRW spacetime}\label{sec:flrw}
Friedmann-Lema{\^i}tre-Robertson-Walker (FLRW) spacetimes is another set of interesting cosmological spacetimes, which are exact solutions of the Einstein equation that describes homogeneous, isotropic and expanding universe. They are described by a metric $g$ leading to the invariant length element given by
%\be
%ds^2 = -dt^2+a(t)^2\left(dr^2+S_\kcurv(r)^2\left(d\theta^2+\sin^2\theta d\phi^2\right)\right),\label{eq:frwsp}
%\ee
\be
ds^2 = a(\eta)^2\left(-d\eta^2 + dr^2+S_\kcurv(r)^2\left(d\theta^2+\sin^2\theta d\phi^2\right)\right),\label{eq:frwsp}
\ee
where $r >0$, $\theta\in[0,\pi]$ and $\phi\sim\phi+2\pi$. $\eta$ is the conformal time which is related to the physical (cosmological) time $t$ as $dt=a(\eta)d\eta$. $S_\kcurv(r)$ is given by
\be
S_\kcurv(r)=\left\{\begin{tabular}{lc}
$\sin(r\sqrt{\kcurv})/\sqrt{\kcurv}$&for $\kcurv>0$\\
$r$&for $\kcurv=0$\\
$\sinh(r\sqrt{|\kcurv|})/\sqrt{|\kcurv|}$&for $\kcurv<0$
\end{tabular}\right.
\ee
$\kcurv = -1,0,1$ for negative, zero and positive curvature respectively. $a(\eta)$ is determined by solving the Friedmann equations Eqn.~\eqref{eq:fr} and are different for different type of matter content of the universe.
\be
\left(\frac{a'}{a^2}\right)^2 = \frac{8\pi G}{3}\rho +\frac{\Lambda}{3} - \frac{\kcurv}{a^2}\quad\text{and}\quad\frac{1}{a^2}\left(\frac{a''}{a}-\left(\frac{a'}{a}\right)^2\right) = -\frac{4\pi G}{3}(\rho+3p) +\frac{\Lambda}{3}, \label{eq:fr}
\ee
where $a'\equiv da/d\eta$, $G$ is the universal gravitational constant, $\rho$ is the matter and/or radiation density, $p$ is the pressure and $\Lambda$ is the cosmological constant. We have $p=0$ for matter (non-relativistic) and $p=\rho/3$ for radiation. % and $p=-\rho$ for $\Lambda$.
We study here the specific case of a flat universe i.e., $\kcurv=0$, for which we have $S_{\kcurv}(r) = r$, which implies a flat spatial slice on which the length element can simply be written in terms of Cartesian coordinates as $dx_1^2+dx_2^2+dx_3^2$ and therefore Eqn.~\eqref{eq:frwsp} can be written as
\be
ds^2 = a(\eta)^2\left(-d\eta^2 + dx_1^2 + dx_2^2 + dx_3^2\right), \label{eq:frwflat}
\ee
Klein-Gordon equation for a conformally coupled massless scalar field in flat FLRW spacetimes is given by
\be
-a(\eta)^{-d}\partial_\eta(a(\eta)^{d-2}\partial_\eta\Phi)+a(\eta)^{-2}\sum_{i=1}^3\partial_i^2\Phi - \xi R\Phi = 0.
\ee
The solutions of the above equation can be written as $\Phi=a(\eta)^{1-\frac{d}{2}}\Phit$, where $\Phit$ is the solution of the Klein-Gordon equation in 4d Minkowski spacetime i.e., $\partial_\mu\partial^\mu\Phit=0$, where $\mu\in\{0,1,2,3\}$.
\vskip 0.1in
%The approach that we have developed for computing the SJ modes in Sec~\ref{sec:cuss} works only if we have a compact spatial hypersurface. 
The SJ formalism for QFT is well defined in a bounded spacetime and therefore we compactify the flat spatial hypersurface $\spac$ into a 3-torus $\mt^3$ of radius $L$. This compactification leaves $\spac$ non-isomorphic globally but local properties remain the same as that of the original $\mr^3$. This ensures that the modified spacetime is the solution of the Einstein equation with same stress energy momentum tensor as the original FLRW spacetime. In the rest of this section we work with the dimensionless coordinates $\frac{\eta}{L}\rightarrow\eta$ and $\frac{x_i}{L}\rightarrow x_i$. Solving Eqn.~\eqref{eq:fr} with $\kcurv=0$ for $a(\eta)$, we find that
\be
a(\eta)=\left\{\begin{tabular}{llc}
$\frac{L^2\eta}{2\alpha_r},$&$0\leq\eta<\infty$ &Radiation dominated universe\\\\
$\frac{L^3\eta^2}{9\alpha_m^2},$&$0\leq\eta<\infty$ &Matter dominated universe\\\\
$\frac{1}{H|\eta|},$&$-\infty<\eta<0$ &$\Lambda$ dominated universe,
\end{tabular}\right.
\ee
where $H=\sqrt{\Lambda/3}$. $\alpha_m$ and $\alpha_r$ are constants with a dimension of length.

SJ vacuum in an unbounded flat FLRW spacetimes in the radiation era has been studied in \cite{aas}. In this section, we study them in a slab $\eta\in[\eta_1,\eta_2]$ of a flat FLRW spacetime with $\spac=\mt^3$. In $\mt^3$, eigenfunctions $\{\psi_j\}$ of the operator $\spkg = -\sum_{i=1}^3\partial_i^2$ are given by
\be
\psi_{j}(\vx)=\frac{1}{(2\pi)^{3/2}}\exp\left(i \vec{j}.\vx\right)\quad\text{where}\; \vec{j}\in\mz^3
\ee
and the corresponding eigenvalues are
\be
\omega_j^2 = |\vec{j}|^2. \label{eq:flatomega}
\ee
The SJ spectrum and SJ modes are then given by Eqn~\eqref{eq:lambda} and Eqn~\eqref{eq:sjmodes} respectively with $\Omega(\eta)=a(\eta)$ and constants $A_j,B_j$ and $C_j$ which are found to be as follows.

For the radiation dominated universe, we have
\bea
A_j &=& \frac{L^4}{4\alpha_r^2\omega_j}\left(\frac{\eta_2^3-\eta_1^3}{6}-\frac{\eta_2^2\sin(2\omega_j\eta_2)-\eta_1^2\sin(2\omega_j\eta_1)}{4\omega_j}-\frac{\eta_2\cos(2\omega_j\eta_2)-\eta_1\cos(2\omega_j\eta_1)}{4\omega_j^2}\right.\nonumber\\
&&\quad\quad\left.+\frac{\sin(2\omega_j\eta_2)-\sin(2\omega_j\eta_1)}{8\omega_j^3}\right)\nonumber\\
B_j &=& \frac{L^4}{4\alpha_r^2\omega_j}\left(\frac{\eta_2^3-\eta_1^3}{6}+\frac{\eta_2^2\sin(2\omega_j\eta_2)-\eta_1^2\sin(2\omega_j\eta_1)}{4\omega_j}+\frac{\eta_2\cos(2\omega_j\eta_2)-\eta_1\cos(2\omega_j\eta_1)}{4\omega_j^2}\right.\nonumber\\
&&\quad\quad\left.-\frac{\sin(2\omega_j\eta_2)-\sin(2\omega_j\eta_1)}{8\omega_j^3}\right)\\
C_j &=& \frac{L^4}{4\alpha_r^2\omega_j^2}\left(-\frac{\eta_2^2\cos(2\omega_j\eta_2)-\eta_1^2\cos(2\omega_j\eta_1)}{4}+\frac{\eta_2\sin(2\omega_j\eta_2)-\eta_1\sin(2\omega_j\eta_1)}{4\omega_j}\right.\nonumber\\
&&\quad\quad\left.+\frac{\cos(2\omega_j\eta_2)-\cos(2\omega_j\eta_1)}{8\omega_j^2}\right)\nonumber,
\eea
for the matter dominated universe, we have
\bea
A_j &=& \frac{L^6}{81\alpha_m^4\omega_j}\left(\frac{\eta_2^5-\eta_1^5}{10}-\frac{\eta_2^4\sin(2\omega_j\eta_2)-\eta_1^4\sin(2\omega_j\eta_1)}{4\omega_j}-\frac{\eta_2^3\cos(2\omega_j\eta_2)-\eta_1^3\cos(2\omega_j\eta_1)}{2\omega_j^2}\right.\nonumber\\
&&\left.+\frac{\eta_2^2\sin(2\omega_j\eta_2)-\eta_1^2\sin(2\omega_j\eta_1)}{\frac{4}{3}\omega_j^3}+\frac{\eta_2\cos(2\omega_j\eta_2)-\eta_1\cos(2\omega_j\eta_1)}{\frac{4}{3}\omega_j^4}-\frac{\sin(2\omega_j\eta_2)-\sin(2\omega_j\eta_1)}{\frac{8}{3}\omega_j^5}\right)\nonumber\\
B_j &=& \frac{L^6}{81\alpha_m^4\omega_j}\left(\frac{\eta_2^5-\eta_1^5}{10}+\frac{\eta_2^4\sin(2\omega_j\eta_2)-\eta_1^4\sin(2\omega_j\eta_1)}{4\omega_j}+\frac{\eta_2^3\cos(2\omega_j\eta_2)-\eta_1^3\cos(2\omega_j\eta_1)}{2\omega_j^2}\right.\nonumber\\
&&\left.-\frac{\eta_2^2\sin(2\omega_j\eta_2)-\eta_1^2\sin(2\omega_j\eta_1)}{\frac{4}{3}\omega_j^3}-\frac{\eta_2\cos(2\omega_j\eta_2)-\eta_1\cos(2\omega_j\eta_1)}{\frac{4}{3}\omega_j^4}+\frac{\sin(2\omega_j\eta_2)-\sin(2\omega_j\eta_1)}{\frac{8}{3}\omega_j^5}\right)\nonumber\\
C_j &=& \frac{L^6}{81\alpha_m^4\omega_j^2}\left(-\frac{\eta_2^4\cos(2\omega_j\eta_2)-\eta_1^4\cos(2\omega_j\eta_1)}{4}+\frac{\eta_2^3\sin(2\omega_j\eta_2)-\eta_1^3\sin(2\omega_j\eta_1)}{2\omega_j}\right.\nonumber\\
&&\left.+\frac{\eta_2^2\cos(2\omega_j\eta_2)-\eta_1^2\cos(2\omega_j\eta_1)}{\frac{4}{3}\omega_j^2}-\frac{\eta_2\sin(2\omega_j\eta_2)-\eta_1\sin(2\omega_j\eta_1)}{\frac{4}{3}\omega_j^3}-\frac{\cos(2\omega_j\eta_2)-\cos(2\omega_j\eta_1)}{\frac{8}{3}\omega_j^4}\right),\nonumber\\
\eea
and for the $\Lambda$ dominated universe, we have
\bea
A_j &=& \frac{1}{H^2\omega_j}\left(\frac{\sin^2(\omega_j\eta_1)}{\eta_1}-\frac{\sin^2(\omega_j\eta_2)}{\eta_2}-\omega_j\left(si(2\omega_j\eta_1)-si(2\omega_j\eta_2)\right)\right)\nonumber\\
B_j &=& \frac{1}{H^2\omega_j}\left(\frac{\cos^2(\omega_j\eta_1)}{\eta_1}-\frac{\cos^2(\omega_j\eta_2)}{\eta_2}+\omega_j\left(si(2\omega_j\eta_1)-si(2\omega_j\eta_2)\right)\right)\\
C_j &=&\frac{1}{H^2\omega_j}\left(\frac{\sin(2\omega_j\eta_1)}{2\eta_1}-\frac{\sin(2\omega_j\eta_2)}{2\eta_2}+\omega_j\left(ci(2\omega_j|\eta_1|)-ci(2\omega_j|\eta_2|)\right)\right),\nonumber
\eea
where
\be
si(z) \equiv \int_0^z dx\frac{\sin(x)}{x}\quad\text{and}\quad ci(z) \equiv -\int_z^\infty dx\frac{\cos(x)}{x}.
\ee
In particular for zero modes we have,

for the radiation dominated universe,
\be
\begin{split}
A_0 = \frac{L^4\omega_0}{20\alpha_r^2}\left(\eta_2^5-\eta_1^5\right),\quad B_0 = \frac{L^4}{12\alpha_r^2\omega_0}\left(\eta_2^3-\eta_1^3\right),\quad C_0 = \frac{L^4}{16\alpha_r^2}\left(\eta_2^4-\eta_1^4\right)\\
P_0^2\equiv A_0B_0 = \frac{L^8}{240\alpha_r^4}\left(\eta_2^5-\eta_1^5\right)\left(\eta_2^3-\eta_1^3\right), \quad Q_0\equiv\omega_0B_0 = \frac{L^4}{12\alpha_r^2}\left(\eta_2^3-\eta_1^3\right),
\end{split}
\ee

for the matter dominated universe,
\be
\begin{split}
A_0 = \frac{L^6\omega_0}{567\alpha_m^4}\left(\eta_2^7-\eta_1^7\right),\quad B_0 = \frac{L^6}{405\alpha_m^4\omega_0}\left(\eta_2^5-\eta_1^5\right),\quad C_0 = \frac{L^6}{486\alpha_m^4}\left(\eta_2^6-\eta_1^6\right)\\
P_0\equiv A_0B_0 = \frac{L^{12}}{229635\alpha_m^8}\left(\eta_2^7-\eta_1^7\right)\left(\eta_2^5-\eta_1^5\right), \quad Q_0\equiv\omega_0B_0 = \frac{L^6}{405\alpha_m^4}\left(\eta_2^5-\eta_1^5\right).
\end{split}
\ee

and for the $\Lambda$ dominated universe,
\be
\begin{split}
A_0 = \frac{\omega_0}{H^2}\left(\eta_2-\eta_1\right),\quad B_0 = \frac{1}{H^2\omega_0}\left(\eta_1^{-1}-\eta_2^{-1}\right),\quad C_0 = \frac{1}{H^2}\left(1+\gamma+\log\left|\frac{\eta_1}{\eta_2}\right|\right)\\
P_0\equiv A_0B_0 = \frac{1}{H^4}\left(\eta_2-\eta_1\right)\left(\eta_1^{-1}-\eta_2^{-1}\right), \quad Q_0\equiv \omega_0B_0 = \frac{1}{H^2}\left(\eta_1^{-1}-\eta_2^{-1}\right),
\end{split}
\ee
where $\gamma=0.577$ is the Euler constant.

Now we look at the behaviour of the SJ spectrum given by Eqn~\eqref{eq:lambda} in flat FLRW spacetime. Eqn~\eqref{eq:flatomega} suggests that the SJ spectrum is degenerate where for large enough $\omega_j$ degeneracy goes as $4\pi\omega_j^2$.
\begin{figure}[htb]
\centerline{\begin{tabular}{cc}
\includegraphics[height=4cm]{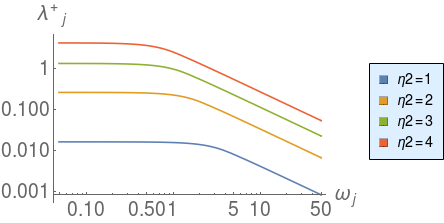} &
\includegraphics[height=4cm]{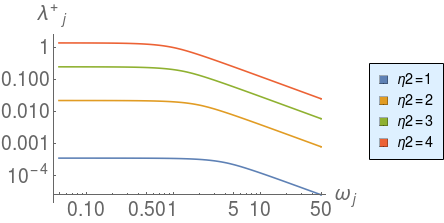}\\
(a) & (b)
\end{tabular}}
\caption{SJ eigenvalues $\lambda_j^+$ vs $\omega_j$ for $\eta_1=0$ and different values of $\eta_2$ with $L=\alpha=1$. (a) Radiation dominated universe, (b) Matter dominated universe}
\label{fig:mrlambda}
\centerline{\includegraphics[height=4cm]{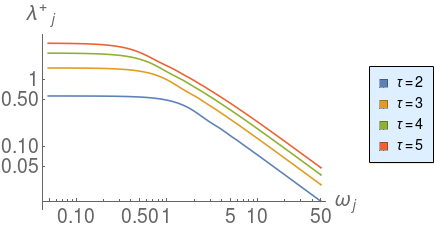}}
\caption{SJ eigenvalues $\lambda_j^+$ vs $\omega_j$ for a slab of $\Lambda$ dominated universe with $\eta_1=-\tau$ and $\eta_2=-1/\tau$ for $H=1$ and different values of $\tau$}
\label{fig:delambda}
\end{figure}

Fig~\ref{fig:mrlambda} and \ref{fig:delambda} suggests for large $\omega_j$, SJ eigenvalues $\lambda_j^+\propto\omega_j^{-1}$ where the proportionality constant depends on the conformal factor $a(\eta)$ and the temporal boundary $\eta_1$ and $\eta_2$. This behaviour is same as the one observed in de Sitter SJ spectrum in Fig~\ref{fig:dsspec}.

SJ modes and the real part of the Wightman function takes the form
\be
u_j(\eta,\vx)=\frac{a(\eta)^{-1}}{(2\pi)^{3/2}}\left(\sin(\omega_j\eta)+\frac{i\sqrt{A_jB_j-C_j^2}-C_j}{B_j}\cos(\omega_j\eta)\right)e^{i\vec{j}.\vx}
\ee
and
\be
\begin{split}
\rew(\eta,\vx;\eta',\vx')=\sum_{\vec{j}\in\mz^3}\frac{(a(\eta)a(\eta'))^{-1}}{16\pi^3\omega_j\sqrt{A_jB_j-C_j^2}}\left(B_j\sin(\omega_j\eta)\sin(\omega_j\eta')\right.\\
\quad\quad\left.+A_j\cos(\omega_j\eta)\cos(\omega_j\eta')-C_j\sin(\omega_j(\eta+\eta'))\right)e^{i\vec{j}.(\vx-\vx')}
\end{split}
\ee
respectively. In order to determine the contribution of each mode to the SJ Wightman function, we need to look at the behaviour of $\frac{A_j}{\sqrt{A_jB_j-C_j^2}}$, $\frac{B_j}{\sqrt{A_jB_j-C_j^2}}$ and $\frac{C_j}{\sqrt{A_jB_j-C_j^2}}$ with $\omega_j$.
\begin{figure}[htb]
\centerline{\begin{tabular}{ccc}
\includegraphics[height=2.6cm]{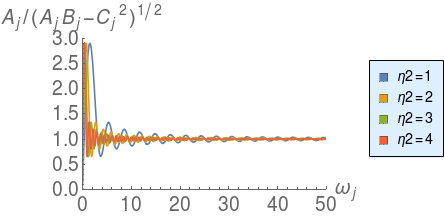} &
\includegraphics[height=2.6cm]{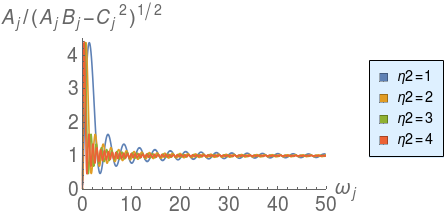} &
\includegraphics[height=2.6cm]{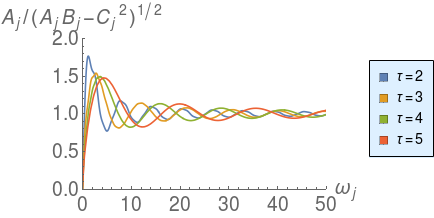} \\
\includegraphics[height=2.6cm]{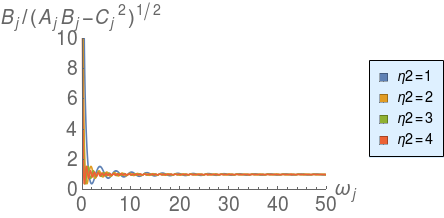} &
\includegraphics[height=2.6cm]{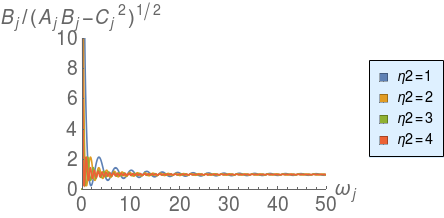} &
\includegraphics[height=2.6cm]{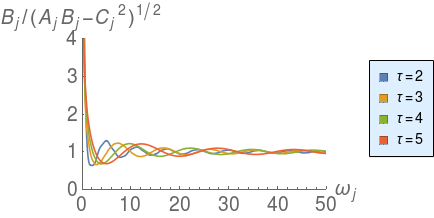} \\
\includegraphics[height=2.6cm]{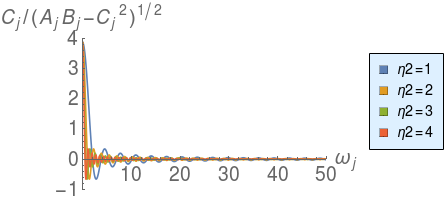} &
\includegraphics[height=2.6cm]{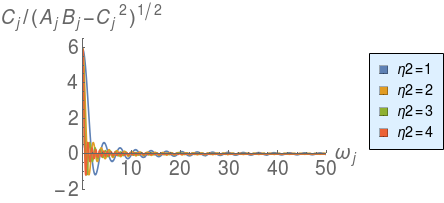} &
\includegraphics[height=2.6cm]{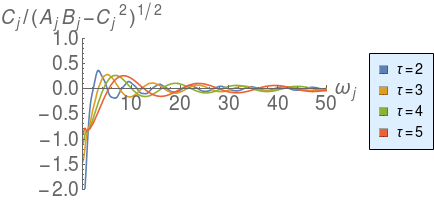}
\end{tabular}}
\caption{$\frac{A_j}{\sqrt{A_jB_j-C_j^2}}$, $\frac{B_j}{\sqrt{A_jB_j-C_j^2}}$ and $\frac{C_j}{\sqrt{A_jB_j-C_j^2}}$ vs $\omega_j$. $\eta_1=0$ and different values of $\eta_2$ with $L=\alpha=1$ for radiation (left) and matter (center) dominated universe. $\eta_1=-\tau$ and $\eta_2=-1/\tau$ for different values of $\tau$ with $H=1$ for $\Lambda$ dominated universe (right).}
\label{fig:flrwconst}
\end{figure}
In Fig~\ref{fig:flrwconst}, we see that for large enough $\omega_j$, the constants $\frac{A_j}{\sqrt{A_jB_j-C_j^2}}$, $\frac{B_j}{\sqrt{A_jB_j-C_j^2}}$ converges to one and $\frac{C_j}{\sqrt{A_jB_j-C_j^2}}$ converges to zero. The convergence gets better with increase in size of the slab. Fig~\ref{fig:flrwconst} shows only non zero modes and should be ignored for the zero mode $\omega_0$.%[[check its validity]]}

For large enough temporal slab i.e. $\eta_2-\eta_1>>1$ where except for first few $\omega_j$ we have $\frac{A_j}{\sqrt{A_jB_j-C_j^2}} \approx \frac{B_j}{\sqrt{A_jB_j-C_j^2}} \approx 1$ and $\frac{C_j}{\sqrt{A_jB_j-C_j^2}} \approx 0$, we can write $\rew$ as
\be
\rew(\eta,\vx;\eta',\vx') = \sum_{\vec{j}\in\mz^3}\frac{(a(\eta)a(\eta'))^{-1}}{2(2\pi)^3\omega_j}\cos(\omega_j(\eta-\eta'))e^{i\vec{j}(\vx-\vx')}+\epsilon(\eta,\vx;\eta'\vx'). \label{eq:hsjasymp}
\ee
This expression for $\rew$ in Eqn~\eqref{eq:hsjasymp} for $\eta_2-\eta_1>>1$ is that of the conformal vacuum %\cite{Parker:1969au,Parker:1974qw,Kaya:2011yu}
for conformally coupled scalar field theory in FLRW spacetime up to the correction term $\epsilon$ whose contribution comes from the first few modes, in particular zero modes.

Here we studied the SJ vacuum in flat FLRW spacetime for the radiation, matter or $\Lambda$ dominated universe. However, we know that the SJ vacuum is global in nature, and therefore the fact that there is a transition from the radiation dominated to the matter dominated phase and from the matter dominated to the $\Lambda$ dominated phase will have an effect on the SJ vacuum in all of these individual eras. It would be interesting to study the effect of these transitions on the SJ vacuum.%, which may help us understand the nature of these transitions.

\section{Discussions}\label{sec.discch3}
In this chapter, we studied the SJ vacuum for a conformally coupled massless scalar field in cosmological spacetimes. These are conformally ultrastatic spacetimes with a time dependent conformal factor. In particular we studied the SJ vacuum in de Sitter and flat FLRW spacetimes, which are of cosmological interest. In order to ensure the finiteness of the spacetimes and hence the boundedness of the integral operators, we performed our calculations in a bounded time slab of these spacetimes. The calculations of the modes in this chapter follows from that of Fewster and Verch \cite{fewster2012} in ultrastatic slab spacetimes and Brum and Fredenhagen \cite{Brum:2013bia} in expanding spacetimes. In flat FLRW, we further compactified the spatial hypersurface into a 3-torus of length $L$ to ensure the finite volume of the spacetime.

We observe that the SJ vacuum in a conformally ultrastatic slab spacetimes depends on the conformal factor non-trivially, i.e., it cannot be obtained by the conformal transformation of the SJ vacuum in the corresponding ultrastatic slab spacetime. However in the full spacetime limit, the SJ vacuum reduces to the conformal vacuum in odd dimensional de Sitter spacetimes and in flat FLRW spacetimes, whereas in even dimensional de Sitter spacetimes it turns out to be ill defined. This confirms the result of \cite{Aslanbeigi:2013fga}, where the SJ vacuum is evaluated directly in the full de Sitter spacetime. Our results, along with that of \cite{Aslanbeigi:2013fga}, disagrees with that of \cite{Surya:2018byh}, where it is observed that the SJ vacuum obtained in a causal set approximated by a 2d and 4d de Sitter slab spacetime converges with the size of the slab but does not resemble any of the known de Sitter vacua. As an attempt to understand this, we compare the SJ spectrum obtained here with that obtained by the authors of \cite{Surya:2018byh} in a causal set approximated by 2d and 4d de Sitter slab spacetime for different sprinkling density. We see that in 2d de Sitter, the causal set and the continuum SJ spectrum agrees with each other upto a UV cut-off determined by the sprinkling density of the causal set. In 4d however, there is a slight disagreement between the continuum and the causal set spectrum even before the natural cut-off. The causal set spectrum goes slightly above the continuum one. This mismatch of the SJ spectrum in the 4d continuum and the causal set de Sitter is yet to be understood. One would also like to compare the SJ modes in de Sitter slab spacetime with the one obtained in a causal set sprinkled on the de Sitter slab. However doing this is computationally very challenging.

In flat FLRW spacetime, we studied the SJ vacuum in the radiation, matter and $\Lambda$ dominated era. We considered a finite time slab of the spacetime in these era and found that in the limit of the infinite slab, the SJ vacuum in each of them reduces to the corresponding conformal vacuum. Though we studied the SJ vacuum in these eras separately, given the global nature of the SJ vacuum, it would however be interesting to study the SJ vacuum in the real universe, in which these eras are stitched together along with the inflationary phase of the universe. To do this, one need to have a better understanding of the transition from one era to another. We leave this for the future work. In the coming chapters we will instead look at another interesting aspect of the observer independent QFT in curved spacetime, which is the spacetime formulation of the entanglement entropy by Sorkin \cite{ssee}.

\chapter{Spacetime entanglement entropy of de Sitter and black hole horizons}\label{ch.sseeds}
In this chapter, we look at an important quantity of interest in the study of QFT in curved spacetimes, which is the entanglement entropy. %Entanglement entropy of quantum fields across the black hole or cosmological horizon is proposed as a candidate for the Bekenstein-Hawking entropy of these horizons \cite{bkls}.
 We study the spacetime formulation of the entanglement entropy proposed by Sorkin in \cite{ssee}, which is determined in a subregion $\cO$ of the spacetime manifold $\cM$ using the Wightman function of a pure state in $\cM$ restricted to $\cO$ and the Pauli-Jordan function. A brief overview of the entanglememt entropy including the Sorkin's Spacetime Entanglement Entropy (SSEE) can be found in chapter~\ref{ch.int} of the thesis.% Wightman function here describes the state of the system, which is observer dependent, unless we fix it to be the SJ Wightman function. But this choice is not always possible, as we see that in de Sitter spacetime with even dimensions, the SJ vacuum in ill defined. %The only observer dependent part of the equation is the choice of the vacuum in the bigger region, which can be made observer independent by choosing the vacuum to be the SJ vacuum. But this choice is not always possible as we will see for the de Sitter spacetime...

%Entanglement entropy (entanglement entropy) has emerged as an important quantity in the study of quantum fields in curved spacetime. Of particular interest is the entanglement entropy of quantum fields across black hole horizons which might partially or fully account for the Bekenstein-Hawking entropy. 
We present an analytic calculation of the SSEE for de Sitter horizons for all $d>2$ for a massive scalar field with effective mass $\mass=\sqrt{m^2+\xi R}$, where $R$ is the Ricci scalar. The importance of studying all types of horizons was pointed out by Jacobson and Parentani~\cite{Jacobson:2003wv} who  showed that thermality and the area law are features of all causal horizons. Cosmological horizons in de Sitter (dS) spacetime are known to have thermodynamic properties similar to their black hole counterparts even though these horizons are observer dependent~\cite{gibbons}. Because of the relative simplicity of these spacetimes, they provide a useful arena to test new proposals for calculating the entanglement entropy.  Our
calculation uses the restriction of the  Bunch-Davies vacuum in the Poincare or conformal patch of de Sitter to the
static patch. Even though $\cO$ is non-compact in the time direction, we show that the generalised eigenvalue equation Eqn.~(\ref{ssee.eq}) can be explicitly solved mode  by mode. We find that  the SSEE 
is independent of the effective mass, which is
in agreement with  the results of Higuchi and Yamamoto \cite{Higuchi:2018tuk} but differs from the
result of \cite{Maldacena:2012xp,Kanno:2014lma,Iizuka:2014rua} where the  entanglement entropy was evaluated on  
the spacelike hypersurface close to the future de Sitter boundary and  found to be mass dependent.
The total SSEE can be calculated using a UV cut-off
in the {angular modes} for the Bunch-Davies vacuum and is therefore proportional to the regularised  area of the horizon.  The
other $\alpha$ vacua however need an additional momentum cut-off.

The obvious  generalisation of our calculation to static black hole
and Rindler horizons is hampered by the spatial non-compactness, except in the case of Schwarzschild de Sitter
black holes. For these spacetimes, the explicit form of the modes is not known, except in $d=2$. In \cite{Qiu:2019qgp}
certain natural boundary conditions for massless minimally coupled modes were used to analyse the thermodynamic
properties of these horizons. We employ these  same boundary conditions to find the mode-wise form for the SSEE in the
static region. Introducing the cut-off in the angular modes again gives us the requisite area dependence. 

In the special case of $d=2$, the calculation can be performed explicitly, and we find that 
the SSEE is constant for both the black  hole as well as the cosmological  horizon. Thus we do not find the logarithmic
behaviour expected from the Calabrese-Cardy formula. A key difference is that  in the earlier calculations,  $\cO$ is compact and the mixed state in $\cO$ is
{\it not}  diagonal with respect to the (Sorkin-Johnston) modes in $\cO$. 
%Although this is surprising, it is also a feature of the standard von Neumann entanglement entropy in $d=2$ de Sitter spacetime. 

We organise this chapter as follows. In Sec.~\ref{gf.sec} we lay out the general framework for the calculation of the
mode-dependent SSEE  for a compact region $\cO $ with respect to  a vacuum state in $\cM \supset \cO$. We find the solutions to the generalised
eigenvalue equation Eqn.~\eqref{ssee.eq} when the modes in $\cO$ are $\cL^2$ orthogonal, and the Bogoliubov
coefficients satisfy certain conditions. We then show that Eqn.~\eqref{ssee.eq} is also well posed  for static spherically symmetric spacetimes with finite
spatial extent. Assuming that the restricted vacuum $W\big|_\cO$ is block diagonal in the modes in $\cO$ we find the general
form of the mode-wise  SSEE. In Sec.~\ref{dS.sec} we review some basics of de Sitter and  Schwarzschild de Sitter
spacetimes. In Sec.~\ref{dSssee.sec} we apply the analysis of Sec.~\ref{gf.sec} to the static patches of $d=4$ de Sitter, starting with  the Bunch-Davies vacuum  in
the conformal patch. Using an angular cut-off we show that the SSEE is proportional to the regularised de Sitter horizon
area. We also extend this calculation of the SSEE to other $\alpha$ vacua in the conformal patch. We find that while the
mode-wise SSEE is still independent of the effective mass, the total SSEE needs an additional cut-off in the radial
momentum. In Sec.~\ref{bh.sec} we  calculate the SSEE for a massless minimally coupled scalar field in  the static patches of Schwarzschild de Sitter spacetimes for $d>2$ using the boundary conditions of  \cite{Qiu:2019qgp}.  An explicit calculation of the $d=2$ case  then follows. 
We discuss the implications of our results  in Sec.~\ref{discussion.sec}. In Appendix~\ref{ddimds.sec} we extend the $d=4$ analysis to all dimensions $d > 2$. 

\section{The SSEE: General Features} 
\label{gf.sec}

In this section we examine the SSEE generalised eigenvalue equation Eqn~\eqref{ssee.eq} using the two sets of modes in
the  regions $\cM, \cO$, where  $\cO \subset \cM$. We show that
when the modes in the subregion $\cO$ are $\mathcal L^2$ orthogonal, and the   Bogoliubov
transformations satisfy certain conditions, it is possible to find the general form for the SSEE. While not entirely
general, this covers a fairly wide  range of cases.

Let  $\{ \bv_{\bk}\} $ be the Klein-Gordon (KG) orthonormal modes in $(\cM,g)$, i.e.,
\eq{
(\bv_{\bk},\bv_{\bk'})_{\cM} = - (\bv_{\bk}^*,\bv_{\bk'}^*)_{\cM} = \delta_{\bk\bk'}\;\;\text{and}\,\,(\bv_{\bk},\bv_{\bk'}^*)_{\cM}=0,\label{kgortho.eq}
}
and $\{\bu_{\bpp}\} $ be those in the globally hyperbolic region $\cO \subset \cM$. Here $(.,.)_{\cM}$ denotes the KG inner product, which is defined in Eqn.~\eqref{kgip.eq}. The corresponding Wightman function in $(\cM,g)$ is
\eq{
  W(\bx,\bx')=\sum_\bk \bv_{\bk}(\bx)\bv_{\bk}^*(\bx').\label{w.eq}
}
Since $\{\bu_{\bpp}\}$ forms a complete KG orthonormal basis in $\cO$, the restriction of $\bv_\bk$ to $\cO$ can be expressed as a linear combination of $\bu_\bpp$ modes, i.e.,
\eq{
\bv_\bk(\bx)\Big|_\cO = \sum_\bpp \left(\alpha_{\bk\bpp}\bu_\bpp(\bx) + \beta_{\bk\bpp}\bu_\bpp^*(\bx)\right),
}
where $\alpha_{\bk\bpp} = (\bu_\bpp,\bv_\bk)_\cO$ and $\beta_{\bk\bpp} = -(\bu_\bpp^*,\bv_\bk)_\cO$. The restriction
of $W(\bx,\bx')$ to  $\cO$ can thus be re-expressed in terms of $\{ \bu_{\bpp}\}$ as 
\eq{
W(\bx,\bx')\Big|_\cO  =  \sum_{\bpp\bpp'}\Big(A_{\bpp\bpp'}\bu_{\bpp}(\bx)\bu_{\bpp'}^*(\bx')+
B_{\bpp\bpp'}\bu_{\bpp}(\bx)\bu_{\bpp'}(\bx') + C_{\bpp\bpp'}\bu_{\bpp}^*(\bx)\bu_{\bpp'}^*(\bx') + D_{\bpp\bpp'}\bu_{\bpp}^*(\bx)\bu_{\bpp'}(\bx')\Big),\label{wo.eq}
}
where
\eq{
A_{\bpp\bpp'} \equiv \sum_{\bk}\alpha_{\bk\bpp}\alpha_{\bk\bpp'}^*,\;\;B_{\bpp\bpp'} \equiv \sum_{\bk}\alpha_{\bk\bpp}\beta_{\bk\bpp'}^*,\;\; C_{\bpp\bpp'} \equiv \sum_{\bk}\beta_{\bk\bpp}\alpha_{\bk\bpp'}^*,\;\; D_{\bpp\bpp'} \equiv \sum_{\bk}\beta_{\bk\bpp}\beta_{\bk\bpp'}^*.\label{abcd.eq}
}
The Pauli-Jordan function $i\Delta(\bx,\bx')=[\hat \Phi(\bx), \hat \Phi(\bx')]$ can be expanded in the modes in $\cO$
to give 
\eq{
i\Delta(\bx,\bx')=\sum_\bpp \left(\bu_{\bpp}(\bx)\bu_{\bpp}^*(\bx') - \bu_{\bpp}^*(\bx)\bu_{\bpp}(\bx')\right).\label{ido.eq}
}
The generalised eigenvalue equation for the SSEE Eqn.~(\ref{ssee.eq}) thus reduces to
\eq{&\sum_{\bpp,\bpp'}\Big(A_{\bpp\bpp'}\left<\bu_{\bpp'},\chi_\br\right>_{\cO} +
  B_{\bpp\bpp'}\left<\bu_{\bpp'}^*,\chi_\br\right>_{\cO}\Big) \bu_{\bpp}(x) + \Big(C_{\bpp\bpp'}\left<\bu_{\bpp'},\chi_\br\right>_{\cO} +
  D_{\bpp\bpp'}\left<\bu_{\bpp'}^*,\chi_\br\right>_{\cO}\Big) \bu^*_{\bpp}(x) \nonumber \\
  &= \mu_\br \sum_{\bpp} \Bigl( \left<\bu_\bpp,\chi_\br\right>_{\cO} \bu_{\bpp}(x)  -
  \left<\bu^*_\bpp,\chi_\br\right>_{\cO} \bu^*_{\bpp}(x) \Bigr),  \label{redssee.eq}
}where $\left<.,.\right>_\cO$ denotes the $\cL^2$ inner product in $\cO$, which is defined in Eqn.~\eqref{l2.eq}.
Note that the coefficients in Eqn~\eqref{abcd.eq} can be evaluated using the relation
\eq{
W(\bx,\bx')\Big|_\cO - W^*(\bx,\bx')\Big|_\cO = i\Delta(\bx,\bx'),
\label{pbc.eq}}
so that
\eq{
A_{\bpp\bpp'}-D_{\bpp\bpp'}^*=\delta_{\bpp\bpp'}&\Rightarrow \sum_\bk \left(\alpha_{\bk\bpp}\alpha_{\bk\bpp'}^* - \beta_{\bk\bpp}^*\beta_{\bk\bpp'}\right) = \delta_{\bpp\bpp'},\label{eq:bc1}\\
B_{\bpp\bpp'}-C_{\bpp\bpp'}^*=0&\Rightarrow \sum_\bk \left(\alpha_{\bk\bpp}\beta_{\bk\bpp'}^* - \beta_{\bk\bpp}^*\alpha_{\bk\bpp'}\right) =0.\label{bc2.eq}
}

We now look for a special class of solutions of Eqn~(\ref{redssee.eq}).

To begin with we consider the case when the  $\mathcal L^2$ inner product is
finite  (this is the case for example if $\cO$ is compact).  We can then use the linear independence of the $\{
\bu_\bpp\}$  to obtain  the coupled equations
\eq{
\sum_{\bpp'}\Big(A_{\bpp\bpp'}\left<\bu_{\bpp'},\chi_\br\right>_{\cO} + B_{\bpp\bpp'}\left<\bu_{\bpp'}^*,\chi_\br\right>_{\cO}\Big) &= \mu_\br \left<\bu_\bpp,\chi_\br\right>_{\cO},\nonumber\\
\sum_{\bpp'}\Big(C_{\bpp\bpp'}\left<\bu_{\bpp'},\chi_\br\right>_{\cO}+ D_{\bpp\bpp'}\left<\bu_{\bpp'}^*,\chi_\br\right>_{\cO}  \Big) &= -\mu_\br \left<\bu_\bpp^*,\chi_\br\right>_{\cO}.\label{redgev.eq}
}
Next, assume that the $\{\bu_\bpp \}$  are  $\mathcal L^2$ orthogonal. Then 
\eq{\chi_\bap (\bx) = R \bu_\bap(\bx) + S \bu_\bap^*(\bx), \label{efun.eq}}
are eigenfunctions of Eqn.~\eqref{ssee.eq} if 
\eq{R A_{\bpp\bap} + S B_{\bpp\bap}  = \mu_{\bap} R \delta_{\bpp\bap} , \nonumber \\
   R C_{\bpp\bap} +  S D_{\bpp\bap} =  -\mu_{\bap} S \delta_{\bpp\bap}.  
   \label{musc.eq}}
 This has non-trivial solutions iff
 \eq{  (A_{\bpp\bap}- \mu_{\bap}\delta_{\bpp\bap} ) (D_{\bpp\bap}  +\mu_{\bap} \delta_{\bpp\bap} ) -
   B_{\bpp\bap}C_{\bpp\bap} =0. \label{RS.eq}
 }
 For $\bpp\neq \bap$ Eqns.~\eqref{eq:bc1} and \eqref{bc2.eq} this requires in particular that%\footnote{This additional condition is {\it not} satisfied for example for a  causal diamond in the $d=2$ cylinder spacetime \cite{Mathur:2021zzl}.}
\eq{|D_{\bpp\bap}|^2=|C_{\bpp\bap}|^2,\;
 \bpp\neq \bap.}
For $\bpp= \bap$, letting $A_{\bap
   \bap}=a_\bap, B_{\bap \bap}=b_\bap, C_{\bap \bap}=c_\bap, D_{\bap \bap}=d_\bap$, we see that $a_\bap, d_\bap$ are real from
 Eqn.~(\ref{abcd.eq}), so that 
 \eq{\mu_\bap^\pm=\frac{1}{2} \Biggl(1 \pm \sqrt{(1 + 2d_\bap)^2 -
     4|c_\bap|^2)}\Biggr), \label{mu.eq}}
 which is real only if \eq{(1 + 2d_\bap)^2 \geq  
   4|c_\bap|^2. \label{reality.eq}}
 This can be shown to be true using the following identity  
 \eq{\sum_{\bk} |\alpha_{\bk\bpp} - e^{i\theta} \beta_{\bk\bpp}|^2 & \geq 0 \nonumber \\
   \Rightarrow 1+2d_{\bpp} - 2 |c_{\bpp}| \cos (\theta + \theta') &\geq 0,}
 where $c_p=|c_p|e^{i \theta'}$. Taking $\theta=-\theta'$ gives us the desired relation.   
The two eigenvalues $\mu^+_\bap, \mu_\bap^-$ moreover satisfy  the relation 
 \eq{\mu^-_\bap=1-\mu^+_\bap, \label{pairs.eq}}  and therefore come in pairs $(\mu^+_\bpp,1-\mu^+_\bpp)$, as expected \cite{ssee}. 

Thus the mode-wise SSEE is 
{\eq{
\cS_\bap= \mu^+_\bap\log(|\mu^+_\bap|)+(1-\mu^+_\bap)\log(|1-\mu^+_\bap|).  \label{sp.eq}
}}
As we will see in the specific case of de Sitter and $d=2$ Schwarzschild de Sitter spacetimes, $\mu^+_\bap, \mu_\bap^- \not
\in (0,1)$ which is again consistent with the expectations of \cite{ssee}.  

In this chapter we are interested in  subregions  $\cO$ which are static and spherically symmetric. While non-compact in the
time direction we require them to be  compact in the spatial direction. Thus the  $\mathcal L^2$ inner product is $\delta$-function
orthogonal and not strictly finite. As we will see, this can still result in a finite $\cS_\bpp$.  In $d=4$ for example, 
\eq{\bu_{plm}(t,r,\theta,\phi) = N_{pl} R_{pl}(r) e^{-ipt} Y_{lm}(\theta, \phi), \quad p > 0, \label{ssph.eq}}
where $t\in(-\infty,\infty), \, r>0$ and $(\theta,\phi)\in \ms^2$,  $N_{pl}$ denotes an overall normalisation constant, and
$p$ is a continuous variable. Thus one has integrals 
over $p$  as well as summations over $l$ and $m$ in  Eqn.~(\ref{redgev.eq}). These modes are clearly $\cL^2$ orthogonal
since 
\eq{
\left<\bu_{plm},\bu_{p'l'm'}\right>_\cO = 2\pi |N_{pl}|^2||R_{pl}||^2\delta(p-p')\delta_{ll'}\delta_{mm'},
}
where $||R_{pl}||$ is the $\cL^2$ norm in the radial direction and finite by assumption. This $\delta$-function
orthogonality implies that for any function $\chi_\br$ (which can be expanded in terms of the complete $\{\bu_{plm}\}$
basis), both sides  of  Eqn.~(\ref{redgev.eq}) are finite.

If $\hW\Big|_\cO$ is block diagonal in the $\{\bu_{plm}\}$ basis
\eq{ A_{plmp'l'm'}=a_{plm} \delta(p-p')\delta_{ll'}\delta_{mm'}, \quad B_{plmp'l'm'}=b_{plm}
  \delta(p-p')\delta_{ll'}\delta_{mm'}, \nonumber\\
C_{plmp'l'm'}=c_{plm} \delta(p-p')\delta_{ll'}\delta_{mm'}, \quad D_{plmp'l'm'}=d_{plm}
\delta(p-p')\delta_{ll'}\delta_{mm'}.  \label{diag.eq} }
This simplifies   Eqn.~(\ref{redssee.eq}) considerably since the delta functions can be integrated  over $p'$ and
similarly,  summed  over $l',m'$.  
Using the ansatz 
\eq{\chi_{plm}(t,r,\theta,\phi) = R \bu_{plm}(t,r,\theta,\phi) + S \bu_{plm}^*(t,r,\theta,\phi), \label{efun.eq}}
for the eigenfunctions requires that Eqn.~\eqref{RS.eq} is satisfied, as before. This yields the same form for $\mu_{plm}^\pm$ as Eqn.~\eqref{mu.eq} and hence the SSEE Eqn.~\eqref{sp.eq}.

\section{Preliminaries}
\label{dS.sec}
We briefly review de Sitter and Schwarzschild de Sitter spacetimes.

de Sitter spacetime dS$_d$ in $d$ dimensions is a hyperboloid of ``radius'' $H^{-1}$  in $d+1$ dimensional  Minkowski
spacetime $\mr^{1,d}$. If $X_i$'s are the coordinates in $\mr^{1,d}$, it is the hypersurface defined by 
\be
-X_0^2+\sum_{i=1}^d X_i^2 = \frac{1}{H^2}.
\ee
We restrict our discussion to $d=4$ in what follows. The higher dimensional generalisation is relatively
straightforward and is discussed at the end of this section. Global dS$_4$ can be parameterized\footnote{For a detailed review of coordinate systems in dS, see \cite{stromds}.} by 4 coordinates $(\tau,\theta_1,\theta_2,\theta_3)$, where $\tau$ is the global time and $\theta_i$'s are coordinates on a 3-sphere $\ms^3$. In these coordinates the metric can be written as
\be
ds^2=-d\tau^2+\frac{1}{H^2}\cosh^2(H\tau)\,d\Omega_3^2,
\ee
where, $\tau\in\mr$, $\theta_1,\theta_2\in[0,\pi]$ and $\theta_3\in[0,2\pi]$.
The causal structure of this spacetime becomes evident if we make the  coordinate transformation $\cosh(H\tau)=1/\cos T$,
so that 
\be
ds^2=\frac{1}{H^2\cos^2T}(-dT^2+d\Omega_3^2),\quad T\in\bigg(-\frac{\pi}{2},\frac{\pi}{2}\bigg).
\label{confmetric}
\ee
\bfig[h!]
    \centering
    \includegraphics[height=5cm]{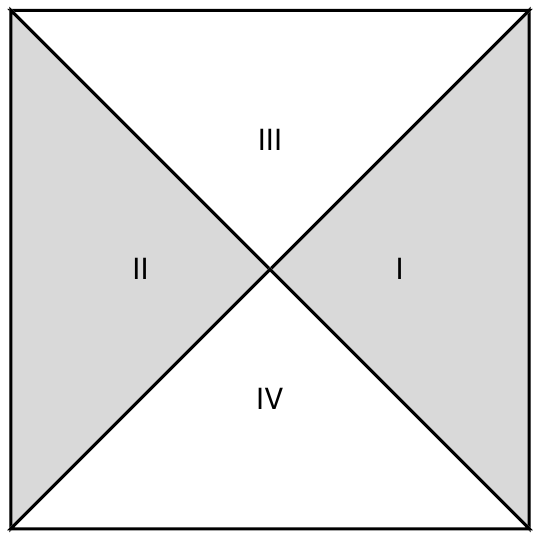}
    \caption{The Penrose diagram for dS can be deduced from the metric Eqn.~\eqref{confmetric}. Here 2 dimensions are
      suppressed so that each point represents an $\ms^2$ and each horizontal slice an $\ms^3$. dS is spatially compact,
      the left and right vertical lines correspond to $\theta_1=0,\pi$. The lower, upper horizontal
      lines correspond to $T=-\pi/2,\pi/2$ and represent the past, future null infinities respectively.} 
    \label{fig:ds}
\efig
In Fig.~\ref{fig:ds}, the region $\Righf\cup\fut$ is the right conformal patch or the Poincar\'e patch. It can be described by the metric
\be
    ds^2=\frac{1}{H^2\eta^2}\left(-d\eta^2 + dr^2 + r^2(d\theta^2+\sin^2\theta d\phi^2)\right),
\ee
where $\eta\in(-\infty,0),\;r\in[0,\infty)$ and $(\theta,\phi)\in\ms^2$. Its subregion $\Righf$ is the right static patch and is covered by the coordinates $x\in[0,1)$, $t\in\mr$, $(\theta,\phi)\in\ms^2$ which are related to the coordinates in the conformal patch by
\be
x= -\frac{r}{\eta},\quad e^{-t} = \sqrt{\eta^2-r^2},  \label{eq:conftostat}  
\ee
so that the static patch  metric is 
\be
ds^2 = \frac{1}{H^2}\left(-(1-x^2)dt^2 + \frac{dx^2}{1-x^2} + x^2d\Omega_2^2\right).
\ee

We now turn to the Schwarzschild-de Sitter spacetime, whose conformal diagram is shown in Fig \ref{fig:sds}. 
\begin{figure}[h!]
\centering
\includegraphics[height=5cm]{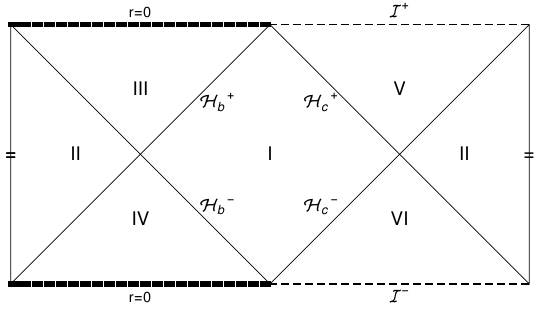}
\caption{{The Penrose diagram for the $d>2$ Schwarzschild de Sitter spacetime where each point represents an
    $\ms^{d-2}$ and each horizontal slice represents an $\ms^{d-2}\times\ms^1$. Region $I$ and $II$ are the static patches,
    and $\cH_b^{\pm}$ and $\cH_c^{\pm}$ are the black hole and the cosmological horizons respectively.}}
\label{fig:sds}
\end{figure}
It has two sets of horizons each  in regions
$I$ and $II$: the cosmological horizons $\cH_c^\pm$  and the black hole horizons $\cH_b^\pm$,  with the latter contained ``inside'' the
former.

In either of the static patches,  $I$ or $II$, the  metric of the Schwarzschild de Sitter spacetime is 
\eq{
ds^2 &= -f(r)dt^2+\frac{dr^2}{f(r)}+r^2(d\theta^2+\sin^2(\theta)d\phi^2),\quad f(r) = 1-\frac{2M}{r}-H^2r^2 \label{sdssc.eq}\\
&= -f(r)dudv +r^2(d\theta^2+\sin^2(\theta)d\phi^2),\label{sdstc.eq}
}
where $H$ is the Hubble constant and $M$ is the mass of the black hole $r\in(r_b,r_c)$, $t\in(-\infty,\infty)$ and
$(\theta,\phi)\in \ms^2$. Here $r_b<r_c$ are the real and positive solutions of $f(r)=0$, which correspond to the black
hole and the cosmological horizons $\cH_b^\pm , \cH_c^\pm$ respectively. They are related to $M$ and $H$ as
\eq{
M= \frac{r_br_c(r_b+r_c)}{2(r_b^2+r_c^2+r_br_c)},\quad H^2=\frac{1}{r_b^2+r_c^2+r_br_c}.
}
$u,v\in(-\infty,\infty)$ are the light-cone coordinates defined
as $u=t-\rs$ and $v=t+\rs$, where $d\rs=\frac{dr}{f(r)}$ \cite{Anderson:2020dim}.

As in the Schwarzschild spacetime, there is a Kruskal extension beyond the black hole and  the
cosmological horizon, given respectively by 
\eq{
U_b = -\kappa_b^{-1}e^{-\kappa_b u}\quad\text{and}\quad V_b = \kappa_b^{-1}e^{\kappa_b v},\label{schtokrusU.eq}\\ 
\quad U_c = \kappa_c^{-1}e^{\kappa_c u}\quad\text{and}\quad V_c = -\kappa_c^{-1}e^{-\kappa_c v},
\label{schtokrusV.eq}}
where $\kappa_b$ and $\kappa_c$ are the surface gravity of the black hole and the cosmological horizon
respectively \cite{Anderson:2020dim}
\eq{\kappa_b= \frac{H^2}{2r_b}(r_c-r_b)(r_c+2r_b), \quad \kappa_c=\frac{H^2}{2r_c}(r_c-r_b)(2r_c+r_b) . }
In these coordinates, the spacetime metrics in  region $\cB \equiv I\cup II\cup III\cup IV$ and  $\cC\equiv I\cup II\cup V\cup VI$ in
Fig.~\ref{fig:sds}  are, respectively 
\eq{ds_{\cB}^2 &= -f(r)e^{-2\kappa_b\rs}dU_bdV_b + r^2(d\theta^2+\sin^2(\theta)d\phi^2), \\ ds_{\cC}^2 & =
  -f(r)e^{2\kappa_c\rs}dU_cdV_c + r^2(d\theta^2+\sin^2(\theta)d\phi^2).}

\section{SSEE for Cosmological and Black hole Horizons}\label{sseedsbh.sec}
We now calculate the SSEE for the static regions in both the de Sitter and Schwarzschild de Sitter spacetimes. In both cases, since the static region is spatially finite, Eqn.~\eqref{redssee.eq} is well-defined.

\subsection{The SSEE in the de Sitter Static Patch}\label{dSssee.sec}

We are interested in finding the entanglement across the intersection sphere $\ms^2\simeq \Righf\cap\Leff$ in
Fig.~\ref{fig:ds}. The associated sub-region $\cO$ of interest to the SSEE calculation is therefore the right or left
static region. Without loss of generality we henceforth pick the right static region $R$ and take as the larger region $\cM \supset \cO$  the conformal patch
$\Righf\cup\fut$

In the larger region $\Righf\cup\fut$, we have a well known, complete, Klein-Gordon orthonormal set of modes for a free scalar field of effective mass $\mass=\sqrt{m^2+\xi R}$ (where $\xi=1/6$ and $R$ is the Ricci scalar, which is a constant for de Sitter spacetimes) called the Bunch-Davies modes~\cite{Bunch:1978yq}. These are given by $\bv_{klm}\equiv\varphi_{kl}(\eta,r)Y_{lm}(\theta,\phi)$, where
\be
\varphi_{kl}(\eta,r) = \frac{H e^{-\frac{i\pi}{2}(l+\frac{1}{2})}}{\sqrt{2k}}(-k\eta)^{\frac{3}{2}}e^{\frac{i\nu\pi}{2}}H_\nu^{(1)}(-k\eta)j_l(kr). \label{eq:bdmodes}
\ee
Here $k\in{\mathbb{R}}^+,\,l\in\{0,1,2,\dots\}$, $m\in\{-l,..,0,..,l\}$, $Y_{lm}$s are the spherical harmonics on ${\mathbb{S}^2}$, $j_l$ is the spherical Bessel function and $H_\nu^{(1)}$ is the Hankel function of the first kind with
\be
    \nu= \sqrt{\frac{9}{4}-\frac{\mass^2}{H^2}}, 
    \label{nudef}
    \ee
and satisfies the plane-wave behaviour expected at late times.

In the region $\Righf$ we have a complete set of Klein-Gordon orthonormal modes \cite{Higuchi:1986ww} given by $\bu_{plm}\equiv\psi_{pl}(t,x)Y_{lm}(\theta,\phi)$ where
\be
    \psi_{pl}(t,x) \equiv \sqrt{2\sinh(\pi p)}\,N_{pl} \,U_{pl}(x)e^{-ipt},\quad p\in\mr^+, \label{eq:stat_modes}
\ee
where
\be
    N_{pl}=\frac{H}{2\sqrt{2}\pi\Gamma(l+\frac{3}{2})}\Gamma\left(\frac{\frac{3}{2}+l-ip+\nu}{2}\right)\Gamma\left(\frac{\frac{3}{2}+l-ip-\nu}{2}\right),
    \label{npldef}
\ee
and
\be
    U_{pl}(x)=x^l(1-x^2)^{\frac{-ip}{2}}{_2F_1}\left(\frac{\frac{3}{2}+l-ip+\nu}{2},\frac{\frac{3}{2}+l-ip-\nu}{2},l+\frac{3}{2};x^2\right).
    \label{Upldef}
\ee
As shown in \cite{Higuchi:1986ww}, 
\begin{enumerate}
	\item $U_{pl}(x)=U_{-pl}(x)=U^*_{pl}(x)$, which can be shown using an identity of the Hypergeometric function i.e., ${_2F_1}(a,b,c;z)=(1-z)^{c-a-b}{_2F_1}(c-a,c-b,c;z)$.
	\item $N_{pl}=N^*_{-pl}$, which comes from the identity $\Gamma^*(z)=\Gamma(z^*)$.
        \end{enumerate}
As discussed in Sec.~\ref{gf.sec}, being static and spherically symmetric, the  $\bu_{plm}$  modes are also $\mathcal
L^2$ orthogonal in $\Righf$.

We now proceed to obtain the SSEE for the sub-region $\Righf$ with respect to the Bunch-Davies vacuum in the right conformal patch $\Righf\cup\fut$. As suggested in Sec.~\ref{gf.sec}, we begin by demonstrating that the Bogoliubov coefficients between the Bunch-Davies modes
$\bv_{klm}$ and the static modes $\bu_{plm}$ in $\Righf$ satisfy the criteria Eqn.~\eqref{diag.eq}.

Since the $(\theta,\phi)$ dependence of both sets of modes is given by $Y_{lm}(\theta,\phi)$, which themselves are
linearly independent in $\ms^2$, the Bogoliubov transformation is non-trivial only between $\varphi_{kl}$ and
$\psi_{pl}$ for each $l,m$, i.e., 
\begin{equation}
\varphi_{kl}(\eta,r) = \int_0^\infty dp\,\left(\alpha_{kp}\psi_{pl}(t,x) + \beta_{kp}\psi_{pl}^*(t,x)\right)\label{eq:vtophi}.
\end{equation}
Instead of using the Klein-Gordon inner product to calculate $\alpha_{kp}$ and
$\beta_{kp}$, we can use the $\cL^2$ orthogonality of the $\bu_{plm}$ modes as well as the $\cL^2$ inner product of $\bv_{klm}$ and $\bu_{plm}$ in $\Righf$, so that 
\begin{equation}
\alpha_{kp} = \frac{1}{n_p}\left<\bu_{plm},\bv_{klm}\right>_\Righf\quad\text{and}\quad \beta_{kp} = \frac{1}{n_p}\left<\bu_{pl-m}^*,\bv_{klm}\right>_\Righf,\label{eq:pqip}
\end{equation}
with $n_p = 4\pi\sinh(\pi p)|N_{pl}|^2||U_{pl}||^2$ being the $\cL^2$ norm of the $\bu_{plm}$ modes.
The identity \cite{magnus}
\eq{
\int_0^\infty \!\!dz\, z^\lambda H_\nu^{(1)}(az)J_\mu(bz)
&=a^{-\lambda-1}e^{i\frac{\pi}{2}(\lambda-\nu+\mu)}\frac{2^\lambda(b/a)^\mu}{\pi\Gamma(\mu+1)}\Gamma\left(\frac{\lambda+\nu+\mu+1}{2}\right)\Gamma\left(\frac{\lambda-\nu+\mu+1}{2}\right)\nonumber\\
&\times_2F_1\left(\frac{\lambda+\nu+\mu+1}{2},\frac{\lambda-\nu+\mu+1}{2},\mu+1,\left(\frac{b}{a}\right)^2\right), 
\nonumber \\  &\quad\quad\quad\quad\quad\quad\mathrm{Re}(-i(a\pm b))>0, \, \, \mathrm{Re}(\mu+\lambda+1\pm\nu)>0,
}
 can be used as in  \cite{Higuchi:2018tuk}, to show that
\eq{
\frac{1}{\sqrt{2\pi}}\int_0^\infty dk\;k^{-ip-\frac{1}{2}}\varphi_{kl}(\eta,r) &= 2^{-ip}e^{\frac{\pi p}{2}}N_{pl}\left(\eta^2-r^2\right)^{\frac{ip}{2}}U_{pl}\left(-\frac{r}{\eta}\right)\nonumber\\
&= 2^{-ip}e^{\frac{\pi p}{2}}N_{pl}U_{pl}(x)e^{-ipt},
}
where we have substituted $\lambda=-ip$, $\mu=l+1/2$, $a=-\eta$ and $b=r$. Inverting the above, 
\eq{
\varphi_{kl}(\eta,r) = \frac{1}{\sqrt{2\pi}}\int_{-\infty}^\infty dp\,2^{-ip}k^{ip-\frac{1}{2}}e^{\frac{\pi p}{2}}N_{pl}U_{pl}(x)e^{-ipt},\label{eq:vtor}
}
using which 
\eq{
\alpha_{kp} &= \frac{1}{\sqrt{2\pi}n_p}\int_{-\infty}^\infty dp'\,2^{-ip'}k^{ip'-\frac{1}{2}}e^{\frac{\pi
    p'}{2}}\sqrt{2\sinh(\pi p)}N_{p'l}N_{pl}^*
\int_0^1 dx\,x^2 U_{p'l}(x)U_{pl}(x) \int_{-\infty}^\infty dt e^{-i(p'-p)t}\nonumber\\
&=\frac{2^{-ip}k^{ip-\frac{1}{2}}}{\sqrt{2\pi(1-e^{-2\pi p})}},\label{eq:alpkp}\\
\beta_{kp} &= \frac{1}{\sqrt{2\pi}n_p}\int_{-\infty}^\infty dp'\,2^{ip'}k^{-ip'-\frac{1}{2}}e^{\frac{-\pi p'}{2}}\sqrt{2\sinh(\pi p)}N_{p'l}^*N_{pl}
\int_0^1 dx\,x^2 U_{p'l}(x)U_{pl}(x)\int_{-\infty}^\infty dt e^{i(p'-p)t}\nonumber\\
&= \frac{2^{ip}k^{-ip-\frac{1}{2}}}{\sqrt{2\pi(e^{2\pi p}-1)}}.\label{eq:betkp}
}
Notice that the Hubble constant $H$ drops out of these coefficients. Further calculation shows that 
\begin{equation}
A_{pp'}=\frac{\delta(p-p')}{1-e^{-2\pi p}},\;\;D_{pp'}=\frac{\delta(p-p')}{e^{2\pi p}-1}\;\;\text{and}\;\;B_{pp'}=C_{pp'}=0.\label{eq:cresult}
\end{equation}
This is precisely of the form Eqn.~\ref{diag.eq}, with  $b_{p}=c_{p}=0$, where we have suppressed the $l,m$ indices.
Using the ansatz 
\eq{\chi_{p}^+(t,r) = u_{p}(t,r), \quad \chi_{p}^-(t,r)  =  u_{p}^*(t,r), \label{efun.eq}}
for the generalised eigenfunctions of Eqn.~\eqref{redssee.eq}, we  find the generalised eigenvalues 
\be
\mu^{+}_{p} = \frac{1}{1-e^{-2\pi p}}\quad\text{and}\quad \mu^{-}_{p} =- \, \frac{e^{-2\pi p}}{1-e^{-2\pi p}},
\ee
respectively, for each $p\in\mr^+$.  {Note that $\mu^{+}_{p} \in [1, \infty)$ and  $\mu^{-}_{p} \in (-\infty, 0]$,
  as expected for the SSEE \cite{ssee}.} For a given $p, l, m$ the mode-wise SSEE  is therefore 
\be
\cS_p = -\log(1-e^{-2\pi p}) - \frac{e^{-2\pi p}}{1-e^{-2\pi p}}\log e^{-2\pi p}, \label{eq:hy}
\ee
which agrees with the result of \cite{Higuchi:2018tuk}. Since there is no dependence on $l,m$ there is an infinite degeneracy coming
from  the angular modes $l\in\{0,1,\ldots\}$ and $m\in\{-l,\ldots,0,\ldots,l\}$.

In order to calculate the total SSEE, therefore,  one has to sum over the $l,m$ and integrate over $p \in (0,\infty)$.  For the  Bunch-Davies vacuum the integral  over $p$ is finite. However in the absence of a cut-off in $l$, there is an
infinite degeneracy for every $p$ coming from the angular modes which  leads to an infinite factor in
the total entropy. This ``density of states''  for a given $p$ can be regulated by
introducing a cut-off  $l_\mx$, so that
\eq{
\cS =  \sum_{l=0}^{l_\mx} \sum_{m=-l}^l \int dp \, \cS_p =
\frac{\pi}{6}(l_\mx+1)^2 \simeq \frac{\pi}{6}l_\mx^2\label{eq:sseebdvac}
}
for $l_\mx >>1 $.  $l_\mx$ can in turn be interpreted as coming from the regularised area of the de Sitter horizon $\Righf\cap\Leff
  \simeq \ms^2$.  Let us for the moment suppress one of the angular variables so that the the modes on 
 an $\ms^1$ of radius $H^{-1}$  are $e^{im\phi}$. A UV cut-off  $m_\mx$  corresponds to a minimal angular
   scale $\Delta \phi=2\pi/m_\mx$ and hence a length cut-off  $\ell_c$, where $m_\mx=\frac{2 \pi}{Hl_c}$. Thus $m_\mx$
   is the  circumference of the $\ms^1$ in units of the cut-off.  A similar argument carries over to
   $\ms^2$, where we first place $\theta$ and $\phi$ on similar footing by  writing the spherical harmonics as a Fourier series\cite{Hofsommer1960}
\eq{
Y_{lm}(\theta,\phi)\propto P_l^m(\cos\theta)e^{im\phi}=\sum_{j=-l}^l\tilde{P}_{jl}^m e^{ij\theta}e^{im\phi}. 
}
Thus, we again have the angular cut-offs $\Delta\theta=2\pi/l_\mx, \Delta \phi=2 \pi/l_\mx$, so that $\l_\mx^2 = 4
\pi^2/\Delta \theta \Delta \phi$. For large $l_\mx$ the planar limit of the  region subtended by the solid angle $\Delta
\Omega= \sin \theta \Delta\theta\Delta\phi$ on $\ms^2$ can be taken near the equator, $\theta = 
\pi/2-\epsilon$, where the metric is nearly flat in $(\theta, \phi)$ coordinates: $ds^2\simeq d\epsilon^2 +
d\phi^2$. Thus,  $l_\mx^2\propto 1/\Delta\Omega$ and therefore
\eq{
\cS\propto \frac{A_{c}}{l_c^2}\label{ae.eq}
}
where we have defined a fundamental cut-off  $l_c^2=H^{-2} \Delta\Omega$ and $A_{c}=4 \pi H^{-2}$ is the area of the de
Sitter cosmological horizon. We now look at the SSEE for the $\alpha$ vacua in dS static patch.

\subsubsection{SSEE for the $\alpha$ vacua in dS static patch}\label{alphabeta.sec}

We now compute the SSEE for the $\alpha$ vacua \cite{Mottola:1984ar,Allen:1985ux} $\bv_{klm}^{(\alpha,\beta)}\equiv\varphi^{(\alpha,\beta)}_{kl}(\eta,r)Y_{lm}(\theta,\phi)$ which can be parameterized as 
\be
\varphi^{(\alpha,\beta)}_{kl}(\eta,r)\equiv \cosh(\alpha)\varphi_{kl}(\eta,r)+\sinh(\alpha)e^{i\beta}\varphi^*_{kl}(\eta,r),
\label{alphabetabdmodes}
\ee
where $\alpha\in[0,\infty)$ and $\beta\in(-\pi,\pi)$. 
Expressing these in  terms of the $\bu_{plm}$ modes in $\Righf$ as
\eq{
\varphi^{(\alpha,\beta)}_{kl}(\eta,r) = \int_0^\infty dp\, \left(\alpha^{(\alpha,\beta)}_{kp}\psi_{pl}(t,x) + \beta^{(\alpha,\beta)}_{kp}\psi_{pl}^*(t,x)\right),
}
where
\eq{
\alpha^{(\alpha,\beta)}_{kp}=\frac{1}{n_p}\left<\bu_{plm},\bv^{(\alpha,\beta)}_{klm}\right>_\Righf\quad\text{and}\quad \beta^{(\alpha,\beta)}_{kp}=\frac{1}{n_p}\left<\bu_{pl-m}^*,\bv^{(\alpha,\beta)}_{klm}\right>_\Righf.
}
Using Eqn.~\eqref{alphabetabdmodes}, \eqref{eq:pqip}, \eqref{eq:alpkp} and \eqref{eq:betkp}, we find the coefficients $\alpha^{(\alpha,\beta)}_{kp}$ and $\beta^{(\alpha,\beta)}_{kp}$ to be
\eq{
\alpha^{(\alpha,\beta)}_{kp} = \cosh(\alpha)\alpha_{kp} + \sinh(\alpha)e^{i\beta}\beta_{kp}^* = \frac{2^{-ip}k^{ip-\frac{1}{2}}}{\sqrt{2\pi(1-e^{-2\pi p})}}\left(\cosh(\alpha)+e^{-\pi p}e^{i\beta}\sinh(\alpha)\right),\\
\beta^{(\alpha,\beta)}_{kp} = \cosh(\alpha)\beta_{kp} + \sinh(\alpha)e^{i\beta}\alpha_{kp}^* = \frac{2^{ip}k^{-ip-\frac{1}{2}}}{\sqrt{2\pi(1-e^{-2\pi p})}}\left(e^{-\pi p}\cosh(\alpha)+e^{i\beta}\sinh(\alpha)\right).
}
Further calculation shows that $A^{(\alpha,\beta)}_{pp'} \equiv \int dk\, \alpha^{(\alpha,\beta)}_{kp}\alpha^{(\alpha,\beta)*}_{kp'}$, $B^{(\alpha,\beta)}_{pp'} \equiv \int dk\, \alpha^{(\alpha,\beta)}_{kp}\beta^{(\alpha,\beta)*}_{kp'}$, $C^{(\alpha,\beta)}_{pp'} \equiv \int dk\, \beta^{(\alpha,\beta)}_{kp}\alpha^{(\alpha,\beta)*}_{kp'}$ and $D^{(\alpha,\beta)}_{pp'} \equiv \int dk\, \beta^{(\alpha,\beta)}_{kp}\beta^{(\alpha,\beta)*}_{kp'}$ is of the form
\eq{
A^{(\alpha,\beta)}_{pp'} = a^{(\alpha,\beta)}_p\delta(p-p'),\;D^{(\alpha,\beta)}_{pp'} = d^{(\alpha,\beta)}_p\delta(p-p'),\;B^{(\alpha,\beta)}_{pp'} =C^{(\alpha,\beta)}_{pp'} =0,
}
where
\eq{
	a^{(\alpha,\beta)}_p &=\frac{1}{1-e^{-2\pi p}}\left(\cosh^2(\alpha)+e^{-2\pi p}\sinh^2(\alpha)+e^{-\pi p}\sinh(2\alpha)\cos(\beta)\right),\\
	d^{(\alpha,\beta)}_p &=\frac{1}{1-e^{-2\pi p}}\left(e^{-2\pi p}\cosh^2(\alpha)+\sinh^2(\alpha)+e^{-\pi p}\sinh(2\alpha)\cos(\beta)\right).
}

Generalised eigenvalues $\mu$ is then
\eq{
    \mu^{+(\alpha,\beta)}_{p} = a^{(\alpha,\beta)}_p\quad\text{and}\quad \mu^{-(\alpha,\beta)}_{p} =-d^{(\alpha,\beta)}_p,
}
from which we obtain the SSEE as
\eq{
S^{(\alpha,\beta)} &= (l_\mx+1)^2\int_0^\infty dp\left(a^{(\alpha,\beta)}_p\log(a^{(\alpha,\beta)}_p) - d^{(\alpha,\beta)}_p\log(d^{(\alpha,\beta)}_p)\right).\label{eq:sseeinstatic}
}
Unlike for the SSEE obtained from the Bunch-Davies vacuum $(\alpha=0)$, $S^{(\alpha,\beta)}$ is in general dependent on the cut-off in $p$. As an example, for $(\alpha,\beta)=(1,0)$, we evaluate the integral in Eqn.~\eqref{eq:sseeinstatic} numerically for different cut-offs in $p$ and find that the SSEE depends on both $p_\mx$ and $l_\mx$, and is of the form
\be
S^{(1,0)}= S_p(p_\mx)(l_\mx+1)^2,
\ee
where for large enough $p_\mx$, $S_p(p_\mx)$ is found to be proportional to $p_\mx$ as shown in Fig.~\ref{fig:sponezero}.
\begin{figure}[h!]
\centering\includegraphics[height=5cm]{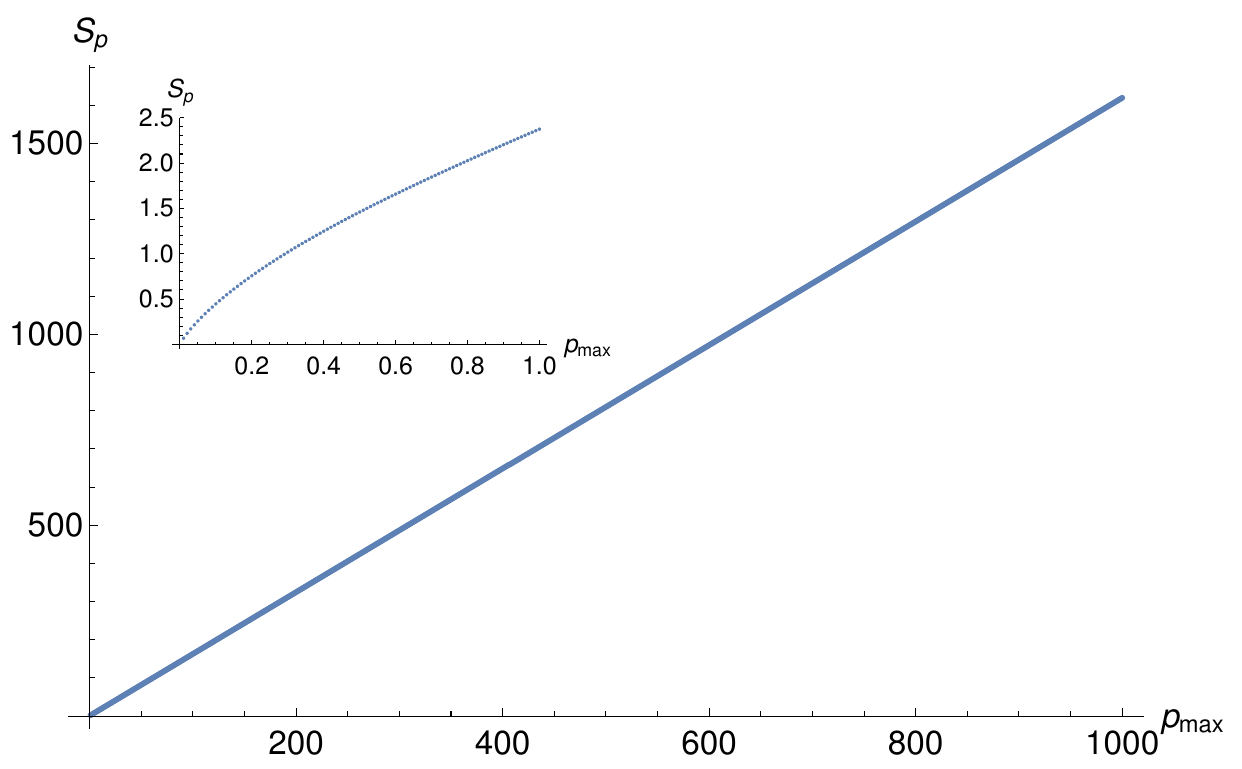}
\caption{A plot of $S_p$ vs $p_\mx$ for $(\alpha,\beta)=(1,0)$. We see that for large enough $p_\mx$, $S_p\propto p_\mx$.}
\label{fig:sponezero}
\end{figure}

As we see here for all the other $\alpha$ vacua, the integral $\int_{0}^p dp \,\cS_p$ may not be finite. This necessitates an additional cut-off $p_\mx$. Therefore the area interpretation of the SSEE (Eqn.~\eqref{ae.eq}) is not generic and is a property of the Bunch-Davies or Euclidean vacuum. The extension to general $d>2$ is shown in Appendix~\ref{ddimds.sec}. 

\subsection{The SSEE of Schwarzschild de Sitter Horizons}\label{bh.sec}

In the Schwarzschild de Sitter spacetime, regions
$I$ and $II$ are static  and spherically  symmetric, which means that the massless scalar field modes are of the form
Eqn.~\eqref{ssph.eq}. What is important for our analysis is that the spacetime is spatially bounded so that the
calculations of Sec.~\ref{gf.sec} can be applied to this case. Without loss of generality we will work with region $I$ to calculate  its
SSEE. 

Although our focus is the $d>2$ case, we begin by suppressing  the angular dependence and considering the $d=2$ case first.
The massless, Klein-Gordon orthonormal scalar field modes are then simply the plane waves in $I$
\eq{
\bu_{p}^{(1)}(u) = \frac{1}{\sqrt{4\pi p}}e^{-ipu}\quad\text{and}\quad \bu_{p}^{(2)}(v) = \frac{1}{\sqrt{4\pi p}}e^{-ipv},\;p>0,
}  as well as in regions $\cB$ and $\cC$ 
\eq{
\bv_{k}^{(1)}(U) = \frac{1}{\sqrt{4\pi k}}e^{-ikU}\quad\text{and}\quad \bv_{k}^{(2)}(V) = \frac{1}{\sqrt{4\pi k}}e^{-ikV},\;k>0,
}
where we have suppressed the $b,c$ indices in $(U_{b,c},V_{b,c}) $ for simplicity. Note that the modes in region $I$
are static, and of the form Eqn.~(\ref{ssph.eq}), with $l\in \{0,1 \}$ representing the left and right movers. This
means that the radial part is $\cL^2$, which ensures finiteness of Eqn.~\eqref{redssee.eq}.

The restriction of  $\bv_{k}^{(1,2)}$ to region I can be written in terms of $\bu_p^{(1,2)}$ as
\eq{
\bv_{k}^{(1,2)} = \int_0^\infty dp\, \left(\alpha_{kp}^{(1,2)}\bu_{p}^{(1,2)} + \beta_{kp}^{(1,2)}\bu_{p}^{(1,2)*}\right),
}
where
\eq{
\alpha_{kp}^{(1)}=\frac{1}{2\pi}\sqrt{\frac{p}{k}}\int_{-\infty}^\infty du\, e^{ipu}e^{-ikU} &= \frac{1}{2\pi
  \kappa}\sqrt{\frac{p}{k}}\left(\frac{k}{\kappa}\right)^{i\frac{p}{\kappa}}e^{\frac{\pi p}{2 \kappa}}\Gamma\left(-i\frac{p}{\kappa}\right)\label{eq:alpha2d},\\
\beta_{kp}^{(1)}=\frac{1}{2\pi}\sqrt{\frac{p}{k}}\int_{-\infty}^\infty du\, e^{-ipu}e^{-ikU}
&=\frac{1}{2\pi\kappa}\sqrt{\frac{p}{k}}\left(\frac{k}{\kappa}\right)^{-i\frac{p}{\kappa}}e^{-\frac{\pi p}{2 \kappa}}\Gamma\left(i\frac{p}{\kappa}\right)\label{eq:beta2d},
}
$\alpha_{kp}^{(2)}=\alpha_{kp}^{(1)*}$ and $\beta_{kp}^{(2)}=\beta_{kp}^{(1)*}$
for the black hole horizon. For the cosmological horizon, they  are complex conjugates of
Eqns.~\eqref{eq:alpha2d} and \eqref{eq:beta2d}. 
Thus, we find that for both $(1,2)$ modes,
\eq{
A_{pp'}=a_p\delta(p-p'),\;\;D_{pp'}=d_p\delta(p-p')\;\;\text{and}\;B_{pp'}=C_{pp'}=0, \label{2ddiag.eq}
}
with 
\eq{
  a_p=\frac{1}{1-e^{-2\pi \frac{p}{\kappa}}}\quad\text{and}\quad d_p=\frac{e^{-2\pi \frac{p}{\kappa}}}{1-e^{-2\pi \frac{p}{\kappa}}}
  .}
Using Eqn.~\eqref{mu.eq} and the dimension-free $\tip\equiv p\kappa^{-1}$, 
we see that 
\eq{
  \cS_\tip
  = -\log\left(1-e^{-2\pi \tip}\right)
  - \frac{e^{-2\pi \tip}}{1-e^{-2\pi \tip}}\log\left(e^{-2\pi \tip}\right).
}
The total entropy is then 
\eq{
\cS=2\int_0^\infty d\tip \cS_\tip =-\frac{2}{\pi }\int_0^1 dz \frac{\log(z)}{1-z} = \frac{\pi}{3},\label{eq:rindssee}
}
where $z=e^{-2\pi \tip}$ and the factor of two comes from the fact that the total entropy is the sum of the entropy of
the  $(1,2)$ modes.  $\cS$ is therefore the same for both horizons.

{We now consider the $d=4$ case by using the boundary conditions of \cite{Qiu:2019qgp}.  As mentioned earlier, the full
  modes are not known, but the boundary conditions suffice to calculate the Bogoliubov coefficients. For our purposes it
  suffices to use the past boundary conditions, since this defines the Klein Gordon norm on the 
limiting initial null surface $\cH_b^- \cup \cH_c^-$ in Region I.  
For the static patch modes, which are of the form Eqn.~\ref{ssph.eq}, these  boundary conditions are 
\eq{
\bu_{plm} = \begin{cases}\frac{1}{\sqrt{4\pi p}r_b}e^{-ipu}Y_{lm}(\theta,\phi)\quad &\text{on} \;\cH_b^-\\ 0 \quad &\text{on}\; \cH_c^-
\end{cases},\quad p>0,\label{statmodes4d.eq}
}
while for the Kruskal modes across the black hole horizon, they are 
\eq{
\bv_{klm} = \begin{cases}\frac{1}{\sqrt{4\pi k}r_b}e^{-ikU_b}Y_{lm}(\theta,\phi)\quad &\text{on} \;\cH_b^-\\ 0 \quad &\text{on}\; \cH_c^-
\end{cases},\quad k>0,\label{krusmodes4d.eq}
}
where $U_b$ is related to $u$ as in Eqn.~\eqref{schtokrusU.eq} \cite{Qiu:2019qgp}. Note that our normalisation differs
  from that of \cite{Qiu:2019qgp} and comes from the KG norm on $\cH_b^-\cup \cH_c^-$ or equivalently $\cH_b^-$ for these
  boundary conditions. The factor $r_b^{-1}$ is dimension
  dependent and comes from the 
  normalisation of the modes along $\cH_b^-$ where $r=r_b$, and the angular measure is $r_b^2 d\Omega$. Thus for any $d>2$,
  one must include a  factor $r_b^{-\frac{d-2}{2}}$ to normalise the modes. Importantly,  these boundary conditions 
    are not appropriate for  $d=2$, since the left and right movers are independent in that case. Setting the modes to
    zero on $\cH_c^-$ in $d=2$ would thus lead to an incomplete set of modes in region I. This is not the case for
    $d>2$, where there is a ``mixing'' or scattering of the left movers on $\cH_b^-$ in region I.  
  
Since the modes vanish along $\cH_c^-$, the KG norm can be defined using only $\cH_b^-$ in region $I$ of
Fig.~\ref{fig:sds}, where $u\in(-\infty,\infty)$ and $U_b\in(-\infty,0)$. As in the de Sitter calculation, the angular
modes for $\bv_{klm}$ and  $\bu_{plm}$  are the same, so that the calculation reduces to the $d=2$ case described
above, with the Bogoliubov coefficients given by Eqn.~\eqref{eq:alpha2d} and \eqref{eq:beta2d}. Note that unlike  $d=2$,
there is only {\it one} set of complete modes, which corresponds in our case to the set $(1)$.

Thus, the SSEE  is given by the $d=2$ SSEE for one mode,  multiplied as in the de Sitter case, by the angular cut-off
term, $(l_\mx+1)^2$ coming from the degeneracy of the generalised eigenfunctions. A similar calculation can be done for
the cosmological horizon, so that we have  
\eq{\cS_\cB\propto  \frac{A_\cB}{\ell_c^2}, \quad \cS_\cC\propto  \frac{A_\cC}{\ell_c^2}.}}

We note that a  calculation of the Rindler and Schwarzschild horizons with similar boundary conditions should in
principle be  { possible}
if
one employs a suitable  radial IR cut-off  to regulate the radial $\cL^2$ norm,  so that Eqn.~\eqref{redssee.eq} is well defined.

\section{Discussion}
\label{discussion.sec} 

In this chapter we  began with an analysis of  the SSEE, using the two sets of modes in $\cM$ and $\cO
\subset \cM$.  We found that  when the Bogoliubov
transformations satisfy certain conditions in both the finite as well as the static, spatially finite cases, there
are real solutions to the eigenvalue equations which come in pairs $(\mu, 1-\mu)$. We then calculated the SSEE
for de Sitter horizons in $d>2$ as well as Schwarzschild de Sitter horizons in $d>2$.
We found that in both cases, the eigenvalues also satisfy the condition  $\mu\not\in (0,1)$, as
expected from the arguments given in \cite{ssee}.  In both spacetimes, we used the cut-off in the angular
modes to demonstrate that  $\cS \propto A$ for $d>2$.  This is as expected, and is a further confirmation that the SSEE
is a good measure of  entanglement entropy. 

When we restrict to $d=2$, however, we find that the SSEE is constant and thus {\it not} of the Calabrese-Cardy form. This
differs from the results of earlier $d=2$ calculations of the SSEE  both in the continuum and using causal set discretisations \cite{saravani2014spacetime,Sorkin:2016pbz,Belenchia:2017cex,Surya_2021} and also from our work in \cite{Mathur:2021zzl}, where the Calabrese-Cardy form
was obtained.
%Given the similarity of our results with that of \cite{Higuchi:2018tuk}, this is not a feature only of the SSEE alone
%but also of the associated Von Neumann entanglement entropy in $d=2$.

An obvious difference stems from  the non-compactness of $\cO$ in the de Sitter cases studied here. For the  nested causal diamonds in $d=2$ Minkowski spacetime as well as the causal diamond on the finite cylinder spacetime, $\cO$ is chosen to be the domain of dependence of a finite interval, and is therefore compact \cite{saravani2014spacetime,Mathur:2021zzl}.  In de Sitter spacetime, the domain of dependence of the half circle is the static patch which is not compact. We have shown  that despite the temporal non-compactness, the SSEE
equation Eqn.~\eqref{ssee.eq} is well defined for the static patch. On the other hand, the numerical calculation for de Sitter causal sets
\cite{Surya_2021} necessitated an  IR cut-off, so that the regions $(\cM,g)$ as well as $\cO$ differ from those used here. After a suitable truncation in the discrete spectrum,  the  Calabrese-Cardy form for the
causal set SSEE was recovered.

We also note that in these calculations, the angular modes tranform trivially. Thus, the 
generalised eigenvalues are dimension independent, which makes the $d=2$ calculation a simple dimensional
restriction. Hence the conclusions we draw in higher dimensions -- namely that $\cS$ has an area dependence -- also implies that the SSEE is constant in $d=2$.  In higher dimensions the density
of states comes from the degeneracy of the angular modes on $\ms^{d-2}$  which necessitates a cut-off, while that 
in $d=2$ comes from the two ``angular modes'' on $\ms^0$.

Ultimately, the use of the SSEE lies in its covariant formulation and its applicability to systems where Hamiltonian
methods are not at hand. This is the case with causal set quantum gravity, since the analogues of spatial hypersurfaces
allow for  a certain ``leakage''  of information.  As shown in \cite{Sorkin:2016pbz,Surya_2021} the calculation of the SSEE for QFT on
causal sets throws up some unexpected behaviour, due to the non-local but covariant nature of the UV cut-off.  It is of
course not obvious that entanglement entropy plays a fundamental role in
quantum gravity, but the effects of the latter can be non-trivial when discussing emergent phenomena.

The SSEE approach to entanglement entropy  is compatible with that of  algebraic quantum field theory, where entanglement measures are
state functionals which measure  the  entanglement of a  mixed state $\hW_{\cO}$ obtained by restricting the pure
state $\hW$ in $\cM \supset \cO$. The  SSEE was motivated by the study of systems with finite degrees of freedom, but
has been shown to give the expected results for systems with infinite degrees of freedom, as is the case here and the
$d=2$ examples discussed above. Defining entanglement entropy for systems with infinite degrees of
freedom is however known to be non-trivial. %;  type III  algebras which characterise QFT  do  not factor, thus  leading  to significant  complications  (see \cite{Hollands:2017dov}).   
Although we have several QFT examples for  which the SSEE is a good entanglement measure, an important open question is whether it can be rigorously derived using methods from algebraic quantum field theory.

One of the features that simplified our calculations was the diagonal form Eqn.~\eqref{diag.eq},\eqref{2ddiag.eq}. This, as we shall see in the next chapter, is not true for the $d=2$ cylinder calculation. Re-examining our calculation we see that a temporal IR cut-off in  $\cO$
would destroy this diagonal property. Whether this could restore the logarithmic behaviour or not would be difficult to
establish analytically, but given the causal set example, it suggests that this may indeed be the case.  This in turn
suggests new subtleties in the nature of $d=2$ entanglement in curved spacetime, which should be explored.
\vskip 1in
\leftline{\bf\Large{Appendices}}
\begin{subappendices}

\section{SSEE of general $d$-dimensional dS horizon}\label{ddimds.sec}
In this section, we extend our calculation of SSEE in four dimensional de Sitter to a general $d$-dimensional de Sitter spacetime with $d>2$ and show that the entropy depends on the spacetime dimension solely due to the dimension dependent degeneracy of the spherical harmonics.

We start with showing that the Bunch-Davies modes $\{\bv_{kL}\}$ in the conformal patch of $d$-dimensional de Sitter spacetime is given by $\bv_{kL}=\varphi_{kl}(\eta,r)Y_L(\Omega_{d-2})$, where
\begin{equation}
\varphi_{kl}(\eta,r) = \frac{H^{d/2-1}}{\sqrt{2k}}(-k\eta)^{\frac{d-1}{2}}H^{(1)}_\nd(-k\eta)(kr)^{2-d/2}j_{l+\frac{d}{2}-2}(kr),\;k>0.\label{eq:ddimbdmodes}
\end{equation}
Here $L$ represents a collection of indices $\{l,l_1,\ldots,l_{d-4},m\}$ such that $l,l_1,\ldots l_{d-4} \in \{0,1,2,\ldots\}$, $m\in\mathbb{Z}$ and $l\geq l_1\geq \ldots l_{d-4}\geq |m|$. $\Omega_{d-2}$ represents a collection of angular coordinates on $\mathbb{S}^{d-2}$. We can clearly see that for $d=4$, these modes reduces to the Bunch-Davies modes given by Eqn.~\eqref{eq:bdmodes}. For $\{\bv_{kL}\}$ to qualify for the QFT modes they have to be Klein-Gordon orthonormal solutions of the Klein-Gordon equation, which we will show now.

Klein-Gordon equation for the massive scalar field with effective mass $\mass$ in de Sitter spacetime is
\begin{equation}
-\eta^d\partial_\eta(\eta^{2-d}\partial_\eta\phi) + \frac{\eta^2}{r^{d-2}}\partial_r(r^{d-2}\partial_r\phi) + \frac{\eta^2}{r^2}\nabla_{\Omega_{d-2}}^2\phi = \frac{\mass^2}{H^2}\phi. \label{eq:ddimkgeqn}
\end{equation}
For $\varphi_{kl}$ given by Eqn.~\eqref{eq:ddimbdmodes} we find
\begin{eqnarray}
-\eta^d\partial_\eta(\eta^{2-d}\partial_\eta\varphi_{kl}) &=& \frac{1}{4}\left(1+d(d-2)+4k^2\eta^2-4\nu^2\right)\varphi_{kl},\\
\frac{\eta^2}{r^{d-2}}\partial_r(r^{d-2}\partial_r\varphi_{kl}) &=& \frac{\eta^2}{r^2}\left(l(l+d-3)-k^2r^2\right)\varphi_{kl},\\
\frac{\eta^2}{r^2}\nabla_{\Omega_{d-2}}^2 Y_L &=& -\frac{\eta^2}{r^2}l(l+d-3)Y_L.
\end{eqnarray}
Therefore $\varphi_{kl}$ given by Eqn.~\eqref{eq:ddimbdmodes} solves the Klein-Gordon equation for
\begin{equation}
\nd = \sqrt{\left(\frac{d-1}{2}\right)^2-\frac{\mass^2}{H^2}}.
\end{equation}
The modes $\{\bv_{kL}\}$ are Klein-Gordon orthonormal: 
\begin{equation}
\begin{split}
(\bv_{kL},\bv_{k'L'})_{\Righf\cup\fut} = i\frac{kk'}{2}|\eta|\left(H_\nd^{{(1)}*}(-k\eta)\partial_\eta H_\nd^{{(1)}}(-k'\eta) - H_\nd^{{(1)}}(-k'\eta)\partial_\eta H_\nd^{{(1)}*}(-k\eta)\right)\\
\int_0^\infty dr\,r^2 j_{l+\frac{d}{2}-2}(kr)j_{l'+\frac{d}{2}-2}(k'r)\int_{\mathbb{S}^{d-2}} d\Omega_{d-2}Y_L^*(\Omega_{d-2})Y_{L'}(\Omega_{d-2}).
\end{split}
\end{equation}
Here the volume element on the constant $\eta$ surface $\spac$ is $d\spac = (H\eta)^{1-d}r^{d-2}dr d\Omega_{d-2}$ and the future pointing unit vector normal to $\spac$ is $\hn^\mu\partial_\mu = H|\eta|\partial_\eta$. Using the fact that spherical harmonics are $L^2$ orthonormal on $\mathbb{S}^{d-2}$ and
\begin{equation}
\int_0^\infty dr\,r^2 j_{n}(kr)j_{n}(k'r) = \frac{\pi}{2k^2}\delta(k-k'),
\end{equation}
for $n>-1$, we can write
\begin{equation}
(\bv_{kL},\bv_{k'L'})_{\Righf\cup\fut} = \frac{i\pi}{4}|\eta| \left(H_\nd^{{(1)}*}(-k\eta)\partial_\eta H_\nd^{{(1)}}(-k\eta) - H_\nd^{{(1)}}(-k\eta)\partial_\eta H_\nd^{{(1)}*}(-k\eta)\right)\delta(k-k')\delta_{LL'}.\label{eq:kgbdf}
\end{equation}
Since the Klein-Gordon inner product is independent of the choice of the spacelike hypersurface, we will evaluate it at the surface $\eta\rightarrow-\infty$, where 
\begin{equation}
H^{(1)}_\nd(-k\eta) \rightarrow \sqrt{\frac{-2}{\pi k\eta}}e^{-i\left(k\eta+\frac{\pi\nd}{2}+\frac{\pi}{4}\right)}.\label{eq:hankelapprox}
\end{equation}
Substituting Eqn.~\eqref{eq:hankelapprox} in Eqn.~\eqref{eq:kgbdf}, we see that
\begin{equation}
(\bv_{kL},\bv_{k'L'})_{\Righf\cup\fut} = \delta(k-k')\delta_{LL'}.
\end{equation}
Similarly we can show that
\begin{equation}
(\bv_{kL}^*,\bv_{k'L'}^*)_{\Righf\cup\fut} = -\delta(k-k')\delta_{LL'}\quad\text{and}\quad (\bv_{kL},\bv_{kL}^*)_{\Righf\cup\fut} = 0.
\end{equation}

As in the case of $d=4$, we show that in region $\Righf$, we have a Klein-Gordon orthonormal set of modes given by $\bu_{pL}=\psi_{pl}(t,x)Y_L(\Omega_{d-2})$, where
\eq{
\psi_{pl}(t,x) =\sqrt{2\sinh(\pi p)}\,\npl^{(d,\nd)} \,U_{pl}^{(d,\nd)}(x)e^{-ipt},\quad p>0, \label{eq:stat_modes_ddim}
}
with
\eq{
U_{pl}^{(d,\nd)}&=x^l(1-x^2)^{\frac{-ip}{2}}{_2F_1}\left(\frac{\frac{d-1}{2}+l-ip+\nd}{2},\frac{\frac{d-1}{2}+l-ip-\nd}{2},l+\frac{d-1}{2};x^2\right)\nonumber\\
&= x^{2-\frac{d}{2}}U_{p\tl}^{(4,\nd)}(x),
\label{eq:Upl}\\
\npl^{(d,\nd)} &= \frac{H^{\frac{d}{2}-1}}{2\sqrt{2}\pi\Gamma\left(l+\frac{d-1}{2}\right)}\Gamma\left(\frac{l+\frac{d-1}{2}-ip+\nd}{2}\right)\Gamma\left(\frac{l+\frac{d-1}{2}-ip-\nd}{2}\right)\nonumber\\
&=H^{\frac{d}{2}-2}N_{p\tl}^{(4,\nd)},\label{eq:Npl}
}
where $\tl = l+\frac{d}{2}-2$, and the $U_{p\tl}^{(4,\nd)}(x)$ and $ N_{p\tl}^{(4,\nd)}$ carry the extra label
$\nu_d(\mass) \neq \nu_4(\mass)$. The  Klein-Gordon inner product  
\begin{eqnarray}
(\bu_{pL},\bu_{p'L'})_{\Righf} &=& 2(p+p')H^{2-d}\sqrt{\sinh(\pi p)\sinh(\pi p')}e^{i(p-p')t}\npl^{(d,\nd)*}N_{p'l}^{(d,\nd)}\nonumber\\
&&\times\int_0^1 dx\,\frac{x^{d-2}}{1-x^2}U_{pl}^{(d,\nd)}(x)U_{p'l}^{(d,\nd)}(x)\delta_{LL'},\label{eq:kgstatddim}
\end{eqnarray}
where $d\spac=H^{1-d}(1-x^2)^{-1/2}x^{d-2}$  on  the Cauchy hypersurface $\spac_t$ and the future
pointing unit vector normal to the $\spac$ is $\hn^\mu\partial_\mu = H(1-x^2)^{-1/2}\partial_t$. Using the relations Eqn.~\eqref{eq:Upl} and \eqref{eq:Npl}, we see that the $\{\bu_{pL}\}$ are Klein-Gordon orthogonal as in the $d=4$ case,
\begin{equation}
(\bu_{pL},\bu_{p'L'})_{\Righf} = \delta(p-p')\delta_{LL'}.
\end{equation}
We can similarly show that
\begin{equation}
(\bu_{pL}^*,\bu_{p'L'}^*)_{\Righf} = -\delta(p-p')\delta_{LL'}\quad\text{and}\quad (\bu_{pL}^*,\bu_{p'L'})_{KG} = 0.
\end{equation}
Using 
\eq{
\int_0^\infty dk\; k^{-ip-\frac{1}{2}}\varphi_{kl}(\eta,r) = 2^{-ip}e^\frac{\pi p}{2}\npl^{(d,\nd)}U_{pl}^{(d,\nd)}(x)e^{-ipt},\label{eq:phetouint}
} 
we see that the Bogoliubov transformation between $\{\bv_{kL}\}$  and $\{\bu_{pL}\}$  in  $\Righf$ are given by
Eqn.~\eqref{eq:alpkp} and  ~\eqref{eq:betkp} and are the same for all dimensions. This immediately implies that the
mode-wise entropy is given by Eqn.~\eqref{eq:hy}, with an infinite degeneracy coming from the angular modes. Integrating
over $p \in (0, \infty)$ gives us a finite answer as before, but we need to impose an angular cut-off  $l_\mx$ as we did in
$d=4$. The regulated  SSEE  is then 
\eq{ \cS &= \sum_{L, l=0}^{l_\mx} \frac{\pi}{6} =
  \frac{\pi}{6}\frac{(2l_\mx+d-2)(l_\mx+d-3)!}{l_\mx!(d-2)!} \nonumber \\
  & \simeq \frac{\pi}{6} \frac{2}{(d-2)!} l_\mx^{d-2}, \quad l_\mx>>1.}
As in $d=4$ 
using the approximate flatness of the metric at the equator, $d\Omega \simeq (2\pi/l_\mx)^{d-2} \simeq
(l_c H)^{d-2}$,  which means that $ \cS \propto \frac{A_c}{l_c^{d-2}}$.  
\end{subappendices}

\chapter{Spacetime entanglement entropy of a diamond in a 2d cylinder}\label{ch.sseecyl}
%\chapter{A Spacetime Calculation of the Calabrese-Cardy Entanglement Entropy}\label{ch.sseecyl}
%\begin{abstract}
%We calculate Sorkin's spacetime entanglement entropy of a Gaussian scalar field for   %complementary regions in the 2d cylinder spacetime and show that it has the  Calabrese-Cardy form. We find that the cut-off  dependent term is universal  when we use  a covariant UV cut-off as in \cite{saravani2014spacetime}. In addition, we show that  the relative size-dependent term exhibits complementarity. Its coefficient  is however {\it not}  universal and depends on the choice of pure state. It asymptotes to the universal form within a natural class of pure states.
%\end{abstract}

In chapter \ref{ch.sseeds} we studied the SSEE in one of the static patches of de Sitter and Schwarzschild de Sitter spacetimes. We find that in de Sitter static patch the SSEE is consistent with the von Neumann entropy evaluated by Higuchi and Yamamoto in \cite{Higuchi:2018tuk}. In these cases, we observe that the restriction of the Wightman function $W\big|_\cO$ to $\cO\subset\cM$ is block diagonal in terms of the Klein-Gordon orthonormal modes $\{\bu_\bpp\}$ in $\cO$. Due to this, we could easily solve the generalised eigenvalue equation Eqn.~\eqref{ssee.eq} and compute the SSEE. In this chapter, we will look at a particular case for which this is not true.

%\begin{figure}[htb]%{0.5\textwidth}
%	\centering{\includegraphics[height=6.6cm]{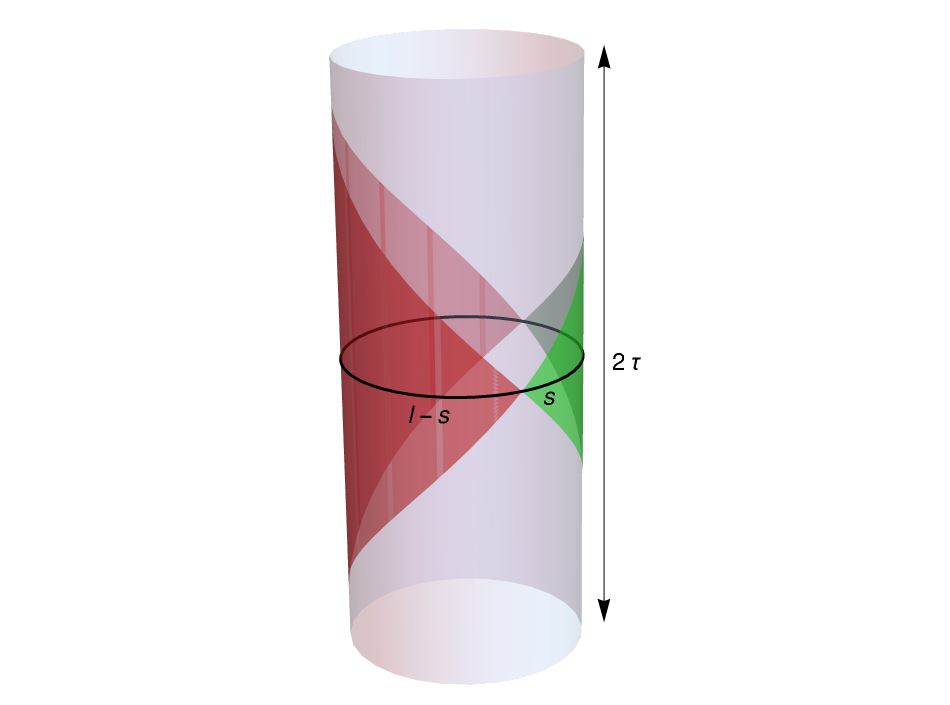}}
%	\caption{The spacetime analogues of $\cI_{s}, \cI_{\ell-s} \subset \cC_\ell$ are their domains of dependence  $\diam_s$ and $\diam_{\ell-s}$ in $(\cM,g)$ shown in green and red respectively.}
%	\label{cyl.fig} 
%\end{figure}
We study the SSEE for a massless scalar field which is in the SJ vacuum state in a finite time slab of 2d cylinder spacetime and is restricted to a causal diamond in it, using a combination of analytical and numerical methods. We observe that the SSEE obtained in a diamond and its complimentary region are equal, which is as expected. We compare our results with the entanglement entropy obtained by Calabrese and Cardy of a CFT which is defined on a circle and is restricted to an arc of it \cite{cc}. We find that in the full cylinder spacetime limit, the SSEE matches exactly with the entanglement entropy obtained by Calabrese and Cardy.

We start this chapter with a brief discussion on the Calabrese-Cardy entropy formula in Sec.~\ref{cc.sec}. In Sec.~\ref{sccc.sec} we setup the calculation of the spacetime version of the Calabrese-Cardy entropy, which is basically the SSEE for a massless scalar field in a cylinder slab restricted to a diamond. We then discuss our results in Sec.~\ref{numres.sec}. We end this chapter with a discussion on the results and some future questions in Sec.~\ref{dis.sec}.

\section{The Calabrese-Cardy entropy formula}\label{cc.sec}
The Calabrese-Cardy formula for the entanglement entropy of a CFT for an interval $\cI_{s}$ of length $s$  in a circle $\cC_{\ell}$ of
circumference  $\ell$ is given by
\begin{equation}
	S = \frac{c}{3} \ln\bigg(\frac{\ell}{\pi \epsilon}\bigg)+ \frac{c}{3} \ln(\sin(\alpha
	\pi))+c_1
	\label{cc.eq}
\end{equation}  
where $\alpha={s}/{\ell}$, $c$ is the CFT central charge, $\epsilon$ is a UV cut-off and $c_1$ is a non-universal
constant.  This formula has been shown to apply to a diverse range of two dimensional systems which fall within the same universality class,  including a geometric realisation by  Ryu and Takayanagi \cite{ryu2006holographic} and others \cite{headrick}. Since entanglement entropy is proposed as a possible contributor to black hole entropy, understanding Eqn.~(\ref{cc.eq}) from a spacetime perspective is of broad interest. 

As a follow up to their earlier work, Calabrese and Cardy studied
the unitary time evolution of the entanglement entropy for  an interval $\cI_s$
inside  a larger interval $\cI \supset
\cI_s$. Starting with a pure state, which is an eigenstate of a "pre-quench" Hamiltonian, and then quenching the system at $t=0$,  they used  path  integral techniques  to show that the entanglement entropy increases with time.
It then saturates  after the ``light-crossing'' time, in keeping with causality \cite{evolcc}. This corresponds to the ``time''  required for the domain of dependence of $\cI_s$ to be fully
defined. Seeking out a covariant formulation of entanglement entropy  is therefore of interest to understanding 
the results of \cite{evolcc}  in a spacetime language.

\section{Spacetime calculation of the Calabrese-Cardy entropy}\label{sccc.sec}

In \cite{saravani2014spacetime} the SSEE for nested causal diamonds $\diam_s \subset \diam_S$ was shown to yield the first, cut-off dependent term of Eqn.~(\ref{cc.eq})  with $c=1$ when $s<<S$.  Since $\diam_s$ is the domain of dependence of $I_s$, this is the natural spacetime analogue of $\cI_s \subset \cI_S$. In this chapter we calculate the SSEE for the spacetime analogue of $\cI_s \subset \cC_\ell$ for finite $\ell$ and additionally, find the same $\alpha$-dependence as Eqn.~(\ref{cc.eq}), thus explicitly demonstrating  complementarity. 
A natural spacetime analogue of $\cC_{\ell} $ is its (zero momentum) Cauchy completion, which is the $d=2$ cylindrical spacetime $(\cM,g)$ with $ds^2=-dt^2 + dx^2, \,  x+\ell \sim x$. The domains of dependence of $\cI_{s} $ and its complement $\cI_{\ell-s}$ in $(\cM,g)$ are the  causal diamonds $\diam_{s}$ and $\diam_ {\ell -s}$ respectively, as shown in Fig \ref{cyl.fig}.

\begin{figure}[htb]%{0.5\textwidth}
	\centering{\includegraphics[height=6.6cm]{cyl_fig.pdf}}
	\caption{The spacetime analogues of $\cI_{s}, \cI_{\ell-s} \subset \cC_\ell$ are their domains of dependence  $\diam_s$ and $\diam_{\ell-s}$ in $(\cM,g)$ shown in green and red respectively.}
	\label{cyl.fig} 
\end{figure}
 
In what follows we use a mixture of analytical and numerical methods to solve the SSEE eigenvalue problem Eqn.~\eqref{ssee.eq}. We will find it convenient to work with the Sorkin-Johnston (SJ) formulation %\cite{sorkin17,fewster2012,Brum:2013bia,aas,Afshordi:2012ez}
 discussed in previous chapters, where the SJ spectrum  provides the covariant UV cut-off with which to calculate $\ssee$, as was done in \cite{saravani2014spacetime}.% For a {compact} globally hyperbolic region $(\cM,g)$ of a spacetime it follows from $\kker(\hB) = \im(i\hD)$ \cite{wald} that the eigenmodes of the integral Hermitian operator $i\hD$ provide a  covariant orthonormal basis (the SJ modes) with respect to  the $\ccL^2$ norm on $(\cM,g)$ \cite{fewster2020algebraic}. The  SJ vacuum or Wightman function is given by the positive part of $i\hD$.  Since the SJ spectrum is covariant so is a UV cut-off in this basis.   

For our calculation of $\ssee$ we will consider the free massless scalar field to be in the SJ vacuum state in  a  slab  $(\mtau, g)$ of
height $2 \tau$  in the cylinder  spacetime \cite{fewster2012}, and its
restriction to $\diam_s \subset \mtau$. The SJ Wightman function in $\mtau$ is
\begin{equation}
	\wtau(x,t; x',t') = \sum_{m\in \mathbb Z} \cylev_{m} \psi_m(x,t) \psi^*_m(x',t'),  \label{cylsj.eq}
\end{equation} 
where $\{ \psi_m, \cylev_m\}$ are the $\ccL^2$ normalised positive frequency SJ eigenmodes and eigenvalues in $\mtau$
\cite{fewster2012}: 
\begin{align} 
	&\psi_m(x,t) = \biggl(\!\frac{\!(1 \!- \!\zeta_m\!)}{2\sqrt{2\ell}\cm}\!e^{i \!\frac{2 \pi |m| t}{\ell}} \! +  \!\frac{\!(\!1 \!+ \!\zeta_m\!)}{2\sqrt{2\ell}\cm} \!e^{-i \!\frac{2 \pi |m|t}{\ell}}\!\biggr)e^{i \!\frac{2 \pi m x}{\ell}} \nonumber  \\ 
	&\cylev_m  =   \ell
	\frac{\sm\cm }{2 \pi |m|}, \;    \zeta_m=\frac{\cm}{\sm}, \;  \taubyl=\frac{2\tau}{\ell}, \quad m \in \mathbb Z, \nonumber\\
	&\cm^2\! = \! \tau\left(1+\sinc(2 |m| \pi \taubyl)\right), \; \sm^2\! =\! \tau\left(1-\sinc( 2 |m| \pi \taubyl)\right).
	\label{cylsjef.eq}
\end{align} 

The $m=0$ ``zero mode''  in particular takes the  form\footnote{Note that ``$m$" here denotes the quantum numbers labelling the SJ modes in the cylinder slab spacetime. It should not be confused with mass of the field, which is zero throughout this chapter.}
\begin{equation} 
	\psi_0(t)= \frac{1}{2\sqrt{\tau l}}\left(1 - i \frac{\sqrt{3}}{\tau}t\right),  \quad \cylev_0=
	\frac{2}{\sqrt{3}}\tau^2. \label{zm.eq}
\end{equation}
Unlike  the standard vacuum on the cylinder, $\wtau$ is $\tau$-dependent. Each  $\wtau$  can however be viewed as  a pure (non-vacuum) state in $\mtaup$  for any $\taup >
\tau$, as we will later show. To accommodate both  $\diam_{s}$ and 
$\diam_{\ell-s}$ in our calculations,  we require $ 2 \tau  \geq s, \ell-s$.

The  SJ modes in $\diam_{s}$ are naturally expressed in terms of
the light cone coordinates $u=\frac{1}{\sqrt{2}}(t+x), v=\frac{1}{\sqrt{2}}(t-x)$ and   come in the two mutually $\cL^2(\diam_s)$ orthogonal series   \cite{johnston}\footnote{A discussion on this can also be found in chapter \ref{ch.2ddiamsj} of this thesis.}
\begin{align}
	f_k \!&\!=\! e^{-iku}\!-\!e^{-ikv},  \quad  k = 2 \sqrt{2} n \pi/s \nonumber \\ 
	g_\kappa \!&\!=\!  e^{-i\kappa u}\!+\!e^{-i\kappa v}\!-\!2\cos\bigg(\frac{\kappa s}{2 \sqrt{2}}\bigg),\quad \tan\bigg(\frac{\kappa s
	}{2\sqrt{2}}\bigg)=\frac{\kappa s}{\sqrt{2}}\, \, 
	\label{diamsj.eq}
\end{align}
with eigenvalues $\lambda_{k}= \dfrac{s}{2\sqrt{2}k}$ and $ \lambda_{\kappa}=
\dfrac{s}{2 \sqrt{2}\kappa}$, respectively, and with $\ccL^2$ norm in  $\diam_{s}$ 
\begin{equation}
	{||f_k||^2 =s^2  , \quad ||g_\kappa||^2 = s^2\left(1-2\cos^2\left(\frac{\kappa s}{2\sqrt{2}}\right)\right)}. 
\end{equation} 
Since $i \hD $  is diagonal in this basis we will use it to transform Eqn.~(\ref{ssee.eq}) to the matrix form 
\begin{equation} 
	\hW_\tau|_{\diam_s} X   =  \mu \Lambda X,  
\end{equation} 
where $\Lambda$ is the diagonal matrix $\{\lambda_k, \lambda_\kappa \}$.  
For $X \not\in
\kker(i\hD)$, we can invert this to suggestively write 
\begin{equation}
	\hr X =  \Lambda^{-1} \hW_\tau|_{\diam_s} X     =\mu X,\label{hrho.eq} 
\end{equation} 
so that $\ssee$ can be viewed as  the
von-Neumann entropy of $\hr$.  The spectrum of $\hr$ is  unbounded and hence needs 
a UV cut-off. As in \cite{saravani2014spacetime}
we use the  covariant UV-cut off with respect to  the SJ spectrum $\{\lambda_k,\lambda_\kappa\}$. For large $\kappa$  the condition $\tan({\kappa s}/{2\sqrt{2}})=\kappa s/\sqrt{2}$ can be approximated by $\kappa \sim\sqrt{2}(2n+1)\pi/s$, so that a consistent choice of cut-off for both sets of eigenvalues is $\epsilon=k_\mx^{-1}= s/(2 \sqrt{2} n_\mx \pi)$.  We also  need to ensure that this same cut-off is used in the causal complement, i.e., $k_\mx=2 \sqrt{2} n'_\mx \pi/(\ell-s)$, where $n'$ denotes the quantum number for the SJ spectrum in $\diam_{\ell -s}$, 
so that 
\eq{\epsilon \!=\!\dfrac{\ell \alpha}{2\sqrt{2}\pi n_\mx}\!=\! \dfrac{\ell (1-\alpha)}{2\sqrt{2}\pi n_\mx'},\quad \alpha\equiv s/l.}

%We expand the SJ  modes in $\mtau$  in terms of those in $\diam_s$ to obtain the  non-zero matrix elements for $\hwdt$ for general $\alpha, \gamma$. Suppressing  the $\tau, \diam_s$ labels, these are
We define the matrix elements of $\hwdt$ in the $\{f_k\}$ and $\{g_\kappa\}$ basis as $\hW_{kk'}\equiv\langle f_k,\hwdt f_{k'}\rangle$ and $\hW_{\kappa\kappa'}\equiv\langle g_\kappa,\hwdt g_{\kappa'}\rangle$, where $\langle.,.\rangle$ denotes the $\cL^2$ inner product in $\diam_s$. Using the series form of $\hW_\tau$ given by Eqn.~\eqref{cylsj.eq}, for general $\alpha$ and $\gamma$ we find
\begin{align}
	\hW_{kk'}\!&=\! \frac{s^4}{32\pi} \sum_{m> 0}\frac{1}{|m|\zeta_m} \biggl(\eta^-_m\sinc(x_+)-\eta^+_m \sinc(x_-)\biggr) \times \biggl(\eta_m^-\sinc(x_+') - \!\eta_m^+ \sinc(x_-') \biggr) \nonumber \\
	\hW_{\kappa\kappa'}\!&=\! \frac{s^4}{32\pi} \sum_{m> 0}\! \frac{1}{|m|\zeta_m} 
	\biggl(\eta_m^-\sinc(z_+) +\eta_m^+ \sinc(z_-)\biggr) \times \biggl(\eta_m^-\sinc(z_+') + \eta_m^+ \sinc(z_-')
	\biggr) \nonumber\\
	&\hspace{8cm}+\wz_{\kappa\kappa'}  \label{wmatrix.eq} 
\end{align}
where $\eta_m^\pm = 1\pm \zeta_m$, $x_\pm=(n\pm\alpha m)\pi$,  $x_\pm'=(n'\pm \alpha m)\pi$,  $z_\pm=\kappa s/2\sqrt{2}\pm \alpha m\pi$,   $z_\pm'=\kappa' s/2\sqrt{2}\pm \alpha m\pi$,  and  the contribution from the zero mode is
\begin{align} 
	\wz_{\kappa\kappa'} &= \frac{s^4}{2 \sqrt{3}} \frac{\tau}{\ell} \cos(\kappa s/(2\sqrt{2})) \cos(\kappa' s/(2\sqrt{2}))\nonumber \\ 
	& \times \biggl(1+ \sqrt{\frac{3}{2}} \frac{1}{\kappa \tau} \biggr)   \biggl(1+ \sqrt{\frac{3}{2}} \frac{1}{\kappa'
		\tau}\biggr).  \label{wzeromode.eq} 
\end{align}

Our strategy is to  construct $\hr\,$ from these matrix elements and to solve for its eigenvalues using a numerical matrix solver. However, each matrix elements in Eqn.~(\ref{wmatrix.eq}) is an infinite sum over the quantum number $m$ and hence not amenable to explicit calculation. 
We therefore need to find a closed form expression for the above matrix elements. 

We notice that when $\taubyl$ takes  half-integer values (for which the SJ vacuum is Hadamard \cite{fewster2012}), $\zeta_m=1$ for $ m \neq 0$, which leads to a considerable simplification. Further, let   $\alpha$ be rational,  so that we can write  $\alpha=\frac{p}{q}$, with   $p, q \in \mathbb Z,$ and $  p, q  >0$  being relatively prime. For these choices of $\alpha$ and $\gamma$, the infinite sums of Eqn.~(\ref{wmatrix.eq}) reduce to the following finite sums over Polygamma functions $\pg(x)$ and $\pg^{(1)}(x)$
	
	\begin{align}
		\hW_{kk'}& =  \frac{ s^4}{8 \pi n}  \Biggl[ \delta_{n,n'} \biggl( \alpha  \Theta(n) \sum_m
		\delta_{n,m\alpha} + 
		\frac{1}{\pi^2 \alpha q^2 n} \sum_{r=1}^{q-1} \sin^2(r\alpha \pi) \biggl[-\alpha q  \pg\!\Bigl(\frac{r}{q}\Bigr) + \alpha q
		\pg\!\Bigl(\frac{\alpha r -n}{\alpha q}\Bigr)\nonumber\\
		&+ n \pg^{(1)}\!\Bigl(\frac{\alpha r  -n}{\alpha q}\Bigr)\biggr]\biggr) 
		+ (1-\delta_{n,n'}) \frac{(-1)^{n+n'}}{\pi^2 n'(n-n') q}  \sum_{r=1}^{q-1} \sin^2(r \alpha \pi) \biggl[ (n'-n)
		\pg\!\Bigl(\frac{r}{q}\Bigr)\nonumber\\
		& - n' \pg\!\Bigl(\frac{\alpha r -n}{\alpha q}\Bigr) + n \pg\!\Bigl(\frac{\alpha r
			-n'}{\alpha q} \Bigr)\biggr] \Biggr]  \nonumber \\
		\hW_{\kappa\kappa'} &=s^4\cos\left(\frac{\kappa
			s}{2\sqrt{2}}\right)\cos\left(\frac{\kappa's}{2\sqrt{2}}\right)\Biggl[\frac{\tau}{2\sqrt{3}
			\ell}\biggl(1+\sqrt{\frac{3}{2}} \frac{1}{\tau\kappa}\biggr) \biggl(1+\sqrt{\frac{3}{2}} \frac{1}{\tau\kappa'}\biggr)\nonumber\\
		& +  \delta_{\kappa,\kappa'} \frac{1}{\alpha q^2s^2\kappa^2\pi^2}\biggr(
		\sum_{r=1}^{q-1}\Omega(\kappa,\kappa',\alpha,r)\biggl[\alpha q \pi\biggl(\pg\!\Bigl(\frac{r}{q}-\frac{\kappa
			s}{\eta}\Bigr) - \pg\!\Bigl(\frac{r}{q}\Bigr)\biggr) +\frac{\kappa
			s}{2\sqrt{2}}\pg^{(1)}\!\Bigl(\frac{r}{q}-\frac{\kappa s}{\eta}\Bigr)\biggr] \nonumber\\
		& +   \frac{s^2\kappa\kappa'}{2} \biggl[ \alpha q  \pi\biggl(\eul + \pg\!\Bigl(1-\frac{\kappa s}{\eta}\Bigr)\biggr) + \frac{\kappa
			s}{2 \sqrt{2}}\pg^{(1)}\!\Bigl(1-\frac{\kappa s}{\eta}\Bigr)\biggr] \biggr)  +  (1-\delta_{\kappa,\kappa'}) \frac{1}{s^2 q \kappa\kappa' (\kappa-\kappa')}\nonumber\\
		&\times \biggl( 
		\sum_{r=1}^{q-1}\Omega(\kappa,\kappa',\alpha,r) \biggl[\kappa\pg\!\Bigl(\frac{r}{q}-\frac{\kappa's}{\eta}\Bigr) - \kappa'\pg\!\Bigl(\frac{r}{q}-\frac{\kappa s}{\eta}\Bigr) - (\kappa-\kappa')\pg\!\Bigl(\frac{r}{q}\Bigr)\biggr]\nonumber\\
		&+ \frac{s^2\kappa\kappa'}{2} \biggl[\eul(\kappa-\kappa') + \kappa\pg\!\Bigl(1-\frac{\kappa's}{\eta}\Bigr)
		-\kappa'\pg\!\Bigl(1-\frac{\kappa s}{\eta}\Bigr) \biggr] \biggr)\Biggr], \qquad \eta=2\sqrt{2} \alpha q
		\pi  \label{wmatrixgg.eq} 
	\end{align} 

where $k$ and $n$ are related as in Eqn.~\eqref{diamsj.eq}, $\eul$ denotes the Euler-Mascheroni constant and
\begin{align}
	\Omega(\kappa,\kappa',\alpha,r)&=\kappa\kappa' \frac{s^2}{2} \cos^2(\alpha r\pi) + \sin^2(\alpha r\pi)\nonumber \\
	&-(\kappa+\kappa')\frac{s}{2\sqrt{2}}\sin(2\alpha r\pi).
\end{align}

We now have the finite series expressions for the matrix elements of $\hwdt$ in the $\{f_k,g_\kappa\}$ basis. We are therefore in a position to construct $\hr$ upto a covariant UV cut-off with respect to the SJ spectrum, and solve for its eigenvalues numerically, which we will do next.

\subsection{Numerical results}\label{numres.sec}
We now solve for the eigenvalues of $\hr$ using Mathematica's numerical eigenvalue solver. We consider a range of values of $\alpha, \gamma$ and the cut-off $n_\mx/\alpha$ given in the table below. 

\begin{table}[h]
	\begin{center}
		{\renewcommand{\arraystretch}{1.2}
			\begin{tabular}{|c|c|}
				\hline
				\rule{0pt}{3ex}$\alpha$&
				$\frac{1}{10},\frac{1}{5},\frac{1}{4},\frac{1}{3},\frac{1}{2},\frac{2}{3},\frac{3}{4},\frac{4}{5},\frac{9}{10}$ 
				\\ [1ex]\hline \rule{0pt}{3ex}
				$\gamma$& $1,2,4,6,8,16,21.5,32,40.3,100,200,1000,2000$\\  [1ex] \hline \rule{0pt}{3ex}
				$\frac{n_\mx}{\alpha}$& $1000,1200,1400,1600,1800,2000,2200,2400,2600$\\[1ex] 
				\hline
			\end{tabular}
		}
	\end{center} 
\end{table}
In the list of $\gamma$ values, we have also included the specific non-half-integer value  of
$\gamma =40.3 $ for which $\zeta_m \sim 1$ even for $m =1$.   In general, we note that   $\zeta_m \sim 1$ for  $m
>> \gamma^{-1}$. The error coming from small $m$ terms
has been explicitly calculated in this case as a function of $m$ and seen to be small.
For the special case $\alpha=0$,  $\ssee$ is trivially zero, while for
$\alpha=1$, the domain of dependence of $\cC_\ell$ is no longer a causal diamond, but all of $\mtau$. Since
$\hwt$ is the SJ vacuum  and therefore pure, $\ssee=0$.

Fig.~\ref{salpha.fig}  shows the results of simulations for these  various $\alpha$ and $\gamma$ values, for cut-off $n_\mx/\alpha=1200, 2000$ and $2600$. It is clear that $\ssee$ satisfies complementarity. %This is much more explicit in Fig.~\ref{Compl.fig}, where we vary over the cut-off. 
The SSEE can be fitted to the form $\ssee = a_1\log(\sin(\alpha\pi))+b_1$. %where the coefficient $a_1$ corresponds to $f(\gamma)$ in Eqn.~\eqref{ourresult.eq}.
The values of $a_1$ and $b_1$ along with their errors are given in the tables in Fig.~\ref{salpha.fig}, where $a_1$ can be seen to be independent of $n_{\mx}/\alpha$. It is however dependent on $\gamma$ and asymptotes to the universal value of $1/3$ in the Calaberse-Cardy formula for $\gamma>>1$.
\begin{figure*}
	\centering
	\subfloat[$n_{\mx}/\alpha=1200$]{\includegraphics[height=5.4cm]{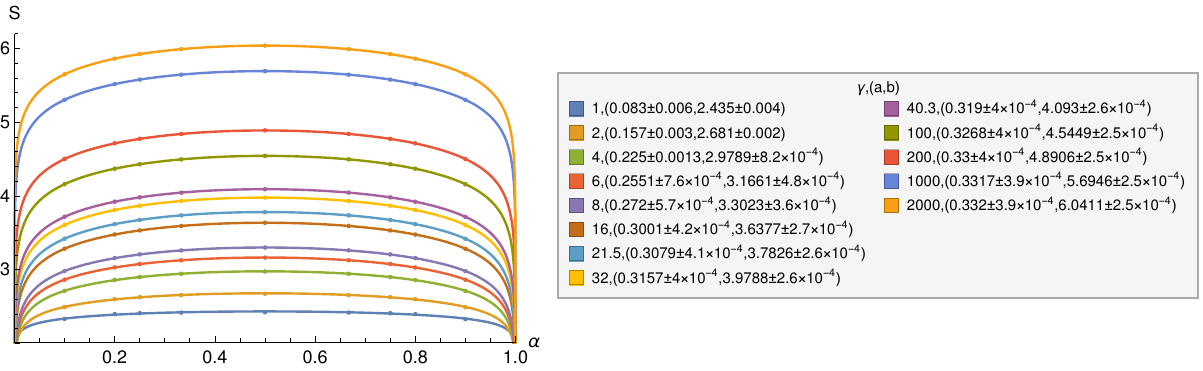}}\hskip 0.1in
	\subfloat[$n_{\mx}/\alpha=2000$]{\includegraphics[height=5.4cm]{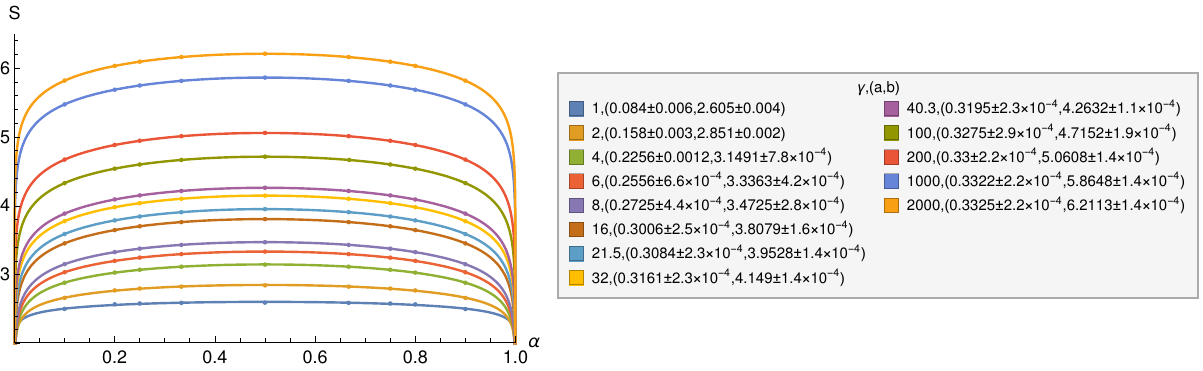}}\hskip 0.1in
	\subfloat[$n_{\mx}/\alpha=2600$]{\includegraphics[height=5.4cm]{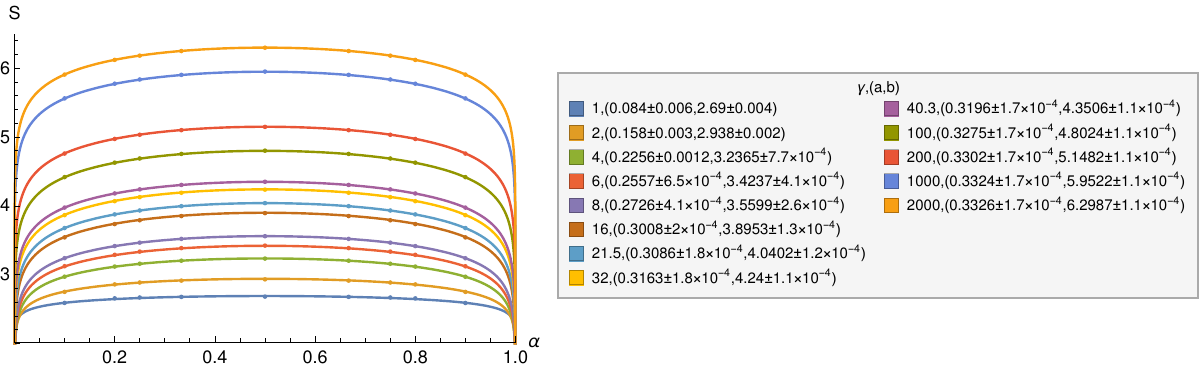}}
	\caption{$\ssee$ vs $\alpha$ for different $\gamma$ with $n_\mx/\alpha=1200,\, 2000$ and $2600$ fitted to 
		$\ssee=a\log(\sin(\pi\alpha))+b$. The fit parameters are shown in the table. }
	\label{salpha.fig}
\end{figure*}
%\begin{figure}[!ht]
%	\centering
%	\includegraphics[scale=0.75]{}
%	\caption{$\ssee$ vs $\alpha$ for different $\gamma$ fitted to 
%		$\ssee=a\log(\sin(\pi\alpha))+b$, with $\frac{n_\mx}{\alpha}=2600$.
%	}
%	\label{salpha.fig} 
%\end{figure}

Fig.~\ref{Compl.fig} shows the dependence of $\ssee$ on $n_{\mx}/\alpha$ for different $\alpha$ and with three different values of $\gamma$ $(16, 200\text{ and }1000)$. Here SSEE can be fitted to the form, $\ssee = a_2\log(n_{\mx}/\alpha) + b_2$. As is clear from the tables in this figure $a_2\approx 0.33\approx 1/3$ for all $\alpha$ and $\gamma$ with the order of error given in the table. $b_2$ however depends on $\alpha$ and $\gamma$.% This suggests that $c(\gamma)\approx 1$ in Eqn.~\eqref{ourresult.eq}.
\begin{figure*}
	\centering
	\subfloat[$\gamma=16$]{\includegraphics[height=5cm]{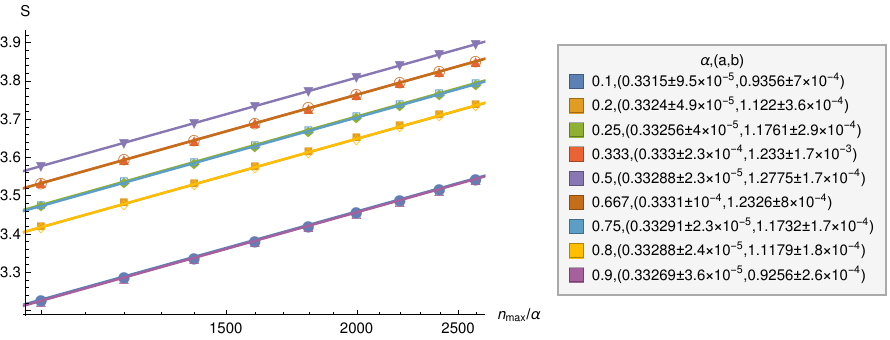}}\\
	\subfloat[$\gamma=200$]{\includegraphics[height=5cm]{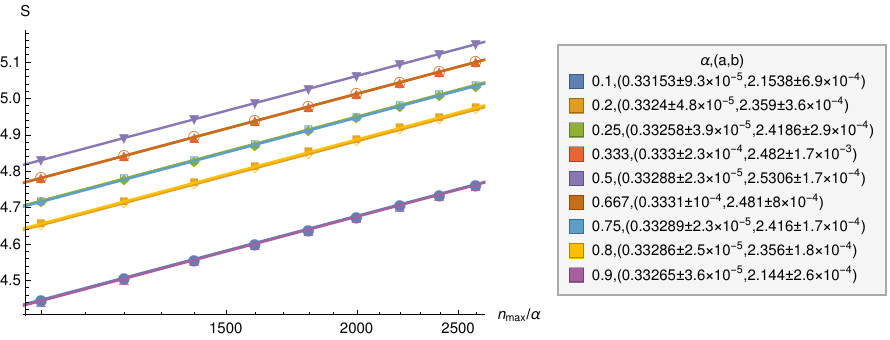}}\\
	\subfloat[$\gamma=1000$]{\includegraphics[height=5cm]{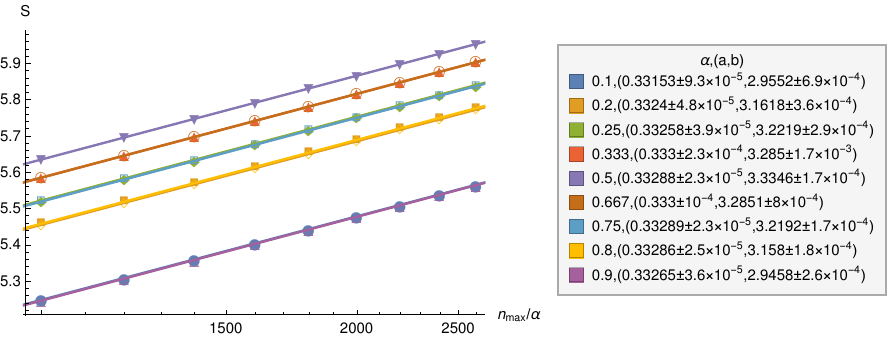}}
	\caption{A log-linear plot of  $\ssee$ vs $n_\mx/\alpha$ for different $\alpha$ with $\gamma=16,\,200$ and $1000$ fitted to $\ssee=a\log\,(n_\mx/\alpha)+b$. The fit parameters are shown in the table which show $a \sim 1/3$. This is also true for other values of $\gamma$. The curves for  complementary values of $\alpha$ are indistinguishable.} \label{Compl.fig}
\end{figure*} 
%\begin{figure}[!ht]
%	\centering{\includegraphics[scale=0.75]{}}
%	\caption{Log-linear plot of  $\ssee$ vs $\frac{n_\mx}{\alpha}$ for different $\alpha$  fitted to
%		$\ssee=a\log\,(n_\mx/\alpha)+b$ for $\gamma=1000$. 
%	} \label{Compl.fig} 
%\end{figure}

The results in Fig.~\ref{salpha.fig} and \ref{Compl.fig} suggest that  $\ssee$ takes the general  form
\begin{equation}
	\ssee  = \frac{c(\gamma)}{3} \ln\bigg(\frac{\ell}{\pi \epsilon}\bigg)+ {f(\gamma)}\ln(\sin(\alpha
	\pi))+c_1(\gamma).  \label{ourresult.eq}
\end{equation}
where $c(\gamma)/3\equiv a_2$ and $f(\gamma)\equiv a_1$. Using the values of $a_2$ given in the tables of Fig.~\ref{Compl.fig}, we find that $c(\gamma) \sim 1$. We fit $f(\gamma)$ values to the form
\begin{equation}
f(\gamma)=0.33 + a_3/\gamma + b_3/\gamma^2
\end{equation}
and find that $a_3\approx-0.48$ and $b_3\approx0.23$ with the error given in the tables of Fig.~\ref{fgamma.fig}.

In order to extract $c_1(\gamma)$ we subtract the first term in Eqn.~\eqref{ourresult.eq} (which depends on $n_{\mx}/\alpha$) using $c(\gamma)/3$ given by the values of $a_2$ in the table of Fig.~\ref{Compl.fig} from the values of $b_1$ in the table of Fig.~\ref{salpha.fig} for $n_{\mx}/\alpha = 1200, 2000$ and $2600$. We find that the difference (or $c_1(\gamma)$) is independent of the choice of $n_{\mx}/\alpha$ which is as expected. We fit the dependence on $\gamma$ by 
\begin{equation}
c_1(\gamma)=a_4\log(\gamma)+b_4
\end{equation}
and values of the coefficients $a_4$ and $b_4$ are given in the table in the Fig.~\ref{conegamma.fig}.

%Using the best-fit curves and the associated data in Figs.~\ref{Compl.fig}-\ref{conegamma.fig}, we find that  $c(\gamma) \sim 1$ and 
%\begin{align} 
%	f(\gamma) & \sim  0.33+a/\gamma+b/\gamma^2 \nonumber \\
%	c_1(\gamma) &\sim  a'\log\gamma+b'.
%	\label{cfcone.eq}
%\end{align}

\begin{figure}[h!]
	\centering
	\includegraphics[scale=0.8]{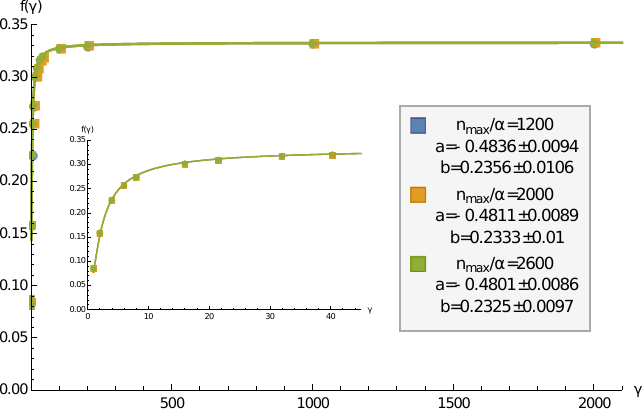}
	\caption{A plot of $f(\gamma)$ vs. $\gamma$ for different values of $n_\mx/\alpha$, fitted to 
		$0.33+a/\gamma+b/\gamma^2$.  
		The inset figure shows the smaller $\gamma$ values.} \label{fgamma.fig} 
\end{figure}
\begin{figure}[!h]
	\centering
	\includegraphics[scale=0.8]{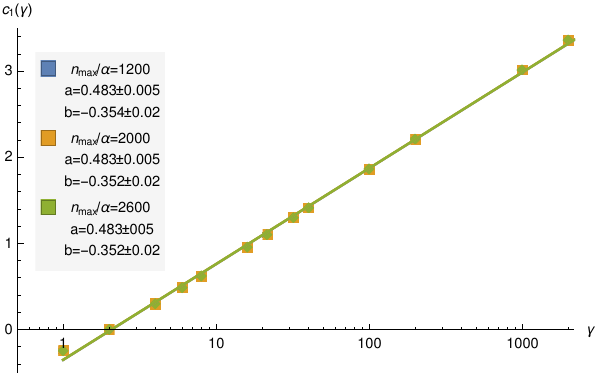}
	\caption{A log-linear plot of $c_1(\gamma)$ vs $\gamma$ for different values of $n_\mx/\alpha$ 
		fitted to  $a\log\gamma+b$. }\label{conegamma.fig}
\end{figure}

Thus, the first term of Eqn.~(\ref{cc.eq}) is reproduced for any choice of $\alpha$ and $\gamma$.  
This  generalises the results of \cite{saravani2014spacetime}  where this was shown in the limit of $\alpha \ll 1$. The dependence on $\alpha$, i.e., the second term of  Eqn.~(\ref{cc.eq}) is also reproduced and hence exhibits complementarity for any $\alpha$ 
(see Fig.~\ref{salpha.fig}, ~\ref{Compl.fig}).  Its  coefficient however is {\it not}
universal and depends on $\gamma$ as shown in Fig.~\ref{fgamma.fig}.  However, as $\gamma >>1$,  
$f(\gamma) $ does asymptote to the universal value $1/3$. Finally,  the non-universal constant
$c_1(\gamma)$  diverges logarithmically  with $\gamma$ as shown in 
Fig.~\ref{conegamma.fig}. This can be traced to the IR divergence in the zero modes of the massless theory.

\subsection{A look at the generalised eigenvalues}
We now take a look at the eigenvalues of $\hr$ defined in Eqn.~\eqref{hrho.eq}. We find that the eigenvalues (which always come in pairs $(\mu,1-\mu)$) exhibit the  surprising feature
that all but one pair hovers around the values $0$ and $1$ and
hence contributes significantly to $\ssee$. In Fig.~\ref{ev.fig} we also show the comparison of the eigenvalues obtained in the two complementary regions, we find that they differ only in the numbers of $(0,1)$ pairs. Further, if we calculate $\ssee$ for the largest pairs of eigenvalues, we find that the
error is small, as shown in Fig~\ref{evtwo.fig}.
\begin{figure*}[!ht]
	\centering{\subfloat{\includegraphics[height=5cm]{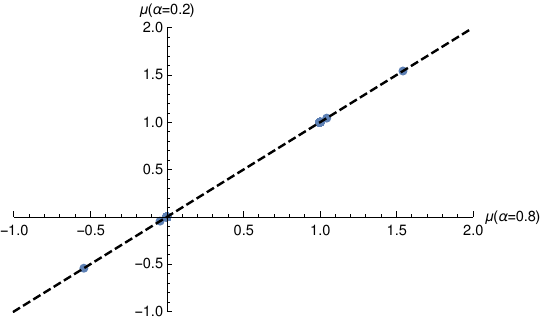}}\,\subfloat{\includegraphics[height=5cm]{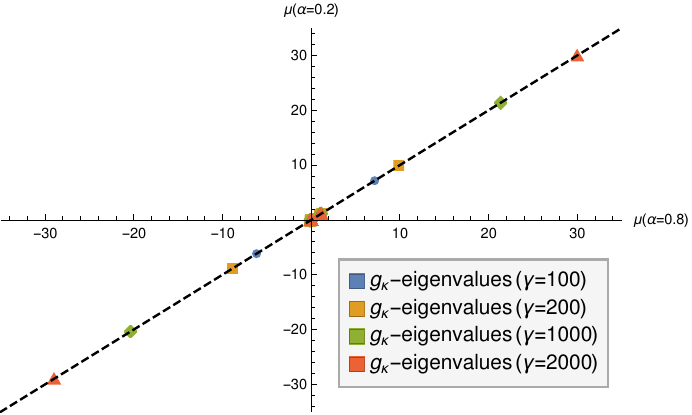}}}
	\caption{A plot comparing the eigenvalues $\mu$ of the entropy equation for one choice of complementary regions
		with $\alpha=0.2,\,0.8$ for $n_\mx/\alpha=2600$ and different choices of $\gamma$. On the left are the eigenvalues
		associated with  the $f_k$ matrix elements which are independent of $\gamma$ and on the right are those associated
		with the $g_\kappa$, which are $
		\gamma$ dependent.  We note that
		the number of eigenvalues differ in both regions but only in the  number of $(1, 0)$ pairs which  leads to the
		equality of the SSEE in these complementary regions. Further, the significant contribution comes from the $g_\kappa$
		matrix elements of which there are precisely {\it two} which are substantially different from $(1,0)$. These increase
		with $\gamma$ and are the main contributors  to $c_1(\gamma)$} %in Eqn.~(\ref{ourresult.eq}), (\ref{cfcone.eq}) } 
	\label{ev.fig} 
\end{figure*}

\begin{figure*}[!ht]
	\centering
	\subfloat[$\alpha=0.5$, $n_\mx/\alpha=2600$]{
		\includegraphics[width=5cm]{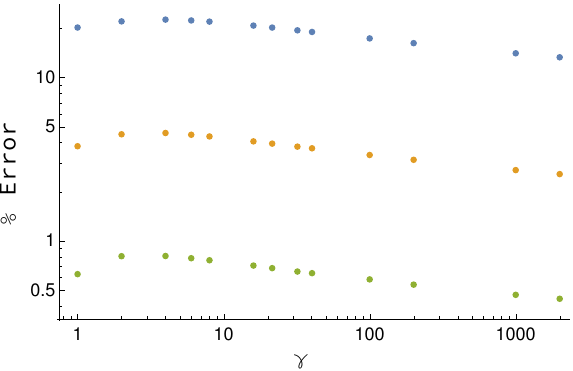} }\, 
	\subfloat[$\alpha=0.5$, $\gamma=1000$]{\includegraphics[width=5cm]{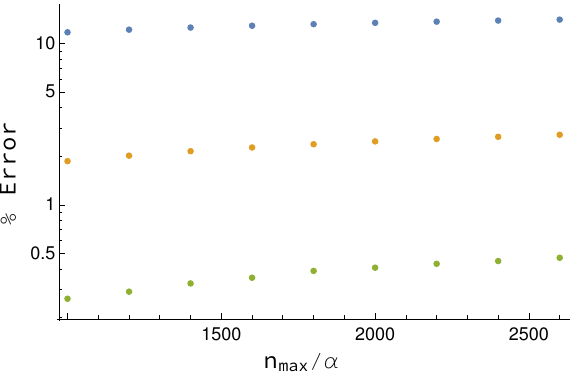}} \, 
	\subfloat[$n_\mx/\alpha=2600$, $\gamma=1000$]{\includegraphics[width=5cm]{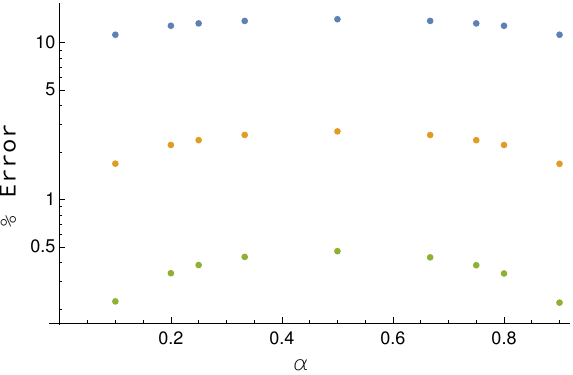}} 
	\caption{In order to estimate the contribution of the pairs $(\mu,1-\mu)$,  we plot the percentage error in the SSEE
		when only the largest pairs (one, two and three represented in blue, orange and green respectively) of eigenvalues are considered, as a function of the different
		parameters $\gamma, n_\mx/\alpha$ and $\alpha$. In each case, we see that the error goes down to $< 1\%$ even when
		only the  3 largest eigenvalues are retained.}
	\label{evtwo.fig} 
\end{figure*}

\section{Discussions}\label{dis.sec}
In this chapter, we studied the SSEE of a massless scalar field in the SJ vacuum state in a finite time slab of height $2\tau$ of a cylinder spacetime of circumference $l$, restricted to a causal diamond $\diam_s$ in the center of the slab. The region of interest $\diam_s$ is the domain of dependence of an interval $\cI_s$ of length $s$ in a circle of circumference $l$. It is therefore natural to compare the SSEE obtained for the massless scalar field in $\diam_s$ with the entanglement entropy of a CFT in the interval $\cI_s$, which is given by Calabrese and Cardy and is of the form given by Eqn.~\eqref{cc.eq}. We compute the SSEE using a combination of analytical and numerical techniques in $\diam_s$ and find that the cut-off dependent term (using the covariant cut-off in the SJ spectrum) is identical to the one obtained by Calabrese and Cardy. The coefficient of the $\alpha$ dependent term is however dependent on $\gamma$, which is the ratio of the height $2\tau$ and the circumference $l$ of the cylinder. However it asymptotes to the universal value of $1/3$ in full cylinder limit, i.e., $\gamma>>1$. We find the non-universal term $c_1(\gamma)$ to have a log dependence on $\gamma$. The contribution to this term comes primarily from the zero modes given by Eqn.~\eqref{wzeromode.eq}.

The behaviour of $f(\gamma)$ can be viewed as a
dependence on the choice of the pure state $W_\tau$ in $\mtaup$, for $\mtaup \supset \mtau$. From  Eqn.~(\ref{cylsj.eq}) we see that $\hW_\tau$ is a state in $\mtaup$, i.e., 
$\hW_\tau=\hR_\tau+ i \hD/2$, where $\hR_\tau$ is real and symmetric.  

Expanding $i\hD$ in    the SJ modes $\{ \psitp_m \}$ of $\mtaup$ and $\hW_\tau$  in $\{ \psit_m\}$, and inserting in Eqn.~(\ref{ssee.eq}) 
we see that term by term 
\be
\cylev^{\tau}_m\psit_m(x,t) A_m= \mu
\cylev^\taup_m[\psitp_m(t,x) \Ap_m\ \!-\! \psistp_m(t,x)\Bp_m],  \nonumber 
\ee
where $A_m \!=\! \langle\psit_m,\chi\rangle_\taup$, $\Ap_m\!=\!\langle\psitp_m,\chi\rangle_\taup$,  $\Bp_m\!=\!\langle\psistp_m,\chi\rangle_\taup$ and $\langle.,.\rangle_\taup$ is the $\cL^2$ inner product in $\mtaup$. 
Expanding  $\psit_m=a_m\psitp_m + b_m\psistp_m$,   $a_m=\frac{\sm^{\tau'}}{2 \cm^{\tau}} (\zeta_m^ {(\tau')}
+\zeta_m^ {(\tau)}) $,  $b_m=\frac{\sm^{\tau'}}{2 \cm^{\tau}} (\zeta_m^ {(\tau')}
-\zeta_m^ {(\tau)}) $  this simplifies to
\begin{align}
	\cylev^{\tau}_m a_m\!\left(\! a_m \Ap_m+ b_m \Bp_m\!\right) 
	\!&=\! \mu\cylev^{\taup}_m \Ap_m \nonumber\\
	\cylev^{\tau}_m b_m\!\left(\!a_m \Ap_m + b_m \Bp_m\!\right)\!&=\!-\mu\cylev^{\taup}_m \Bp_m.                               \end{align}
The solutions for this are  either $a_m \Ap_m + b_m \Bp_m =
0 \Rightarrow \mu=0$, or 
$
a_m \Bp_m + b_m \Ap_m = 0 \Rightarrow \mu
=\frac{\cylev^{\tau}_m}{\cylev^{\tau'}_m}\left(a_m^2-b_m^2\right)=1,   
$
which means that $\hW_\tau$ is a pure state in $\mtaup$. Thus $f(\gamma)$ can be viewed as the dependence  on the choice of  pure state in  $\mtaup$ for $\mtau \subset\mtaup$. 

We end with some remarks. While we have demonstrated complementarity for certain rational values of $\alpha$,  an analytic  demonstration using Eqn.~(\ref{wmatrixgg.eq}) seems non-trivial, in part because the UV regulated matrices $\hr_\alpha$ and $\hr_{1-\alpha}$  are of different
dimensions.  
Conversely, complementarity implies that if $n_\mx>n_\mx'$,   $\hr_\alpha=\hr_{1-\alpha}
\oplus \mathbf{1}_{N} \oplus \mathbf{0}_{N}$, where $\mathbf{0}$ is the zero matrix and $N=(n_\mx-n_\mx')/2$.

In our computations we find that the eigenvalues of $\hr$ (which always come in pairs $(\mu,1-\mu)$) exhibit the  surprising feature
that all but one pair hovers around the values $0$ and $1$, 
thus contributing most significantly to $\ssee$. Indeed, the  $\ssee$ calculated using the largest few pairs of eigenvalues accounts for most of the entropy, as shown in Fig.~\ref{evtwo.fig}.

Finally, it would be interesting to study the SSEE for the non-zero mass case which is IR divergence free. While the small mass approximation of the SJ modes in $\diam_s$ is known \cite{Mathur:2019yvl}, the challenge will be to obtain closed form expressions for the matrix elements of $\hW$ as we have done here for the massless case.

\chapter{Conclusion}\label{ch.concl}

The theme of this thesis is a study of a real scalar quantum field theory in different curved spacetimes in a covariant way which can readily be extended to quantum gravity theories like Causal Set Theory (CST). The work in this thesis can be broadly divided into two parts. The first of which is a study of the SJ vacuum proposed by Sorkin and Johnston in different bounded globally hyperbolic spacetimes \cite{sorkin,Johnston:2009fr} and the second is a study of a spacetime formulation of entanglement entropy (SSEE) proposed by Sorkin \cite{ssee}. SJ formalism turns out to be a useful way of picking out a vacuum uniquely, especially in a spacetime which lacks any symmetry to act as a guide towards a particular choice of vacuum. Though one can define the SJ vacuum in an arbitrary spacetime, calculating the SJ Wightman function is far from simple in an arbitrary spacetime. As a result we have very few cases in which the SJ vacuum has explicitly been obtained. These cases of explicit calculations are referred to in the thesis. In \cite{aas,Aslanbeigi:2013fga}, the SJ vacuum has been studied for different unbounded (infinite volume) spacetimes. However we, following the work of \cite{Afshordi:2012ez,fewster2012}, study the behaviour of the SJ vacuum in a bounded (finite volume) spacetime, where the SJ formalism is known to be well defined. We then extend it to an unbounded spacetime by an appropriate limiting procedure as discussed in the thesis.

In the first part of the thesis we studied the SJ vacuum for a massive scalar field in a 2d causal diamond and for a conformally coupled massless scalar field in a slab of conformally ultrastatic spacetimes with compact spatial hypersurfaces like de Sitter spacetimes and flat FLRW spacetimes. We find that in the center of the causal diamond the SJ vacuum behaves like the massive Minkowski vacuum for large masses $(m=1,2)$ which is as expected, whereas for small masses $(m=0.2,0.4)$ it behaves like the massless Minkowski vacuum with a fixed infrared cut-off which depends on the size of the diamond. In order to have a better understanding of this transition in the behaviour of the SJ vacuum (from massless to massive Minkowski vacuum) with an increase in mass, we need to have a full expression of the massive SJ modes and the SJ Wightman function. We hope to achieve this goal in the near future. In the corner of the diamond we look at correlation plots between the SJ Wightman function, the mirror Wightman function and the Rindler Wightman function for different masses. We see that for small masses, the SJ Wightman function has a better correlation with the mirror Wightman function than with the Rindler Wightman function, which is as expected from the continuum results. However with an increase in mass, the correlation between the mirror and the Rindler Wightman function improves, surpassing that of the SJ Wightman function with any of them. Therefore for a large mass we cannot associate the SJ vacuum with either the mirror or the Rindler vacuum on the basis of correlation plots. We need to come up with another way of understanding the causal set SJ vacuum in the corner. We hope that the knowledge of the SJ modes for an arbitrary mass will lead to a better understanding of the behaviour of the SJ vacuum in the corner of the diamond. We also studied a modification of the SJ vacuum for massless scalar field in a causal diamond, obtained by introducing a suitable weight function to the $\cL^2$ measure on the space of test functions. This modified SJ vacuum reduces to the Rindler vacuum in an appropriate limit. It would be interesting to study the family of all such possible modifications to the SJ vacuum and see how these modified vacua are related to known vacua in any given spacetime. It is shown in \cite{wingham} that one can obtain a Hadamard state by this procedure.

In de Sitter spacetimes we find that in odd dimensions the SJ vacuum agrees with the standard Euclidean vacuum in the full spacetime limit, whereas in even dimensions the SJ vacuum in full spacetime limit turns out to be ill defined. This result is consistent with that of \cite{Aslanbeigi:2013fga}, and is therefore inconsistent with the result of \cite{Surya:2018byh}, where the SJ vacuum is obtained in a causal set sprinkled on a slab of de Sitter spacetime, which is found to be well defined in full de Sitter limit but disagrees with any of the known de Sitter vacua. With a hope of understanding this disagreement we compare the SJ spectrum we obtained in slabs of 2d and 4d de Sitter spacetime with the one obtained by the authors of \cite{Surya:2018byh} in a causal set approximated by these spacetimes. In 2d de Sitter we find that these SJ spectrums are in agreement with each other upto a UV cut-off, which is determined by the sprinkling density. In 4d de Sitter however the SJ spectrums obtained by us vary slightly from the one obtained by the authors of \cite{Surya:2018byh} even below the UV cut-off. We are yet to understand the reasons for this discrepancy. We would also like to compare the SJ modes in de Sitter slab spacetime with the one obtained in a causal set sprinkled on the de Sitter slab. However doing this is computationally very challenging. In flat FLRW spacetimes we evaluated the SJ vacuum in matter, radiation and $\Lambda$ dominated era. We find that in the full spacetime limit, the SJ vacuum agrees with the conformal vacuum. The study of SJ vacuum in a real universe, in which these different cosmological eras are stitched together along with the inflationary phase may provide some useful insight into the physics of early universe. To study this one need to have a better understanding of the transition from one era to another. The global nature of the SJ vacuum however makes the understanding of the cosmological particle creation difficult. For a vacuum which is well defined at a moment of time, one can study the cosmological particle creation in an expanding spacetime by looking at the Bogoliubov transformation between the initial and the final vacua\footnote{See chapter~3.4 of \cite{birrell} for an example showing cosmological particle creation.}. This is not directly possible in the SJ construction, as it provides us with just a unique spacetime vacuum.

%In the second part of the thesis we compute the entropy of various horizons using the spacetime formula proposed by Sorkin in \cite{ssee}. Sorkin's Spacetime Entanglement Entropy (SSEE) formula was proposed by showing that the von Neumann entropy expression of a single particle in a Gaussian state can be rearranged in terms of the eigenvalues of $i\hD^{-1}W$. It was then argued that the entropy of an $n$ particle discrete system can be obtained by summing the individual single particle entropy. We don't yet have an understanding of how the SSEE formula gets modified for a non-Gaussian state.
%In the second part of the thesis we study SSEE of various horizons. In \cite{ssee} the SSEE was shown to be consistent with the von Neumann entropy for a system with finite degrees of freedom in a Gaussian state.
In the second part of the thesis we studied the SSEE in two different settings. One of which is a spatially compact and temporally non compact static patch. In particular we studied the SSEE of a massive scalar field in a static patch of de Sitter spacetimes and of a massless scalar field in a static patch of Schwarzschild de Sitter spacetimes. In de Sitter spacetime we find that the SSEE is independent of the mass of the field and agrees with the von Neumann entropy evaluated by Higuchi and Yamamoto \cite{Higuchi:2018tuk}. We also argue that the SSEE satisfies the area law. This is an important confirmation of the validity of the SSEE as a candidate for the de Sitter horizon entropy. However in 2d case, the SSEE instead of depending logarithmically on the UV cut-off turns out to be constant which is independent of a UV cut-off. The mass independence of the entanglement entropy is true only for the horizon and not for an arbitrary entangling surface, as it is shown in \cite{Maldacena:2012xp,Kanno:2014lma,Iizuka:2014rua}, that the entanglement entropy is mass dependent for a superhorizon (entangling surface of radius much larger than that of the de Sitter horizon) at a late time hypersurface near the future boundary of the spacetime. We also studied the SSEE of a massless scalar field restricted to a causal diamond in the center of a 2d cylinder slab spacetime. We find the following: $(i)$ the SSEE is UV cut-off dependent\footnote{Here the cut-off is taken in the SJ spectrum in the diamond.}, $(ii)$ the SSEE of the complimentary diamonds are equal which is expected from any definition of the entropy for a bipartite system in a pure state and $(iii)$ in the large height limit of the cylinder\footnote{The SJ vacuum in a cylinder slab depends on its height, so different heights means different vacua.}, the SSEE we obtain exactly matches with the Calabrese-Cardy entropy which further confirms the equivalence between the SSEE and the von Neumann entropy. Here we notice some strange features of the generalised eigenvalues. We find that all, except one, pairs of eigenfunctions hovers around $0,1$, which means that the largest few eigenvalues accounts for most of the entropy. We hope to have a better understanding of this feature of SSEE in the future. Our results along with the one obtained by \cite{saravani2014spacetime} suggests that the SSEE is equivalent to the von Neumann entropy in cases where both of them are well defined. %We still need to have a rigorous proof of this equivalence between the von Neumann entropy and the SSEE, particularly for a quantum field system, which we hope to address in the near future. 
Also the SSEE is shown to be consistent with the von Neumann entropy for a system in a Gaussian state. We don't yet have an understanding of how the SSEE formula gets modified for a non-Gaussian state.

We end this thesis with listing down some of the open questions discussed above
\begin{itemize}
\item We need to solve the SJ eigenvalue problem for a scalar field with arbitrary mass in a causal diamond. This will help us to study the behaviour of the SJ vacuum in the corner of the diamond for larger masses.
%\item We would like to study the family of SJ vacua obtained by modifying the integral operator a
\item We need to understand the discrepancy between the continuum and the causal set SJ spectrum in 4d de Sitter slab spacetimes. Does it has anything to do with the 4d causal set Green's function?
\item We would like to understand the transition from one cosmic era to other and how it affects the SJ vacuum state.
\item How to define thermality and cosmological particle creation in the SJ formalism? %Insights from AQFT may help.
\item The SJ formalism is developed only for a free real scalar field. We still need to have an observer independent formulation of an interacting field theory. A work in this direction was initiated by Johnston in \cite{johnston}.
\item We would like to have an understanding of the conditions in which the SSEE agrees with the von Neumann entropy. We would also like to understand the SSEE in algebraic QFT formalism.
\item SSEE, unlike the von Neumann entropy, appears to be applicable in contexts where the Cauchy hypersurface is not well defined like in Topology changing spacetimes, AdS etc. It would be interesting to study SSEE in those contexts.
\item We would like to understand how the SSEE formula gets modified for a non-Gaussian state.
\end{itemize}

%%%%%%%%%%%%%%%%%%%%%%%%%%%%%%%%%%%%%%%%%%%%%%%%%%%%%%%%%%%%%%%%%%%%%%%%%%%%%%%%%%%%%%%%%%%

%---------------------------- END MATTER ------------------------------------------------

%%%%%%%%%%%%%%%%%%%%%%%%%%%%%%%%%%%%%%%%%%%%%%%%%%%%%%%%%%%%%%%%%%%%%%%%%%%%%%%%%%%%%%%%%%%

\bibliography{reference}

\begin{thebibliography}{10}

\bibitem{sorkin}
R.~D. Sorkin, ``{Scalar Field Theory on a Causal Set in Histories Form},'' {\em
  J. Phys. Conf. Ser.}, vol.~306, p.~012017, 2011.

\bibitem{Johnston:2009fr}
S.~Johnston, ``{Feynman Propagator for a Free Scalar Field on a Causal Set},''
  {\em Phys. Rev. Lett.}, vol.~103, p.~180401, 2009.

\bibitem{ssee}
R.~D. Sorkin, ``{Expressing entropy globally in terms of (4D)
  field-correlations},'' {\em J. Phys. Conf. Ser.}, vol.~484, p.~012004, 2014.

\bibitem{johnston}
S.~P. Johnston, {\em {Quantum Fields on Causal Sets}}.
\newblock PhD thesis, Imperial Coll., London, 2010.

\bibitem{Afshordi:2012ez}
N.~Afshordi, M.~Buck, F.~Dowker, D.~Rideout, R.~D. Sorkin, and Y.~K. Yazdi,
  ``{A Ground State for the Causal Diamond in 2 Dimensions},'' {\em JHEP},
  vol.~10, p.~088, 2012.

\bibitem{Buck:2016ehk}
M.~Buck, F.~Dowker, I.~Jubb, and R.~Sorkin, ``{The Sorkin--Johnston state in a
  patch of the trousers spacetime},'' {\em Class. Quant. Grav.}, vol.~34,
  no.~5, p.~055002, 2017.

\bibitem{fewster2012}
C.~J. Fewster and R.~Verch, ``{On a Recent Construction of 'Vacuum-like'
  Quantum Field States in Curved Spacetime},'' {\em Class. Quant. Grav.},
  vol.~29, p.~205017, 2012.

\bibitem{Brum:2013bia}
M.~Brum and K.~Fredenhagen, ``{\textquoteleft{}Vacuum-like\textquoteright{}
  Hadamard states for quantum fields on curved spacetimes},'' {\em Class.
  Quant. Grav.}, vol.~31, p.~025024, 2014.

\bibitem{Mathur:2019yvl}
A.~Mathur and S.~Surya, ``{Sorkin-Johnston vacuum for a massive scalar field in
  the 2D causal diamond},'' {\em Phys. Rev. D}, vol.~100, no.~4, p.~045007,
  2019.
\newblock [Erratum: Phys.Rev.D 104, 089901 (2021)].

\bibitem{Johnston:2008za}
S.~Johnston, ``{Particle propagators on discrete spacetime},'' {\em Class.
  Quant. Grav.}, vol.~25, p.~202001, 2008.

\bibitem{Aslanbeigi:2013fga}
S.~Aslanbeigi and M.~Buck, ``{A preferred ground state for the scalar field in
  de Sitter space},'' {\em JHEP}, vol.~08, p.~039, 2013.

\bibitem{Mathur:2021ial}
A.~Mathur, S.~Surya, and N.~X, ``{Spacetime entanglement entropy of de Sitter
  and black hole horizons},'' {\em Class. Quant. Grav.}, vol.~39, no.~3,
  p.~035004, 2022.

\bibitem{Higuchi:2018tuk}
A.~Higuchi and K.~Yamamoto, ``{Vacuum state in de Sitter spacetime with static
  charts},'' {\em Phys. Rev. D}, vol.~98, no.~6, p.~065014, 2018.

\bibitem{Mathur:2021zzl}
A.~Mathur, S.~Surya, and {Nomaan X}, ``{A spacetime calculation of the
  Calabrese-Cardy entanglement entropy},'' {\em Phys. Lett. B}, vol.~820,
  p.~136567, 2021.

\bibitem{cc}
P.~Calabrese and J.~L. Cardy, ``{Entanglement entropy and quantum field
  theory},'' {\em J. Stat. Mech.}, vol.~0406, p.~P06002, 2004.

\bibitem{Mottola:1984ar}
E.~Mottola, ``{Particle Creation in de Sitter Space},'' {\em Phys. Rev. D},
  vol.~31, p.~754, 1985.

\bibitem{Allen:1985ux}
B.~Allen, ``{Vacuum States in de Sitter Space},'' {\em Phys. Rev. D}, vol.~32,
  p.~3136, 1985.

\bibitem{fewster2020algebraic}
C.~J. Fewster and K.~Rejzner, ``Algebraic quantum field theory,'' in {\em
  Progress and Visions in Quantum Theory in View of Gravity}, pp.~1--61,
  Springer, 2020.

\bibitem{Unruh:1976db}
W.~G. Unruh, ``{Notes on black hole evaporation},'' {\em Phys. Rev. D},
  vol.~14, p.~870, 1976.

\bibitem{Hawking:1974rv}
S.~W. Hawking, ``{Black hole explosions},'' {\em Nature}, vol.~248, pp.~30--31,
  1974.

\bibitem{Hawking:1975vcx}
S.~W. Hawking, ``{Particle Creation by Black Holes},'' {\em Commun. Math.
  Phys.}, vol.~43, pp.~199--220, 1975.
\newblock [Erratum: Commun.Math.Phys. 46, 206 (1976)].

\bibitem{Nomaan:2017bpl}
N.~X, F.~Dowker, and S.~Surya, ``{Scalar Field Green Functions on Causal
  Sets},'' {\em Class. Quant. Grav.}, vol.~34, no.~12, p.~124002, 2017.

\bibitem{aas}
N.~Afshordi, S.~Aslanbeigi, and R.~D. Sorkin, ``{A Distinguished Vacuum State
  for a Quantum Field in a Curved Spacetime: Formalism, Features, and
  Cosmology},'' {\em JHEP}, vol.~08, p.~137, 2012.

\bibitem{bkls}
L.~Bombelli, R.~K. Koul, J.~Lee, and R.~D. Sorkin, ``{A Quantum Source of
  Entropy for Black Holes},'' {\em Phys. Rev.}, vol.~D34, pp.~373--383, 1986.

\bibitem{brink2019jacob}
L.~Brink, V.~F. Mukhanov, E.~Rabinovici, and K.~K. Phua, {\em Jacob Bekenstein:
  The Conservative Revolutionary}.
\newblock World Scientific, 2019.

\bibitem{birrell}
N.~D. Birrell and P.~Davies, {\em {Quantum Fields in Curved Space}}.
\newblock Cambridge: Cambridge University Press, 1982.

\bibitem{wald}
R.~M. Wald, {\em {Quantum Field Theory in Curved Spacetime and Black Hole
  Thermodynamics}}.
\newblock Chicago, USA: Chicago Univ. Pr., 1994.

\bibitem{sorkin17}
R.~D. Sorkin, ``{From Green Function to Quantum Field},'' {\em Int. J. Geom.
  Meth. Mod. Phys.}, vol.~14, no.~08, p.~1740007, 2017.

\bibitem{Surya:2018byh}
S.~Surya, N.~X, and Y.~K. Yazdi, ``{Studies on the SJ Vacuum in de Sitter
  Spacetime},'' {\em JHEP}, vol.~07, p.~009, 2019.

\bibitem{wingham}
F.~L. Wingham, {\em {Generalisd Sorkin-Johnston and Brum-Fredenhagen States for
  Quantum Fields in Curved Spacetimes}}.
\newblock PhD thesis, University of York, 2018.

\bibitem{X:2021dsa}
{Nomaan X}, {\em {Aspects of Quantum Fields on Causal Sets}}.
\newblock PhD thesis, Jawaharlal Nehru University, 5 2021.

\bibitem{Bombelli:1987aa}
L.~Bombelli, J.~Lee, D.~Meyer, and R.~Sorkin, ``{Space-Time as a Causal Set},''
  {\em Phys. Rev. Lett.}, vol.~59, pp.~521--524, 1987.

\bibitem{Hawking:1976fe}
S.~W. Hawking, A.~R. King, and P.~J. Mccarthy, ``{A New Topology for Curved
  Space-Time Which Incorporates the Causal, Differential, and Conformal
  Structures},'' {\em J. Math. Phys.}, vol.~17, pp.~174--181, 1976.

\bibitem{Malament:1977}
D.~B. Malament, ``{The class of continuous timelike curves determines the
  topology of spacetime},'' {\em J. Math. Phys.}, vol.~18, p.~1399, 1977.

\bibitem{Surya:2019ndm}
S.~Surya, ``{The causal set approach to quantum gravity},'' {\em Living Rev.
  Rel.}, vol.~22, no.~1, p.~5, 2019.

\bibitem{Chen:2020ild}
Y.~Chen, L.~Hackl, R.~Kunjwal, H.~Moradi, Y.~K. Yazdi, and M.~Zilh\~ao,
  ``{Towards spacetime entanglement entropy for interacting theories},'' {\em
  JHEP}, vol.~11, p.~114, 2020.

\bibitem{Keseman:2021dkf}
T.~Keseman, H.~J. Muneesamy, and Y.~K. Yazdi, ``{Insights on Entanglement
  Entropy in $1+1$ Dimensional Causal Sets}.'' arXiv:2111.05879, 2021, 11 2021.

\bibitem{tommaso}
T.~F. Demarie, ``Pedagogical introduction to the entropy of entanglement for
  gaussian states.'' arXiv:1209.2748, 2012.

\bibitem{saravani2014spacetime}
M.~Saravani, R.~D. Sorkin, and Y.~K. Yazdi, ``Spacetime entanglement entropy in
  1+ 1 dimensions,'' {\em Classical and Quantum Gravity}, vol.~31, no.~21,
  p.~214006, 2014.

\bibitem{Sorkin:2016pbz}
R.~D. Sorkin and Y.~K. Yazdi, ``{Entanglement Entropy in Causal Set Theory},''
  {\em Class. Quant. Grav.}, vol.~35, no.~7, p.~074004, 2018.

\bibitem{Belenchia:2017cex}
A.~Belenchia, D.~M.~T. Benincasa, M.~Letizia, and S.~Liberati, ``{On the
  Entanglement Entropy of Quantum Fields in Causal Sets},'' {\em Class. Quant.
  Grav.}, vol.~35, no.~7, p.~074002, 2018.

\bibitem{Surya_2021}
S.~Surya, {Nomaan X}, and Y.~K. Yazdi, ``Entanglement entropy of causal set de
  sitter horizons,'' {\em Classical and Quantum Gravity}, vol.~38, p.~115001,
  apr 2021.

\bibitem{Duffy:2021dtc}
C.~F. Duffy, J.~Y.~L. Jones, and Y.~K. Yazdi, ``{Entanglement Entropy of
  Disjoint Spacetime Intervals in Causal Set Theory}.'' arXiv:2110.07627, 2021,
  10 2021.

\bibitem{speigel}
M.~R. Speigel, ``{The Summation of Series Involving Roots of Transcendental
  Equations and Related Applications},'' {\em Journal of Applied Physics},
  vol.~24, p.~1103, 1953.

\bibitem{abdallah}
E.~Abdallah, M.~C.~B. Abdallah, and K.~D. Rothe, {\em "Non-perturbative methods
  in Two Dimensional Quantum Field Theory"}.
\newblock Singapore: World Scientific Publishing Co., 2nd~ed., 2001.

\bibitem{candelas:1976}
P.~Candelas and D.~J. Raine, ``{Quantum Field Theory on incomplete
  manifolds},'' {\em Journal of Mathematical Physics}, vol.~17, p.~2101, 1976.

\bibitem{Chernikov:1968zm}
N.~Chernikov and E.~Tagirov, ``{Quantum theory of scalar fields in de Sitter
  space-time},'' {\em Ann. Inst. H. Poincare Phys. Theor. A}, vol.~9, p.~109,
  1968.

\bibitem{Bunch:1978yq}
T.~Bunch and P.~Davies, ``{Quantum Field Theory in de Sitter Space:
  Renormalization by Point Splitting},'' {\em Proc. Roy. Soc. Lond. A},
  vol.~A360, pp.~117--134, 1978.

\bibitem{frye:2012sph}
C.~Frye and C.~J. Efthimiou, ``{Spherical Harmonics in p Dimensions}.''
  arXiv:1205.3548, 2012, 2012.

\bibitem{Jacobson:2003wv}
T.~Jacobson and R.~Parentani, ``{Horizon entropy},'' {\em Found. Phys.},
  vol.~33, pp.~323--348, 2003.

\bibitem{gibbons}
G.~W. Gibbons and S.~W. Hawking, ``Cosmological event horizons, thermodynamics,
  and particle creation,'' {\em Phys. Rev. D}, vol.~15, pp.~2738--2751, May
  1977.

\bibitem{Maldacena:2012xp}
J.~Maldacena and G.~L. Pimentel, ``{Entanglement entropy in de Sitter space},''
  {\em JHEP}, vol.~02, p.~038, 2013.

\bibitem{Kanno:2014lma}
S.~Kanno, J.~Murugan, J.~P. Shock, and J.~Soda, ``{Entanglement entropy of
  $\alpha$-vacua in de Sitter space},'' {\em JHEP}, vol.~07, p.~072, 2014.

\bibitem{Iizuka:2014rua}
N.~Iizuka, T.~Noumi, and N.~Ogawa, ``{Entanglement entropy of de Sitter space
  $\alpha$-vacua},'' {\em Nucl. Phys. B}, vol.~910, pp.~23--29, 2016.

\bibitem{Qiu:2019qgp}
Y.~Qiu and J.~Traschen, ``{Black Hole and Cosmological Particle Production in
  Schwarzschild de Sitter},'' {\em Class. Quant. Grav.}, vol.~37, no.~13,
  p.~135012, 2020.

\bibitem{stromds}
M.~Spradlin, A.~Strominger, and A.~Volovich, ``De sitter space,'' in {\em Unity
  from Duality: Gravity, Gauge Theory and Strings}, pp.~423--453, Springer,
  2002.

\bibitem{Anderson:2020dim}
P.~Anderson and J.~Traschen, ``{Horizons and Correlation Functions in 2D
  Schwarzschild-de Sitter Spacetime}.'' arXiv:2012.08494, 2020, 12 2020.

\bibitem{Higuchi:1986ww}
A.~Higuchi, ``{Quantization of Scalar and Vector Fields Inside the Cosmological
  Event Horizon and Its Application to Hawking Effect},'' {\em Class. Quant.
  Grav.}, vol.~4, p.~721, 1987.

\bibitem{magnus}
W.~Magnus, F.~Oberhettinger, and R.~P. Soni, {\em {Formulas and Theorems for
  the Special Functions of Mathematical Physics}}.
\newblock Springer, 1966.

\bibitem{Hofsommer1960}
D.~J. Hofsommer and M.~Potters, ``{Table of Fourier coefficients of associated
  Legendre functions : repr. from Proceedings of the KNAW, ser. A 63(1960)5,
  pp. 460-480},'' 1960.

\bibitem{ryu2006holographic}
S.~Ryu and T.~Takayanagi, ``Holographic derivation of entanglement entropy from
  the anti--de sitter space/conformal field theory correspondence,'' {\em
  Physical review letters}, vol.~96, no.~18, p.~181602, 2006.

\bibitem{headrick}
M.~Headrick, ``Lectures on entanglement entropy in field theory and
  holography,'' {\em arXiv preprint arXiv:1907.08126}, 2019.

\bibitem{evolcc}
P.~Calabrese and J.~L. Cardy, ``{Evolution of entanglement entropy in
  one-dimensional systems},'' {\em J. Stat. Mech.}, vol.~0504, p.~P04010, 2005.

\end{thebibliography}
\bibliographystyle{ieeetr} 

\end{document}